\newif\ifappendixshow  %
\newif\ifdraft     %
\newif\ifexternal  %
\newif\ifpaper     %
\let\terms\relax %
\newcommand{\frontpagedeclarations}{%
\title{Shapley Revisited: Tractable Responsibility Measures for Query Answers}

\newcommand{\afflabri}{%
  \institution{Univ. Bordeaux, CNRS,  Bordeaux INP, LaBRI, UMR 5800}
  \city{F-33400, Talence}
  \country{France}
}
\author{Meghyn Bienvenu}
\email{meghyn.bienvenu@cnrs.fr}
\orcid{0000-0001-6229-8103}
\affiliation{%
  \institution{Univ. Bordeaux, CNRS,  Bordeaux INP, LaBRI, UMR 5800}
  \city{F-33400, Talence}
  \country{France}
}

\author{Diego Figueira}
\email{diego.figueira@cnrs.fr}
\orcid{0000-0003-0114-2257}
\affiliation{%
  \institution{Univ. Bordeaux, CNRS,  Bordeaux INP, LaBRI, UMR 5800}
  \city{F-33400, Talence}
  \country{France}
}

\author{Pierre Lafourcade}
\email{pierre.lafourcade@u-bordeaux.fr}
\orcid{0009-0004-4810-1289}
\affiliation{%
  \institution{Univ. Bordeaux, CNRS,  Bordeaux INP, LaBRI, UMR 5800}
  \city{F-33400, Talence}
  \country{France}
}

\renewcommand{\shortauthors}{Meghyn Bienvenu, Diego Figueira, and Pierre Lafourcade}
}
\NewCommandCopy{\proofqedsymbol}{\qedsymbol}%
\newcommand{\remarkqedsymbol}{{$\triangle$}}%
\renewcommand{\qedsymbol}{\exampleqedsymbol}%
\renewcommand{\qedsymbol}{\remarkqedsymbol}%
\newif\ifappendix  %
\definecolor{green}{RGB}{0,120,0}
\definecolor{hlyellow}{RGB}{250, 250, 190}
\definecolor{diegoeditcolor}{RGB}{210,210,255}
\definecolor{diegochangescolor}{RGB}{41, 32, 138}
\definecolor{meghyneditcolor}{RGB}{210,255,210}
\definecolor{pierreeditcolor}{RGB}{255, 225, 186}
\newcommand{\sidediego}[1]{}
\newcommand{\sidemeghyn}[1]{}
\newcommand{\sidepierre}[1]{}
\newcommand{\meghyn}[1]{}
\newcommand{\pierre}[1]{}
\newcommand{\diego}[1]{}
\newcommand{\siderev}[1]{}
\newcommand{\rev}[1]{}
\renewcommand{\hookrightarrow}{\to} %
\newcommand{\aka}{a.k.a.\ }
\newcommand{\cf}{cf.\ }
\newcommand{\eg}{e.g.}
\newcommand{\ie}{i.e.,\ }
\newcommand{\st}{s.t.\ }
\newcommand{\wrt}{w.r.t.\ }
\DeclarePairedDelimiter\set{\{}{\}} %
\LoopCommands\lettersUppercase[l#1] %
\newcommand{\Nat}{\lN} %
\newcommand{\vect}[1]{\mathbf{#1}} %
\newcommand{\la}{\leftarrow}
\newcommand{\LRa}{\Leftrightarrow}
\newcommand{\defeq}{\vcentcolon=}
\newcommand{\eqdef}{=\vcentcolon}
\renewcommand{\le}{\leqslant}
\renewcommand{\ge}{\geqslant}
\newcommand{\inc}{\subseteq} %
\definecolor{light-gray}{gray}{0.9} %
\newcommand{\proofcase}[1]{\noindent\colorbox{light-gray}{#1}~~}
\newenvironment{equation-inline}{ %
\refstepcounter{equation}}{\hfill(\theequation)\\}
\theoremstyle{plain}
\theoremstyle{acmdefinition}
\newtheorem{open}[theorem]{Open Question}
\crefname{theorem}{Theorem}{Theorems}
\crefname{claim}{Claim}{Claims}
\crefname{corollary}{Corollary}{Corollaries}
\crefname{example}{Example}{Examples}
\crefname{lemma}{Lemma}{Lemmas}
\crefname{proposition}{Proposition}{Propositions}
\crefname{remark}{Remark}{Remarks}
\tikzset{
	subtree/.pic={
		\coordinate (-a) at (0,0);
		\coordinate (-b) at (-1,-2);
		\coordinate (-c) at (1,-2);
		\coordinate (-d) at (-0.35,-0.35);
		\coordinate (-e) at (0.35,-0.35);
		\coordinate (-h) at (-1.26,-2.16);
		\coordinate (-i) at (1.26,-2.16);
		\coordinate (-j) at (0.35,0.35);
		\coordinate (-k) at (-0.35,0.35);
		\coordinate (-west) at (-0.4,0);
		\coordinate (-east) at (0.4,0);
		
		\draw (-a) -- (-b) -- (-c) -- cycle;
		\draw[fill=white] (-a) circle (0.4);
	}
}
\tikzset{
	subtree_small/.pic={
		\coordinate (-a) at (0,0);
		\coordinate (-b) at (-.75,-1);
		\coordinate (-c) at (.75,-1);
		\coordinate (-d) at (-0.35,-0.35);
		\coordinate (-e) at (0.35,-0.35);
		\coordinate (-h) at (-0.92,-1.06);
		\coordinate (-i) at (0.92,-1.06);
		\coordinate (-j) at (0.35,0.35);
		\coordinate (-k) at (-0.35,0.35);
		\coordinate (-west) at (-0.4,0);
		\coordinate (-east) at (0.4,0);
		
		\draw (-a) -- (-b) -- (-c) -- cycle;
		\draw[fill=white] (-a) circle (0.4);
	}
}
\tikzset{
	subtrinvisible/.pic={
		\coordinate (-a) at (0,0);
		\coordinate (-b) at (-1,-2);
		\coordinate (-c) at (1,-2);
		\coordinate (-d) at (-0.35,-0.35);
		\coordinate (-e) at (0.35,-0.35);
		\coordinate (-h) at (-1.26,-2.16);
		\coordinate (-i) at (1.26,-2.16);
		\coordinate (-j) at (0.35,0.35);
		\coordinate (-k) at (-0.35,0.35);
		\coordinate (-west) at (-0.4,0);
		\coordinate (-east) at (0.4,0);
	}
}
\tikzset{
	graphbox/.pic={
		\node [draw, circle, minimum height=17pt] (-0) at (-1, 0.75) {};
		\node [draw, circle, minimum height=17pt] (-1) at (-1, -0.25) {};
		\node [draw, circle, minimum height=17pt] (-2) at (-1, -1.25) {};
		\node [draw, circle, minimum height=17pt] (-3) at (1, -1.25) {};
		\node [draw, circle, minimum height=17pt] (-4) at (1, -0.25) {};
		\node [draw, circle, minimum height=17pt] (-5) at (1, 0.75) {};
		\node [] (-6) at (0, 1.25) {\Huge $G$};
		\coordinate (-7) at (-1.5, -1.75) {};
		\coordinate (-8) at (1.5, 1.75) {};
		\coordinate (-north) at (0, 1.75);
		\coordinate (-west) at (-1.5, 0);
		\coordinate (-east) at (1.5, 0);
		\coordinate (-south) at (0, -1.75);
		
		\draw (-8) rectangle (-7);
		\draw (-0) to (-5);
		\draw (-1) to (-3);
		\draw (-2) to (-4);
	}
}
\tikzset{
	graphbox-ns/.pic={
		\coordinate (-7) at (-1.5, -1.75) {};
		\coordinate (-8) at (1.5, 1.75) {};
		\draw[fill=white] (-8) rectangle (-7);
		
		\node [draw, circle, minimum height=17pt] (-0) at (-1, 0.75) {};
		\node [draw, circle, minimum height=17pt] (-1) at (-1, -0.25) {};
		\node [draw, circle, minimum height=17pt] (-2) at (-1, -1.25) {};
		\node [draw, circle, minimum height=17pt] (-3) at (1, -1.25) {};
		\node [draw, circle, minimum height=17pt] (-4) at (1, -0.25) {};
		\node [draw, circle, minimum height=17pt] (-5) at (1, 0.75) {};
		\node [] (-6) at (0, 1.25) {\Huge $G$};
		\coordinate (-north) at (0, 1.75);
		\coordinate (-west) at (-1.5, 0);
		\coordinate (-east) at (1.5, 0);
		\coordinate (-south) at (0, -1.75);
		
		\draw[very thick,double distance=3pt] (-north) -- ++(0,0.7);
		\draw[very thick,double distance=3pt,arrows = {-Implies[]}] (-south) -- ++(0,-1.6);
		\draw (-8) rectangle (-7);
		\draw (-0) to (-5);
		\draw (-1) to (-3);
		\draw (-2) to (-4);
	}
}
\tikzset{
	graphbox-ew/.pic={
		\node [draw, circle, minimum height=17pt] (-0) at (-1, 0.75) {};
		\node [draw, circle, minimum height=17pt] (-1) at (-1, -0.25) {};
		\node [draw, circle, minimum height=17pt] (-2) at (-1, -1.25) {};
		\node [draw, circle, minimum height=17pt] (-3) at (1, -1.25) {};
		\node [draw, circle, minimum height=17pt] (-4) at (1, -0.25) {};
		\node [draw, circle, minimum height=17pt] (-5) at (1, 0.75) {};
		\node [] (-6) at (0, 1.25) {\Huge $G$};
		\coordinate (-7) at (-1.5, -1.75) {};
		\coordinate (-8) at (1.5, 1.75) {};
		\coordinate (-north) at (0, 1.75);
		\coordinate (-west) at (-1.5, 0);
		\coordinate (-east) at (1.5, 0);
		\coordinate (-south) at (0, -1.75);
		
		\draw[thick,double distance=3pt] (-west) -- ++(-1,0);
		\draw[thick,double distance=3pt,arrows = {-Implies[]}] (-east) -- ++(1,0);
		\draw (-8) rectangle (-7);
		\draw (-0) to (-5);
		\draw (-1) to (-3);
		\draw (-2) to (-4);
	}
}
   \newcommand{%
      \input{fig/.tikz}%
   }[2][fig/]{%
      \tikzexternalenable
      \tikzsetnextfilename{#2}%
      \input{#1#2.tikz}%
      \tikzexternaldisable
   }
   \newcommand{%
      \input{fig/.tikz}%
   }[2][fig/]{%
      \input{#1#2.tikz}%
   }
\definecolor{Dark Ruby Red}{HTML}{5d1416}
\definecolor{Dark Blue Sapphire}{HTML}{004c5c} %
\definecolor{Dark Gamboge}{HTML}{be7c00}
\knowledgenewrobustcmd{\ann}{\cmdkl{\nu}}
\knowledgenewrobustcmd{\oneann}{\cmdkl{\mathbf{1}}}
\knowledgenewrobustcmd{\adom}{\cmdkl{\textit{adom}}} %
\knowledgenewrobustcmd{\Minsups}[1]{\cmdkl{\mathsf{MS}_{#1}}} %
\knowledgenewrobustcmd{\BPP}{\cmdkl{\mathsf{BPP}}}
\knowledgenewrobustcmd{\coNP}{\cmdkl{\mathsf{coNP}}}
\knowledgenewrobustcmd{\FP}{\cmdkl{\mathsf{FP}}} 
\knowledgenewrobustcmd{\FPsNP}{\cmdkl{\mathsf{FP}^{\mathsf{\#NP}}}} 
\knowledgenewrobustcmd{\FPsP}{\cmdkl{\mathsf{FP}^\mathsf{\#P}}}
\knowledgenewrobustcmd{\FPsPH}{\cmdkl{\mathsf{FP}^{\mathsf{\#PH}}}}
\knowledgenewrobustcmd{\NP}{\cmdkl{\mathsf{NP}}}
\knowledgenewrobustcmd{\PH}{\cmdkl{\mathsf{PH}}}
\knowledgenewrobustcmd{\PsP}{\cmdkl{\mathsf{P}^\mathsf{\#P}}}
\knowledgenewrobustcmd{\Ptime}{\cmdkl{\mathsf{P}}}
\knowledgenewrobustcmd{\sNP}{\cmdkl{\mathsf{\#NP}}}
\knowledgenewrobustcmd{\sP}{\cmdkl{\mathsf{\#P}}} 
\knowledgenewrobustcmd{\sPH}{\cmdkl{\mathsf{\#PH}}}
\knowledgenewmathcommand{\numBipVerCov}{\cmdkl{\mathsf{\#BipVerCov}}}
\knowledgenewrobustcmd{\numBipIndep}{\cmdkl{\mathsf{\#BipIndepSet}}} %
\knowledgenewrobustcmd{\numPerMatch}{\cmdkl{\mathsf{\#PerfMatch}}} %
\knowledgenewrobustcmd{\numSTConn}{\cmdkl{\mathsf{\#stConnect}}} %
\knowledgenewmathcommand{\VerCov}{\cmdkl{\mathsf{VerCov}}}
\knowledgenewmathcommand{\VerCovFun}{\cmdkl{\mathrm{VerCov}}}
\knowledgenewmathcommandPIE{\countMS}{%
   \cmdkl{\#}^{\cmdkl{\textsf{ms}}}#1#2#3}
\knowledgenewmathcommandPIE{\countFMS}{%
   \cmdkl{\#}^{\cmdkl{\textsf{fms}}}#1#2#3}
\knowledgenewmathcommandPIE{\countAns}{%
   \cmdkl{\#}^{\cmdkl{\textsf{hom}}}#1#2#3}
\knowledgenewmathcommandPIE{\evalCountMS}{%
   \cmdkl{\textsc{eval-}\#}^{\cmdkl{\textsf{ms}}}#1#2#3}
\knowledgenewmathcommandPIE{\evalCountFMS}{%
   \cmdkl{\textsc{eval-}\#}^{\cmdkl{\textsf{fms}}}#1#2#3}
\knowledgenewmathcommandPIE{\evalCountAns}{%
   \cmdkl{\textsc{eval-}\#}^{\cmdkl{\textsf{hom}}}#1#2#3}
\knowledgenewmathcommandPIE{\GIMC}{%
   \cmdkl{\mathsf{GIMC}#1#2#3}}
\knowledgenewmathcommandPIE{\FGIMC}{%
   \cmdkl{\mathsf{FGIMC}#1#2#3}}
\knowledgenewmathcommandPIE{\PQE}{%
   \cmdkl{\mathsf{PQE}#1#2#3}}
\knowledgenewmathcommandPIE{\SPQE}{%
   \cmdkl{\mathsf{SPQE}#1#2#3}}
\knowledgenewmathcommandPIE{\SPPQE}{%
   \cmdkl{\mathsf{SPPQE}#1#2#3}}
\knowledgenewmathcommandPIE{\PQEPhalf}{%
   \cmdkl{\mathsf{PQE}#1#2#3\!\left(\nicefrac 1 2\right)}}
\knowledgenewmathcommandPIE{\PQEPhalfOne}{%
   \cmdkl{\mathsf{PQE}#1#2#3\!\left(\nicefrac 1 2 ; 1\right)}}
\knowledgenewmathcommandPIE{\CPQE}{%
   \cmdkl{\mathsf{CPQE}#1#2#3}}
\knowledgenewmathcommandPIE{\CSPQE}{
   \cmdkl{\mathsf{CSPQE}#1#2#3}}
\knowledgenewmathcommandPIE{\CSPPQE}{
   \cmdkl{\mathsf{CSPPQE}#1#2#3}}
\knowledgenewmathcommandPIE{\CPQEPhalf}{
   \cmdkl{\mathsf{CPQE}#1#2#3\!\left(\nicefrac 1 2\right)}}
\knowledgenewmathcommandPIE{\CPQEPhalfOne}{
   \cmdkl{\mathsf{CPQE}#1#2#3\!\left(\nicefrac 1 2 ; 1\right)}}
\knowledgenewmathcommandPIE{\stShapley}{%
   \cmdkl{\mathsf{SVC}}^{\cmdkl{\star}}#1#2#3}
\knowledgenewmathcommandPIE{\dShapley}{%
   \cmdkl{\mathsf{SVC}}^{\cmdkl{\textsf{dr}}}#1#2#3}
\newmathcommandPIE{\dShapleyNoKL}{%
   \mathsf{SVC}^{\textsf{dr}}#1#2#3}
\knowledgenewmathcommandPIE{\dnShapley}{%
   \cmdkl{\mathsf{nSVC}}^{\cmdkl{\textsf{dr}}}#1#2#3}
\knowledgenewmathcommandPIE{\dcShapley}{%
   \cmdkl{\mathsf{CSVC}}^{\cmdkl{\textsf{dr}}}#1#2#3}
\knowledgenewmathcommandPIE{\dcnShapley}{%
   \cmdkl{\mathsf{nCSVC}}^{\cmdkl{\textsf{dr}}}#1#2#3}
\knowledgenewmathcommandPIE{\mcShapley}{%
   \cmdkl{\mathsf{SVC}}^{\cmdkl{\textsf{MC}}}#1#2#3}
\knowledgenewmathcommandPIE{\pShapley}{%
   \cmdkl{\mathsf{SVC}}^{\cmdkl{\textsf{P}}}#1#2#3}
\knowledgenewmathcommandPIE{\rShapley}{%
   \cmdkl{\mathsf{SVC}}^{\cmdkl{\textsf{R}}}#1#2#3}
\knowledgenewmathcommandPIE{\saShapley}{%
   \cmdkl{\mathsf{SVC}}^{\cmdkl{\textsf{hom}}}#1#2#3}
\newcommand{\minsupindex}{\textsf{ms}}
\knowledgenewmathcommandPIE{\msShapley}{%
   \cmdkl{\mathsf{SVC}}^{\cmdkl{\minsupindex}}#1#2#3}
\knowledgenewmathcommandPIE{\sShapley}{%
   \cmdkl{\mathsf{SVC}}^{\cmdkl{\textsf{s}}}#1#2#3}
\knowledgenewmathcommandPIE{\sharpShapley}{%
   \cmdkl{\mathsf{SVC}}^{\cmdkl{\textsf{\#}}}#1#2#3}
\knowledgenewmathcommandPIE{\wShapley}{%
   \cmdkl{\mathsf{SVC}}#1#2#3}
\newcommand{\subendo}{{\textup{\textsf{n}}}}
\newcommand{\subexo}{{\textup{\textsf{x}}}}
\knowledgenewrobustcmd{\Dn}[1][\D]{#1_{\cmdkl{\subendo}}}
\knowledgenewrobustcmd{\Dx}[1][\D]{#1_{\cmdkl{\subexo}}}
\knowledgenewrobustcmd{\constn}{\cmdkl{\textit{const}_{\cmdkl{\subendo}}}}
\knowledgenewrobustcmd{\constx}{\cmdkl{\textit{const}_{\cmdkl{\subendo}}}}
\knowledgenewmathcommandPIE{\games}{%
   \cmdkl{\mathcal{F}^{\emptyset \mapsto 0}#1#2#3}}
\knowledgenewmathcommandPIE{\sgames}{%
   \cmdkl{\mathcal{SF}^{\emptyset \mapsto 0}#1#2#3}}
\knowledgenewrobustcmd{\Sh}{\cmdkl{\mathrm{Sh}}} %
\knowledgenewrobustcmd{\Bz}{\cmdkl{\mathrm{Bz}}} %
\knowledgenewmathcommandPIE{\scorefun}{%
   \cmdkl{\xi}#1#2#3}
\knowledgenewmathcommandPIE{\STscorefun}{%
   \cmdkl{\Xi}^{\cmdkl{\mathsf{\star}}}#1#2#3}
\knowledgenewmathcommandPIE{\stscorefun}{%
   \cmdkl{\xi}^{\cmdkl{\mathsf{\star}}}#1#2#3}
\knowledgenewmathcommandPIE{\onems}{%
   \cmdkl{\xi}#1#2#3}%
\knowledgenewmathcommandPIE{\SHAPscorefun}{%
   \cmdkl{\Xi}^{\cmdkl{\textsf{SHAP}}}#1#2#3}
\knowledgenewmathcommandPIE{\shapscorefun}{%
   \cmdkl{\xi}^{\cmdkl{\textsf{SHAP}}}#1#2#3}
\knowledgenewmathcommandPIE{\zetascorefun}{%
   \cmdkl{\zeta}#1#2#3}
\knowledgenewmathcommandPIE{\Dscorefun}{%
   \cmdkl{\Xi}^{\cmdkl{\textsf{dr}}}#1#2#3}
\knowledgenewmathcommandPIE{\dscorefun}{%
   \cmdkl{\xi}^{\cmdkl{\textsf{dr}}}#1#2#3}
\knowledgenewmathcommandPIE{\MCscorefun}{%
   \cmdkl{\Xi}^{\cmdkl{\textsf{MC}}}#1#2#3}
\knowledgenewmathcommandPIE{\mcscorefun}{%
   \cmdkl{\xi}^{\cmdkl{\textsf{MC}}}#1#2#3}
\knowledgenewmathcommandPIE{\Pscorefun}{%
   \cmdkl{\Xi}^{\cmdkl{\textsf{P}}}#1#2#3}
\knowledgenewmathcommandPIE{\pscorefun}{%
   \cmdkl{\xi}^{\cmdkl{\textsf{P}}}#1#2#3}
\knowledgenewmathcommandPIE{\Rscorefun}{%
   \cmdkl{\Xi}^{\cmdkl{\textsf{R}}}#1#2#3}
\knowledgenewmathcommandPIE{\rscorefun}{%
   \cmdkl{\xi}^{\cmdkl{\textsf{R}}}#1#2#3}
\knowledgenewmathcommandPIE{\SAscorefun}{%
   \cmdkl{\Xi}^{\cmdkl{\textsf{hom}}}#1#2#3}
\knowledgenewmathcommandPIE{\sascorefun}{%
   \cmdkl{\xi}^{\cmdkl{\textsf{hom}}}#1#2#3}
\knowledgenewmathcommandPIE{\MSscorefun}{%
   \cmdkl{\Xi}^{\cmdkl{\minsupindex}}#1#2#3}
\knowledgenewmathcommandPIE{\msscorefun}{%
   \cmdkl{\xi}^{\cmdkl{\minsupindex}}#1#2#3}
\knowledgenewmathcommandPIE{\Sscorefun}{%
   \cmdkl{\Xi}^{\cmdkl{\textsf{s}}}#1#2#3}
\knowledgenewmathcommandPIE{\sscorefun}{%
   \cmdkl{\xi}^{\cmdkl{\textsf{s}}}#1#2#3}
\knowledgenewmathcommandPIE{\SHARPscorefun}{%
   \cmdkl{\Xi}^{\cmdkl{\textsf{\#}}}#1#2#3}
\knowledgenewmathcommandPIE{\sharpscorefun}{%
   \cmdkl{\xi}^{\cmdkl{\textsf{\#}}}#1#2#3}
\knowledgenewmathcommandPIE{\Wscorefun}{%
   \cmdkl{\Xi}#1#2#3}
\knowledgenewmathcommandPIE{\wscorefun}{%
   \cmdkl{\xi}#1#2#3}
\knowledgenewrobustcmd{\sigmaless}[1]{\cmdkl{\sigma}_{\!\cmdkl{<}#1}}
\knowledgenewrobustcmd{\sigmaleq}[1]{\cmdkl{\sigma}_{\!\cmdkl{\le}#1}}
\knowledgenewrobustcmd{\bse}[1]{\cmdkl{X_{#1}}} %
\knowledgenewrobustcmd{\atoms}{\cmdkl{\textit{atoms}}}
\knowledgenewrobustcmd{\arity}{\cmdkl{\mathrm{arity}}}
\knowledgenewrobustcmd{\const}{\cmdkl{\textit{const}}}
\knowledgenewrobustcmd{\Const}{\cmdkl{\mathsf{Const}}} %
\knowledgenewrobustcmd{\dom}{\cmdkl{\mathrm{dom}}} %
\knowledgenewrobustcmd{\mterms}{\cmdkl{\textit{term}}}
\knowledgenewrobustcmd{\vars}{\cmdkl{\textit{vars}}}
\knowledgenewrobustcmd{\Var}{\cmdkl{\mathsf{Var}}}
\knowledgenewrobustcmd{\Unif}[1][q]{\cmdkl{\mathcal{M}}_{#1}}
\knowledgenewmathcommandPIE{\IsubA}{%
   \cmdkl{\I#1#2#3}}
\knowledgenewrobustcmd{\dnames}{\cmdkl{\mathsf{N_D}}}
\knowledgenewrobustcmd{\cnames}{\cmdkl{\mathsf{N_C}}}
\knowledgenewrobustcmd{\rnames}{\cmdkl{\mathsf{N_R}}}
\knowledgenewrobustcmd{\inames}{\cmdkl{\mathsf{N_I}}}
\knowledgenewrobustcmd{\nulls}{\cmdkl{\mathsf{N_U}}}
\knowledgenewrobustcmd{\vnames}{\cmdkl{\mathsf{N_V}}}
\knowledgenewrobustcmd{\irnames}{\cmdkl{\mathsf{N^{\pm}_R}}}
\knowledgenewrobustcmd{\terms}{\cmdkl{\mathsf{terms}}}
\knowledgenewrobustcmd{\mods}{\cmdkl{\mathsf{Mod}}}
\knowledgenewrobustcmd{\withT}[1]{(\T,#1)} %
\knowledgenewrobustcmd{\omqsat}{\mathrel{\cmdkl{\models}}} %
\knowledgenewrobustcmd{\dllitec}{\cmdkl{\ensuremath{\mathsf{DL\text{-}Lite}_{\mathsf{core}}}}}
\knowledgenewrobustcmd{\dlliter}{\cmdkl{\ensuremath{\mathsf{DL\text{-}Lite}_{\mathsf{core}}^{\mathcal{H}}}}}
\knowledgenewrobustcmd{\dlliteh}{\cmdkl{\ensuremath{\mathsf{DL\text{-}Lite}_{\mathsf{Horn}}}}}
\knowledgenewrobustcmd{\EL}{\cmdkl{\mathcal{EL}}}
\knowledgenewrobustcmd{\elhibot}{\cmdkl{\mathcal{ELHI}_{\bot}}}
\knowledgenewrobustcmd{\Lmin}{\cmdkl{\L_{\min}}} %
\knowledgenewrobustcmd{\partsof}[1]{\cmdkl{\mathcal{P}}(#1)} %
\knowledgenewrobustcmd{\fpartsof}[1]{\cmdkl{\mathcal{P}_{\mathsf{f}}}(#1)} %
\knowledgenewrobustcmd{\Perm}{\cmdkl{\mathbb{P}}} %
\knowledgenewrobustcmd{\Totord}{\cmdkl{\mathcal{O}}} %
\knowledgenewrobustcmd{\lb}{\cmdkl{\mathrm{lb}}}%
\knowledgenewrobustcmd{\Esp}{\mathbf{E}} %
\knowledgenewrobustcmd{\Prob}{\mathbf{P}} %
\knowledgenewrobustcmd{\dcup}{\mathbin{\cmdkl{\uplus}}} %
\knowledgenewrobustcmd{\bigdcup}{\mathop{\cmdkl{\biguplus}}} %
\knowledgenewrobustcmd{\homto}[1][]{\mathrel{\cmdkl{\xrightarrow{\smash{\textit{\tiny #1 \!hom}}}}}} %
\knowledgenewrobustcmd{\Chomto}[1][]{\mathrel{\cmdkl{\xrightarrow{\smash{\textit{\tiny #1 \!$\C$-hom}}}}}} %
\knowledgenewrobustcmd{\Pichomto}[1][]{\mathrel{\cmdkl{\xrightarrow{\smash{\textit{\tiny #1 \!$\Pic$-h}}}}}} %
\knowledgenewrobustcmd{\polyrx}{ %
   \mathrel{\cmdkl{\le_{\mathsf{P}}}}}
\knowledgenewrobustcmd{\polyeq}{ %
   \mathrel{\cmdkl{\equiv_{\mathsf{P}}}}}
\knowledgenewrobustcmd{\class}{\mathcal{C}} %
\knowledgenewmathcommand{\ACQ}{\cmdkl{\mathsf{ACQ}}} %
\knowledgenewmathcommand{\CQ}{\cmdkl{\mathsf{CQ}}} %
\knowledgenewmathcommand{\CQeq}{\cmdkl{\mathsf{CQ}^{=}}} %
\knowledgenewmathcommand{\CQeqneq}{\cmdkl{\mathsf{CQ}^{\neq,=}}} %
\knowledgenewmathcommand{\CQneq}{\cmdkl{\mathsf{CQ}^{\neq}}} %
\knowledgenewmathcommand{\CRPQ}{\cmdkl{\mathsf{CRPQ}}} %
\knowledgenewmathcommand{\IQ}{\cmdkl{\mathsf{IQ}}} %
\knowledgenewmathcommand{\RPQ}{\cmdkl{\mathsf{RPQ}}} %
\knowledgenewmathcommand{\sjfACQ}{\cmdkl{\mathsf{sjfACQ}}} %
\knowledgenewmathcommand{\sjfCQ}{\cmdkl{\mathsf{sjfCQ}}} %
\knowledgenewmathcommand{\UCQ}{\cmdkl{\mathsf{UCQ}}} %
\knowledgenewtextcommand{\UCQneg}{\cmdkl{UCQ$^\lnot$}} %
\knowledgenewrobustcmd{\UCQneq}{\cmdkl{\mathsf{UCQ}^{\neq}}} %
\knowledgenewmathcommand{\UCQneg}{\cmdkl{\mathsf{UCQ}^\lnot}} %
\knowledgenewmathcommand{\UCRPQ}{\cmdkl{\mathsf{UCRPQ}}} %
\knowledgenewmathcommand{\UsjfACQ}{\cmdkl{\mathsf{UsjfACQ}}} %
\knowledgenewmathcommandPIE{\onemsq}{%
   \cmdkl{q#2}#1#3}%
\knowledgenewrobustcmd{\aC}{\cmdkl{C}} %
\knowledgenewrobustcmd{\aCij}[1][i,j]{\cmdkl{C_{#1}}} %
\knowledgenewrobustcmd{\SNij}[1][i,j]{\cmdkl{\S^N_{#1}}} %
\knowledgenewrobustcmd{\SNijtld}[1][i,j]{\cmdkl{\tilde{\S}^N_{#1}}} %
\knowledgenewrobustcmd{\Ak}[1][k]{\cmdkl{\A^{#1}}} %
\knowledgenewrobustcmd{\Akchi}[2][k]{\cmdkl{\A_{#2}^{#1}}}
\knowledgenewrobustcmd{\Ao}[1][]{\cmdkl{\A^{\circ}_{#1}}} %
\knowledgenewrobustcmd{\AG}[1][]{\cmdkl{\A^{G}_{#1}}} %
\knowledgenewrobustcmd{\PG}[1][]{\cmdkl{\P^{G}_{#1}}} %
\knowledgenewrobustcmd{\Axto}{\cmdkl{\A^{\to}_{\subexo}}} %
\knowledgenewrobustcmd{\Axla}{\cmdkl{\A^{\la}_{\subexo}}} %
\knowledgenewrobustcmd{\Below}{\cmdkl{\mathbf{B}}} %
\knowledgenewrobustcmd{\Left}{\cmdkl{\mathbf{L}}} %
\knowledgenewrobustcmd{\Right}{\cmdkl{\mathbf{R}}} %
\knowledgenewrobustcmd{\qreach}[1][st]{\cmdkl{#1\textit{-reach}}}
\knowledgenewrobustcmd{\Reach}[1]{\cmdkl{\textsf{CC}(}#1\cmdkl{)}}
\knowledgenewrobustcmd{\rhoA}[1][\A]{\cmdkl{\rho_{#1}}}
\knowledgenewrobustcmd{\core}{\cmdkl{\textit{core}}}
\knowledgenewrobustcmd{\emptytup}{\cmdkl{()}}
\knowledgenewrobustcmd{\hyperq}[1][q]{\cmdkl{\mathbf{G}_{#1}}}
\knowledgenewrobustcmd{\primalq}[1][q]{\cmdkl{\mathbf{G}^p_{#1}}}
\knowledgenewrobustcmd{\dimtup}{\cmdkl{\dim}}
\knowledgenewrobustcmd\vertex[1]{\cmdkl{V}(#1)}
\knowledgenewrobustcmd\edges[1]{\cmdkl{E}(#1)}
\knowledgenewrobustcmd{\maxSize}[1]{\cmdkl{\|}#1\cmdkl{\|}}
\knowledgenewrobustcmd{\sizeofD}[1][D]{\cmdkl{|}#1\cmdkl{|}}
\knowledgenewrobustcmd{\normOne}[1]{\cmdkl{\|}#1\cmdkl{\|_1}}
\knowledgenewrobustcmd{\normInf}[1]{\cmdkl{\|}#1\cmdkl{\|_{\infty}}}
\knowledgenewrobustcmd{\sizeofq}[1][q]{\cmdkl{|}#1\cmdkl{|}}
\knowledgenewrobustcmd{\sizeofgamma}[1][\gamma]{\cmdkl{|}#1\cmdkl{|}}
\knowledgenewrobustcmd{\reducesto}{\mathrel{\cmdkl{\leq_{\textit{poly}}}}}
\knowledgenewrobustcmd\bagmap{\cmdkl{\mathbf{b}}}
\knowledgenewrobustcmd\atommap{\cmdkl{\mathbf{a}}}
\knowledgenewrobustcmd\atommaplab{\cmdkl{\mathbf{\tilde a}}}
\knowledgenewrobustcmd\tagmap{\cmdkl{\mathbf{t}}}
\knowledgenewrobustcmd\tagmappath[1]{\cmdkl{\mathbf{t}[#1]}}
\newrobustcmd\tagmappathprime[1]{%
  \withkl{\kl[\tagmappath]}{%
    \cmdkl{\mathbf{t}'[#1]}%
  }%
}
\knowledgenewrobustcmd{\ghw}{\cmdkl{\textit{ghw}}}
\knowledgenewrobustcmd{\fghw}{\cmdkl{\textit{fghw}}}
\knowledgenewrobustcmd{\pghw}{\cmdkl{\textit{pghw}}}
\knowledgenewrobustcmd{\hw}{\cmdkl{\textit{hw}}}
\knowledgenewrobustcmd{\tw}{\cmdkl{\textit{tw}}}
\knowledgenewrobustcmd{\sjoin}[1][]{\mathop{\cmdkl{\ltimes_{#1}}}}
\knowledgenewrobustcmd{\costjoin}{\cmdkl{c_{sj}}}
\knowledgenewrobustcmd{\abotimes}[2]{\cmdkl{\bigotimes_{#1}}#2} %
\knowledgenewrobustcmd{\aboplus}[2]{\cmdkl{\bigoplus_{#1}}#2} %
\knowledgenewrobustcmd{\contr}{\cmdkl{\textit{contract}}}
\knowledgenewrobustcmd{\Mfacts}{\cmdkl{\mathbf{M}}}
\knowledgenewrobustcmd{\DBs}[1][\Sigma]{\cmdkl{\textup{DB}_{#1}}}
\knowledgenewrobustcmd{\evalPb}[1]{\cmdkl{\textup{\textsc{Eval-}}}#1}
\knowledgenewrobustcmd{\aug}[1]{{#1}^{\cmdkl{+}}}%
\knowledgenewrobustcmd{\restrictG}[2][G]{#1\cmdkl{[}{#2}\cmdkl{]}}
\knowledgenewrobustcmd{\evalCounting}[2]{\cmdkl{\#}{#1}\cmdkl{\langle}#2\cmdkl{\rangle}}
\knowledgenewrobustcmd{\counting}[1]{\cmdkl{\#}{#1}}
\knowledgenewrobustcmd{\PRel}{\cmdkl{P}}
\knowledgenewrobustcmd{\qterms}{\cmdkl{\mathsf{terms}}}
\knowledgenewrobustcmd{\Equivs}{\cmdkl{\E}}
\knowledgenewrobustcmd{\qE}[1][E]{\cmdkl{q_{#1}}}
\knowledgenewrobustcmd{\qEneq}[1][E]{\cmdkl{q_{#1}^{\neq}}}
\knowledgenewrobustcmd{\Punif}{\cmdkl{\mathbf{P}}_q}
\knowledgenewrobustcmd{\iso}{\cmdkl{\simeq}}
\knowledgenewrobustcmd{\Auto}[1]{\cmdkl{\textnormal{Aut}}(#1)}
\knowledgenewrobustcmd{\elimEq}[1]{\cmdkl{\widehat{#1}}}
\knowledgenewrobustcmd{\poneone}[1]{\cmdkl{q_{#1}^{\exists!}}}%
\knowledgenewrobustcmd{\RelatedTerms}{\cmdkl{\+R}}
\knowledgenewrobustcmd{\Qneq}{\cmdkl{\mathcal{Q}^{\neq}}}
\begin{document}
\selectlanguage{british}  %

\frontpagedeclarations

\begin{abstract}
   The Shapley value, originating from cooperative game theory, has been employed to define responsibility measures that quantify the contributions of database facts to obtaining a given query answer. For non-numeric queries, this is done by considering a cooperative game whose players are the facts and whose wealth function assigns 1 or 0 to each subset of the database, depending on whether the query answer holds in the given subset. While conceptually simple, this approach suffers from a notable drawback: the problem of computing such Shapley values is \#P-hard in data complexity, even for simple conjunctive queries. 
This motivates us to revisit the question of what constitutes a reasonable responsibility measure and to introduce a new family of responsibility measures -- weighted sums of minimal supports (WSMS) -- which satisfy intuitive properties. Interestingly, while the definition of WSMSs is simple and bears no obvious resemblance to the Shapley value formula, 
we prove that every WSMS measure can be equivalently seen as the Shapley value of a suitably defined cooperative game. Moreover, WSMS measures enjoy tractable data complexity for a large class of queries, including all unions of conjunctive queries. We further explore the combined complexity of WSMS computation and establish (in)tractability results for various subclasses of conjunctive queries. 
Interestingly, the proofs of the tractability results for combined complexity reveal connections between counting minimal supports and counting answers to conjunctive queries.

\end{abstract}

\begin{CCSXML}
  <ccs2012>
     <concept>
         <concept_id>10003752.10010070.10010111.10003623</concept_id>
         <concept_desc>Theory of computation~Data provenance</concept_desc>
         <concept_significance>500</concept_significance>
         </concept>
     <concept>
         <concept_id>10003752.10010070.10010111.10011736</concept_id>
         <concept_desc>Theory of computation~Incomplete, inconsistent, and uncertain databases</concept_desc>
         <concept_significance>100</concept_significance>
         </concept>
     <concept>
         <concept_id>10003752.10010070.10010111.10011734</concept_id>
         <concept_desc>Theory of computation~Logic and databases</concept_desc>
         <concept_significance>100</concept_significance>
         </concept>
   </ccs2012>
\end{CCSXML}
\ccsdesc[500]{Theory of computation~Data provenance}
\ccsdesc[300]{Information systems~Relational database model}
\ccsdesc[100]{Theory of computation~Incomplete, inconsistent, and uncertain databases}
\ccsdesc[100]{Theory of computation~Logic and databases}

\keywords{Shapley value, numeric responsibility measures, minimal supports, non-numeric queries, monotone queries, conjunctive queries}

\maketitle

\noindent
\raisebox{-.4ex}{\HandRight}\ \ This pdf contains internal links: clicking on a "notion@@notice" leads to its \AP ""definition@@notice"".%

\smallskip
\noindent\raisebox{-.4ex}{\HandRight}\ \ This is an extended version of the PODS'25 paper \cite{ourpods25paper} (see \Cref{sec:deltaconference} for differences).

\section{Introduction}
\label{sec:intro}
Responsibility measures assign scores to database facts based upon how 
much they contribute to the obtention of a given query answer, thereby providing a quantitative notion of explanation for query results. 
In the present paper, we shall take a fresh look at how to define and compute such responsibility measures, focusing on 
the class of monotone non-numeric queries,
which notably includes %
(unions of) conjunctive queries and path queries. 
Although several responsibility measures have been explored in the database literature \cite{DBLP:journals/pvldb/MeliouGMS11,abramovichBanzhafValuesFacts2024,salimiQuantifyingCausalEffects2016,livshitsShapleyValueTuples2021}, 
one particular measure, based upon the Shapley value, 
has been the focus of much of the recent work \cite{livshitsShapleyValueTuples2021,DeutchFKM22ComputingShapley,KhalilK23ShapleyRPQ,KaraOlteanuSuciuShapleyBack,ourpods24,ReshefKL20,karmakarExpectedShapleyLikeScores2024}.

We recall that the Shapley value \cite{shapley1953value} was first introduced in game theory for the purpose of %
fairly distributing wealth %
amongst the players in a cooperative game. Formally, such a game is defined using a  ``wealth function'' $\omega: 2^{\text{Players}} \to \lR_{\geq 0}$ which assigns a number to every subset of players. The Shapley value then defines a function $\phi_\omega: \text{Players} \to \lR$ from players to numbers, which measures their individual contributions. The Shapley function has been famously characterized as the only function that satisfies some %
fundamental axioms (as detailed in \Cref{sec:drastic-shapley}).

In the database context, the Shapley value has been utilized for distributing the responsibility for a given query answer among database facts. This is done by considering games in which the facts are the players and the query is captured using the wealth function.
\AP
For ""numeric queries"" (\ie queries yielding a number when evaluated on a database), the sensible choice for the wealth function is the query itself \cite{livshitsShapleyValueTuples2021}. 
That is, for a given input numeric query $q$, database $\D$, and set of facts $S\subseteq \D$ over which we want to distribute the contribution -- known as `endogenous' facts for historical reasons --, the Shapley value distributes the total wealth $q(\D)=N$ among the facts of $S$ (\ie the `players') using $q$ itself as the wealth function.

For non-numeric queries, whose output is a set of tuples of constants rather than a number,
it is less obvious how to define the wealth function. The solution that was originally proposed in \cite{livshitsShapleyValueTuples2021}
and adopted in all subsequent studies %
 is to first reduce to the Boolean case by considering 
the Boolean query $q(\vect{a})$ associated with the 
input query $q$ and answer tuple $\vect{a}$, and then to treat Boolean queries as numeric queries that output 0 or 1
depending on whether the query holds true in a set of facts. We shall refer to the Shapley value obtained by 
encoding queries using such a 0/1 wealth function %
as the \emph{drastic}
Shapley value, due to it being very similar in spirit to the drastic inconsistency measure \cite{hunterMeasureConflictsShapley2010} (see \Cref{ssec:inconsistency} for more discussion on inconsistency measures).

The use of the drastic Shapley value for responsibility attribution of non-numeric %
queries has received %
significant attention of late \cite{livshitsShapleyValueTuples2021,DeutchFKM22ComputingShapley,KhalilK23ShapleyRPQ,KaraOlteanuSuciuShapleyBack,ReshefKL20,ourpods24,karmakarExpectedShapleyLikeScores2024}. %
The main focus has been on 
precisely identifying the data complexity %
of computing the drastic Shapley value, 
considering various classes of queries and drawing connections to the related tasks of probabilistic
query evaluation and model counting. The key takeaway %
is that 
drastic Shapley value computation is computationally challenging, %
being $\FPsP$-complete in data complexity 
even for some very simple conjunctive queries. 
The situation is more favourable if one moves from exact to approximate computation:  Monte-Carlo sampling provides a fully polynomial randomized approximation scheme (FPRAS) for every UCQ \cite[Corollary 4.13]{livshitsShapleyValueTuples2021}, even if practical implementations appear to be challenging (see experiments and discussions in \cite[§6.2]{DeutchFKM22ComputingShapley}).

Another potential drawback to the drastic Shapley value is 
that it may be hard to interpret the resulting scores,
in particular to understand %
how a change in the database would impact the relative scores of a pair of facts, which may be influenced by seemingly unrelated facts (as exemplified by \Cref{ex:aotbe}). 
In its original conception, the Shapley value uses a wealth function whose numbers correspond to some `cost' or `wealth' in some implicit \emph{unit of measure} (\eg\  dollars, 
number of workers,
number of tuples satisfying some property). 
This same idea carries over to numeric queries, as  %
the `total wealth' to distribute is given by the query itself, which outputs a number with a clear meaning. 
However, the drastic Shapley approach casts the Boolean values `true' and `false' into the (real) numbers 1 and 0, in order to 
apply operations like mean.
This can make the interpretation of the drastic Shapley value not always intuitive to a non-expert end user, which might hence hinder its use for explainability tasks.

    Given both the high computational complexity and non-obvious interpretability of the drastic Shapley value, 
a natural question is whether we can find another responsibility measure for non-numeric queries that 
is computationally and conceptually simpler.
One might be tempted to reply `no', arguing that the uniqueness
result for the Shapley axioms forces us to adopt the drastic Shapley value. However, these axioms were 
introduced for cooperative games, rather than responsibility measures for queries, and their meaning 
crucially depends on the manner in which queries are translated into games. In fact, we shall see that 
when these axioms are translated from games to responsibility measures, using the drastic wealth function,
only two of the four axioms yield unarguably desirable properties for such measures (and one axiom
is actually nonsensical).

In view of the preceding discussion, it is both natural and possible to explore alternative 
responsibility measures for non-numeric queries. This will be the main topic of the present paper, 
whose conceptual and technical contributions are summarized in what follows.

\subsection{Contributions} 

Our first conceptual contribution is to propose a set of  %
\emph{desirable properties for responsibility measures}, focusing on the case of monotone non-numeric queries. 
We begin by examining in \Cref{sec:drastic-shapley}  the properties that result from mapping the Shapley axioms into the database setting using the drastic Shapley wealth function. We observe that only two %
yield clearly desirable properties in our setting, while the other two are either nonsensical or of debatable interest. %
As this leaves us with only two, rather weak, properties, we introduce in \Cref{sec:desirableaxioms} two additional properties 
that capture intuitions as to how a reasonable responsibility measure should behave.

Our second conceptual contribution is to propose in \Cref{ssec:wsmsdef} a \emph{novel family of responsibility measures} specifically tailored to monotone non-numeric queries, which we name \emph{weighted sums of minimal supports} (\emph{WSMS}). As the name suggests, the measure of a fact consists of a weighted sum of the sizes of minimal supports of the query (answer) that contain the fact. 
These measures have an intuitive definition, they enjoy the identified desired properties, and the values arising from these measures %
have a simple interpretation.
Further,  we can show that each measure from this family can be equivalently seen as the Shapley value of a suitable cooperative game (\Cref{ssec:apr2}).
 We also argue that the family is flexible and can accommodate weight functions giving more importance to small supports than to the number of supports, or vice-versa (\Cref{ssec:otherwsms}).

Our main technical contribution is a \emph{complexity analysis of WSMS computation}.  In \Cref{sec:datacomplexity},  we show that, in contrast to the drastic Shapley value,  WSMSs enjoy tractable data complexity for all unions of conjunctive queries (and more generally, bounded queries) via a straightforward algorithm. 
For combined complexity, we obtain a general $\FPsP$ bound for a wide range of query classes (\Cref{sec:combined-complexity}). 
We then focus on conjunctive queries (CQs) and show that, unfortunately, even for acyclic CQs without constants, computing WSMS is $\sP$-hard, and that this hardness results also holds for unions of acyclic "self-join free" queries (\Cref{sec:hardness-(U)CQs}). 
However, for acyclic self-join free CQs, the WSMS computation can be reduced to evaluating `counting CQs' (more precisely, counting the number of homomorphisms from acyclic CQs), which is known to be tractable. 
In \Cref{ssec:counting-homs-ms-char} we investigate the conditions under which counting homomorphisms coincides with counting minimal supports and present a characterization result. %
Further, in \Cref{ssec:cq-bounded-sj}
we show that if both the generalized hypertree width and the amount of self-joins (via a measure we define for this purpose) are bounded, then WSMSs on CQs can be computed in polynomial time.

As a final contribution, we explore the \emph{properties of some related measures}. %
First, having observed that WSMSs are closely connected to the `MI inconsistency measure' of \cite{hunterMeasureConflictsShapley2010}, we define in \Cref{ssec:inconsistency}
three new responsibility measures based upon inconsistency measures and show that they are less suitable for semantic or computational reasons. We likewise show in \Cref{ssec:issues_sash} that redefining WSMSs by counting homomorphisms rather than minimal supports yields a measure which violates a basic desirable property. Finally, in \Cref{ssec:shap}, we recall the SHAP score \cite{lundbergUnifiedApproachInterpreting2017} that has been used for explaining classifiers in machine learning and suggest how the WSMS definition could be adapted to fix some semantic problems with the SHAP score (albeit at the cost of a higher computational complexity).

\subsection{Differences with Respect to Conference Paper}
\label{sec:deltaconference}
The current article is based on the conference paper \cite{ourpods25paper}. 
While the main results are essentially the same, we have included detailed proofs, added discussion, and improved explanations. In particular, the definition of the Symmetry axiom has been simplified to reflect better the intuition behind it, still yielding a set of axioms equivalent to that of Shapley's original seminal paper \cite{shapley1953value} (\cf \Cref{sec:originalShapleyAxioms}). 
We have also included additional technical content not present in \cite{ourpods25paper}. 
Firstly, we include the result that counting minimal supports for unions of "self-join free" "acyclic" queries is $\sP$-hard (\Cref{prop:UsjfACQ-MS-hard}) together with its full proof.
We also include the result, with its full proof, that computing "WSMS" scores of "UCQ"s can be implemented by evaluating simple SQL queries in parallel (\Cref{thm:rewritting-SQL,cor:redux-to-SQL}).
Both these results first appeared in another conference paper \cite[Proposition 13, Theorem 5, and Corollary 8]{ourKR25} but with only short proof sketches. 
Finally, all of 
\Cref{ssec:counting-homs-ms-char}, which studies and characterizes classes of "CQs" for which "homomorphisms" are in bijection with  "minimal supports", 
 is entirely new.

\section{Preliminaries} 
\label{sec:prelims}
\AP
We fix disjoint infinite sets $\intro*\Const$, $\intro*\Var$ of ""constants"" and ""variables"", respectively. 
For any syntactic object $O$ (\eg\ database, query), we will use \AP$\intro*\vars(O)$ and $\intro*\const(O)$ 
to denote the sets of "variables" and "constants" contained in $O$, and let \AP$\intro*\mterms(O)\defeq \vars(O) \cup \const(O)$ denote its set of ""terms"".\phantomintro{\qterms}

\AP A ""(relational) schema"" %
is a finite set of relation symbols, each associated with a (positive) arity. 
\AP
A ""(relational) atom"" over a "schema" $\Sigma$ takes the form $R(\vect t)$ where $R$ is a ""relation name"" from $\Sigma$ of some arity $k$, and $\vect t \in (\Const \cup \Var)^k$ is a tuple of "terms" (here and later, bold font is used for tuples / vectors).\
\AP
A ""fact"" is an "atom" which contains only "constants".
\AP
A ""database"" $\D$ over a "schema" $\Sigma$ is a finite set of "facts" over $\Sigma$.

\AP
A ""(non-numeric) query"" $q$ of arity $k \geq 0$ over schema $\Sigma$ is a syntactic object associated with a \AP""semantics"" consisting of a function that maps every 
database $\D$ over  $\Sigma$ to a set $q(\D)$ of $k$-tuples of elements from $\const(\D)$,\footnote{%
Note the common notation between the syntax $q$ and the (implicit) "semantics" $\D \mapsto q(\D)$.
}
called the ""answers"" to $q$ on~$\D$. %
\AP
When $k=0$, we say that $q$ is a ""Boolean query"",
and we shall write $\D \models q$ in place of $ \D(q)=\{()\}$. 
As explained in Section \ref{ssec:resp}, it will be sufficient to work with "Boolean queries", 
so \emph{except where otherwise indicated, we will henceforth abuse terminology and write "query" to mean "Boolean query"}.%

\AP
We shall further focus our attention on ""monotone"" queries, that is, queries $q$ such that %
$\D \models q$ implies $\D \cup \D' \models q$ for every pair of databases $\D, \D'$. 
\AP
When $\D \models q$ and $q$ is  "monotone", %
we call $\D$ a ""support"" for $q$ and say it is a ""minimal support"" if $\D$ properly contains no other "support". 
We denote by \AP$\intro*\Minsups q (\D)$ the set of all "minimal supports" for $q$ that are subsets of $\D$.
A "monotone" "query" $q$ is \AP""bounded"" if the sizes of "minimal supports" for $q$ are bounded by a constant (independent of the "database").
We say that a fact $\alpha \in \D$ is \AP ""relevant"" for $q$ in $\D$ if $\alpha \in S$ for some $S \in \Minsups q (\D)$, and \reintro{irrelevant} otherwise.

We now define some concrete classes of "monotone" (Boolean) queries. 
\AP
A ""conjunctive query"" (\reintro{CQ}) over $\Sigma$ takes the form 
$q = \exists \vect{x}\, \varphi$ where $\varphi$ is a conjunction of atoms over $\Sigma$ and $\vect{x}=\vars(\varphi)$. 
We use $\intro*\atoms(q)$ to denote the set of "atoms" of a "CQ"~$q$. 
An \AP""acyclic CQ"" ("ACQ") is a "CQ" $q = \exists \vect{x}\, \varphi$ such that there is a tree whose vertices are $\atoms(q)$ %
and such that for vertices $\alpha,\alpha',\beta$, $\vars(\alpha) \cap \vars(\alpha') \subseteq \vars(\beta)$ whenever $\beta$ lies on the unique simple path between $\alpha,\alpha'$. 
We shall 
denote by  \AP $\intro*\CQ$ (resp. $\intro*\ACQ$) the class of "CQs" (resp. "ACQs").
The class $\intro*\CQneq$ consists of "CQs" extended with \AP""inequality atoms"" of the form $x \neq t$ where $x$ is a "variable" and $t$ is a "term" (\ie a "variable" or a "constant").
Such queries have the obvious semantics, wherein the relation associated to $\neq$ is $\set{(c,c') \in (\const(\D) \cup \const(q))^2 : c \neq c'}$.
Other prominent classes of "monotone" queries include: \AP""unions of conjunctive queries (with inequalities)@unions of conjunctive queries"" ($\intro*\UCQ$, resp.\ $\intro*\UCQneq$), which are finite disjunctions of "CQ"s (resp.\ of "CQneq"s), 
and \AP""path queries"", which take 
the form $\+L(c,d)$ where $\+L$ is a language over a set of binary relation names and $c,d$ are "constants", with the "semantics" of testing whether there is a directed path from $c$ to $d$ whose label is in $\+L$ in the input "database". In particular, a (Boolean) \AP""regular path query"" (\reintro{RPQ}) is a "path query" where $\+L$ is a regular language.

 \AP We recall that the "semantics" of a "CQ"  $q$ is defined by $\D \models q$ if and only if there is a function 
 $h:\mterms(q) \rightarrow \mterms(\D)$ such that
 \begin{enumerate*}[(i)]
\item $R(\vect{t}) \in \atoms(q)$ implies $R(h(\vect{t})) \in \D$, and
\item\label{cas:hom.2} $h(c)=c$ for every $c \in \const(q)$.
\end{enumerate*}
Such a function is called a (query)  ""homomorphism"" (from $q$ to $\D$), and we write 
$q \homto \D$  to indicate the existence of such a "homomorphism". 
We shall also consider homomorphisms of a "CQ" $q$ to another "CQ" $q'$, denoted  $q \homto q'$ and defined similarly.
The notion of \reintro{homomorphism}
between queries
is adapted to "CQneq" queries in the obvious way: restricting the valid mappings $h$ to verify $h(t)\neq h(t')$ if the "inequality atom" $t \neq t'$ is in the query. 
The \AP""core"" of a "CQ" $q$, denoted $\intro*\core(q)$, is the unique (up to isomorphism) minimal equivalent query %
\cite{ChandraM77}.
An \AP""automorphism@@cq"" of a "CQ" (resp.\ "CQneq") $q$ is a "homomorphism" $q \homto q$, and we denote by $\intro*\Auto{q}$ the set of all "automorphisms@@cq" of $q$.
A query $q$ is ""homomorphism-closed"" if $\D' \models q$ whenever $\D \homto \D'$ and $\D \models q$, where $\D \homto \D'$ is defined similarly to "query homomorphisms@homomorphism" but without condition \ref{cas:hom.2}.
The \AP""canonical database"" of a "CQ" is the "database" obtained from the query by turning atoms over a relation name $R$ into tuples of the relation $R$ replacing each variable with a fresh constant.

\AP We use the standard notion of ""polynomial-time Turing reductions"" between numeric problems $\Psi_1,\Psi_2$, and we write $\Psi_1 \intro*\polyrx \Psi_2$ if there is a polynomial-time algorithm for $\Psi_1$ using a $\Psi_2$ oracle.
If the algorithm makes only a single call to the $\Psi_2$ oracle we call it a ""$1$-Turing reduction"".

\section{Responsibility Measures and the Drastic Shapley Value} %
\label{sec:drastic-shapley}
The (drastic) Shapley value has been widely studied as a way of defining a "responsibility measure" for database queries. A seminal and oft-cited result \cite{shapley1953value} in cooperative game theory establishes that the Shapley value is the unique wealth distribution measure that satisfies a specific set of `Shapley axioms'. It is tempting to think that this unicity result transfers to the setting of database "responsibility measures", and thus that the drastic Shapley value considered in prior work is the only reasonable approach. 
However, we consider that there is no single valid notion of a "responsibility measure", and we shall show in this section that the Shapley axioms are insufficient to argue the contrary. 
We start by introducing %
the notion of "responsibility measure" for "non-numeric" database queries, before recalling the definition of the Shapley value from game theory and how it has been utilized to define a concrete such "responsibility measure". 
Finally, we discuss the relevance of the Shapley axioms in the database setting and motivate the interest of exploring alternative measures.%

\subsection{Responsibility Measures}\label{ssec:resp}
In line %
with prior studies \cite{DBLP:journals/pvldb/MeliouGMS11,livshitsShapleyValueTuples2021,KaraOlteanuSuciuShapleyBack,ourpods24}, we %
 work with \AP""partitioned databases"", in which %
the "database" $\D$ is partitioned into \AP""endogenous"" and \AP""exogenous"" "facts", denoted by $\intro*\Dn$ and $\intro*\Dx$ respectively.
By definition, the "endogenous" facts share the whole responsibility while the "exogenous" ones are treated as always present.
The aim is to quantify the responsibility of a given "endogenous" fact to the obtention of some query answer. 
\AP
Formally, we shall be interested in ""responsibility measures"", defined as functions $\phi$ which take as input %
a (possibly non-Boolean) "query" $q$, a "partitioned database" $\D$, an answer $\vect a$ to $q$ in $\D$, and which output
 a quantitative score measuring, for each  "endogenous" fact $\alpha\in \Dn$, how much it contributes to $\vect{a}$ being an answer of $q(\D)$.
We simplify the presentation by replacing 
the input answer $\vect{a}$ and (possibly non-Boolean) query $q$
by  
the associated Boolean query $q(\vect{a})$, defined by letting $\D \models q(\vect{a})$ iff $\vect{a} \in q(\D)$. %
In this manner, %
we can eliminate the answer tuple from the arguments of the "responsibility measure"  %
and work instead with "responsibility measures" of the form $\phi_{q,\D}(\alpha)$ with Boolean~$q$. 
Henceforth, we shall thus assume w.l.o.g.\ %
that the input query %
is always "Boolean", as well as %
"monotone".\footnote{See \Cref{sec:conclusions} for a brief discussion on non-monotone queries.} %
We discuss in \Cref{sec:desirableaxioms,sec:originalShapleyAxioms}
the desirable properties of "responsibility measures".

\subsection{Shapley Value and Shapley-Based Responsibility Measures}
\AP%
A ""cooperative game"" is traditionally given by a finite set of players $\intro*\bse{\scorefun}$ and a "wealth function" $\scorefun : \partsof{\bse{\scorefun}} \to \lQ$ such that $\scorefun(\emptyset)=0$, where $\intro*\partsof{\bse{\scorefun}}$ denotes the set of all subsets of $\bse{\scorefun}$. Conceptually, $\scorefun(S)$ represents the wealth attributable to the subset $S\inc \bse{\scorefun}$ of players  (often called `coalition').
The "Shapley value", introduced in the seminal work by Shapley \cite{shapley1953value}, is a method for quantifying each player's contribution within a game. 
Formally, this corresponds to taking as input a "cooperative game" and producing a function  $\Sh_{\scorefun} : \bse{\scorefun} \to \lQ$ 
that attributes to each element of $\bse{\scorefun}$ a share of the total wealth of the game.

Note that the set of players $\AP\bse{\scorefun}$ is merely the domain of the "wealth function" $\scorefun$, and hence $\scorefun$ completely specifies the game.
In fact, there is nothing special about players, and $\bse{\scorefun}$ can be any finite set.
We then simplify the jargon by defining a \AP""wealth function"" to be a function $\intro*\scorefun : \partsof{\bse{\scorefun}} \to \lQ$ over a finite \AP""base set"" $\bse{\scorefun}$ such that $\scorefun(\emptyset)=0$. 
The "Shapley value" is hence a function $\psi$ that takes as input such a "wealth function" $\scorefun$  and outputs a numerical function $\psi_{\scorefun} : \bse{\scorefun} \to \lQ$.
Of course, there could be many possible such $\psi_{\scorefun}$ functions unless some restrictions are imposed. The following axioms, collectively equivalent to those proposed by Shapley  and deemed desirable in many applications, characterize a unique $\psi_{\scorefun}$ function, which shall be henceforth denoted by $\Sh_{\scorefun}$:
\begin{enumerate}[leftmargin=\widthof{(wSym)}+\labelsep]
   \item[{\crtcrossreflabel{(wSym)}[Sh:1]}] \textit{Weak Symmetry:}
      if
      $\scorefun(S\cup\{\alpha\}) = \scorefun(S\cup\{\beta\})$ for all $S\subseteq \bse{\scorefun}\setminus\{\alpha,\beta\}$, then $\psi_{\scorefun}(\alpha) = \psi_{\scorefun}(\beta)$.
   \item[{\crtcrossreflabel{(Null)}[Sh:2]}] \AP\textit{Null element:} any element of the "base set" that does not contribute to increasing the wealth (\ie a so-called ""null@@player"" element $\alpha$ such that $\scorefun(S \cup \set \alpha)= \scorefun(S)$ for all $S$) must obtain 0 as contribution.
   \item[{\crtcrossreflabel{(Lin)}[Sh:4]}] \textit{Linearity:} the value of the sum $\scorefun_1 {+} \scorefun_2$ of two "wealth functions" $\scorefun_1$ and $\scorefun_2$ over the same "base set" (where $(\scorefun_1 {+} \scorefun_2)(S) \defeq  \scorefun_1(S) + \scorefun_2(S)$ for every $S$) is the sum of values over the separate wealth functions: $\psi_{\scorefun_1+\scorefun_2}(\alpha)=\psi_{\scorefun_1}(\alpha)+\psi_{\scorefun_2}(\alpha)$.
   \item[{\crtcrossreflabel{(Eff)}[Sh:3]}] \textit{Efficiency:} the sum
      $\sum_{\alpha\in \bse{\scorefun}} \psi_{\scorefun}(\alpha)$
      of all contributions equals the total wealth $\scorefun(\bse{\scorefun})$ of the "base set".
\end{enumerate}
Thus, the ""Shapley value"" is defined to take a "wealth function" $\scorefun$ as input and output the only function $\Sh_{\scorefun}$ satisfying the axioms above. 

\begin{remark}\label{rk:different-axioms}
We shall provide our own proof of unicity (\Cref{th:shapunique}), even though it merely consists in a slight adaptation of Shapley's original proof, because the formalism of "cooperative games" and the statement of the axioms themselves has evolved over the years. As we shall see in \Cref{sec:originalShapleyAxioms}, these evolutions, although subtle, mean that \Cref{th:shapunique} does not, strictly speaking, follow from \cite[main Theorem]{shapley1953value}, and to the best of our knowledge, there is no published proof of the unicity of the above variant set of axioms. 
\end{remark}
\color{black}

The "Shapley value" of $\scorefun$ applied to $\alpha \in \bse{\scorefun}$ turns out to be the average wealth contribution of $\alpha$ over all the total linear orderings $\intro*\Totord(\bse{\scorefun})$  of $\bse{\scorefun}$, and can then be expressed via the following closed-form formula:
\begin{equation}\label{formul:sh1}
	\intro*\Sh_{\scorefun}(\alpha) = \frac{\sum_{\sigma\in \Totord(\bse{\scorefun})}  \left(\scorefun(\sigmaleq{\alpha}) - \scorefun(\sigmaless{\alpha})\right)}{|\Totord(\bse{\scorefun})|}
\end{equation}
where 
$\intro*\sigmaless{\alpha}$ (resp.~$\intro*\sigmaleq{\alpha}$) denotes the set of elements smaller than $\alpha$ (resp.\ smaller than or equal to $\alpha$) by the order $\sigma$. 
\Cref{formul:sh1} is often presented as the expected earnings of the player $\alpha$ in a probabilistic experiment where the players arrive in a random order and each one earns the additional wealth it created upon arrival.
It should however be noted that, although this probabilistic experiment is very convenient to remember the formula, it is not its raison d’être: as explained above, \Cref{formul:sh1} has been chosen because it is the formula that satisfies the axioms \ref{Sh:1}--\ref{Sh:3}.

The "Shapley value" admits a second classic closed-form formula, which is obtained by grouping together the contributions of the orderings $\sigma$ that have the same $S \defeq \sigmaless{\alpha}$:

\begin{equation}\label{formul:sh}
   \reintro*\Sh_{\scorefun}(\alpha) = \sum_{S\subseteq \bse{\scorefun} \setminus \set \alpha } \!\!\!  \frac{|S|!(|\bse{\scorefun}| - |S| -1)!}{|\bse{\scorefun}|!}\left(\scorefun(S \cup \{\alpha\}) - \scorefun(S)\right)
\end{equation}

To employ the "Shapley value" in the database setting and obtain a "responsibility measure", 
one must specify how to build games from %
queries and databases.
\AP Concretely, we need a ""wealth function family"" $\intro*\STscorefun$ which takes as parameters a query $q$ and 
a "partitioned database" $\D$ and defines a "wealth function"~$\intro*\stscorefun_{q,\D}$ over the base set $\bse{\stscorefun_{q,\D}}\defeq \Dn$ of "endogenous facts". For instance  the standard "wealth function family" in the literature \cite{livshitsShapleyValueTuples2021,ourpods24} is what we call the \AP ""drastic wealth function family"",\footnote{The name \emph{drastic} is borrowed from the closely related \emph{drastic inconsistency measure} (see, \eg, measure `$I_d$' in \cite{hunterMeasureConflictsShapley2010,Thimm17, LivshitsK22}) assigning 0 or 1 depending on whether the "database" is consistent \wrt a property. See also \Cref{ssec:inconsistency}.} denoted by $\intro*\Dscorefun$ and defined as follows:
$\intro*\dscorefun_{q,\D}$ assigns 1 to every set $S\inc \Dn$ of "endogenous facts" such that $S \cup \Dx \models q$ and 0 to the others, unless $\Dx \models q$ in which case it assigns 0 to all sets.

Once a "wealth function family" has been defined, it can be combined with the "Shapley value" to obtain a "responsibility measure", whose output is $\Sh_{\stscorefun_{q,\D}}(\alpha)$. 
In particular, the "Shapley value" instantiated with the "drastic wealth function family" $\Dscorefun$ is what we call the \AP""drastic Shapley value"".\footnote{In previous works, such as \cite{livshitsShapleyValueTuples2021}, this value has been referred to simply as the `Shapley value'.} %
Given a class of queries $\+C$, we denote by $\intro*\stShapley_{\C}$ the associated computational problem %
of computing, for a given "query" $q\in\+C$, "partitioned database" $\D$ and "fact" $\alpha \in \Dn$, the value $\Sh_{\stscorefun_{q,\D}}(\alpha)$. 
When $\C$ consists of a single query $q$, we simply write $\reintro*\stShapley_{q}$. Note the superscript $\star$ that is consistent across all notations. 
Prior work on Shapley value computation for Boolean queries considered $\star=\textsf{dr}$, that is, \AP$\intro*\dShapley_{q}$ and $\intro*\dShapley_{\C}$.\medskip

\AP Besides the "Shapley value", another wealth distribution scheme, the \reintro{Banzhaf power index}, has been applied to responsibility distribution for database queries \cite{abramovichBanzhafValuesFacts2024,salimiQuantifyingCausalEffects2016}. %
This alternative scheme is closely related to the "Shapley value", to the point where \cite{karmakarExpectedShapleyLikeScores2024} introduced the notion of ""Shapley-like"" scores to collectively designate a class of wealth distribution schemes that contains them both. Formally, 
"Shapley-like" scores are defined by the formula:
\begin{equation}\label{formul:slike}
  \+S^c_{\scorefun}(\alpha) \defeq \sum_{S\subseteq \bse{\scorefun} \setminus \set \alpha } c(|S|,|\bse{\scorefun}|)(\scorefun(S\cup\{\alpha\}) - \scorefun(S))
\end{equation}
where $c(|S|,|\bse{\scorefun}|)$ is a \AP""coefficient function"" characterising the measure. In particular, the "Shapley value" is obtained by setting $c(|S|,|\bse{\scorefun}|)=\frac{|S|!(|\bse{\scorefun}|-|S|-1)!}{|\bse{\scorefun}|!}$ and the ""Banzhaf power index"" by $c(|S|,|\bse{\scorefun}|)=1$.
Exactly like with the "Shapley value", the "wealth function family" $\Dscorefun$ can be applied to any "Shapley-like score" to define a "responsibility measure" called a \AP""drastic Shapley-like score""; the most popular of which (besides the "drastic Shapley value") being the \AP""drastic Banzhaf value"" \cite{abramovichBanzhafValuesFacts2024} \aka\ the \emph{Causal Effect} \cite{salimiQuantifyingCausalEffects2016}, since the two coincide \cite[Proposition~5.5]{livshitsShapleyValueTuples2021}.

\subsection{Shapley Axioms}\label{sec:originalShapleyAxioms}
We now
take a closer look at the Shapley axioms: we shall first acknowledge the differences between our modern statements and the original axioms in \cite{shapley1953value},
and then examine what these axioms mean for "responsibility measures" for database queries. 
\subsubsection*{Original Shapley axioms}
The importance of the Shapley value naturally stems from the satisfaction of Axioms \ref{Sh:1}-\ref{Sh:3}, deemed desirable in many scenarios for fair distribution of wealth or responsibility in cooperative games. Before going further we need to point out that Shapley’s seminal paper \cite{shapley1953value} uses
a different set of axioms.
The most obvious difference is that only 3 axioms were presented, because \ref{Sh:2} and \ref{Sh:3} were merged together. This is entirely inoccuous because the pair \ref{Sh:2},\ref{Sh:3} is precisely equivalent to \cite[Axiom 2]{shapley1953value}.
While the axiom \ref{Sh:1} we use here is in line with some modern presentations of Shapley's axioms (e.g.\ \cite[Axiom 2]{strumbeljEfficientExplanationIndividual2010a}, \cite[Chapter 17]{molnar2025}), in fact the following symmetry axiom \cite[Axiom 1]{shapley1953value} was originally proposed instead of \ref{Sh:1}.
\begin{enumerate}
   \item[{\crtcrossreflabel{(Sym)}[Sh:1s]}] \textit{Symmetry:}
      for every permutation $\pi\in\Perm(P)$ of the set of players we have that $\psi_{\scorefun\circ\pi}(\alpha) = \psi_{\scorefun}(\pi(\alpha))$ where $\scorefun\circ\pi$ is the "wealth function" defined by $S\mapsto \scorefun(\{\pi(p) : p\in S\})$.
\end{enumerate}
Although \ref{Sh:1} is a trivial consequence of \ref{Sh:1s} (by taking the permutation $\pi$ that only swaps $\alpha$ and $\beta$), the two properties are not equivalent: 
consider the 0/1 game $\xi$ on $P\defeq \{p_1\dots p_4\}$ whose "minimal supports" are $\{p_1,p_2\}$ and $\{p_3,p_4\}$; then the latter axiom implies $\psi_{\scorefun}(p_1) = \psi_{\scorefun}(p_3)$ because of the permutation $\pi$ that swaps $p_1$ with $p_3$ and $p_2$ with $p_4$, but the former doesn’t imply anything since $1=\xi(\{p_2\}\cup\{p_1\}) \neq \xi(\{p_2\}\cup\{p_3\})=0$.
As a consequence, the statement of \cite[main Theorem]{shapley1953value} cannot be used to justify that the axioms \ref{Sh:1}--\ref{Sh:3} uniquely characterize the Shapley value.
However, the proof of \cite[main Theorem]{shapley1953value} can be easily adapted to use \ref{Sh:1} instead of \ref{Sh:1s}, and therefore
\ref{Sh:1}--\ref{Sh:3} as a set of axioms is equivalent to \cite[Axioms 1--3]{shapley1953value}:
\begin{theorem}\label{th:shapunique}%
The "Shapley value" is the unique function that satisfies \ref{Sh:1}, \ref{Sh:2}, \ref{Sh:4} and \ref{Sh:3}.
\end{theorem}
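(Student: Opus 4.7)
The plan is to adapt Shapley's classical uniqueness argument \cite{shapley1953value} to the set of axioms \ref{Sh:1}, \ref{Sh:2}, \ref{Sh:4}, \ref{Sh:3}, the subtle point being that \ref{Sh:1} is strictly weaker than the full symmetry axiom \ref{Sh:1s} used in the original paper. The existence direction (that $\Sh_{\scorefun}$ given by \Cref{formul:sh} satisfies the four axioms) is a routine verification which I would dispense with briefly: efficiency and linearity are immediate from the closed form, null players produce only zero marginal contributions, and for $\alpha,\beta$ interchangeable in the sense of \ref{Sh:1} one matches up the summands indexed by $S \subseteq \bse{\scorefun} \setminus \{\alpha,\beta\}$ pairwise. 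The real work is uniqueness.

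For uniqueness I would follow the standard unanimity-game route. For every non-empty $T \subseteq \bse{\scorefun}$ define the unanimity game $u_T$ by $u_T(S) \defeq 1$ if $T \subseteq S$ and $0$ otherwise. A Möbius-style computation (independent of which axioms are used) shows that every wealth function with $\scorefun(\emptyset)=0$ admits a unique decomposition $\scorefun = \sum_{\emptyset \neq T \subseteq \bse{\scorefun}} \lambda_T\, u_T$ with rational coefficients. By \ref{Sh:4} it therefore suffices to show that any function $\psi$ verifying the four axioms is uniquely determined on every scalar multiple $c \cdot u_T$.

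Fix $c \in \lQ$ and $\emptyset \neq T \subseteq \bse{\scorefun}$. Any $\beta \notin T$ is a null element for $c \cdot u_T$, so \ref{Sh:2} forces $\psi_{c \cdot u_T}(\beta)=0$. If $|T|=1$, say $T=\{\alpha\}$, then \ref{Sh:3} directly assigns $c$ to $\alpha$. The interesting case is $|T| \ge 2$: for any $\alpha,\beta \in T$ and any $S \subseteq \bse{\scorefun} \setminus \{\alpha,\beta\}$, neither $S \cup \{\alpha\}$ nor $S \cup \{\beta\}$ contains all of $T$, so $u_T(S \cup \{\alpha\}) = 0 = u_T(S \cup \{\beta\})$. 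Hence the hypothesis of \ref{Sh:1} is satisfied for the pair $(\alpha,\beta)$, giving $\psi_{c \cdot u_T}(\alpha) = \psi_{c \cdot u_T}(\beta)$. Combined with \ref{Sh:3} this pins the common value down to $c/|T|$. In every case $\psi_{c \cdot u_T}$ coincides with $\Sh_{c \cdot u_T}$, and uniqueness transfers to arbitrary $\scorefun$ by linearity.

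The step I expect to require the most care — and the point of the whole adaptation, as highlighted in \Cref{rk:different-axioms} — is precisely the verification that the weaker hypothesis of \ref{Sh:1} actually fires for every pair of essential players in each unanimity game. The counterexample given in the excerpt shows that this \emph{can} fail on more complicated $0/1$ games, so one cannot blindly quote \ref{Sh:1s} in Shapley's original argument; what saves us here is the very special structure of $u_T$, for which removing two elements of $T$ from $S$ simultaneously kills both sides. Once this point is made explicit, the rest of the proof proceeds exactly as in \cite{shapley1953value}.
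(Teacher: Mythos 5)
Your proposal is correct and follows essentially the same route as the paper: the paper's proof simply defers to Shapley's original unanimity-game argument and observes that the symmetry axiom is invoked only in the lemma computing $\psi$ on unanimity games, where \ref{Sh:1} suffices because for $\alpha,\beta\in T$ both marginal contributions vanish --- which is exactly the key verification you spell out explicitly. Your write-up is just a self-contained version of the same argument.
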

\begin{proof}
The equivalence between the pair \ref{Sh:2}, \ref{Sh:3} and \cite[Axiom 2]{shapley1953value} is fairly trivial. We thus only need to justify the weakening of \ref{Sh:1s} into \ref{Sh:1}.
It can be seen that in the proof of the unicity of the Shapley value (main theorem of \cite{shapley1953value}), the axiom \ref{Sh:1s} is only used to show the following lemma:

\begin{lemma}[{\cite[Lemma 2]{shapley1953value}}]
   Let $\psi$ be a function that satisfies \ref{Sh:1s}, \ref{Sh:2}, and \ref{Sh:3}. Given a set $M\subseteq X$, define the wealth function $\xi_{M}$ of "base set" $X$ by $\xi_M(S)=1$ if $M\inc S$ and 0 otherwise. Then, for every $\alpha\in X$, $\psi_{\xi_M}(\alpha) = \frac{1}{|S|}$ if $\alpha\in S$ and 0 otherwise.\footnote{The statement makes use of the simple aforementioned observation that \ref{Sh:2}+\ref{Sh:3} is equivalent to \cite[Axiom 2]{shapley1953value}.}
\end{lemma}
It is easy to see that the above lemma still holds when \ref{Sh:1s} is replaced by \ref{Sh:1}: the axiom’s sole function is to show that all elements in $S$ must have the same value, and the weaker \ref{Sh:1} is sufficient for this specific case.
\end{proof}

\color{green}%
\subsubsection*{Axioms applied to responsibility measures}
\color{black}%
Now we have shown that the axioms \ref{Sh:1}-\ref{Sh:3}, although different from their original counterparts, are fit to the task, we are faced with a greater issue: these axioms are all are phrased in terms of cooperative games,  
whereas we are interested in properties of "responsibility measures".
As such measures take queries and databases as arguments, not numeric functions over a "base set", the desirable properties we are after should likewise refer to databases and queries.  %

Of course, the axioms \ref{Sh:1}--\ref{Sh:3} are reasonable inspirations for defining such desirable properties, but their interpretation in the database setting crucially depends on the way we model the queries as cooperative games, \ie the "wealth function family" we use.
Given that all prior work has focused on the drastic wealth function, $\Dscorefun$, it is instructive to instantiate the above axioms with $\Dscorefun$ 
in order to understand which properties of "responsibility measures" have actually been invoked to justify the "drastic Shapley value" for "non-numeric queries". We first consider \ref{Sh:1}.

\begin{enumerate}[align=left]
   \item[{\crtcrossreflabel{(wSym-dr)}[Shdr:1]}]
   If for $\alpha,\beta \in \Dn$ we have $S\cup\{\alpha\}\models q \LRa S\cup\{\beta\}\models q$ for all $S\subseteq \D\setminus\{\alpha,\beta\}$, then $\phi_{q,\D}(\alpha) = \phi_{q,\D}(\beta)$.%
\end{enumerate}
We consider \ref{Shdr:1} to be a very desirable property.
In fact, we would suggest to strengthen it as follows to ensure syntax independence not only for the input facts but also for the queries: 

\begin{enumerate}[align=left]
   \item[{\crtcrossreflabel{(wSym-db)}[Shdb:1]}]
   If $q,q'$ have the same "semantics" (\ie they define the same Boolean function) and for $\alpha,\beta \in \Dn$ we have $S\cup\{\alpha\}\models q \LRa S\cup\{\beta\}\models q'$ for all $S\subseteq \D\setminus\{\alpha,\beta\}$, then $\phi_{q,\D}(\alpha) = \phi_{q',\D}(\beta)$.
\end{enumerate}
\begin{remark}
Instead of the above, one could naturally define corresponding properties (Sym-dr) and (Sym-db) based on \ref{Sh:1s} instead of \ref{Sh:1}. These would indeed be desirable, and one might argue in their favour that they are slightly stronger. However, they are quite tedious to state, and the difference is minimal: given that any "Shapley-like" score satisfies \ref{Sh:1s} as well, every time we show a measure satisfies (wSym-dr) or (wSym-db), the same holds for the stronger counterpart.%
\end{remark}
\AP Turning now to \ref{Sh:2}, we observe that "null players" translate into "irrelevant" facts, yielding:

\begin{enumerate}[align=left]
   \item[{\crtcrossreflabel{(Null-dr)}[Shdr:2]}]\AP
   If a "fact" $\alpha\in\Dn$ is "irrelevant", then $\phi_{q,\D}(\alpha) = 0$.
\end{enumerate}
Again, we find \ref{Shdr:2} desirable but insufficient, as we wish to be able to decide "relevance":

\begin{enumerate}[align=left]
   \item[{\crtcrossreflabel{(Null-db)}[Shdb:2]}] If a "fact" $\alpha\in\Dn$ is "irrelevant" then $\phi_{q,\D}(\alpha) = 0$; otherwise $\phi_{q,\D}(\alpha) > 0$.\footnote{Recall that the query $q$ we consider is always "Boolean" and "monotone".}
\end{enumerate}
Moving on to \ref{Sh:4}, we come to our first issue. This axiom should intuitively translate as ``the sum of the responsibility for two queries is the responsibility for the sum query'', 
which makes sense when dealing with  "numeric" queries (for which the sum query is well defined), but whose meaning is unclear for "non-numeric" queries. When instantiated with $\Dscorefun$, the property
is nonsensical because for all non-trivial cases $\dscorefun_{q_1,\Dx}(\Dn)+\dscorefun_{q_2,\Dx}(\Dn) = 2$, 
which cannot coincide with $\dscorefun_{q,\Dx}(\Dn)$ (whose value is 0 or 1), 
no matter which Boolean $q$ we use to represent the `sum query'.
\begin{remark}\label{rem:transfer}
   For \AP""Boolean games"" (\aka \reintro{simple games}) where the "wealth function" returns true or false, the following alternative to the linearity axiom has been introduced in \cite[Axiom S3']{dubeyUniquenessShapleyValue1975}, and later used in more modern axiomatizations of the Shapley value restricted to monotone "Boolean games", also called Shapley-Shubik value \cite[§2]{laruelleShapleyShubikBanzhafIndices2001}:
   \begin{enumerate}[align=left]
      \item[(Tr)] \emph{Transfer:} Given two "wealth functions" $\scorefun_1$ and $\scorefun_2$ with values in $\lB$ over the same "base set",
	 \[\psi_{\scorefun_1\lor\scorefun_2}(\alpha) + \psi_{\scorefun_1\land\scorefun_2}(\alpha)=\psi_{\scorefun_1}(\alpha)+\psi_{\scorefun_2}(\alpha)\]
   \end{enumerate}
   where $\lor$ and $\land$ are the Boolean disjunction and conjunction respectively. Since this axiom is by design restricted to Boolean wealth functions, it can be instantiated with the $\Dscorefun$ family:
   \begin{enumerate}[align=left]
      \item[(Tr-dr)] \emph{Transfer:} Given two "Boolean queries" $q_1$ and $q_2$ with values in $\lB$,
	 \[\phi_{q_1\lor q_2,\D}(\alpha) + \phi_{q_1\land q_2,\D}(\alpha) = \phi_{q_1,\D}(\alpha) + \phi_{q_2,\D}(\alpha)\]
   \end{enumerate}

   Although (Tr) is interesting since it allows one to show the unicity of the "Shapley value" restricted to "Boolean games", we do not believe (Tr-dr) to be an obviously desirable feature for a "responsibility measure". Even if one were to believe so, it should be noted that all "WSMSs" (introduced in \Cref{ssec:wsmsdef}) satisfy it, by virtue of the fact that they are "Shapley values" in disguise (see \Cref{prop:wsmsissh}).
\end{remark}

Let us finally consider the meaning of \ref{Sh:3} once instantiated with the drastic wealth function:
\begin{enumerate}[align=left]
   \item[{\crtcrossreflabel{(Eff-dr)}[Shdr:3]}] %
   The sum $\sum_{\alpha\in\Dn} \phi_{q,\D}(\alpha) = 1$ if $\D\models q$ and $\Dx\not\models q$, and $0$ otherwise.
\end{enumerate}
While it seems reasonable enough to ask that the sum of contributions of the facts is equal to the value attained by the full set of endogenous facts, 
it is rather more questionable why we should always arrive at the same total value of 1 when $\D\models q$ and $\Dx\not\models q$,  irrespective of the particular $q$ and $\D$.
The fact that queries are modelled using constant-sum games means that the drastic Shapely value is highly sensitive to database changes: the addition of \emph{any} new "relevant" fact will affect the values of \emph{all} "relevant" facts, as they need to share the same total wealth (of 1). This, in turn, %
can make it challenging 
to interpret the drastic Shapley value of facts without appealing to the probabilistic process that underlies \Cref{formul:sh1}. %
Note that this differs markedly from the case of numeric queries for which the total value -- defined via the query result -- varies according to the input database and finds its meaning in the query itself. \medskip

In view of the identified issues with the translations of \ref{Sh:4} and \ref{Sh:3} into the database setting, 
the Shapley axioms cannot be used to argue that the drastic Shapley value is the only (or the best) approach to defining a reasonable "responsibility measure". 
Indeed, only \ref{Shdb:1} and \ref{Shdb:2} are truly natural, and they constitute rather weak minimal requirements.

This raises the obvious questions: \emph{What are desirable properties for %
responsibility measures of non-numeric queries?
And how can they be realized?} 
In \Cref{sec:desirableaxioms} we propose a set of minimal desirable properties, and in \Cref{sec:alt-measures} we introduce a novel family of measures conforming to the identified properties.

\section{Desirable Properties}
\label{sec:desirableaxioms}

\noindent
As discussed in Section \ref{sec:drastic-shapley},
the Shapley axioms 
provide insufficient
justification for arguing that the "drastic Shapley value" is the \emph{only reasonable} or the \emph{best} measure for "non-numeric queries".
We believe, however, that an axiomatic approach remains valuable in the sense that we ought to think about what are the truly desirable properties of "responsibility measures".
This section shall address this precise question, starting in \Cref{ssec:ax1} with an slightly informal but more intuitive discussion before getting into the formal details in \Cref{ssec:ax2}.

\subsection{Discussion on Desired Properties of Responsibility Measures}\label{ssec:ax1}

We already have from \Cref{sec:drastic-shapley} the properties \ref{Shdb:1} and \ref{Shdb:2} which capture some basic requirements. %
To identify further desirable properties, we shall go back to \cite{livshitsShapleyValueTuples2021} where the "drastic Shapley" value was first introduced. There, a previously defined quantitative measure (the causal responsibility \cite{DBLP:journals/pvldb/MeliouGMS11}) was argued to yield counterintuitive results on a specific example \cite[Example 5.2]{livshitsShapleyValueTuples2021}, a slight adaptation of which is presented in \Cref{fig:counter-ex}%
-(a). Indeed, the causal responsibility assigns the same score to all edges in this example, even though $e_0$ intuitively has more responsibility as it satisfies the query on its own, whereas $e_1$ and $e_2$ must be used together. %
More generally, this idea can be captured by %
the following axiom
\AP (""AOTBE"" stands for ``All Other Things Being Equal''):
\begin{itemize}
   \item[{\crtcrossreflabel{(MS1)}[MS:1]}] "AOTBE", a fact that appears in smaller "minimal supports" should have higher responsibility.
\end{itemize}

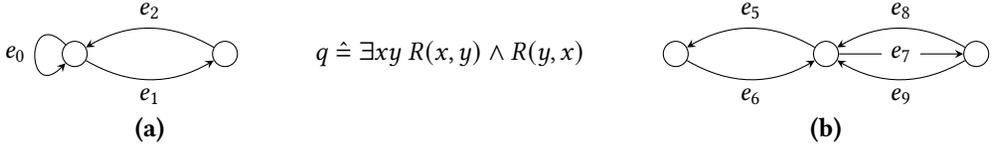
\begin{figure}[tb]
\centering
\scalebox{0.9}{%
      \begin{tikzpicture}
\coordinate (00) at (-5, 0);
\coordinate (01) at (-3, 0);
\coordinate (02) at (0, 0);
\coordinate (03) at (-4, -1);
\coordinate (04) at (5, -1);
\coordinate (05) at (5, 0);
\coordinate (06) at (3, 0);
\coordinate (07) at (7, 0);
\begin{pgfonlayer}{nodelayer}
\node [draw, circle] (0) at (00) {};
\node [draw, circle] (1) at (01) {};
\node [] (2) at (02) {$q\defeq \exists xy ~ R(x,y) \land R(y,x)$};
\node [] (3) at (03) {\textbf{(a)}};
\node [] (4) at (04) {\textbf{(b)}};
\node [draw, circle] (5) at (05) {};
\node [draw, circle] (6) at (06) {};
\node [draw, circle] (7) at (07) {};
\end{pgfonlayer}
\begin{pgfonlayer}{edgelayer}
\draw [->,>=stealth, in=-135, out=135, loop] (0) to node[midway,left] {$e_0$} ();
\draw [->,>=stealth, bend right] (0) to node[midway,below] {$e_1$} (1);
\draw [->,>=stealth, bend right] (1) to node[midway,above] {$e_2$} (0);
\draw [->,>=stealth, bend right] (5) to node[midway,above] {$e_5$} (6);
\draw [->,>=stealth, bend right] (6) to node[midway,below] {$e_6$} (5);
\draw [->,>=stealth] (5) to node[midway,fill=white] {$e_7$} (7);
\draw [->,>=stealth, bend right] (7) to node[midway,above] {$e_8$} (5);
\draw [->,>=stealth, bend left] (7) to node[midway,below] {$e_9$} (5);
\end{pgfonlayer}
\end{tikzpicture}%
   }
\caption{Pathological examples for some "responsibility measures". The databases are represented as graphs with nodes being constants and edges being $R$-tuples.
}
\label{fig:counter-ex}
\end{figure}

In a similar vein, it turns out that the causal responsibility measure also gives the same score to $e_5$ and $e_7$ in the example represented in \Cref{fig:counter-ex}%
-(b). Again, this is counterintuitive because the fact $e_5$ only appears in a single "minimal support" ($\{e_5,e_6\}$) while $e_7$ appears in two ($\{e_7,e_8\}$ and $\{e_7,e_9\}$). 
This suggests the following more general property: 
\begin{itemize}
   \item[{\crtcrossreflabel{(MS2)}[MS:2]}] "AOTBE", a fact that appears in more "minimal supports" should have higher responsibility.
\end{itemize}

The axioms %
\ref{MS:1} and \ref{MS:2} have been stated informally. %
While mathematically precise formulations %
can be achieved
and shall be provided in \Cref{ssec:ax2}%
, they turn out to be quite technical as the notion of
``"AOTBE"'' is non-trivial to formalize. 
For instance, the most na\"{i}ve definitions will fail to predict the behavior of the "drastic Shapley value" on the following example.

\begin{figure}[tb]
\centering
      \begin{tikzpicture}[scale=1.5]
	%\small
	\coordinate (00) at (-2.5, 0);
	\coordinate (01) at (-1.75, 0.5);
	\coordinate (02) at (-0.75, 0.5);
	\coordinate (03) at (-1.75, -0.5);
	\coordinate (04) at (-0.75, -0.5);
	\coordinate (05) at (3.25, 0);
	\coordinate (06) at (2.75, 0.5);
	\coordinate (07) at (1.75, 0.5);
	\coordinate (08) at (2.75, -0.5);
	\coordinate (09) at (1.75, -0.5);
	\coordinate (010) at (0.75, -0.5);
	\coordinate (011) at (2.75, -0.25);
	\coordinate (012) at (2.75, -0.75);
	\coordinate (013) at (3.5, 0.25);
	\coordinate (014) at (3, -0.25);
	\coordinate (015) at (0.75, -0.25);
	\coordinate (016) at (0.75, -0.75);
	\coordinate (017) at (2.75, 0.25);
	\coordinate (018) at (2.75, 0.75);
	\coordinate (019) at (1.75, 0.75);
	\coordinate (020) at (1.75, 0.25);
	\coordinate (021) at (3, 0.25);
	\coordinate (022) at (3.5, -0.25);
	\coordinate (023) at (-2.75, -0.25);
	\coordinate (024) at (-2, -0.75);
	\coordinate (025) at (-0.5, -0.75);
	\coordinate (026) at (-0.5, -0.25);
	\coordinate (027) at (-2, -0.25);
	\coordinate (028) at (-2.25, 0.25);
	\coordinate (029) at (-2.25, -0.25);
	\coordinate (030) at (-2, 0.25);
	\coordinate (031) at (-0.5, 0.25);
	\coordinate (032) at (-0.5, 0.75);
	\coordinate (033) at (-2, 0.75);
	\coordinate (034) at (-2.75, 0.25);
	\coordinate (035) at (-2, 0.75);
	\coordinate (036) at (-2, -0.75);
	\coordinate (037) at (-1.5, -0.75);
	\coordinate (038) at (-1.5, 0.75);
	\coordinate (039) at (-1, 0.75);
	\coordinate (040) at (-1, -0.75);
	\coordinate (041) at (-0.5, -0.75);
	\coordinate (042) at (-0.5, 0.75);
	\coordinate (043) at (4, 0.5);
	\coordinate (044) at (4, -0.5);
	\coordinate (045) at (-1.75, 0);
	\coordinate (046) at (-0.75, 0);
	\begin{pgfonlayer}{nodelayer}
		\node [draw, circle, minimum height=1.5em] (0) at (00) {};
		\node at (00) {$n_0$};
		\node [draw, circle, minimum height=1.5em] (1) at (01) {};
		\node at (01) {$n_1$};
		\node [draw, circle, minimum height=1.5em] (2) at (02) {};
		\node at (02) {$n_2$};
		\node [draw, circle, minimum height=1.5em] (3) at (03) {};
		\node at (03) {$n_3$};
		\node [draw, circle, minimum height=1.5em] (4) at (04) {};
		\node at (04) {$n_4$};
		\node [draw, circle, minimum height=1.5em] (5) at (05) {};
		\node at (05) {$n_5$};
		\node [draw, circle, minimum height=1.5em] (6) at (06) {};
		\node at (06) {$n_6$};
		\node [draw, circle, minimum height=1.5em] (7) at (07) {};
		\node at (07) {$n_7$};
		\node [draw, circle, minimum height=1.5em] (8) at (08) {};
		\node at (08) {$n_8$};
		\node [draw, circle, minimum height=1.5em] (9) at (09) {};
		\node at (09) {$n_9$};
		\node [draw, circle, minimum height=1.5em] (10) at (010) {};
		\node at (010) {$n_{\mathrm{x}}$};
		\node [] (11) at ($(011)+(-0.1,0)$) {};
		\node [] (12) at ($(012)+(0.1,0)$) {};
		\node [] (13) at ($(013)+(-0.0732233,-0.0732233)+(0.1,-0.1)$) {};
		\node [] (14) at ($(014)+(0.0732233,+0.0732233)+(0.1,-0.1)$) {};
		\node [] (15) at (015) {};
		\node [] (16) at (016) {};
		\node [] (17) at ($(017)+(-0.1,0)$) {};
		\node [] (18) at ($(018)+(0.1,0)$) {};
		\node [] (19) at (019) {};
		\node [] (20) at (020) {};
		\node [] (21) at ($(021)+(0.0732233,-0.0732233)+(0.1,0.1)$) {};
		\node [] (22) at ($(022)+(-0.0732233,0.0732233)+(0.1,0.1)$) {};
		\node [] (23) at ($(023)+(0.0732233,0.0732233)+(-0.1,0.1)$) {};
		\node [] (24) at ($(024)+(-0.1,0)$) {};
		\node [] (25) at (025) {};
		\node [] (26) at (026) {};
		\node [] (27) at ($(027)+(0.1,0)$) {};
		\node [] (28) at ($(028)+(-0.0732233,-0.0732233)+(-0.1,0.1)$) {};
		\node [] (29) at ($(029)+(-0.0732233,0.0732233)+(-0.1,-0.1)$) {};
		\node [] (30) at ($(030)+(0.1,0)$) {};
		\node [] (31) at (031) {};
		\node [] (32) at (032) {};
		\node [] (33) at ($(033)+(-0.1,0)$) {};
		\node [] (34) at ($(034)+(0.0732233,-0.0732233)+(-0.1,-0.1)$) {};
		\node [] (35) at (035) {};
		\node [] (36) at (036) {};
		\node [] (37) at (037) {};
		\node [] (38) at (038) {};
		\node [] (39) at (039) {};
		\node [] (40) at (040) {};
		\node [] (41) at (041) {};
		\node [] (42) at (042) {};
		\node [draw, circle, minimum height=1.5em, color=gray] (45) at (045) {};
		\node [color=gray] at (045) {$x_1$};
		\node [draw, circle, minimum height=1.5em, color=gray] (46) at (046) {};
		\node [color=gray] at (046) {$x_2$};
	\end{pgfonlayer}
	\begin{pgfonlayer}{edgelayer}
		\draw [color=blue,rounded corners] (14.center) to (17.center) to (20.center);
		\draw [color=blue] (20.center) arc (-90:-270:0.25);
		\draw [color=blue,rounded corners] (19.center) to (18.center) to (13.center);
		\draw [color=blue] (13.center) arc (45:-135:0.25);
		\draw [color=blue,rounded corners] (22.center) to (12.center) to (16.center);
		\draw [color=blue] (16.center) arc (-90:-270:0.25);
		\draw [color=blue,rounded corners] (15.center) to (11.center) to (21.center);
		\draw [color=blue] (21.center) arc (135:-45:0.25);
		\draw [color=red,rounded corners] (23.center) to (24.center) to (25.center);
		\draw [color=red] (25.center) arc (-90:90:0.25);
		\draw [color=red,rounded corners] (26.center) to (27.center) to (28.center);
		\draw [color=red] (28.center) arc (45:225:0.25);
		\draw [color=red,rounded corners] (29.center) to (30.center) to (31.center);
		\draw [color=red] (31.center) arc (-90:90:0.25);
		\draw [color=red,rounded corners] (32.center) to (33.center) to (34.center);
		\draw [color=red] (34.center) arc (135:315:0.25);
		\draw [color=green] (35.center) to (36.center) arc (-180:0:0.25)
										to (38.center) arc (0:180:0.25);
		\draw [color=green] (39.center) to (40.center) arc (-180:0:0.25)
										to (42.center) arc (0:180:0.25);
										\node [anchor=west] (43) at (043) {
										   $\Sh_{\dscorefun_{q,\D}}(n_0) = \frac{2\,030\,400}{11!}$
											};
		\node [anchor=west] (44) at (044) {
		$\Sh_{\dscorefun_{q,\D}}(n_5) = \frac{3\,408\,480}{11!}$
			};
	\end{pgfonlayer}
\end{tikzpicture}%
   
\caption{Instance for \Cref{ex:aotbe}.
$\Dn\defeq \{n_1,\dots, n_{\mathrm{x}}\}$, $\Dx\defeq\{x_1,x_2\}$, and the query $q$ is such that the "minimal supports" are the colored regions (the red ones contain $e_0$, the blue ones $e_5$ and the green ones neither).}
\label{fig:sharing-ex-2}
\end{figure}

\begin{example}\label{ex:aotbe}
  Consider the instance depicted in \Cref{fig:sharing-ex-2}.
  Fact $n_0$ appears in two "minimal supports" of size 3 (in red) while $n_5$ appears in one of size 3 and one of size 4 (in blue).
  However, the situation isn’t entirely symmetrical between $n_0$ and $n_5$ because the "minimal supports" that contain neither of them (in green) interfere with $n_0$ only. As it turns out, the "drastic Shapley" value will consider that this difference is sufficient to attribute a smaller value to $n_0$. Note that if one were to remove $x_1$ and $x_2$, the green "minimal supports" would disappear and $n_0$ would get the highest value.  
   This may be unexpected to a non-expert end user, since these two facts don't share any "minimal support" with either $n_0$ nor $n_5$.
\end{example}

Before moving on to the formalization of \ref{MS:1} and \ref{MS:2}, we shall first introduce a %
simpler concrete criterion by focusing
on a particular class of scenarios -- depicted in \Cref{fig:sharing-ex-3} --
in which there are no non-trivial interactions between the "minimal supports". %
The fact that any two "minimal supports" may only intersect on %
$\{\alpha\}$ or $\{\beta\}$ prevents such unwanted interactions, %
while the condition 
that  $|A_i| \le |B_i|$ for all $i$ ensures a loose domination of $\alpha$ over $\beta$. 
By requiring that this domination is strict in some sense, 
we finally obtain the following necessary condition for a 
"responsibility measure" $\phi$. 

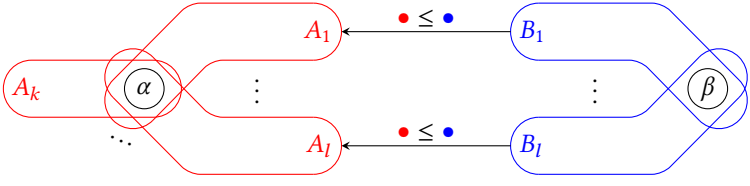
\begin{figure}[tb]
\centering
      \begin{tikzpicture}[scale=1.5]
	%\small
	\coordinate (00) at (-2.5, 0);
	\coordinate (05) at (2.5, 0);
	\coordinate (011) at (2, -0.25);
	\coordinate (012) at (2, -0.75);
	\coordinate (013) at (2.75, 0.25);
	\coordinate (014) at (2.25, -0.25);
	\coordinate (015) at (1, -0.25);
	\coordinate (016) at (1, -0.75);
	\coordinate (017) at (2, 0.25);
	\coordinate (018) at (2, 0.75);
	\coordinate (019) at (1, 0.75);
	\coordinate (020) at (1, 0.25);
	\coordinate (021) at (2.25, 0.25);
	\coordinate (022) at (2.75, -0.25);
	\coordinate (023) at (-2.75, -0.25);
	\coordinate (024) at (-2, -0.75);
	\coordinate (025) at (-1, -0.75);
	\coordinate (026) at (-1, -0.25);
	\coordinate (027) at (-2, -0.25);
	\coordinate (028) at (-2.25, 0.25);
	\coordinate (029) at (-2.25, -0.25);
	\coordinate (030) at (-2, 0.25);
	\coordinate (031) at (-1, 0.25);
	\coordinate (032) at (-1, 0.75);
	\coordinate (033) at (-2, 0.75);
	\coordinate (034) at (-2.75, 0.25);
	\coordinate (041) at (-1, -0.75);
	\coordinate (042) at (-1, 0.75);
	\coordinate (043) at (-2.5, 0.25);
	\coordinate (044) at (-3.5, 0.25);
	\coordinate (045) at (-3.5, -0.25);
	\coordinate (046) at (-2.5, -0.25);
	\begin{pgfonlayer}{nodelayer}
		\node [draw, circle, minimum height=1.5em] (0) at (00) {};
		\node at (00) {$\alpha$};
		\node [draw, circle, minimum height=1.5em] (5) at (05) {};
		\node at (05) {$\beta$};
		\node [] (11) at ($(011)+(-0.1,0)$) {};
		\node [] (12) at ($(012)+(0.1,0)$) {};
		\node [] (13) at ($(013)+(-0.0732233,-0.0732233)+(0.1,-0.1)$) {};
		\node [] (14) at ($(014)+(0.0732233,+0.0732233)+(0.1,-0.1)$) {};
		\node [] (15) at (015) {};
		\node [] (16) at (016) {};
		\node [] (17) at ($(017)+(-0.1,0)$) {};
		\node [] (18) at ($(018)+(0.1,0)$) {};
		\node [] (19) at (019) {};
		\node [] (20) at (020) {};
		\node [] (21) at ($(021)+(0.0732233,-0.0732233)+(0.1,0.1)$) {};
		\node [] (22) at ($(022)+(-0.0732233,0.0732233)+(0.1,0.1)$) {};
		\node [] (23) at ($(023)+(0.0732233,0.0732233)+(-0.1,0.1)$) {};
		\node [] (24) at ($(024)+(-0.1,0)$) {};
		\node [] (25) at (025) {};
		\node [] (26) at (026) {};
		\node [] (27) at ($(027)+(0.1,0)$) {};
		\node [] (28) at ($(028)+(-0.0732233,-0.0732233)+(-0.1,0.1)$) {};
		\node [] (29) at ($(029)+(-0.0732233,0.0732233)+(-0.1,-0.1)$) {};
		\node [] (30) at ($(030)+(0.1,0)$) {};
		\node [] (31) at (031) {};
		\node [] (32) at (032) {};
		\node [] (33) at ($(033)+(-0.1,0)$) {};
		\node [] (34) at ($(034)+(0.0732233,-0.0732233)+(-0.1,-0.1)$) {};
		\node [] (41) at (041) {};
		\node [] (42) at (042) {};
		\node [] (43) at ($(043)+(0.08,0)$) {};
		\node [] (44) at (044) {};
		\node [] (45) at (045) {};
		\node [] (46) at ($(046)+(0.08,0)$) {};
		\node [anchor=center] at ($(00)+(1,0.05)$) {$\vdots$};
		\node [anchor=center] at ($(05)+(-1,0.05)$) {$\vdots$};
	\end{pgfonlayer}
	\begin{pgfonlayer}{edgelayer}
		\draw [color=blue,rounded corners] (14.center) to (17.center) to (20.center);
		\draw [color=blue] (20.center) arc (-90:-270:0.25) coordinate[midway] (a1);
		\draw [color=blue,rounded corners] (19.center) to (18.center) to (13.center);
		\draw [color=blue] (13.center) arc (45:-135:0.25);
		\draw [color=blue,rounded corners] (22.center) to (12.center) to (16.center);
		\draw [color=blue] (16.center) arc (-90:-270:0.25) coordinate[midway] (a3);
		\draw [color=blue,rounded corners] (15.center) to (11.center) to (21.center);
		\draw [color=blue] (21.center) arc (135:-45:0.25);
		\draw [color=red,rounded corners] (23.center) to (24.center) to (25.center);
		\draw [color=red] (25.center) arc (-90:90:0.25) coordinate[midway] (a4);
		\draw [color=red,rounded corners] (26.center) to (27.center) to (28.center);
		\draw [color=red] (28.center) arc (45:225:0.25);
		\draw [color=red,rounded corners] (29.center) to (30.center) to (31.center);
		\draw [color=red] (31.center) arc (-90:90:0.25) coordinate[midway] (a2);
		\draw [color=red,rounded corners] (32.center) to (33.center) to (34.center);
		\draw [color=red] (34.center) arc (135:315:0.25);
		\draw [color=red] (43.center) to (44.center) arc (90:270:0.25) coordinate[midway] (a5) to (46.center) arc (-90:90:0.25);
		\path ($(0)+(-0.5,0)$) arc (-180:-45:0.5) node[midway,sloped] {\dots};
		
		\draw [->,>=stealth] (a1) to node[midway,above=-.05] {${\color{red}\bullet} \le {\color{blue}\bullet}$} (a2);
		\draw [->,>=stealth] (a3) to node[midway,above=-.05] {${\color{red}\bullet} \le {\color{blue}\bullet}$}  (a4);
		\node[right,color=blue] at (a1) {$B_1$};
		\node[left,color=red] at (a2) {$A_1$};
		\node[right,color=blue] at (a3) {$B_l$};
		\node[left,color=red] at (a4) {$A_l$};
		\node[right,color=red] at (a5) {$A_k$};
	\end{pgfonlayer}
\end{tikzpicture}%
   
\caption{A visual representation of the setting considered by \ref{MS:t}. The $A_i,B_j$ are the "minimal supports" of $q$.}
\label{fig:sharing-ex-3}
\end{figure}

\begin{enumerate}[align=left]
\item[{\crtcrossreflabel{(MStest)}[MS:t]}] Let $A_1, \dots, A_k, B_1, \dots, B_l$ be sets of facts, with $k\ge l$, such that for all $i,j$ $A_i\cap B_j =\emptyset$, for all $i,j$ with $i\neq j$ $A_i \cap A_j = \{\alpha\}$, $B_i \cap B_j = \{\beta\}$ and $|A_i|\le |B_i|$.
   Consider an instance on the input database $\+D=A_1 \cup \dots \cup A_k \cup B_1 \cup \dots \cup B_l$ where $q$ is such that the minimal supports are the $A_i$ and $B_j$. If furthermore $k> l$ or $\exists i. |A_i| < |B_i|$, then $\phi_{q,\D}(\alpha) > \phi_{q,\D}(\beta)$.
\end{enumerate}

Note that although this condition may seem very specific,  
it %
suffices to cover the cases in \Cref{fig:counter-ex}:%

\begin{claim}
   Consider the example of \Cref{fig:counter-ex}. Then any "responsibility measure" $\phi$ that satisfies \ref{MS:t} will be such that $\phi_{q,\D_{\text{(a)}}}(e_0) > \phi_{q,\D_{\text{(a)}}}(e_1)$ and $\phi_{q,\D_{\text{(b)}}}(e_7) > \phi_{q,\D_{\text{(b)}}}(e_5)$.
\end{claim}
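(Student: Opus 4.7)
The plan is to directly verify the hypotheses of \ref{MS:t} on each of the two databases, reading off the minimal supports from \Cref{fig:counter-ex}, and then to apply \ref{MS:t} to conclude. The only real task is checking that the minimal supports containing the two compared facts fit the shape required by \ref{MS:t}: two families $(A_i)_{i\le k}$ and $(B_j)_{j\le l}$ of "minimal supports" with $k\ge l$, pairwise disjoint across the two families, pairwise intersecting only in $\{\alpha\}$ (resp.\ $\{\beta\}$) within each family, and with $|A_i|\le |B_i|$ for all $i$, plus the strict domination witness ($k>l$ or some $|A_i|<|B_i|$).

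For case (a), I would take $\alpha \defeq e_0$ and $\beta \defeq e_1$, and set $A_1 \defeq \{e_0\}$ and $B_1 \defeq \{e_1,e_2\}$. Since there is only one set in each family, the pairwise intersection conditions $A_i\cap A_j = \{\alpha\}$ and $B_i\cap B_j = \{\beta\}$ are vacuous, and $A_1\cap B_1 = \emptyset$. Both families have size $1$, so $k=l=1\ge l$; and $|A_1|=1 < 2 = |B_1|$, so the strict condition $\exists i.\, |A_i|<|B_i|$ is met. Applying \ref{MS:t} yields $\phi_{q,\D_{(a)}}(e_0) > \phi_{q,\D_{(a)}}(e_1)$.

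For case (b), I would take $\alpha \defeq e_7$ and $\beta \defeq e_5$, with $A_1 \defeq \{e_7,e_8\}$, $A_2 \defeq \{e_7,e_9\}$, and $B_1 \defeq \{e_5,e_6\}$. Then $k=2$, $l=1$, so $k\ge l$. Disjointness across families holds because $\{e_5,e_6\}$ shares no edge with $\{e_7,e_8\}$ or $\{e_7,e_9\}$; within the $A$-family, $A_1\cap A_2 = \{e_7\} = \{\alpha\}$; within the $B$-family the condition is vacuous. The size condition $|A_1|\le |B_1|$ holds (both equal $2$), and the strict witness is $k>l$. Hence \ref{MS:t} gives $\phi_{q,\D_{(b)}}(e_7) > \phi_{q,\D_{(b)}}(e_5)$.

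I expect no real obstacle beyond a careful bookkeeping of which sets play the roles of $A_i$ and $B_j$: the statement of \ref{MS:t} already fixes the global database as the union of the $A_i$'s and $B_j$'s, and in both (a) and (b) the pictured databases are precisely these unions, so no further manipulation is needed. The proof is thus essentially an instantiation check.
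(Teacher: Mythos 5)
Your instantiation is exactly the one the paper uses: $A_1=\{e_0\}$, $B_1=\{e_1,e_2\}$ with $k=l=1$ for case (a), and $A_1=\{e_7,e_8\}$, $A_2=\{e_7,e_9\}$, $B_1=\{e_5,e_6\}$ with $k=2$, $l=1$ for case (b). The proposal is correct and takes essentially the same approach, merely spelling out the hypothesis checks in more detail.
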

\begin{proof}
   \proofcase{(a)} By setting $k=l=1$, $\alpha= e_0$, $\beta= e_1$, $A_1= \{e_0\}$ and $B_1 = \{e_1,e_2\}$, \ref{MS:t} ensures that $\phi_{q,\D_{\text{(a)}}}(e_0) > \phi_{q,\D_{\text{(a)}}}(e_1)$.\medskip

   \proofcase{(b)} By setting $k=2$, $l=1$ $\alpha= e_7$, $\beta= e_5$, $A_1= \{e_7,e_8\}$, $A_2= \{e_7,e_9\}$, and $B_1= \{e_5,e_6\}$. \ref{MS:t} ensures that $\phi_{q,\D_{\text{(b)}}}(e_7) > \phi_{q,\D_{\text{(b)}}}(e_5)$. 
\end{proof}

As expected the "drastic Shapley" and "drastic Banzhaf" values both fulfill this condition:

\begin{proposition}\label{prop:weak-ax-sat}
Any "drastic Shapley-like" score with positive "coefficient function" satisfies \ref{MS:t}.
\end{proposition}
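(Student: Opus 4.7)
The plan is to rewrite $\phi_{q,\D}(\alpha) - \phi_{q,\D}(\beta)$ as a sum over ``contributing'' subsets of $\D$, to exhibit a size-preserving involution sending the $\beta$-contributing ones into the $\alpha$-contributing ones, and then to pinpoint one explicit $\alpha$-contributing set lying outside its image.

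Setting $A_i' \defeq A_i\setminus\{\alpha\}$, $B_j' \defeq B_j\setminus\{\beta\}$, $\tilde A \defeq \bigdcup_{i=1}^k A_i'$, and $\tilde B \defeq \bigdcup_{j=1}^l B_j'$, the hypotheses of \ref{MS:t} give the disjoint decomposition $\D = \tilde A \dcup \{\alpha\} \dcup \tilde B \dcup \{\beta\}$. Since $\alpha\in A_i$ for every $i$, any $S\inc \D\setminus\{\alpha\}$ automatically contains no $A_i$, so (assuming $\Dx=\emptyset$ to simplify notation) the marginal $\dscorefun_{q,\D}(S\cup\{\alpha\})-\dscorefun_{q,\D}(S)$ equals $1$ iff some $A_i'\inc S$ and, whenever $\beta\in S$, no $B_j'\inc S$. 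Let $\+A$ and $\+B$ denote the resulting collections of $\alpha$-contributing and $\beta$-contributing sets. Then
\begin{align*}
\phi_{q,\D}(\alpha) - \phi_{q,\D}(\beta) \;=\; \sum_{S\in\+A} c(|S|,|\D|) \;-\; \sum_{S\in\+B} c(|S|,|\D|).
\end{align*}

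For each $j\le l$ fix an injection $\iota_j : A_j' \to B_j'$ (possible because $|A_j'|\le|B_j'|$), and define the involution $\mu$ on $\D$ that swaps $\alpha\leftrightarrow\beta$, swaps each $a\in A_j'$ with $\iota_j(a)$, and acts as the identity on the remaining elements of $\tilde B$ and on all $A_i'$ with $i>l$; extend $\mu$ to $2^\D$ pointwise. A direct verification shows $\mu(\+B)\inc \+A$: if $B_{j^*}'\inc S\in\+B$ then $\mu(B_{j^*}') \supseteq A_{j^*}'$, so $A_{j^*}'\inc \mu(S)$; moreover, when $\alpha\in S$ (equivalently, $\beta\in\mu(S)$), the hypothesis $A_i'\ninc S$ for all $i$ entails $B_j'\ninc \mu(S)$ for all $j$, because $B_j'\inc \mu(S)$ would in particular require $\iota_j(A_j')\inc\mu(S)$, equivalent by involutivity to $A_j'\inc S$. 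Since $\mu$ is a size-preserving bijection, the contributions of $\+B$ and $\mu(\+B)$ cancel and
\begin{align*}
\phi_{q,\D}(\alpha) - \phi_{q,\D}(\beta) \;=\; \sum_{S\in \+A \setminus \mu(\+B)} c(|S|,|\D|).
\end{align*}

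It then suffices to exhibit one explicit $S'\in \+A \setminus \mu(\+B)$, for positivity of $c$ will yield $\phi_{q,\D}(\alpha) > \phi_{q,\D}(\beta)$. If $k>l$, take $S' \defeq A_{l+1}'$: it lies in $\+A$ and, $\mu$ being the identity on extra $A$-side elements, $\mu(S')=S'$ is disjoint from $\tilde B$, hence not in $\+B$. Otherwise pick $i$ with $|A_i'|<|B_i'|$ and set $S' \defeq A_i'$: again $S'\in\+A$, and $\mu(S')=\iota_i(A_i')\incs B_i'$ is a proper subset of $B_i'$ disjoint from the remaining $B_j'$, so it contains no full $B_j'$ and $\mu(S')\notin\+B$. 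The main obstacle is the case analysis establishing $\mu(\+B)\inc \+A$ when $\alpha\in S$, which crucially uses the coupling between the $A_j'$s and $B_j'$s through the fixed injections $\iota_j$ together with the involutivity of $\mu$.
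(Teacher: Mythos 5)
Your proof is correct, but it is not the argument the paper gives: the paper proves this proposition by deferral, noting that \ref{fMS:1} and \ref{fMS:2} imply \ref{MS:t} (\Cref{lem:weak-ax}) and that every drastic Shapley-like score with positive coefficient function satisfies \ref{fMS:1} and \ref{fMS:2} (\Cref{prop:true-ax-sat-sh}); you supply instead the ``direct proof'' that the paper declares ``certainly possible'' and omits. The combinatorial engine is nonetheless the same in both places: your involution $\mu$ is exactly the specialization, to the sunflower-like configuration of \ref{MS:t}, of the bijection $\eta:\D\setminus\{\alpha\}\to\D\setminus\{\beta\}$ that \Cref{lem:weak-ax} packages into a domination witness and that \Cref{prop:true-ax-sat-sh} then uses to reindex the Shapley-like sum, and your explicit set $S'$ plays the role of the subset $X$ with $v(\eta(X)\cup\{\beta\})-v(\eta(X))<v(X\cup\{\alpha\})-v(X)$ exhibited there. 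Your route buys self-containedness and the clean identity $\phi_{q,\D}(\alpha)-\phi_{q,\D}(\beta)=\sum_{S\in\+A\setminus\mu(\+B)}c(|S|,|\D|)$; the paper's route buys more, namely that these scores satisfy the stronger axioms \ref{fMS:1} and \ref{fMS:2}, of which \ref{MS:t} is only the easy special case. Two small points to tighten: (i) your standing assumption $\Dx=\emptyset$ should be justified as part of the reading of \ref{MS:t} (the axiom does not fix the partition, though the figure and the paper's own usage support a purely endogenous instance); and (ii) both of your witness arguments tacitly use $B_j'\neq\emptyset$ --- this does hold, because $B_j=\{\beta\}$ together with $|A_j|\le|B_j|$ and the requirement that the $A_i$'s (resp.\ the $B_j$'s) form an antichain of minimal supports would force $k=l=1$ and $|A_1|=|B_1|$, contradicting the strictness hypothesis --- but it deserves a sentence.
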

\begin{proof}
Although a direct proof is certainly possible, we shall see in \Cref{ssec:ax2} that the formal version of \ref{MS:1} and \ref{MS:2} imply \ref{MS:t} (\cref{lem:weak-ax}) and that any "drastic Shapley-like" score with positive "coefficient function" satisfies both of the former (\Cref{prop:true-ax-sat-sh}).
\end{proof}

\subsection{Formally Defining Desired Properties}\label{ssec:ax2}
We now present and explain formal statements of the axioms \ref{MS:1} and \ref{MS:2}.
The key will be to define a suitable notion of \emph{domination witness} $\Delta(\alpha,\beta)$ that should ensure that $\alpha$ must dominate $\beta$. 

A first attempt simply relies on the canonical way of comparing the cardinals of two sets, which consists in finding some injective mapping $f$ between them.
\AP
Given a fact $\alpha\in \D$, we say that $S\subseteq\D\setminus\{\alpha\}$ is a ""completing set"" for $\alpha$ if $S\cup\{\alpha\}\in \Minsups{q}(\D)$, and denote the set of all "completing sets" for $\alpha$ by $\mathsf{CS}(q,\D,\alpha)$.
\AP
An ""a-domination witness"" $\Delta(\alpha,\beta)$ of $\alpha$ over $\beta$ is given by:
\begin{itemize}
\item an injective mapping $f: \mathsf{CS}(q,\D,\beta) \hookrightarrow \mathsf{CS}(q,\D,\alpha)$;%
\item for every $S\in\mathsf{CS}(q,\D,\beta)$, an injective mapping $\eta_S : f(S) \hookrightarrow S$.
\end{itemize}
Using this notion, we can formulate the following axioms: 
\begin{itemize}
   \item[{\crtcrossreflabel{(MS1a)}[MS:1a]}] If there exists an "a-domination witness" $\Delta(\alpha,\beta)$ such that one of the $\eta_S : f(S) \hookrightarrow S$ isn’t surjective, then $\phi_{q,\D}(\alpha) > \phi_{q,\D}(\beta)$.
   \item[{\crtcrossreflabel{(MS2a)}[MS:2a]}] If there exists an "a-domination witness" $\Delta(\alpha,\beta)$ such that $f: \mathsf{CS}(q,\D,\beta) \hookrightarrow \mathsf{CS}(q,\D,\alpha)$ isn’t surjective, then $\phi_{q,\D}(\alpha) > \phi_{q,\D}(\beta)$.\medskip
\end{itemize}
Observe that in the simplified setting used to define \ref{MS:t} and depicted in \Cref{fig:sharing-ex-3}, $f$ corresponds to the black arrows and each $\eta_S$ certifies that $|f(S)|\le |S|$.\medskip

Alas, the notion of "a-domination witness" is too simplistic because it considers all "minimal supports" as independent when in reality they can share many elements. To illustrate this, consider the example represented in \Cref{fig:sharing-ex}. $e_0$ appears in two minimal supports of size 3, whereas $e_5$ only appears in a minimal support of size 3 and one of size 4; only comparing the cardinals would thus result in a claim that $e_0$ should necessarily have a higher responsibility than $e_1$. However, there is another difference between the two in the fact that the "completing sets" of $e_0$ intersect whereas those of $e_5$ don’t, and as it turns out the "drastic Shapley" value gives the score of $\frac{681\,120}{10!}$ to both. In order to compare ``all other things being equal'' we therefore need to add a hypothesis to the definition of "a-domination witness" to ensure the intersections between the "completing sets" are taken into consideration.
\begin{figure}[tb]
\centering
      \begin{tikzpicture}[scale=1.5]
	%\small
	\coordinate (00) at (-1.25, 0);
	\coordinate (01) at (-0.75, 0.5);
	\coordinate (03) at (-0.75, -0.5);
	\coordinate (04) at (-2, 0);
	\coordinate (05) at (3.25, 0);
	\coordinate (06) at (2.75, 0.5);
	\coordinate (07) at (1.75, 0.5);
	\coordinate (08) at (2.75, -0.5);
	\coordinate (09) at (1.75, -0.5);
	\coordinate (010) at (0.75, -0.5);
	\coordinate (011) at (2.75, -0.25);
	\coordinate (012) at (2.75, -0.75);
	\coordinate (013) at (3.5, 0.25);
	\coordinate (014) at (3, -0.25);
	\coordinate (015) at (0.75, -0.25);
	\coordinate (016) at (0.75, -0.75);
	\coordinate (017) at (2.75, 0.25);
	\coordinate (018) at (2.75, 0.75);
	\coordinate (019) at (1.75, 0.75);
	\coordinate (020) at (1.75, 0.25);
	\coordinate (021) at (3, 0.25);
	\coordinate (022) at (3.5, -0.25);
	\coordinate (024) at (-1, -0.75);
	\coordinate (025) at (-0.5, -0.25);
	\coordinate (026) at (-2, 0.25);
	\coordinate (027) at (-2.25, -0.25);
	\coordinate (028) at (-1, 0.25);
	\coordinate (029) at (-1, -0.25);
	\coordinate (030) at (-2.25, 0.5);
	\coordinate (031) at (-0.5, 0.25);
	\coordinate (032) at (-2, -0.25);
	\coordinate (033) at (-1, 0.75);
	\coordinate (034) at (4, 0.5);
	\coordinate (035) at (4, -0.5);
	\begin{pgfonlayer}{nodelayer}
		\node [draw, circle, minimum height=1.5em] (0) at (00) {};
		\node at (00) {$n_1$};
		\node [draw, circle, minimum height=1.5em] (1) at (01) {};
		\node at (01) {$n_2$};
%		\node [draw, circle, minimum height=1.5em] (2) at (02) {};
%		\node at (02) {$n_2$};
		\node [draw, circle, minimum height=1.5em] (3) at (03) {};
		\node at (03) {$n_3$};
		\node [draw, circle, minimum height=1.5em] (4) at (04) {};
		\node at (04) {$n_0$};
		\node [draw, circle, minimum height=1.5em] (5) at (05) {};
		\node at (05) {$n_5$};
		\node [draw, circle, minimum height=1.5em] (6) at (06) {};
		\node at (06) {$n_6$};
		\node [draw, circle, minimum height=1.5em] (7) at (07) {};
		\node at (07) {$n_7$};
		\node [draw, circle, minimum height=1.5em] (8) at (08) {};
		\node at (08) {$n_8$};
		\node [draw, circle, minimum height=1.5em] (9) at (09) {};
		\node at (09) {$n_9$};
		\node [draw, circle, minimum height=1.5em] (10) at (010) {};
		\node at (010) {$n_{\mathrm{x}}$};
		\node [] (11) at ($(011)+(-0.1,0)$) {};
		\node [] (12) at ($(012)+(0.1,0)$) {};
		\node [] (13) at ($(013)+(-0.0732233,-0.0732233)+(0.1,-0.1)$) {};
		\node [] (14) at ($(014)+(0.0732233,+0.0732233)+(0.1,-0.1)$) {};
		\node [] (15) at (015) {};
		\node [] (16) at (016) {};
		\node [] (17) at ($(017)+(-0.1,0)$) {};
		\node [] (18) at ($(018)+(0.1,0)$) {};
		\node [] (19) at (019) {};
		\node [] (20) at (020) {};
		\node [] (21) at ($(021)+(0.0732233,-0.0732233)+(0.1,0.1)$) {};
		\node [] (22) at ($(022)+(-0.0732233,0.0732233)+(0.1,0.1)$) {};
		\node [] (24) at ($(024)+(0.0732233,0.0732233)$) {};
		\node [] (25) at ($(025)+(-0.0732233,-0.0732233)$) {};
		\node [] (26) at (026) {};
		\node [] (27) at ($(027)$) {};
		\node [] (28) at ($(028)+(-0.0732233,-0.0732233)+(-0.1,0.1)$) {};
		\node [] (29) at ($(029)+(-0.0732233,0.0732233)+(-0.1,-0.1)$) {};
		\node [] (30) at ($(030)+(0.1,0)$) {};
		\node [] (31) at ($(031)+(-0.0732233,0.0732233)$) {};
		\node [] (32) at (032) {};
		\node [] (33) at ($(033)+(0.0732233,-0.0732233)$) {};
		\node [] (34) at ($(034)+(0.0732233,-0.0732233)$) {};
	\end{pgfonlayer}
	\begin{pgfonlayer}{edgelayer}
		\draw [color=blue,rounded corners] (14.center) to (17.center) to (20.center);
		\draw [color=blue] (20.center) arc (-90:-270:0.25);
		\draw [color=blue,rounded corners] (19.center) to (18.center) to (13.center);
		\draw [color=blue] (13.center) arc (45:-135:0.25);
		\draw [color=blue,rounded corners] (22.center) to (12.center) to (16.center);
		\draw [color=blue] (16.center) arc (-90:-270:0.25);
		\draw [color=blue,rounded corners] (15.center) to (11.center) to (21.center);
		\draw [color=blue] (21.center) arc (135:-45:0.25);
		\draw [color=red, rounded corners] (25.center) to (28.center) to (26.center);
		\draw [color=red] (26.center) arc (90:270:0.3) to [bend left=20] (24.center)
									  arc (-135:45:0.25);
		
		\draw [color=red, rounded corners] (31.center) to (29.center) to (32.center);
		\draw [color=red] (32.center) arc (-90:-270:0.3) to [bend right=20] (33.center) arc (135:-45:0.25);
	     \node [anchor=west] (34) at (034) {$\Sh_{\dscorefun_{q,(\D,\emptyset)}}(n_0) = \frac{681\,120}{10!}$};
	     \node [anchor=west] (35) at (035) {$\Sh_{\dscorefun_{q,(\D,\emptyset)}}(n_5) = \frac{681\,120}{10!}$};
	\end{pgfonlayer}
\end{tikzpicture}%
   
\caption{Example of intersecting minimal supports. The database $\D$ is the set $\{e_0\dots e_9,e_\mathrm{x}\}$, and the query $q$ is such that the "minimal supports" are the colored regions (the red ones contain $e_0$ and the blue ones contain $e_5$).}
\label{fig:sharing-ex}
\end{figure}
\AP To this end, let us define 
a ""b-domination witness"" $\Delta(\alpha,\beta)$ of $\alpha$ over $\beta$ as consisting of: %
   \begin{itemize}
      \item an injective mapping $f: \mathsf{CS}(q,\D,\beta) \hookrightarrow \mathsf{CS}(q,\D,\alpha)$;
      \item for every $S\in\mathsf{CS}(q,\D,\beta)$, an injective mapping $\eta_S : f(S) \hookrightarrow S$;
      \item a bijection $\eta : \D\setminus\{\alpha\} \hookrightarrow \D\setminus\{\beta\}$ such that for all $S\in \mathsf{CS}(q,\D,\beta)$, and all $\gamma \in f(S)$, $\eta_S(\gamma) = \eta(\gamma)$.
   \end{itemize}
Essentially, the new hypothesis states that the different $\eta_S$ must agree on the value of any $\gamma$. This solves the issue with the example on \Cref{fig:sharing-ex}
by preventing the two completing sets of $e_5$ from mapping to the two completing sets of $e_0$: say $f$ maps $\{e_6,e_7\}$ to  $\{e_1,e_2\}$ and $\{e_8,e_9,e_\mathrm{x}\}$ to $\{e_1,e_3\}$; then no correct $\eta$ would exist because $\eta(e_1)$ would need to be in $\{e_6,e_7\}\cap\{e_8,e_9,e_\mathrm{x}\}=\emptyset$.\medskip

While this second iteration is a step in the right direction, it isn’t quite enough. To see this, recall \Cref{ex:aotbe}
(depicted in \Cref{fig:sharing-ex-2}).
In this example, the "completing sets" for $e_0$ and $e_5$ are all disjoint from one another, meaning there exists a "b-domination witness" of $n_0$ over $n_5$,
but the "completing sets" for $n_0$ intersect some "minimal supports" (in green) that don’t contain it, thereby weakening its contribution (according to the "drastic Shapley" value at least). We therefore need to add a final condition to avoid this last kind of interference. 
\AP
We shall thus define 
a ""domination witness"" $\Delta(\alpha,\beta)$ of $\alpha$ over $\beta$ as being given by:
   \begin{itemize}
      \item an injective mapping $f: \mathsf{CS}(q,\D,\beta) \hookrightarrow \mathsf{CS}(q,\D,\alpha)$;
      \item for every $S\in\mathsf{CS}(q,\D,\beta)$, an injective mapping $\eta_S : f(S) \hookrightarrow S$;
      \item there exists a bijection $\eta : \D\setminus\{\alpha\} \hookrightarrow \D\setminus\{\beta\}$ such that for all $S,\gamma$, $\eta_S(\gamma) = \eta(\gamma)$;%
      \item for all $X\in\D\setminus\{\alpha\}$, if $X$ contains some "minimal support" then $\eta(X)$ does.
   \end{itemize}
With this notion at hand, we arrive at our proposed formalization of axioms \ref{MS:1} and \ref{MS:2}.
\begin{itemize}
   \item[{\crtcrossreflabel{(MS1)}[fMS:1]}]%
      If there exists a "domination witness" $\Delta(\alpha,\beta)$ such that one of the $\eta_S : f(S) \hookrightarrow S$ isn’t surjective, then $\phi_{q,\D}(\alpha) > \phi_{q,\D}(\beta)$.
   \item[{\crtcrossreflabel{(MS2)}[fMS:2]}] If there exists a "domination witness" $\Delta(\alpha,\beta)$ such that $f: \mathsf{CS}(q,\D,\beta) \hookrightarrow \mathsf{CS}(q,\D,\alpha)$ isn’t surjective, then $\phi_{q,\D}(\alpha) > \phi_{q,\D}(\beta)$.\medskip
\end{itemize}

We now show that the "drastic Shapley" and "drastic Banzhaf" values satisfy these axioms indeed.

\begin{proposition}\label{prop:true-ax-sat-sh}
Let $\phi^c$ be a "drastic Shapley-like" score with positive "coefficient function" $c$. Then $\phi^c$ satisfies \ref{Shdb:1}, \ref{Shdb:2}, \ref{fMS:1}, and \ref{fMS:2}.
\end{proposition}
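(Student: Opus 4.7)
The overall strategy is to verify each of the four properties in turn, working from the closed-form formula (\Cref{formul:slike}) specialised to the drastic wealth function:
\[
  \phi^c_{q,\D}(\alpha) \;=\; \sum_{S\,\subseteq\,\Dn\setminus\{\alpha\}} c(|S|,|\Dn|)\,\cdot\,\mathbf{1}\bigl[S\cup\Dx \not\models q,\ S\cup\{\alpha\}\cup\Dx \models q\bigr].
\]
By monotonicity of $q$, this indicator lies in $\{0,1\}$ and equals $1$ exactly when $S\cup\Dx$ contains no minimal support of $q$ in $\D$ whereas $S\cup\{\alpha\}\cup\Dx$ does (in which case the minimal support in question necessarily passes through $\alpha$). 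Call such an $S$ an $\alpha$-pivotal set. Since $c$ is strictly positive, $\phi^c_{q,\D}(\alpha)$ is a positive weighted count of pivotal sets stratified by size, so it suffices to compare, per size class, the numbers of $\alpha$- and $\beta$-pivotal sets.

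For \ref{Shdb:1}, the drastic wealth function depends only on the semantics of the query, hence $\dscorefun_{q,\D}=\dscorefun_{q',\D}$ and it suffices to prove symmetry within a single wealth function; the hypothesis on $S\subseteq\D\setminus\{\alpha,\beta\}$ specialises to the standard (wSym) hypothesis on $S'\subseteq\Dn\setminus\{\alpha,\beta\}$ by plugging $S=S'\cup\Dx$, and (wSym) holds for every Shapley-like score by a direct re-indexing of the sum (a property implicit in \Cref{formul:slike}). For \ref{Shdb:2}, if $\alpha$ belongs to no minimal support then monotonicity yields $U\models q \iff U\setminus\{\alpha\}\models q$ for every $U\subseteq\D$, so every summand vanishes; conversely, for a relevant $\alpha$ lying in a minimal support $T$, the set $S := (T\cap\Dn)\setminus\{\alpha\}$ is $\alpha$-pivotal in the non-degenerate regime $\Dx\not\models q$, producing a strictly positive summand since $S\cup\{\alpha\}\cup\Dx\supseteq T\models q$ while the minimality of $T$ forbids $S\cup\Dx$ from containing any minimal support.

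For \ref{fMS:1} and \ref{fMS:2}, the plan is to use the domination witness to build, for each size $k$, an injection from $\beta$-pivotal sets of size $k$ into $\alpha$-pivotal sets of size $k$. Given a $\beta$-pivotal $S_\beta$, the set $U_\beta := S_\beta\cup\Dx$ contains a minimal support $T_\beta$ which must pass through $\beta$ (else $U_\beta$ alone would already contain one). Letting $C_\beta := T_\beta\setminus\{\beta\}\in \mathsf{CS}(q,\D,\beta)$, its image $f(C_\beta)\cup\{\alpha\}$ is a minimal support through $\alpha$. Define $U_\alpha := \eta^{-1}(U_\beta)$; the compatibility $\eta|_{f(C_\beta)} = \eta_{C_\beta}$ together with $\eta_{C_\beta}(f(C_\beta))\subseteq C_\beta\subseteq U_\beta$ gives $f(C_\beta)\subseteq U_\alpha$, while the fourth witness condition (that $\eta$ transports the property of containing a minimal support) contrapositively forces $U_\alpha$ to contain no minimal support. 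Hence $S_\alpha := U_\alpha\cap\Dn$ is $\alpha$-pivotal, and the injectivity of $S_\beta\mapsto S_\alpha$ follows from the injectivity of $f$ and the bijectivity of $\eta$. The required strict inequality comes from producing one extra $\alpha$-pivotal set outside the image: under \ref{fMS:2}, any completing set $C^*\in \mathsf{CS}(q,\D,\alpha)\setminus \Ima(f)$ yields such a pivotal set via $S^*:=C^*\cap\Dn$; under \ref{fMS:1}, a non-surjective $\eta_S$ makes the minimal support $f(S)\cup\{\alpha\}$ strictly smaller than $S\cup\{\beta\}$, so a pivotal set for $\alpha$ derived from it has a size not reached by any image of the injection.

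The main technical obstacle I foresee is the interaction between the bijection $\eta$, defined on all of $\D\setminus\{\alpha\}$, and the partition $\D = \Dn\dcup\Dx$ underlying the pivotal-set formula: for the step ``$S_\alpha := U_\alpha\cap\Dn$ is $\alpha$-pivotal'' to preserve sizes and for $S_\alpha\cup\Dx$ to coincide with $U_\alpha$, I will need $\eta$ to preserve the partition, i.e.\ $\eta(\Dn\setminus\{\alpha\}) = \Dn\setminus\{\beta\}$ and $\eta(\Dx)=\Dx$. This appears to be the natural intended reading of the axiom in the partitioned setting; it can either be folded in as an implicit clause of the witness definition, or the proof can be carried out in the unpartitioned regime $\Dn=\D$ that the paper's motivating examples (\Cref{fig:counter-ex,fig:sharing-ex-2,fig:sharing-ex}) all instantiate.
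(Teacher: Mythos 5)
Your proposal follows essentially the same route as the paper's proof: it too works from the closed-form Shapley-like sum, disposes of \ref{Shdb:1} via the (wSym) axiom of Shapley-like scores and of \ref{Shdb:2} by exhibiting one nonzero marginal term, and handles \ref{fMS:1}/\ref{fMS:2} by transporting the sum for $\beta$ through the bijection $\eta$ and comparing termwise --- which is exactly your size-preserving injection $S_\beta \mapsto \eta^{-1}(S_\beta)$ on pivotal sets, with strictness obtained from one extra nonzero $\alpha$-term. The compatibility issue you raise between $\eta$ and the partition $\Dn \dcup \Dx$ is genuine but shared: the paper's proof also silently reindexes a sum ranging over subsets of $\Dn\setminus\{\alpha\}$ by a bijection that the witness definition only requires to act on $\D\setminus\{\alpha\}$, so you are not missing an idea that the paper supplies (the same caveat applies to your \ref{Shdb:2} witness $S=(T\cap\Dn)\setminus\{\alpha\}$, whose paper counterpart is asserted rather than constructed).

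One local step would fail as written: in the strictness argument for \ref{fMS:1}, the extra $\alpha$-pivotal set is $f(S)$ itself, but the reason it lies outside the image of your injection is \emph{not} that its size is unreached --- the injection preserves size at every cardinality, and nothing prevents $\beta$-pivotal sets of size $|f(S)|$ from existing. The correct reason, and the one the paper uses, is that $\eta(f(S))=\eta_S(f(S))$ is a \emph{proper} subset of $S$ when $\eta_S$ is not surjective, so $\eta(f(S))\cup\{\beta\}$ is a proper subset of the minimal support $S\cup\{\beta\}$ and therefore contains no minimal support at all; hence $\eta(f(S))$ is not $\beta$-pivotal and $f(S)$ has no preimage under the injection. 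This repair slots directly into your framework and parallels the argument you already give for \ref{fMS:2}.
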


\begin{proof}
Recall that "drastic Shapley-like" scores are defined by the following equation

\begin{align}\label{eq:d-sh-like}
   \phi^c_{q,\D}(\alpha)&= \S^c_{\dscorefun_{q,\D}}(\alpha)\notag\\
      &= \sum_{X\subseteq \Dn \setminus \set \alpha } c(|X|,|\D|)(v(X\cup\{\alpha\}) - v(X)) \quad\text{ if $\Dx\not\models q$, 0 otherwise}
\end{align}
with $c$ the "coefficient function" and $v$ the $0/1$ function associated with the query.

\ref{Shdb:1} derives from the known axiom \ref{Sh:1} of "Shapley-like" scores.
For \ref{Shdb:2}: by definition, if a player $\alpha$ is "irrelevant", then $v(X\cup\{\alpha\}) = v(X)$ for every $X$, hence $\phi^c_{q,\D}(\alpha)=0$. Otherwise there is at least a term in the sum \st $v(X\cup\{\alpha\}) - v(X)\neq 0$; now $c$ has positive values and $v(X\cup\{\alpha\}) - v(X)\in\{0;1\}$ since $q$ is "monotone" and "Boolean", therefore $\phi^c_{q,\D}(\alpha)>0$.\medskip

Now assume there exists a "domination witness" $\Delta(\alpha, \beta)$. We can apply $\eta$, which is a bijection, in \eqref{eq:d-sh-like} to express $\phi^c_{q,\D}(\beta)$ as a sum over subsets of $\D \setminus \set \alpha$:

\begin{align*}
   \phi^c_{q,\D}(\beta) &= \sum_{X\subseteq \Dn \setminus \set \alpha } c(|\eta(X)|,|\D|)(v(\eta(X)\cup\{\beta\}) - v(\eta(X)))
\end{align*}

Consider some $X\subseteq \D \setminus \set \alpha$ such that $v(\eta(X)\cup\{\beta\}) - v(\eta(X))\neq 0$, \ie some $\eta(X)$ that contains a "completing set" $S$ for $\beta$ and no "minimal support". By the definition of "domination witnesses", $X$ then contains $\eta^{-1}(S) = \eta_S^{-1}(S) = f(S)$, which is a "completing set" for $\alpha$, and no "minimal support". In other words, we have $v(X\cup\{\alpha\}) - v(X)\neq 0$, and since the only other possible value is 1 as explained earlier, $v(\eta(X)\cup\{\beta\}) - v(\eta(X)) = v(X\cup\{\alpha\}) - v(X)$, which yields the following:

\begin{align*}
   \phi^c_{q,\D}(\beta) &\le \sum_{X\subseteq \Dn \setminus \set \alpha } c(|X|,|\D|)(v(X\cup\{\alpha\}) - v(X))\\
			&= \phi^c_{q,\D}(\alpha)
\end{align*}

At this stage, all it takes for the above inequality to become strict is a subset $X$ such that $v(\eta(X)\cup\{\beta\}) - v(\eta(X)) < v(X\cup\{\alpha\}) - v(X)$, and this subset can be obtained from the definition of each axiom. \medskip

\proofcase{(MS1)} If one of the $\eta_S : f(S) \hookrightarrow S$ isn’t surjective, then $\eta(f(S))$ is a proper subset of $S$, meaning it isn’t (nor contains) a "completing set" for $\alpha$, therefore $X=f(S)$ gives the desired result. \medskip

\proofcase{(MS2)} If $f: \mathsf{CS}(q,\D,\beta) \hookrightarrow \mathsf{CS}(q,\D,\alpha)$ isn’t surjective, then there is a $X\in\mathsf{CS}(q,\D,\alpha)$ that has no inverse image by	$f$. We can further assume that all $\eta_S$ are bijective, otherwise the result is already given by \ref{fMS:1}; in other words, for every "completing set" $S$ for $\beta$, $\eta^{-1}(S)$ is a "completing set" for $\alpha$.
Since $\eta$ is bijective, if $\eta(X)$ were to contain some $S\in\mathsf{CS}(q,\D,\beta)$, then $\eta^{-1}(S)\subseteq X$ would be a "completing set" for $\alpha$, which is contradictory. At last, $\eta(X)$ cannot contain any "minimal support" because $X$ doesn’t. Overall, this $X$ gives the desired result.%
\end{proof}

Finally we prove that \ref{MS:t} is indeed a necessary condition to the pair \ref{MS:1} \ref{MS:2}: %

\begin{lemma}\label{lem:weak-ax}
Any "responsibility measure" that satisfies \ref{MS:1} and \ref{MS:2} also satisfies \ref{MS:t}.
\end{lemma}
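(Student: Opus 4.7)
The plan is to construct, from the data of \ref{MS:t}, a single "domination witness" $\Delta(\alpha,\beta)$ whose strictness clause matches whichever of the two hypotheses ($k>l$ or $|A_i|<|B_i|$ for some $i\in[l]$) holds, and then invoke \ref{fMS:1} or \ref{fMS:2} to obtain $\phi_{q,\D}(\alpha)>\phi_{q,\D}(\beta)$.

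Since the "minimal supports" are exactly the $A_i$'s and $B_j$'s, and $\alpha$ (resp.\ $\beta$) lies only in the $A_i$'s (resp.\ $B_j$'s), we have $\mathsf{CS}(q,\D,\alpha)=\{A_i\setminus\{\alpha\}:i\in[k]\}$ and $\mathsf{CS}(q,\D,\beta)=\{B_j\setminus\{\beta\}:j\in[l]\}$. I define $f(B_j\setminus\{\beta\})\defeq A_j\setminus\{\alpha\}$ for each $j\in[l]$, which is injective. For each $j\in[l]$, using $|A_j|\le|B_j|$, I choose an injection $\eta_j:A_j\setminus\{\alpha\}\hookrightarrow B_j\setminus\{\beta\}$ with image $I_j\subseteq B_j\setminus\{\beta\}$ and set $\eta_{B_j\setminus\{\beta\}}\defeq\eta_j$. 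I build the global bijection $\eta:\D\setminus\{\alpha\}\to\D\setminus\{\beta\}$ by: sending $\beta\mapsto\alpha$; swapping each $A_j\setminus\{\alpha\}$ with $I_j$ via $\eta_j$ and $\eta_j^{-1}$ (for $j\in[l]$); and using the identity on the remaining elements, which a direct counting check shows to coincide, on both the domain and codomain sides, with $\bigcup_{i>l}(A_i\setminus\{\alpha\})\cup\bigcup_{j\le l}((B_j\setminus\{\beta\})\setminus I_j)$. By construction, $\eta$ is a bijection and agrees with each $\eta_{B_j\setminus\{\beta\}}$ on $f(B_j\setminus\{\beta\})$.

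The only non-routine clause to verify is the last one in the definition of a "domination witness": if $X\subseteq\D\setminus\{\alpha\}$ contains a minimal support, so does $\eta(X)$. Since any minimal support in $\D\setminus\{\alpha\}$ must be some $B_j$ with $j\in[l]$ (the $A_i$'s contain $\alpha$), it suffices to check that $\eta(B_j)$ contains a minimal support whenever $B_j\subseteq X$. The design of $\eta$ pays off here: $\eta(B_j)=\{\alpha\}\cup\eta(I_j)\cup\eta((B_j\setminus\{\beta\})\setminus I_j)=\{\alpha\}\cup(A_j\setminus\{\alpha\})\cup((B_j\setminus\{\beta\})\setminus I_j)\supseteq A_j$, so $\eta(X)$ contains the minimal support $A_j$. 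This is the entire reason for swapping $A_j\setminus\{\alpha\}$ with $I_j$ rather than extending $\eta$ arbitrarily as a bijection: the swap is what loads a full copy of $A_j$ into the image of $B_j$.

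To conclude, I invoke the appropriate axiom according to the strictness hypothesis. If $k>l$, then $f$ misses $A_i\setminus\{\alpha\}$ for every $i>l$ and is not surjective, so \ref{fMS:2} yields $\phi_{q,\D}(\alpha)>\phi_{q,\D}(\beta)$. If $|A_i|<|B_i|$ for some $i\in[l]$, then $|A_i\setminus\{\alpha\}|<|B_i\setminus\{\beta\}|$, so $\eta_i$ cannot be surjective, and \ref{fMS:1} gives the same strict inequality. The main potential obstacle is the verification in the previous paragraph: a naive extension of $\eta$ could send some $B_j$ into an arbitrary set containing no minimal support, breaking the fourth clause of the witness definition; the swap construction is precisely what guarantees the property cleanly, and the rest of the argument is routine bookkeeping.
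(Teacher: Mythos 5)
Your proposal is correct and follows essentially the same route as the paper's own proof: the same $f(B_j\setminus\{\beta\})=A_j\setminus\{\alpha\}$, the same injections $\eta_j$ guaranteed by $|A_j|\le|B_j|$, the same global $\eta$ built by swapping each $A_j\setminus\{\alpha\}$ with the image $I_j$ and fixing everything else, the same observation that $\eta(B_j)\supseteq A_j$ settles the fourth clause, and the same case split invoking \ref{fMS:1} or \ref{fMS:2}. The only difference is that you spell out the bookkeeping (well-definedness of the swap, the identity part of $\eta$) more explicitly than the paper does.
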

\begin{proof}
   Take the notations from \ref{MS:t}, as summarized in \Cref{fig:sharing-ex-3}.
   The "completing sets" for $\alpha$ (resp. $\beta$) are the $A_i\setminus\{\alpha\}$ (resp. the $B_i\setminus\{\beta\}$), which we shorten to $A_i^\alpha$ (resp. $B_i^\beta$).
   Define the "domination witness" $\Delta(\alpha,\beta)$ as follows:
\begin{itemize}
   \item The injective mapping $f: \mathsf{CS}(\+D,q,\beta) \hookrightarrow \mathsf{CS}(\+D,q,\alpha)$ is defined by $f(B_i^\beta) = A_i^\alpha$ for %
      $i\le l$.
   \item For every $i\le l$, $|A_i|\le |B_i|$ implies the existence of some injective mapping $\eta_{B_i^\beta} : A_i^\alpha \hookrightarrow B_i^\beta$.
   \item The only elements that are changed by some $\eta_S$ are the elements of some $A_i^{\alpha}$, which only appear in one of them, hence only have one possible image. This means that the bijection $\eta : \+D\setminus\{\alpha\} \hookrightarrow \+D\setminus\{\beta\}$ that swaps $\alpha$ with $\beta$, every $\gamma\in A_i^\alpha$ with $\eta_{A_i^\alpha}(\gamma)$ and leaves the rest unchanged is such that for all $S,\gamma$, $\eta_S(\gamma) = \eta(\gamma)$.%
   \item For all $X\in\+D\setminus\{\alpha\}$, if $X$ contains some "minimal support" then it can only be some $B_i$, because all others contain $\alpha$. Now by definition $\eta$ swaps every element of $A_i$ with some element of $B_i$ meaning that $\eta(B_i)\subseteq \eta(X)$ will contain the "minimal support" $A_i$.
\end{itemize}

As shown above, $\Delta(\alpha,\beta)$ is a "domination witness" indeed.
We conclude with the last hypothesis of \ref{MS:t}: if $\exists i. |A_i| < |B_i|$ then $\eta_{A_i^\alpha}$ isn’t surjective and \ref{fMS:1} gives the desired result,
and if $k>l$ then $f$ isn’t surjective, we can conclude using \ref{fMS:2}.
\end{proof}

\section{Alternative Responsibility Measures}
\label{sec:alt-measures}
We have established what we want from our "responsibility measures" and shown that the "drastic Shapley value" perfectly fulfills them.
We have also seen, however, that these desiderata are distinct from the \ref{Sh:1}-\ref{Sh:3} axioms that uniquely characterize the "Shapley value", and that other responsibility measures such as the "drastic Banzhaf value" also fulfill them.
Coupled with the fact that $\dShapley_{q}$ is $\sP$-hard for some very simple "CQ"s \cite{livshitsShapleyValueTuples2021},
this motivates us to consider alternative measures, which can be done in two ways:
\begin{description}
    \item[\textnormal{\em Approach 1}:] we may consider alternative responsibility attribution functions more specifically tailored to %
    "monotone" "non-numeric queries" which may not be based on a Shapley score, or 
    \item[\textnormal{\em Approach 2}:] we may insist in the use of the Shapley value but for a different associated wealth function, one in which axioms \ref{Sh:3} and \ref{Sh:4} are more meaningful.%
\end{description}

This section is organized as follows. In \Cref{ssec:wsmsdef},
we follow the first approach by proposing an alternative family of measures conforming to \ref{Shdb:1}, \ref{Shdb:2}, \ref{MS:1} and \ref{MS:2}.
Then in \Cref{ssec:apr2},
we show that these alternative measures can all be seen as "Shapley values" for different wealth functions, and thus that they could also have been found by following the second approach. Finally \Cref{ssec:otherwsms} will suggest some further explicit measures within the family.

\subsection{Weighted Sums of Minimal Supports}
\label{ssec:wsmsdef}
\AP
Given that "minimal supports" are central to \ref{MS:1}, \ref{MS:2} and also \ref{Shdb:2}, it seems natural to employ this notion directly in the definition of responsibility measures. We shall thus introduce the
 ""weighted sum of minimal supports"" (or \reintro{WSMS}), as the quantitative responsibility measure 
\begin{equation}\label{eq:wsms}
   \phi_{q,\D}^w(\alpha) \defeq \sum_{\substack{M\in\Minsups q(\D)\\\alpha\in M}} w(|M|,|\D|) \quad\text{ if $\Dx\not\models q$, or 0 otherwise}
\end{equation}
for some \AP""weight function"" $w: \lN \times \lN \to \lR$.
We say that $w$ is a \AP""tractable weight function"" if it can further be computed in polynomial time. Note the special case for $\Dx\models q$, like the one for $\Dscorefun$.\medskip

In terms of theoretical guarantees, virtually all "WSMSs" satisfy all discussed axioms, even the simpler (and stronger) axioms \ref{MS:1a} and \ref{MS:2a}, which boil down to considering that the interactions between the minimal supports have little impact on responsibility.

\begin{proposition}\label{prop:true-ax-sat-ms}
If $w$ is \AP""positive and strictly decreasing"" (\ie such that $w(k,n) > 0$ and $k < l \Rightarrow w(k,n) > w(l,n)$ for every $k,l,n$), then $\phi^w$ satisfies \ref{Shdb:1}, \ref{Shdb:2}, \ref{MS:1a} and \ref{MS:2a}.
As a consequence, $\phi^w$ also satisfies \ref{MS:1} and \ref{MS:2}, as well as \ref{MS:t}.
\end{proposition}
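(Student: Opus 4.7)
The approach is to unfold the defining formula \eqref{eq:wsms} and reduce each axiom to an elementary statement about minimal supports, on which positivity and strict decreasingness of $w$ act transparently. I first dispatch the degenerate case $\Dx \models q$: by definition $\phi^w_{q,\D}$ vanishes identically, so all four axioms hold trivially. In the rest of the argument I therefore assume $\Dx \not\models q$.

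For \ref{Shdb:1}, my plan is to use that $\phi^w_{q,\D}$ depends only on the semantics of $q$ (because $\Minsups{q}(\D)$ does), so I may take $q = q'$, and then to exhibit a size-preserving bijection on minimal supports. Those containing both $\alpha$ and $\beta$ contribute identically to $\phi^w_{q,\D}(\alpha)$ and $\phi^w_{q,\D}(\beta)$, so the task reduces to supports containing exactly one of the two. The hypothesis $S\cup\{\alpha\}\models q \iff S\cup\{\beta\}\models q$ for every $S \subseteq \D\setminus\{\alpha,\beta\}$ yields a bijection at the level of supports via $S\cup\{\alpha\}\leftrightarrow S\cup\{\beta\}$; a short case split on proper subsets of $S\cup\{\beta\}$ (either avoiding $\beta$ or of the form $T'\cup\{\beta\}$ with $T'\subsetneq S$) transfers minimality across this bijection. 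For \ref{Shdb:2}, irrelevance of $\alpha$ makes the sum in \eqref{eq:wsms} empty, hence $0$, while relevance leaves at least one strictly positive term since $w>0$.

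For \ref{MS:1a} and \ref{MS:2a}, let $(f,(\eta_S)_S)$ be an a-domination witness of $\alpha$ over $\beta$. The assignment $S\cup\{\beta\} \mapsto f(S)\cup\{\alpha\}$ gives an injection from minimal supports containing $\beta$ into minimal supports containing $\alpha$, and the injectivity of $\eta_S$ yields $|f(S)| \le |S|$. Combined with the strict decreasingness of $w$, this gives
\[
\phi^w_{q,\D}(\alpha) \;\ge\; \sum_{S \in \mathsf{CS}(q,\D,\beta)} w(|f(S)|+1,|\D|) \;\ge\; \sum_{S \in \mathsf{CS}(q,\D,\beta)} w(|S|+1,|\D|) \;=\; \phi^w_{q,\D}(\beta).
\]
Under \ref{MS:1a}, some $\eta_S$ is non-surjective, forcing $|f(S)| < |S|$ for that $S$, so the middle inequality becomes strict. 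Under \ref{MS:2a}, at least one minimal support containing $\alpha$ lies outside the image of $f$, contributing an extra strictly positive term (since $w>0$) to the first inequality. Either way $\phi^w_{q,\D}(\alpha) > \phi^w_{q,\D}(\beta)$.

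The consequence that \ref{MS:1} and \ref{MS:2} also hold is immediate, because every domination witness is, in particular, an a-domination witness, so the hypotheses of \ref{MS:1} and \ref{MS:2} are special cases of those of \ref{MS:1a} and \ref{MS:2a}; then \ref{MS:t} follows directly from \Cref{lem:weak-ax}. I expect the only minor subtlety to lie in the minimality verification for the bijection of \ref{Shdb:1}; all other steps follow transparently from the two properties of $w$.
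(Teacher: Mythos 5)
Your proof is correct and follows essentially the same route as the paper's: the same chain of two inequalities for \ref{MS:1a}/\ref{MS:2a} (with non-surjectivity of some $\eta_S$ making the middle one strict and non-surjectivity of $f$ making the outer one strict), the same empty-sum/positive-term argument for \ref{Shdb:2}, and the same reduction of \ref{MS:1}, \ref{MS:2}, \ref{MS:t} to their counterparts via \Cref{lem:weak-ax}. The only difference is cosmetic: for \ref{Shdb:1} you give a self-contained size-preserving bijection on minimal supports (with the minimality transfer carefully checked), whereas the paper simply invokes \ref{Sh:1} together with semantic invariance of $\Minsups{q}(\D)$; your version is slightly more explicit but proves the same thing.
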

\begin{proof}

\proofcase{\ref{Shdb:1}} It is a direct consequence of \ref{Sh:1} and the fact that $w$ is only a function of the "minimal supports" for $q$, which are invariant under query equivalence.\medskip

\proofcase{\ref{Shdb:2}} If a fact $\alpha$ is "irrelevant", then it appears in no "minimal support" hence the sum in \Cref{eq:wsms} is empty. Otherwise, $\alpha$ appears in some "minimal support" and the sum is non-empty. Further, all terms will be positive because $w$ is, hence the result.\medskip

\proofcase{\ref{MS:1a} and \ref{MS:2a}} If there exists an "a-domination witness" $\Delta(\alpha, \beta)$:

\[
\begin{array}{rl|l}
   \phi^w_{q,\D}(\beta)
      &= \displaystyle\sum_{S\in\mathsf{CS}(q,\D,\beta)} w(|S|+1,\D)
      &\text{by \eqref{eq:wsms} }\\
      &\le \displaystyle\sum_{S\in\mathsf{CS}(q,\D,\beta)}^{\phantom{a}} w(|f(S)|+1,\D)
      &|S|\ge |f(S)| \Rightarrow w(|S|+1,\D)\le w(|f(S)|+1,\D)\\
      &\le \displaystyle\sum_{X\in\mathsf{CS}(q,\D,\alpha)}^{\phantom{a}} w(|X|+1,\D)
      &\text{because $f$ is injective}
\end{array}
\]

and the hypothesis of \ref{MS:1a} (resp. \ref{MS:2a}) makes the second (resp. first) inequality strict.\medskip

\proofcase{Others} \ref{MS:1} and \ref{MS:2} are trivially implied by their counterparts and \ref{MS:t} by \Cref{lem:weak-ax}.
\end{proof}

\AP Arguably the most natural "positive and strictly decreasing" function  is the ""inverse weight function"" $\mathrm{"invw"}: k,s \mapsto \nicefrac{1}{k}$. As \Cref{prop:invw=ms}
shows, $\phi^\mathrm{"invw"}$ coincides with the Shapley value of the "wealth function family" $\intro*\MSscorefun$, defined by $\msscorefun_{q,\D}(S) = \sum_{M\in\mathsf{MS}_q(S\uplus\Dx)} \frac{|M\cap S|}{|M|}$ if $\Dx\not\models
q$, and 0 otherwise.
In the non-trivial case where $\Dx\not\models q$, a "minimal support" $S\subseteq \Dn$ collectively contributes 1 to the total, while a "minimal support" that is, say, $\nicefrac{1}{3}$ "exogenous", only contributes $\nicefrac{2}{3}$.
On purely "endogenous" databases, the formula therefore simplifies to $\msscorefun_{q,\Dn}(S) = |\mathsf{MS}_q(S)|$.
Intuitively, this measure considers that the fact that  $\D\models q$ 
 is more robust the more "minimal supports" for $q$ can be found in $\D$. A similar idea %
 underlies the so-called \textit{MI inconsistency measure} \cite{hunterMeasureConflictsShapley2010} (see \Cref{sec:relwork} for more details).

\begin{proposition}\label{prop:invw=ms}
   The \reintro{MS Shapley} value, that is, the "Shapley value" of $\MSscorefun$, satisfies the formula:%
   \[\Sh_{\msscorefun_{q,\D}}(\alpha) = \phi_{q,\D}^\mathrm{"invw"}(\alpha) = \sum_{\substack{M\in\Minsups{q}(\D)\\\alpha\in M}} \frac{1}{|M|} \quad\text{ if $\Dx\not\models q$, or $0$ otherwise.}\]
\end{proposition}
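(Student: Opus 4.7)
The plan is to apply linearity of the Shapley value after expressing $\msscorefun_{q,\D}$ as a linear combination of unanimity wealth functions, and then invoke the classical formula for the Shapley value of a unanimity game, which is precisely the content of Shapley's Lemma~2 recalled in the proof of \Cref{th:shapunique}. The case $\Dx \models q$ is immediate: by definition $\msscorefun_{q,\D} \equiv 0$, so $\Sh_{\msscorefun_{q,\D}}(\alpha) = 0$, matching the convention in \eqref{eq:wsms}.

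Assume henceforth that $\Dx \not\models q$. The key observation is a rewriting of the wealth function. For $S \subseteq \Dn$, one has $M \in \Minsups{q}(S \uplus \Dx)$ iff $M \in \Minsups{q}(\D)$ with $M \cap \Dn \subseteq S$, since the property of being a minimal support depends only on $M$ and not on the ambient database, as long as $M$ is contained in it. Writing $M_n := M \cap \Dn$, and noting that $M \cap S = M_n$ whenever $M_n \subseteq S \subseteq \Dn$, we obtain
\[\msscorefun_{q,\D}(S) \;=\; \sum_{M \in \Minsups{q}(\D)} \frac{|M_n|}{|M|}\,\xi_M(S),\]
where $\xi_M : 2^{\Dn} \to \{0,1\}$ is the unanimity wealth function defined by $\xi_M(S) = 1$ iff $M_n \subseteq S$. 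Note that $M_n \neq \emptyset$ for every $M \in \Minsups{q}(\D)$ (otherwise $M \subseteq \Dx$ would entail $\Dx \models q$ by monotonicity, contradicting the assumption), so each $\xi_M$ satisfies $\xi_M(\emptyset) = 0$ and is a bona fide wealth function.

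Applying linearity \ref{Sh:4} and then Shapley's Lemma~2 to each unanimity game $\xi_M$ over base set $\Dn$ with winning coalition $M_n$, which gives $\Sh_{\xi_M}(\alpha) = 1/|M_n|$ if $\alpha \in M_n$ and $0$ otherwise, we conclude
\[\Sh_{\msscorefun_{q,\D}}(\alpha) \;=\; \sum_{M \in \Minsups{q}(\D)} \frac{|M_n|}{|M|}\cdot \Sh_{\xi_M}(\alpha) \;=\; \sum_{\substack{M \in \Minsups{q}(\D)\\ \alpha \in M_n}} \frac{|M_n|}{|M|}\cdot\frac{1}{|M_n|} \;=\; \sum_{\substack{M \in \Minsups{q}(\D)\\ \alpha \in M}} \frac{1}{|M|},\]
where the last equality uses $\alpha \in \Dn$, so that $\alpha \in M$ iff $\alpha \in M_n$. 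The only mildly nontrivial step is the rewriting of $\msscorefun_{q,\D}$ as a non-negative combination of unanimity games; the remainder is a direct application of axioms and a classical formula, so no serious obstacle is expected.
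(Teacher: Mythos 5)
Your proof is correct and follows essentially the same route as the paper: the paper decomposes $\msscorefun_{q,\D}$ as $\sum_M \msscorefun_{\onemsq_M,\D}$, and each summand $\msscorefun_{\onemsq_M,\D}$ is exactly your scaled unanimity game $\frac{|M\cap\Dn|}{|M|}\,\xi_M$, whose Shapley value the paper derives directly from \ref{Sh:2}, \ref{Sh:1} and \ref{Sh:3} where you instead invoke the recalled Lemma~2 (itself proved from those same axioms). Your explicit verification of the rewriting and of $M\cap\Dn\neq\emptyset$ is slightly more careful than the paper's, but the argument is the same.
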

\begin{proof}
Consider a query $q$ and a "partitioned database" $\D\defeq\Dn\uplus\Dx$. %
For every $M\in \Minsups{q}(\D)$ define the query $\AP\intro*\onemsq_{M}$ whose only minimal support is $M$, that is $\D'\models \onemsq_{M}$ iff $M\subseteq \D'$. It can be easily seen that $\msscorefun_{q,\D}$ on $\D$ can be expressed as $\msscorefun_{q,\D}= \sum_{M\in \Minsups{q}(\D)} \msscorefun_{\onemsq_{M},\D}$ if $\Dx\not\models q$, 0 otherwise.

Since the other case is trivial, assume $\Dx\not\models q$ and consider some $M\in \Minsups{q}(\D)$. By \ref{Sh:2}, $\Sh_{\msscorefun_{\onemsq_{M},\D}}(\alpha)=0$ for any "endogenous" fact $\alpha$ outside $M$. By \ref{Sh:1}, the remaining "endogenous" facts (\ie the elements of $M\cap\Dn$) must all have the same value, and their sum is fixed to $\frac{|M\cap\Dn|}{|M|}$ by \ref{Sh:3} (recall we assume $\D \not\models q$). Overall this means that  $\Sh_{\msscorefun_{\onemsq_{M},\D}}(\alpha)=\nicefrac{1}{|M|}$ if $\alpha\in M$ and 0 otherwise.
Note that, so long as $\Dx \not\models q$, this value isn’t affected by the distribution of "exogenous" facts.
Finally by \ref{Sh:4} we can sum the contributions of all subgames to get the desired formula.
\end{proof}
   \Cref{prop:invw=ms}
   is essentially %
   the result
   \cite[Proposition 8]{hunterMeasureConflictsShapley2010}, 
   and
   solely relies on the Shapley axioms \ref{Sh:1}--\ref{Sh:3}.
   We believe this is %
   evidence that these axioms make more sense %
   for "WSMS"s. In particular, the axiom \ref{Sh:4}, when instantiated with $\MSscorefun$, implies `if $\Minsups{q_1}(\D)\cap \Minsups{q_2}(\D) = \emptyset$, then $\phi_{q_1,\D}(\alpha) + \phi_{q_2,\D}(\alpha) = \phi_{q_1\land q_2,\D}(\alpha)$ for all $\alpha$'. This property is %
    key to \Cref{prop:invw=ms} and further down the line 
    to the efficient computation of the "MS Shapley value" (see \Cref{sec:datacomplexity,sec:combined-complexity}), whereas the instantiation of \ref{Sh:4} by $\Dscorefun$ yields a nonsensical statement. %
\medskip

The first observation that can be made regarding $\phi^\mathrm{"invw"}$ in contrast with the "drastic Shapley value" is the fact that only the latter is affected by the way "minimal supports" may intersect. As such,
$\phi^\mathrm{"invw"}$
will ignore the `green' "minimal supports" in \Cref{ex:aotbe} and attribute a greater value to $e_0$.
In short the "drastic Shapley value" considers the interactions between all minimal supports, while "WSMS"s don’t. We find no conceptual reason to argue that one approach gives better explanations than the other. However, we can say that the explanations given by $\phi^\mathrm{"invw"}$ are \emph{simpler}, both conceptually (it is a straightforward average) and computationally (intuitively, even when the number of "minimal supports" is manageable, considering all their possible interactions can be costly).

\subsection{WSMSs are Shapley-like Scores}\label{ssec:apr2}
In \Cref{sec:drastic-shapley},
we insisted on the term `"drastic Shapley value"', which is simply called `Shapley value' in the database literature \cite{livshitsShapleyValueTuples2021,ourpods24}. This is because, as we have seen, the "Shapley value" proper isn’t a "quantitative measure", but rather a family of "measures" parameterized by a "wealth function family" $\STscorefun$. Moreover, while $\Dscorefun$ is a very natural choice, 
it is not the only reasonable "wealth function family" one can define for "non-numeric queries". %
Indeed, \Cref{prop:invw=ms} shows that instantiating the Shapley value using
$\MSscorefun$
yields the intuitive measure $\phi^\mathrm{"invw"}$.
Interestingly, we can generalize the latter result to show that \emph{every} "WSMS" is actually a Shapley value in disguise:

\begin{proposition}\label{prop:wsmsissh}
   For every "WSMS" $\phi^w$ and every "Shapley-like" score $\S^c$
   whose "coefficient function" $c$ is positive on all inputs, there exists a family $\Wscorefun\defeq\left(\wscorefun_{q,\D}\right)$ such that for every query $q$,  "database" $\D$ and $\alpha\in \D$, $\phi_{q,\D}^w(\alpha)=\S^c_{\wscorefun_{q,\D}}(\alpha)$.
\end{proposition}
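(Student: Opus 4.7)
The plan is to mimic the decomposition used in the proof of \Cref{prop:invw=ms}: split the wealth function into one ``atomic'' contribution per minimal support, rescale each atomic contribution so that its Shapley-like score for a participating player is exactly $w(|M|,|\D|)$, and then assemble the result by linearity of Shapley-like scores in the wealth argument.

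First I would dispose of the degenerate case. If $\Dx \models q$, then $\phi^w_{q,\D} \equiv 0$ by definition, so I take $\wscorefun_{q,\D} \equiv 0$ and the Shapley-like score is trivially $0$ for every $\alpha \in \Dn$. Henceforth assume $\Dx \not\models q$, in which case every $M \in \Minsups{q}(\D)$ must satisfy $M \cap \Dn \neq \emptyset$ (otherwise $M \subseteq \Dx$ and, by monotonicity, $\Dx \models q$).

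Next, for each $M \in \Minsups{q}(\D)$, I define a ``unit game'' $u_M : \partsof{\Dn} \to \lR$ by $u_M(S) \defeq v_M \cdot [M \cap \Dn \subseteq S]$ for a constant $v_M$ to be fixed. The key computation is
\[
   \S^c_{u_M}(\alpha) \;=\; v_M \sum_{j=0}^{|\Dn|-m} c(m-1+j,|\Dn|)\binom{|\Dn|-m}{j}
\]
for every $\alpha \in M \cap \Dn$, where $m \defeq |M\cap\Dn| \ge 1$; and $\S^c_{u_M}(\alpha)=0$ for $\alpha \notin M\cap\Dn$ because such an $\alpha$ is a null player of $u_M$. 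Since $c$ is positive on all inputs, the coefficient above is strictly positive, so I can solve for $v_M$ to make $\S^c_{u_M}(\alpha) = w(|M|,|\D|)$ exactly.

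Finally I set $\wscorefun_{q,\D} \defeq \sum_{M \in \Minsups{q}(\D)} u_M$; this still satisfies $\wscorefun_{q,\D}(\emptyset)=0$ because each $u_M(\emptyset)=0$ (using $M \cap \Dn \neq \emptyset$). Shapley-like scores are linear in the wealth argument (immediate from \Cref{formul:slike}), so
\[
   \S^c_{\wscorefun_{q,\D}}(\alpha) \;=\; \sum_{M} \S^c_{u_M}(\alpha) \;=\; \sum_{\substack{M \in \Minsups{q}(\D)\\ \alpha \in M}} w(|M|,|\D|) \;=\; \phi^w_{q,\D}(\alpha),
\]
as required. The only subtle step is the solvability of $v_M$, which hinges on the positivity hypothesis on $c$ and on the fact that $m \ge 1$ whenever $\alpha$ contributes to the sum; both are handled above. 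The rest is a routine decomposition–and–linearity argument.
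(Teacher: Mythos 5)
Your proof is correct and follows essentially the same route as the paper's: decompose the wealth function into one atomic subgame per minimal support, compute the Shapley-like score of each atomic game via the binomial sum (your $j$-indexed sum is exactly the paper's sum after the substitution $j=k-|M_{\subendo}|+1$), solve for the scaling constant using positivity of $c$, and conclude by linearity. If anything, your version is slightly more careful than the paper's, since you explicitly treat the degenerate case $\Dx\models q$ and phrase the atomic game's condition as $M\cap\Dn\subseteq S$ (the paper writes $M\subseteq S$ for $S\subseteq\Dn$ but clearly intends the endogenous-part condition, as its subsequent computation shows).
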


\begin{proof}
   To obtain a "WSMS", we shall use a "wealth function" $\zetascorefun$ which itself is a weighted sum of minimal supports in some sense. Note that to simplify notations we simply write $\zetascorefun$, even though the function implicitly depends on the query etc.
	\begin{equation*}\label{formul:vgamma}
	\begin{aligned}
	   \intro*\zetascorefun : \partsof{\Dn} &\to \lR\\
	   S	 &\mapsto \sum_{\substack{M\in\Minsups{q}(\D)\\M\subseteq S}} \gamma(M,\D)%
	\end{aligned}
	\end{equation*}
	The coefficients $\gamma(M,\D)$ are to be set later.\medskip
	
	It is easy to see that any "Shapley-like" score follows the linearity axiom $\S^c_{\scorefun,\D}(\alpha)+\S^c_{\scorefun',\D}(\alpha)=\S^c_{\scorefun+\scorefun',\D}(\alpha)$. From this we can get:
	\begin{equation}\label{eq:blah}
	   \S^c_{\zetascorefun}(\alpha) = \sum_{M\in\Minsups{q}(\D)} \S^c_{\zetascorefun_{M}}(\alpha)
	\end{equation}
	where $\zetascorefun_{M}$ is defined as $\zetascorefun$ with $q$ replaced by $\onemsq_M$ (hence $\zetascorefun_{M}(S) = \gamma(M,\D)$ if $M\in S$ and 0 otherwise).\medskip
	
We now fix some $M\in\Minsups{q}(\D)$, and denote by $M_{\subendo}$
its "endogenous"
part%
. If $\alpha\notin M_{\subendo}$, it is easy to see that $\S^c_{\zetascorefun_{M}}(\alpha)=0$. Now consider $\alpha\in M_{\subendo}$, and denote $M_{\subendo}^-\defeq M_{\subendo}\setminus\{\alpha\}$. By applying \eqref{formul:slike} we get
	\begin{align*}
		\S^c_{\zetascorefun_{M}}(\alpha) &= \sum_{\substack{S\subseteq \Dn\setminus\{\alpha\}\\M_{\subendo}^-\subseteq S}} c(|S|,|\Dn|)\gamma(M,\D)\\
		&= \sum_{k=0}^{|\Dn|-1} c(k,|\Dn|)\gamma(M,\D)|\{S\subseteq \Dn\setminus\{\alpha\} : M_{\subendo}^- \subseteq S \land |S|=k\}|
	\end{align*}
	Computing the rightmost factor boils down to choosing the $k-|M_{\subendo}^-|$ elements of $S\setminus M_{\subendo}$ out of the $|\Dn|-|M_{\subendo}|$ possible ones, since there is no choice on the others. This gives
\begin{equation*}
\S^c_{\zetascorefun_{M}}(\alpha) = \gamma(M,\D)\sum_{k=|M_{\subendo}|-1}^{|\Dn|-1} \binom{|\Dn|-|M_{\subendo}|}{k-|M_{\subendo}|+1} c(k,|\Dn|)
\end{equation*}
	
	Hence by setting
	\begin{equation*}
		\gamma(M,\D) \defeq \frac{w(|M|,|\D|)}{\sum_{k=|M_{\subendo}|-1}^{|\Dn|-1} \binom{|\Dn|-|M_{\subendo}|}{k-|M_{\subendo}|+1} c(k,|\Dn|)}\medskip
	\end{equation*}
	
	we obtain $\S^c_{\zetascorefun}(\alpha) = \phi_{q,\D}^w(\alpha)$ by \Cref{eq:blah}, as desired. Note that there cannot be a division by 0 in the definition of $\gamma$ because the coefficient function $c$ is assumed to be positive on all inputs.\end{proof}

\AP In light of the previous proposition,  given a "WSMS" $\phi^w$, we shall denote by $\intro*\Wscorefun^{w}$ the "wealth function family" such that $\phi_{q,\D}^w(\alpha)=\Sh_{\wscorefun_{q,\D}^{w}}(\alpha)$ and $\reintro*\wShapley_{\C}^{w}$ the associated computational problem. We henceforth abandon the notation $\phi^w$ in favor of the Shapley-based alternative.

\subsection{Other Notable WSMSs}\label{ssec:otherwsms}

By its very nature, the "drastic Shapley value" gives a lot of importance to small "minimal supports", to the point where a fact $\alpha$ such that $\{\alpha\}\models q$ will always receive a maximal "drastic Shapley value" \cite[Lemma 6.3]{ourpods24}. This is not the case for the "MS Shapley value", which will favor three "minimal supports" of size 2 against one singleton support. It is however possible to choose a "positive and strictly decreasing WSMS" such that smaller "minimal support"s will always dominate. %

   \begin{proposition}\label{prop:sshapley}
      Define the "weight function" $w_s(k,n)\defeq 2^{-kn}$ and denote $\intro*\Sscorefun=\Wscorefun^{w_s}$ for short. The function $w_s$ is "tractable@@wf", "positive and strictly decreasing", and if $\alpha,\beta\in \Dn$ are such that the smallest "minimal support" containing $\alpha$ is smaller than the smallest "minimal support" containing $\beta$, then $\Sh_{\sscorefun_{q,\D}}(\alpha) > \Sh_{\sscorefun_{q,\D}}(\beta)$, unless $\Dx\models q$.
   \end{proposition}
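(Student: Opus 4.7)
The plan starts by verifying the three technical properties of $w_s(k, n) = 2^{-kn}$. Tractability is immediate, since $2^{-kn}$ is a rational with polynomial bit-length in the input. Positivity is trivial. Strict decrease in the first argument holds for $n \geq 1$ (the only non-degenerate case) simply because the exponential is monotone.

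The substantive content is the strict inequality $\Sh_{\sscorefun_{q,\D}}(\alpha) > \Sh_{\sscorefun_{q,\D}}(\beta)$. The case $\Dx \models q$ is trivial since both values are zero by definition, so assume $\Dx \not\models q$. Let $n = |\D|$ and let $k_\alpha, k_\beta$ denote the sizes of the smallest minimal supports containing $\alpha$ and $\beta$ respectively, with $k_\alpha < k_\beta$ by hypothesis. My plan is to combine a lower bound on $\Sh_{\sscorefun_{q,\D}}(\alpha)$ with an upper bound on $\Sh_{\sscorefun_{q,\D}}(\beta)$. The lower bound is immediate, coming from a single minimal support of size $k_\alpha$ containing $\alpha$:
\[ \Sh_{\sscorefun_{q,\D}}(\alpha) \geq 2^{-k_\alpha n}. \]
For the upper bound, every minimal support containing $\beta$ has size at least $k_\beta$ and is in particular a subset of $\D$ of that size; since there are at most $\binom{n-1}{j-1}$ subsets of $\D$ of size $j$ containing $\beta$, and $2^{-jn} \leq 2^{-k_\beta n}$ whenever $j \geq k_\beta$, one obtains
\[ \Sh_{\sscorefun_{q,\D}}(\beta) \leq 2^{-k_\beta n} \sum_{j=k_\beta}^{n} \binom{n-1}{j-1} \leq 2^{-k_\beta n} \cdot 2^{n-1} = 2^{-(k_\beta - 1) n - 1}. \]

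The concluding step checks that the lower bound on $\Sh_{\sscorefun_{q,\D}}(\alpha)$ strictly exceeds the upper bound on $\Sh_{\sscorefun_{q,\D}}(\beta)$. The inequality $2^{-k_\alpha n} > 2^{-(k_\beta - 1) n - 1}$ reduces, after taking logarithms, to $(k_\alpha - k_\beta + 1) n < 1$, which holds because $k_\alpha - k_\beta + 1 \leq 0$ by assumption. The main obstacle I anticipate is precisely this final arithmetic: a priori, the large factor $2^{n-1}$ coming from summing over binomial coefficients could plausibly swallow the exponential gap between $2^{-k_\alpha n}$ and $2^{-k_\beta n}$. The exponential-in-$n$ decay of $w_s$ is tailored to prevent this: placing $n$ in the exponent provides an extra factor of $2^{-n}$ per unit increase in support size, which more than compensates for the $2^{n-1}$ subsets being summed. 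A weight like $2^{-k}$, independent of $n$, would not be enough.
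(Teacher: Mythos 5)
Your proof is correct and follows essentially the same route as the paper's: a lower bound $\Sh_{\sscorefun_{q,\D}}(\alpha)\ge 2^{-k_\alpha n}$ from a single smallest support, an upper bound on $\Sh_{\sscorefun_{q,\D}}(\beta)$ obtained by bounding the number of minimal supports containing $\beta$ by $2^{n-1}$ (you just arrive at that bound via a slightly more explicit binomial-coefficient count), and the observation that the factor $n$ in the exponent of $w_s$ absorbs this $2^{n-1}$. The paper's proof is the same argument phrased with $m_\beta \le 2^{n-1}$ directly.
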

   \begin{proof}
      The idea is that the binary representation of $\Sh_{\sscorefun_{q,\D}}(\alpha)$ encodes the vector of numbers of "minimal supports" for $q$ that contain $\alpha$, grouped by increasing size.

   Given its closed form formula, $w_s$ can be computed in polynomial time by elementary means. It is also obviously "positive and strictly decreasing".
   To show the remaining property, first recall that $\Wscorefun^{w}$ is tailored so as to make the following hold:
   \[\Sh_{\sscorefun_{q,\D}}(\alpha) = \sum_{\substack{M\in\Minsups{q}(\D)\\\alpha\in M}} 2^{-|M||\D|} \quad\text{ if $\Dx\not\models q$, 0 otherwise.}\]
   
   Now assume $\Dx\not\models q$ and consider $\alpha, \beta\in \Dn$ such that the smallest "minimal support" that contains $\alpha$ (resp. $\beta$) is of size $k_\alpha$ (resp. $k_\beta$), with $k_\alpha < k_\beta$. On the one hand $\Sh_{\sscorefun_{q,\D}}(\alpha) \ge 2^{-k_\alpha n}$ because any "minimal support" of size $k_\alpha$ will contribute that much to the above sum. On the other hand, if we denote by $m_\beta$ the number of "minimal supports" that contain $\beta$, the above formula implies $\Sh_{\sscorefun_{q,\D}}(\alpha) \le m_\beta\cdot 2^{-k_\beta n} \le m_\beta\cdot 2^{-(k_\alpha-1)n}$. Now $m_\beta \le 2^{n-1} < 2^n$ because $2^{n-1}$ is the total number of subsets of $\D$ that contain $\beta$, hence $\Sh_{\sscorefun_{q,\D}}(\beta) < 2^{-k_\alpha n} < \Sh_{\sscorefun_{q,\D}}(\alpha)$. 
   \end{proof}

Alternatively, one can also decide that the number of minimal supports should be the main factor for the responsibility measure.

   \begin{proposition}%
      Define the "weight function" $w_{\#}(k,n)\defeq 1+w_s(k,n)$ and denote $\intro*\SHARPscorefun=\Wscorefun^{w_{\#}}$ for short. $w_{\#}$ is "tractable@@wf", "positive and strictly decreasing", and if $\alpha,\beta\in \Dn$ are such that $\alpha$ appears in more "minimal supports" than $\beta$, then $\Sh_{\sharpscorefun_{q,\D}}(\alpha) > \Sh_{\sharpscorefun_{q,\D}}(\beta)$, unless $\Dx\models q$.
   \end{proposition}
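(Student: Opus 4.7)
The plan is to mirror the structure of the proof for $w_s$ (Proposition just before), exploiting that $w_\#$ differs from $w_s$ by just an additive constant, so that the score splits into an ``integer part'' counting the minimal supports containing $\alpha$ plus a small fractional correction controlled by $w_s$.

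First I would write out the closed form of $\Sh_{\sharpscorefun_{q,\D}}(\alpha)$ by appealing to the definition of WSMS (Equation~\eqref{eq:wsms}): assuming $\Dx \not\models q$,
\[
\Sh_{\sharpscorefun_{q,\D}}(\alpha) \;=\; \sum_{\substack{M \in \Minsups{q}(\D)\\\alpha \in M}} \bigl(1 + 2^{-|M|\,|\D|}\bigr) \;=\; m_\alpha \;+\; \varepsilon_\alpha,
\]
where $m_\alpha$ denotes the number of minimal supports of $q$ in $\D$ that contain $\alpha$ and $\varepsilon_\alpha \defeq \sum_{M \ni \alpha} 2^{-|M|\,|\D|}$.

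Next I would dispatch the easy items. Tractability of $w_\#$ is immediate since $w_s$ is already tractable. Positivity and strict decrease are also immediate: $w_\#(k,n) = 1 + w_s(k,n) > 0$, and if $k < l$ then $w_s(k,n) > w_s(l,n)$ implies $w_\#(k,n) > w_\#(l,n)$.

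The main step, then, is the ordering claim. I would bound the correction term $\varepsilon_\alpha$: since $|M| \ge 1$ each summand is at most $2^{-|\D|}$, and there are at most $2^{|\D|-1}$ subsets of $\D$ that contain $\alpha$, so $\varepsilon_\alpha \le 2^{|\D|-1} \cdot 2^{-|\D|} = \tfrac{1}{2}$. Now assume $m_\alpha > m_\beta$, i.e.\ $m_\alpha \ge m_\beta + 1$. Then
\[
\Sh_{\sharpscorefun_{q,\D}}(\alpha) \;\ge\; m_\alpha \;\ge\; m_\beta + 1 \;>\; m_\beta + \tfrac{1}{2} \;\ge\; m_\beta + \varepsilon_\beta \;=\; \Sh_{\sharpscorefun_{q,\D}}(\beta),
\]
giving the desired strict inequality. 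The case $\Dx \models q$ is the trivial one where both scores are $0$.

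The only potential obstacle I anticipate is making sure the fractional part $\varepsilon_\alpha$ is strictly less than $1$; the crude bound $2^{|\D|-1} \cdot 2^{-|\D|}$ is what guarantees that the integer counts $m_\alpha, m_\beta$ dominate the comparison. Once that bound is in place the argument is essentially by reading off the ``most significant digit'' of the score, in complete analogy with the previous proposition (where the finer $w_s$ encoded the full vector of sizes).
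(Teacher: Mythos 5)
Your proposal is correct and follows essentially the same route as the paper: both write out the closed form, observe the score splits into the integer count of minimal supports containing the fact plus a fractional correction strictly below $1$ (the paper bounds it by $2^{|\D|}\cdot 2^{-|\D|}$, you by the slightly tighter $2^{|\D|-1}\cdot 2^{-|\D|}=\tfrac 1 2$), and conclude that the integer parts determine the ordering. No gaps.
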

   \begin{proof}
      Again, given its closed-form formula, $w_{\#}$ can be computed in polynomial time by elementary means. It is also obviously "positive and strictly decreasing" and we have the following formula by definition of $\Wscorefun^{w}$:
   \[\Sh_{\sharpscorefun_{q,\D}}(\alpha) = \sum_{\substack{M\in\Minsups{q}(\D)\\\alpha\in M}} \left(1+ 2^{-|M||\D|} \right) \quad\text{ if $\Dx\not\models q$, 0 otherwise.}\]

   Now if $\Dx\not\models q$, all "minimal supports" are of size at least one and there must necessarily be strictly fewer than $2^{|\D|}$ of them. This means that all the $2^{-|M||\D|}$ terms add up to a value strictly below $1$, hence $\left\lfloor \Sh_{\sharpscorefun_{q,\D}}(\alpha) \right\rfloor$ equals the number of "minimal supports" that contain $\alpha$. From this we can conclude immediately.
   \end{proof}

\section{Data Complexity}
\label{sec:datacomplexity}

We shall now investigate the complexity of computing "weighted sums of minimal supports", \ie computing $\wShapley_{\+C}^{w}$ for "weight functions" $w$. We start with the \AP""data complexity"", \ie the complexity of $\wShapley_{q}^{w}$ for individual queries $q$, independently of their representation.

The simplicity of \Cref{eq:wsms} defining "WSMS"s not only makes these measures easier to interpret, 
but it also yields a simple and tractable algorithm to compute them.
This is because
the problem boils down to counting the "minimal supports" of every size. This straightforward reduction is formalized in the following lemma, which makes use of the aggregate query 
\AP$\intro*\countFMS_q$ that, given a number $k$ and database $\D$, computes the number of "minimal supports" of $q$ in $\D$ %
having exactly $k$ facts (\textsf{\underline{f}ms} stands for \underline{f}ixed-size \textsf{ms}, in line with `fixed-size model counting' \cite[\S 3]{KaraOlteanuSuciuShapleyBack}).
For a class $\class$ of "Boolean" queries, we use \AP$\intro*\evalCountFMS_\class$ for the problem of, given a query $q \in \class$, a number $k$ and a database $\D$, evaluating $\countFMS_q$ on $k,\D$.
If $\+C$ contains a single query $q$, we simply write $\evalCountFMS_{q}$.

   \begin{lemma}\label{lem:wsmsvsmsc1}
      For a "tractable weight function" $w$ and "Boolean" "monotone" "query" $q$, $\wShapley_{q}^{w} \polyrx \evalCountFMS_{q}$.
   \end{lemma}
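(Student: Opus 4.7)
The plan is to observe that the defining formula for the \kl{WSMS} decomposes naturally according to the size of the "minimal support". Writing $m_k(\alpha,\D) \defeq |\{M\in\Minsups{q}(\D) : \alpha\in M,\ |M|=k\}|$, we have, whenever $\Dx \not\models q$,
\[
\Sh_{\wscorefun^{w}_{q,\D}}(\alpha) \;=\; \sum_{k=1}^{|\D|} w(k,|\D|)\cdot m_k(\alpha,\D).
\]
Since $w$ is "tractable@@wf", it suffices to compute each $m_k(\alpha,\D)$ with polynomially many oracle calls to $\evalCountFMS_{q}$.

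The key combinatorial identity is that every "minimal support" $M \in \Minsups{q}(\D)$ with $\alpha\notin M$ is a subset of $\D\setminus\{\alpha\}$, and conversely any $M\in\Minsups{q}(\D\setminus\{\alpha\})$ is still a "minimal support" for $q$ in $\D$ (minimality being preserved since the underlying set of candidates only shrinks and $q$ is "monotone"). Hence for every $k$,
\[
m_k(\alpha,\D) \;=\; \countFMS_q(k,\D) \;-\; \countFMS_q(k,\D\setminus\{\alpha\}),
\]
which is obtained with two calls to the $\evalCountFMS_{q}$ oracle.

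Finally, to handle the special case $\Dx\models q$ (in which the measure is defined to be $0$), we use the same oracle: $\Dx\models q$ holds iff some "minimal support" for $q$ is contained in $\Dx$, iff $\countFMS_q(k,\Dx)\ge 1$ for some $k\in\{1,\dots,|\Dx|\}$. Testing this costs $|\Dx|$ further oracle calls. Putting the three steps together yields an algorithm that makes $O(|\D|)$ calls to $\evalCountFMS_{q}$, performs $O(|\D|)$ evaluations of the polynomial-time-computable function $w$, and outputs the correct value of $\Sh_{\wscorefun^{w}_{q,\D}}(\alpha)$. This is a "polynomial-time Turing reduction", as required. No subtle step is expected here; the only mild care needed is in the bookkeeping for the $\Dx\models q$ test, since that information is not directly readable from a single oracle answer.
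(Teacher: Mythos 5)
Your proof is correct and follows essentially the same route as the paper's: decompose the score by support size and compute the number of size-$k$ minimal supports containing $\alpha$ as $\countFMS_{q}(k,\D)-\countFMS_{q}(k,\D\setminus\set{\alpha})$, then combine with the polynomial-time-computable weights $w(k,|\D|)$. The only (harmless) difference is that you additionally spell out how to decide $\Dx\models q$ through further oracle calls, a detail the paper leaves implicit.
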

   \begin{proof}
      Let $w$ be a "tractable weight function" and $\D,\alpha$ define an instance of $\wShapley_{q}^{w}$. Then $\Sh_{\wscorefun_{q,\D}}(\alpha)$  can be expressed as follows using \Cref{eq:wsms}:
      \begin{align*}
        \Sh_{\wscorefun_{q,\D}}(\alpha) &= \sum_{k=1}^{|\D|}\sum_{\substack{S\in\Minsups q(\D)\\\alpha\in S,\,|S|=k}} w(k,|\D|) \quad\text{ if $\Dx\not\models q$, 0 otherwise} \\
        &=\sum_{k=1}^{|\D|} \big(\countFMS_{q}(k,\D) - \countFMS_{q}(k,\D \setminus \set\alpha)\big)  \cdot w(k,|\D|)\quad\text{ if $\Dx\not\models q$, 0 otherwise}
      \end{align*}
      Hence, $\wShapley_{q}^{w} \polyrx \evalCountFMS_{q}$.
   \end{proof}

With \Cref{lem:wsmsvsmsc1}, we can effectively compute $\wShapley_{q}^{w}$ for every query whose "minimal supports" can be enumerated in polynomial time, which includes "UCQ"s or more generally "bounded" "queries".
\begin{theorem}\label{th:data-ucq}
   $\wShapley_{q}^{w}\in\FP$ for every "tractable weight function" $w$ and every  "Boolean" "monotone" "query" $q$ that is "bounded" and 
   can be evaluated in polynomial time.
\end{theorem}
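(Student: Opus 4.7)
The plan is to apply \Cref{lem:wsmsvsmsc1}, which gives the reduction $\wShapley_{q}^{w} \polyrx \evalCountFMS_{q}$. So it suffices to show that $\evalCountFMS_{q}$ is in $\FP$ under the stated assumptions, namely that $q$ is "bounded" and polynomial-time evaluable. Since the formula from \Cref{lem:wsmsvsmsc1} expresses $\Sh_{\wscorefun_{q,\D}}(\alpha)$ as a polynomial-size sum of terms involving values of $\countFMS_q$ and of the tractable $w$, a polynomial-time algorithm for $\evalCountFMS_q$ immediately yields one for $\wShapley_q^w$.

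To compute $\countFMS_q(k, \D)$ in polynomial time, I would exploit the "bounded"ness of $q$: let $c$ be the constant bounding the size of every "minimal support". If $k > c$, the answer is trivially $0$. Otherwise, I would enumerate all subsets $S \subseteq \D$ with $|S| = k \leq c$; there are at most $\binom{|\D|}{k} = O(|\D|^c)$ such subsets, which is polynomial since $c$ is a constant independent of $\D$. For each such $S$, I test whether $S$ is a "support" by evaluating $q$ on $S$ (polynomial by assumption), and, if so, I check minimality by evaluating $q$ on each of the $2^k \leq 2^c = O(1)$ subsets of $S$. This yields the exact count of minimal supports of size $k$ in polynomial time.

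Combining this with \Cref{lem:wsmsvsmsc1}, the value $\Sh_{\wscorefun_{q,\D}}(\alpha)$ is computed by first checking whether $\Dx \models q$ (polynomial, to handle the special case returning $0$), then computing $\countFMS_q(k,\D)$ and $\countFMS_q(k,\D \setminus \{\alpha\})$ for each $k \in \{1, \dots, |\D|\}$ (only values $k \leq c$ contribute non-trivially), evaluating $w(k,|\D|)$ using the "tractable weight function" assumption, and summing. All operations run in polynomial time, so $\wShapley_q^w \in \FP$.

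There is no real obstacle here; the result is essentially an immediate corollary of \Cref{lem:wsmsvsmsc1} together with the observation that "bounded"ness reduces minimal-support enumeration to brute-force inspection of polynomially many small subsets. The only mild subtlety is being careful that the enumeration of size-$k$ subsets is polynomial only because the bound $c$ is independent of $\D$; this is exactly what "bounded" guarantees.
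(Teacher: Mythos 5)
Your proof is correct and follows essentially the same route as the paper's: reduce to $\evalCountFMS_q$ via \Cref{lem:wsmsvsmsc1}, then use "bounded"ness to compute $\countFMS_q(k,\D)$ by brute-force enumeration of the polynomially many subsets of size at most the constant bound, testing supporthood and minimality with the polynomial-time evaluator for $q$. Your write-up is simply a more detailed version of the paper's one-line argument.
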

\begin{proof}
By definition, the "minimal supports" for $q$ are bounded by some constant $K$, so one can simply enumerate all subsets of $\+D$ of size at most $K$, and check which ones are "minimal supports" in order to compute $\countFMS_{q}(k,\D)$ for every $k$, and finally apply \Cref{lem:wsmsvsmsc1}.
\end{proof}

As a first step towards an implementation, we show in the next result how we can compute $\countFMS_{q}(k,\D)$  by evaluating simple SQL queries in parallel, for an input query $q \in \UCQneq$. 

\newcommand{\aUCQ}{\widetilde{q}}
\begin{theorem}
\label{thm:rewritting-SQL}
For every $\UCQneq$ $\aUCQ$ and $k \in \Nat$, there exists a set $\Qneq$%
of $\CQneq$ queries such that,  for every "database" $\D$,
$$\countFMS_{\aUCQ}(k,\D) = \sum_{q \in \Qneq} \countAns_{q}(\D) \cdot \gamma_q,$$ where \AP$\intro*\countAns_{q}(\D)$ is the number of "homomorphisms" from $q$ to $\D$, and $\gamma_q$ is a rational number computable from $q$. 
Further, every $q \in \Qneq$ is of at most quadratic size.
\end{theorem}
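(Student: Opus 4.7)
The plan is to express $\countFMS_{\aUCQ}(k,\D)$ as a rational-weighted sum of $\countAns_q(\D)$ for $\CQneq$ queries $q$ of size $O(|\aUCQ|^2)$, by combining three interleaved mechanisms: inclusion-exclusion over the disjuncts of $\aUCQ$, inclusion-exclusion to enforce minimality by expanding the ``no removable fact'' condition, and M\"obius inversion over the partition lattice of atoms to pin down the image size exactly to $k$.

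I would start from the identity
\begin{equation*}
[M\text{ is a min supp of size }k] = [|M|=k]\cdot [M\models\aUCQ]\cdot \prod_{\alpha\in M}\bigl(1 - [M\setminus\{\alpha\}\models\aUCQ]\bigr),
\end{equation*}
expand the product over $T\subseteq M$, and rewrite each $[\cdot\models\aUCQ]$ via inclusion-exclusion over subsets of disjuncts of $\aUCQ$. Summing over $M$, this reduces the problem, for each combination $(T, I_0, (I_\alpha)_{\alpha\in T})$ of disjunct subsets, to counting $k$-subsets $M\subseteq\D$ equipped with a homomorphism $h_0: q_{I_0}\to\D$ having image exactly $M$ and, for each $\alpha\in T$, a homomorphism $h_\alpha: q_{I_\alpha}\to\D$ with image $\subseteq M\setminus\{\alpha\}$, where $q_I$ denotes the $\CQneq$ obtained as the conjunction (with disjoint variable renaming) of the disjuncts indexed by $I$. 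For each combination I would assemble the combined $\CQneq$ query $q_C$ as the disjoint union of $q_{I_0}$ with one copy of each $q_{I_\alpha}$, and classify its homomorphisms into $\D$ by the equivalence relation on $\atoms(q_C)$ they induce (two atoms equivalent iff they map to the same fact). Only finitely many such partitions $\sim$ realize the target configuration (``the image of $q_{I_0}$ has $k$ distinct facts'' and ``each $q_{I_\alpha}$'s image avoids the $\alpha$-designated fact''); for each such valid $\sim$, M\"obius inversion on the partition lattice expresses the count of homomorphisms with induced partition exactly $\sim$ as a signed sum of $\countAns_{q_C^{\sim'}}(\D)$ over coarsenings $\sim'\supseteq \sim$, with $q_C^{\sim'}$ the $\CQneq$ quotient obtained by unifying the variables of atoms in the same $\sim'$-class.

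The main obstacle is multiplicity: a single minimal support $M$ is typically witnessed by several hom-tuples $(h_0,(h_\alpha))$, so the combined-query hom count yields $\sum_M [M\text{ min supp}]\cdot \mu(M,\sim)$ for a multiplicity that a priori depends on $M$. The resolution is that, once refined by the induced atom-partition $\sim$, every $M$ realizing the atomic structure prescribed by $q_C^\sim$ contributes exactly $|\Auto(q_C^\sim)|$ hom-tuples — a constant depending only on $q_C^\sim$, not on $\D$ — and this constant is absorbed into $\gamma_q$ together with the signs arising from the inclusion-exclusion and M\"obius steps. The quadratic size bound is then straightforward: $q_C$ consists of one copy of $q_{I_0}$ (with total atoms $\leq|\aUCQ|$, since the disjuncts share $|\aUCQ|$ atoms overall) plus at most $|T|\le k\le|\aUCQ|$ copies of $q_{I_\alpha}$'s of size $O(|\aUCQ|)$ each, so $|q_C|=O(|\aUCQ|^2)$, and the quotients $q_C^{\sim'}$ have no larger size. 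Each choice of combination and partition produces a distinct element of $\Qneq$, yielding a $\Qneq$ whose cardinality is exponential in $|\aUCQ|$ — which is permitted by the theorem, since only the size of individual queries is bounded.
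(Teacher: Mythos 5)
There is a genuine gap, and it sits exactly where you place your confidence: the claim that, after refining by the induced \emph{atom}-partition $\sim$, every support realizing $q_C^{\sim}$ is witnessed by exactly $|\Auto{q_C^{\sim}}|$ homomorphism-tuples. This is false, because the partition of atoms does not record which \emph{variables} a homomorphism identifies. Take $q = \exists x y z\, R(x,y) \land R(y,z)$ and compare $S = \set{R(a,b), R(b,a)}$ with $S' = \set{R(a,b),R(b,c)}$ (with $a,b,c$ distinct). Both are minimal supports of $q$ of size $2$; in both cases every witnessing homomorphism maps the two atoms to two distinct facts, so the induced atom-partition is the discrete one and the relevant quotient is $q$ itself, with $|\Auto{q}|=1$. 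Yet $S$ is witnessed by two homomorphisms ($(x,y,z)\mapsto(a,b,a)$ and $(b,a,b)$) while $S'$ is witnessed by one. So the multiplicity is \emph{not} a constant determined by $q_C^{\sim}$; it depends on $\D$, and no choice of $\gamma_q$ absorbs it. To make the multiplicity query-determined you must classify homomorphisms by the equivalence relation they induce on \emph{terms} (not atoms), quotient by that relation, and then force injectivity of the surviving homomorphisms by adding the inequality atoms $x \neq y$ for all variable pairs --- this is precisely what the paper's construction of $\Qneq$ does, and it is only after that step that $\countAns_{q}(M) = |\Auto{q}|$ holds for every minimal support $M$.

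A second, independent problem is the conversion from existence to counting in your inclusion--exclusion expansion of minimality. The identity $[M\text{ min.\ supp.}]=[M\models\aUCQ]\prod_{\alpha\in M}(1-[M\setminus\set{\alpha}\models\aUCQ])$ is fine, but after expanding you must evaluate signed terms of the form $\sum_{|M|=k}[M\models q_{I_0}]\prod_{\alpha\in T}[M\setminus\set{\alpha}\models q_{I_\alpha}]$, each of which counts pairs $(M,T)$ for which suitable homomorphisms merely \emph{exist}. Your reduction silently replaces ``exists'' by ``count'' and additionally asserts that $h_0$ may be taken to have image exactly $M$; neither is justified for the individual signed terms (only the fully telescoped sum is supported exclusively by homomorphisms onto minimal supports, and even there the per-$M$ witness multiplicity varies, as above). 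The paper sidesteps this entirely by never expanding minimality over the data: it enforces minimality at the query level, keeping in $\Qneq$ only those $k$-atom quotients into which no quotient with fewer relational atoms maps homomorphically, so that every induced support is automatically minimal. I would encourage you to rebuild your argument around term-level quotients with added inequalities and a query-level minimality filter; the atom-partition M\"obius inversion cannot be repaired as stated.
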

\begin{proof}
    Given a query $\aUCQ \in \UCQneq$ and a number $k$, we show how to build the set $\Qneq$ with the desired properties of the statement.
    Let us say that $q'$ is a \AP""reduct"" of $q \in \CQneq$ if $q'$ is the result of collapsing variables of $q$ and removing repeated atoms; in particular, $q'$ is a homomorphic image of~$q$ (\ie isomorphic to a query resulting from applying a "homomorphism" to every atom of $q$). Similarly, $q'$ is a "reduct" of $\aUCQ$ if it is a "reduct" of a $\CQneq$ therein.
  Now consider the set $\R_k$ of all "reducts" $q$ of $\aUCQ$ such that 
    \begin{enumerate}[(a)]
        \item $q$ has exactly $k$ \AP""relational atoms"" (\ie "atoms" which are not "inequality atoms");
        \item there is no "reduct" $q'$ of $\aUCQ$ having strictly less than $k$ "relational atoms" such that $q' \homto q$. \label{eq:minimality-relatoms}
    \end{enumerate}
Let $\Q$ be the result of removing from $\R_k$ all redundant queries. Concretely, we initialize $\Q \defeq \R_k$ and we apply the following \AP""redundancy-removal-rule"" until no more queries can be deleted:
    find a query $q \in \Q$ such that $q' \homto q$ for some distinct $q' \in \Q$ and update $\Q \defeq \Q \setminus \set{q}$.
    Finally, let \AP$\intro*\Qneq$ be the result of adding, for each $q \in \Q$ and distinct $x,y \in \vars(q)$, the %
     atom $x \neq y$ to $q$. %

        \begin{claim}\label{cl:invertible-homomorphisms}
            If $q \in \Qneq$, $M \in \Minsups q (\D)$ and $h: q \homto M$, then $h$ is injective. Further, for any query $p \in \CQneq$ such that $p \homto M$, we have $p \homto q$.
        \end{claim}
        \begin{nestedproof}
            The fact that $h$ is injective follows readily from the "inequality atoms" of $q$ by definition. Hence, it is invertible, and given $g: p \homto M$ we have $(h^{-1} \circ g) : p \homto q$.
        \end{nestedproof}
 
We show that the queries in $\Qneq$ partition the minimal supports in $\Minsups \aUCQ(\D)$ of size $k$:

    \begin{claim}\label{cl:minsup-partition}
    $\set{\Minsups q(\D)}_{q \in \Qneq}$ is a partition of $\set{M \in \Minsups \aUCQ(\D): |M| = k}$.
    \end{claim}
    \begin{nestedproof}
        Let us first show 
        \[\set{M \in \Minsups \aUCQ(\D): |M| = k} = \bigcup_{q \in \Qneq} \Minsups q(\D).\] 

        \proofcase{$\subseteq$}
        For the $\subseteq$-containment, first note that if $M$ is a "minimal support" of $\aUCQ$ of size $k$, there must be some $\hat q$ in $\aUCQ$ and some "homomorphism" $h:\hat q \homto M$. Such "homomorphism" $h$ induces a "reduct" $q$ of $\hat q$, in such a way that $q^{\neq} \homto M$, where $q^{\neq}$ is the result of adding $x \neq y$ to $q$ for all $x,y \in \vars(q)$. 
        
        We now claim that either $q^{\neq}$ or a larger query (i.e., a query $p \in \Qneq$ such that $p \homto q^{\neq}$) must be in $\Qneq$. 
        If this is not the case, then it must be that $q^{\neq}$ is not in $\R_k$ because it was discarded by \cref{eq:minimality-relatoms}, which would mean that there is another "reduct" $q'$ of $\aUCQ$ with less than $k$ "relational atoms" such that $q' \homto q^{\neq} \homto M$, implying that there is a "support" strictly included in $M$ and contradicting its "minimality@minimal support".

        \proofcase{$\supseteq$}
        For the $\supseteq$-containment, note that every "minimal support" of $q \in \Qneq$ has size exactly $k$. Take any such "minimal support" $M$ of $q$. By construction, there is some $\hat q$ in $\aUCQ$ such that $\hat q \homto q$ and thus $M$ is also a "support" of $\aUCQ$. By means of contradiction, suppose $M$ is not a "minimal support" of $\aUCQ$, that is, there is $M' \subsetneq M$ and $q'$ in $\aUCQ$ such that $h:q' \homto M'$. Consider the "reduct" $q''$ of $\aUCQ$ induced by $h$, and observe that: 
            (i) $q''$ has strictly less than $k$ "relational atoms" and 
            (ii) $q'' \homto q' \homto M$.
        By \Cref{cl:invertible-homomorphisms} we then have $q'' \homto q$.
        This means that $q$ must not exist in $\Qneq$ since it must have been discarded by \cref{eq:minimality-relatoms}. In view of this contradiction, $M$ is a "minimal support" of $\aUCQ$.

        \proofcase{Disjointness} Let us finally show that parts of the partition are disjoint, that is, $\Minsups q(\D) \cap \Minsups {q'}(\D) = \emptyset$ for all pair of distinct $q,q' \in \Qneq$. By means of contradiction, assume the contrary and let $M \in \Minsups q(\D) \cap \Minsups {q'}(\D)$. 
        Let $h: q \homto M$ and $h':q' \homto M$.
        By \Cref{cl:invertible-homomorphisms} we then have $q \homto q'$.
        By the "redundancy-removal-rule" it cannot be that both distinct queries are in $\Qneq$. This contradiction then shows that there cannot be such $M \in \Minsups q(\D) \cap \Minsups {q'}(\D)$.
    \end{nestedproof}
    
    We next show how to compute $\countAns_{q}(\D)$ for a given $q \in \Qneq$, where we recall that $\reintro*\Auto{q}$ is the set of all "automorphisms@@cq" of $q$.
    \begin{claim}
    $\countAns_{q}(\D) = |\Auto{q}| \cdot |\Minsups q(\D)|$ for all $q \in \Qneq$.
    \end{claim}
    \begin{nestedproof}
       In view of the previous \Cref{cl:minsup-partition}, all "minimal supports" of $q \in \Qneq$ have size $k$. Further, each "homomorphism" $q \homto \D$ induces a "minimal support", and the set of "homomorphisms" $\set{ h \mid h: q \homto \D}$ can be partitioned into $\set{h \mid h:q \homto M}_{M \in \Minsups q(\D)}$.
        Hence, it suffices to show, for an arbitrary $q \in \Qneq$ and $M \in \Minsups q(\D)$, that $\countAns_{q}(M) = |\Auto{q}|$.

        \proofcase{$\leq$}
        Consider the set $H$ of all (pairwise distinct) "homomorphisms" $h: q \homto M$. 
        Let us fix $g \in H$ and observe that, by \Cref{cl:invertible-homomorphisms}, each $(h^{-1}\circ g) : q \homto q$ is an "automorphism@@cq", and further that any two distinct $h_1,h_2 \in H$ give rise to distinct "automorphisms@@cq" $h_1^{-1}\circ g$ and $h_2^{-1}\circ g$ of $q$. Hence, $\countAns_{q}(\D) \leq |\Auto{q}|$.

        \proofcase{$\geq$} Since $M$ is a "support", let $g : q \homto M$.
        Suppose there are $n$ distinct "automorphisms@@cq" $h_1, \dotsc, h_n : q \homto q$. Observe that each $(g \circ h_i): q \homto M$ is a distinct "homomorphism" to $M$, and thus that $\countAns_{q}(\D) \geq n= |\Auto{q}|$.
    \end{nestedproof}
\noindent  It is easy to see that all queries $q \in \Qneq$ are of quadratic size. Together with the preceding claims, this 
  shows that $\Qneq$ and numbers $\gamma_q:=\nicefrac{1}{|\Auto{q}|}$ have the required properties. 
\end{proof}
\begin{corollary}\label{cor:redux-to-SQL}
Computing $\wShapley_{\aUCQ}^{w}$ for a $\UCQneq$ query $\aUCQ$ can be achieved by evaluating simple and short (quadratic) `\texttt{select count(*)}' SQL queries in parallel.
\end{corollary}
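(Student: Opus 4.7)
The plan is to assemble this directly from \Cref{lem:wsmsvsmsc1} and \Cref{thm:rewritting-SQL}, together with the standard translation of $\CQneq$ queries into relational algebra / SQL. First, \Cref{lem:wsmsvsmsc1} tells us that $\Sh_{\wscorefun_{\aUCQ,\D}}(\alpha)$ is obtainable in polynomial time from the $2|\D|$ counts $\countFMS_{\aUCQ}(k,\D)$ and $\countFMS_{\aUCQ}(k,\D\setminus\set{\alpha})$ for $k=1,\dotsc,|\D|$. Then \Cref{thm:rewritting-SQL} expresses each such count as a finite rational combination $\sum_{q \in \Qneq} \gamma_q \cdot \countAns_q(\cdot)$ where every $q \in \Qneq$ is a $\CQneq$ query of size at most quadratic in $|\aUCQ|$ (and, crucially, independent of $\D$).

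Next, I would explain the SQL translation of $\countAns_q(\D)$ for $q = \exists \vect{x}\, \varphi \in \CQneq$: introduce one table alias per "atom" of $\varphi$, express shared variables and constants as equality joins in the \texttt{WHERE} clause, and translate each inequality atom $x \neq y$ as the corresponding $\neq$ predicate. The resulting statement is of the shape \texttt{SELECT COUNT(*) FROM $R_1$ AS $a_1$, \dots, $R_m$ AS $a_m$ WHERE $\ldots$} and has size linear in $|q|$, hence quadratic in $|\aUCQ|$. To evaluate $\D \setminus \set{\alpha}$ instead of $\D$, it suffices to issue the same SQL query against the database in which the single fact $\alpha$ is deleted (or, equivalently, to add a constant-size predicate to the \texttt{WHERE} clause excluding the tuple encoding $\alpha$).

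Finally, I would observe that all the SQL queries arising above are mutually independent: each one depends only on $\aUCQ$, a value of $k$, and whether the target database is $\D$ or $\D \setminus \set{\alpha}$. They can therefore be dispatched in parallel, and once the counts are returned, assembling $\Sh_{\wscorefun_{\aUCQ,\D}}(\alpha)$ is a simple $\Ptime$ post-processing that sums the counts weighted by the rationals $\gamma_q$ and $w(k,|\D|)$ (themselves computable from $\aUCQ$, $k$ and $|\D|$).

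There is no real obstacle here beyond bookkeeping: the quadratic size bound on the SQL statements follows from the quadratic size bound on $\Qneq$ given by \Cref{thm:rewritting-SQL} combined with the linear-size SQL encoding of $\CQneq$, and the parallelism is immediate from the additive decomposition of both the $\wShapley^{w}$ formula and the $\countFMS$ formula. The only point meriting a line of justification is that, although $\Qneq$ formally depends on $k$, the total number of SQL queries stays polynomial since $k$ ranges over $\set{1,\dotsc,|\D|}$ and each $\Qneq$ has size depending only on $\aUCQ$.
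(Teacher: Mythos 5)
Your proposal is correct and follows exactly the route the paper intends: the corollary is stated without proof as an immediate consequence of \Cref{thm:rewritting-SQL} combined with \Cref{lem:wsmsvsmsc1}, and your write-up simply makes the obvious assembly explicit (the $\countFMS$ counts on $\D$ and $\D\setminus\set{\alpha}$, the standard \texttt{SELECT COUNT(*)} encoding of each $\CQneq$ in $\Qneq$, and the independence of the resulting queries). The only detail you omit is the test $\Dx\models\aUCQ$ required by the special case in \Cref{eq:wsms}, which is itself a single SQL evaluation and does not affect the argument.
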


While efficient enumeration of "minimal supports" guarantees data tractability of "WSMS" computation, 
in some cases counting may be tractable even %
when the exponential number of "minimal supports" makes their enumeration intractable, as exemplified by the following proposition.

   \begin{proposition}%
	   Let $w$ be a "tractable weight function" and $q$ a "regular path query".
	   Then, $\wShapley_{q}^{w}$ restricted to acyclic graph databases (\ie %
	   only binary relations and no directed cycles) is in $\FP$.
    \end{proposition}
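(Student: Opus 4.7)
The plan is to invoke \Cref{lem:wsmsvsmsc1}, which reduces $\wShapley_{q}^{w}$ to $\evalCountFMS_{q}$: it therefore suffices to compute $\countFMS_{q}(k, \D)$ in polynomial time for every fixed "RPQ" $q = \mathcal{L}(c,d)$ and every acyclic graph database $\D$. The first key step is to characterize the "minimal supports". Since $\D$ has no directed cycle, every directed walk in $\D$ is a vertex- and edge-simple path, so the "supports" of $q$ in $\D$ are exactly the edge sets containing at least one simple path from $c$ to $d$ whose label lies in $\mathcal{L}$. I would then argue that the "minimal supports" are precisely the edge sets of such accepted simple paths, with support size equal to path length. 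For one direction, given a "minimal support" $M$, pick any accepted simple path $P \subseteq M$; if $M$ strictly contained more edges than $P$, any extra edge could be removed while keeping $P$, contradicting minimality, so $M$ equals the edge set of $P$. Conversely, the edge set of an accepted simple path is minimal because removing any edge disconnects the unique path it contains, and acyclicity guarantees that no alternative accepted path lives inside that edge set.

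The second step is a standard product-automaton construction for counting, for each $k$, the accepted simple paths of length exactly $k$. Fix a DFA $A = (Q, q_0, \delta, F)$ for $\mathcal{L}$; by the "data complexity" assumption $|A|$ is a constant. Form the product graph on vertices $V(\D) \times Q$ with an edge $(v,p) \to (v',p')$ whenever $\D$ contains a labeled edge $v \xrightarrow{r} v'$ such that $\delta(p,r) = p'$. The product is acyclic: any cycle in it would descend to a closed directed walk of length $\ge 1$ in $\D$, which is impossible in a DAG. Moreover, paths from $(c, q_0)$ to $(d, p_f)$ for $p_f \in F$ are in bijection with accepted simple paths from $c$ to $d$ in $\D$, thanks to the determinism of $A$ (each word has a unique run). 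A textbook dynamic programming over a topological order of the product --- maintaining, for each vertex and each length $\ell \le |\D|$, the number of paths of that length from $(c, q_0)$ --- yields $\countFMS_{q}(k, \D)$ in polynomial time. Plugging this into \Cref{lem:wsmsvsmsc1} concludes.

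The main obstacle is the bijection between "minimal supports" and accepted simple paths, which relies essentially on acyclicity: the same assumption simultaneously ensures that walks are simple paths, that the edge set of a simple path from $c$ to $d$ determines the path uniquely, and that no unexpected "minimal support" can form (for example, from edges of a cycle reachable from $c$). Once this characterization is in hand, the remainder is a routine DAG path-counting DP on the product with $A$.
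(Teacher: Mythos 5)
Your proof is correct and follows essentially the same route as the paper's: both exploit acyclicity to identify "minimal supports" with accepted simple paths (so that minimality is automatic) and then count paths of each length by dynamic programming indexed by automaton state, the paper tabulating the quantities $sp_k^{i,p\to j,q}$ and splitting at the distinguished fact, where you instead count in a product DAG and invoke \Cref{lem:wsmsvsmsc1}. The only nitpick is that your product graph should be a multigraph with one edge per database fact, since two distinct relation names between the same pair of vertices can induce the same DFA transition and would otherwise be undercounted.
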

   \begin{proof}
      The intuition is that, when $\+D$ is acyclic, the number of paths of any given size from $c$ to $d$ that are labelled by a word in $\+L$ can be computed in polynomial time by dynamic programming. We now turn into the formal proof.

      This proof will assume some measure of familiarity with deterministic finite word automata. We refer the interested reader to the first chapters of \cite{sakarovitchElementsAutomataTheory2009} for the relevant definitions.

   Let $\D$ be an acyclic database, $\+L(c,d)$ a "regular path query" and a tuple $\mu\in \D$. Without loss of generality, we can assume that $\Const(\D)=[n]$, and that the natural order of the vertices form a topological sort of $\D$, i.e. $\exists R. R(i,j)\in \D \Rightarrow i < j$. Given that $\+L$ is regular, we can take its minimal automaton $\+A^{\+L} = (Q^{\+L},\Sigma^{\+L},\delta^{\+L},q^{\+L}_0,F^{\+L})$.\footnote{$Q^{\+L}$ is the set of states, $\Sigma^{\+L}$ the alphabet, $\delta^{\+L}:Q^{\+L}\times \Sigma^{\+L}\to Q^{\+L}$ the transition function, $q^{\+L}_0$ the initial state and $F^{\+L}$ the set of final states; we abuse notations and extend the transition function to a function $\delta^{\+L}:Q^{\+L}\times (\Sigma^{\+L})^*\to Q^{\+L}$ over words.}

   Since $w$ is "tractable@@wf", \Cref{eq:wsms}
   can be applied in polynomial time provided we compute, for every $k\in[n-1]$, the number of paths from $c$ to $d$ labelled with a word in $\+L$ of size $k$. Note that, since $\D$ is acyclic, there is no need to worry about minimality. Formally, this is the number $sp_k^{c,q_0\to\mu\to d,F}$ of paths of length $k$ from $s$ to $t$ passing through $\mu$ that are labelled with a word $w$ such that $\delta^{\+L}(q_0,w)\in F$. Define for every $i,j\in V$, $p,q\in Q^{\+L}$ and $k\in[n-1]$ the number $sp_k^{i,p\to j,q}$ of paths of length $k$ from $i$ to $j$ such that $\delta^{\+L}(p,w)=q$. Since $\D$ is acyclic, these values follow the following recursive equation:
	   
   \begin{equation}\label{eq_prog_dyn}
      sp_k^{i,p\to j,q} =
      \begin{dcases}
	 \sum_{\substack{R(j',j)\in \D\\
		  \delta^{\+L}(q',R) = q\\
		  i\le j'}} sp_{k-1}^{i,p\to j',q'} \text{ if } i<j\\
	 1 \text{ if } i=j \text{ , } p=q \text{ and } k=0\\
	 0 \text{ otherwise}
      \end{dcases}
   \end{equation}
	   
	   From \Cref{eq_prog_dyn}, the $sp_k^{i,p\to j,q}$ can all be computed by dynamic programming, which allows the computation of the number of paths we want because it can be expressed as follows, with $\mu\eqdef R_\mu(\mu_s,\mu_t)$ (we sum for every possible lengths $k',k''$, intermediate state $q$ and final state $q_f$).
	   
	   \[sp_k^{c,q_0\to\mu\to d,F} = \sum_{k'+k''=k-1} \sum_{q\in Q} \sum_{q_f\in F} sp_{k'}^{c,q_0\to\mu_s,q}\times sp_{k''}^{\mu_t,\delta^{\+L}(q,R_\mu) \to d,q_f}\]
   \end{proof}

   On  arbitrary graphs containing cycles, deciding whether an edge lies on a simple path is already a known $\NP$-complete problem \cite[Theorem 2]{fortuneDirectedSubgraphHomeomorphism1980}, so no measure that satisfies \ref{Shdb:2} will be in $\FP$ for "RPQ"s, unless $\mathsf{P}=\NP$.
In fact, we shall prove a $\FP/\sP$-hard dichotomy (\Cref{thm:apq}) which shows that, for a wide class of "WSMS"s, their computation is $\sP$-hard for (almost) any unbounded "regular path query".

Before we prove this dichotomy, we need to show a lower-bound equivalent of \Cref{lem:wsmsvsmsc1}.
\AP We say a "weight function" $w$ is ""reversible"" if the total number of "minimal supports" for any $q$ in any database $\D$ can be computed in polynomial time given access to an oracle for $\Sh_{\wscorefun_{q,(\D,\emptyset)}^{w}}(\alpha)$, for all $\alpha\in \D$.\footnote{Observe that we do not need to compute the number of "minimal supports" of each size, but only the total number.} We shall abuse terminology and say the family $\Wscorefun^{w}$ itself is "reversible" when $w$ is.

   \begin{lemma}\label{lem:reversibleWSMS}
   The three families $\MSscorefun$, $\Sscorefun$ and $\SHARPscorefun$ considered thus far are all "reversible".
   \end{lemma}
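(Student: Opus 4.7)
The plan is to exploit the fact that when $\Dx = \emptyset$, the Shapley value simplifies to
\[
\Sh_{\wscorefun^{w}_{q,(\D,\emptyset)}}(\alpha) = \sum_{\substack{M \in \Minsups{q}(\D)\\ \alpha \in M}} w(|M|, |\D|),
\]
provided $\emptyset \not\models q$ (the degenerate case $\emptyset \models q$, in which $\emptyset$ is the unique minimal support, can be detected independently by evaluating $q$ on the empty database and dealt with by outputting $1$). From this identity, for each of the three families I will recover $|\Minsups{q}(\D)|$ using $|\D|$ oracle calls followed by polynomial post-processing.

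For $\MSscorefun$, where $w(k,n) = \nicefrac{1}{k}$, the plan is simply to sum the oracle answers over all $\alpha \in \D$: each minimal support $M$ is then counted $|M|$ times with weight $\nicefrac{1}{|M|}$, so
\[
\sum_{\alpha \in \D} \Sh_{\msscorefun_{q,(\D,\emptyset)}}(\alpha) = \sum_{M \in \Minsups{q}(\D)} \frac{|M|}{|M|} = |\Minsups{q}(\D)|.
\]
A single summation thus yields the total directly.

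For $\Sscorefun$ and $\SHARPscorefun$, the plan is to reuse the binary-decoding idea already used in the proof of \Cref{prop:sshapley}. Writing $n \defeq |\D|$ and letting $m_k^\alpha$ denote the number of minimal supports of size $k$ containing $\alpha$: for $\Sscorefun$ the oracle returns $\sum_{k=1}^{n} m_k^\alpha \cdot 2^{-kn}$, and since $m_k^\alpha < 2^n$, each $m_k^\alpha$ occupies its own block of $n$ bits of the binary expansion and can be extracted in polynomial time. For $\SHARPscorefun$ the oracle returns the number of minimal supports containing $\alpha$ plus $\sum_{k=1}^{n} m_k^\alpha \cdot 2^{-kn}$, where the second summand is strictly less than $1$, so the same decoding applies to its fractional part. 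Once the $m_k^\alpha$ are known for every $\alpha$ and $k$, the number of minimal supports of size exactly $k$ is $\frac{1}{k}\sum_{\alpha \in \D} m_k^\alpha$ (each such support being counted once per fact it contains), and summing over $k$ gives $|\Minsups{q}(\D)|$.

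I do not anticipate a significant obstacle; the only step requiring some care is the binary decoding, which is routine given the $n$-bit block structure of the weights $2^{-kn}$ and runs in polynomial time in the bit-length of the oracle output.
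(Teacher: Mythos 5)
Your proposal is correct and follows essentially the same route as the paper: summing the oracle values over all facts for $\MSscorefun$ (the paper phrases this via the Efficiency axiom, you via the direct double-counting $\sum_M |M|\cdot\nicefrac{1}{|M|}$, which is the same computation), and block-wise binary decoding of the $2^{-kn}$ weights for $\Sscorefun$ and of the fractional part for $\SHARPscorefun$, followed by the $\frac{1}{k}\sum_\alpha$ aggregation. The only addition is your explicit handling of the degenerate case $\emptyset\models q$, which the paper leaves implicit but which does no harm.
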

   \begin{proof}
   \proofcase{$\MSscorefun$} By definition, $\wscorefun_{q,(\D,\emptyset)}^{w}(\D)$ is the total number of "minimal supports" we wish to compute. It can be obtained by simply taking the sum over all $\alpha\in \D$ of $\Sh_{\wscorefun_{q,(\D,\emptyset)}^{w}}(\alpha)$, by \ref{Sh:3}.\\

   \proofcase{$\Sscorefun$} Consider the binary representation of $\Sh_{\sscorefun_{q,(\D,\emptyset)}}(\alpha)$:

   \[\Sh_{\sscorefun_{q,(\D,\emptyset)}}(\alpha) = \sum_{k} 2^{-k|\D|}\countFMS_{q,\alpha}(k,\D)\]
   where $\countFMS_{q,\alpha}(k,\D)$
   is the number of "minimal supports" of $q$ in $\D$ of size $k$ that contain $\alpha$. For any $k$, $\countFMS_{q,\alpha}(k,\D)\le 2^{|\D|-1} < 2^{|\D|}$ because $2^{|\D|-1}$ is the total number of subsets of $\D$ that contain $\alpha$. This means that 
   $\countFMS_{q,\alpha}(k,\D)$ will be written starting from the bit of weight $2^{-k|\D|}$ and will never overflow on the bit of weight $2^{-(k-1)|\D|}$. Subsequently, each of the 
   $\countFMS_{q,\alpha}(k,\D)$ can directly be read out of the binary representation of $\Sh_{\sscorefun_{q,(\D,\emptyset)}}(\alpha)$. From there, the number 
   $\countFMS_q(k,\D)$ of "minimal supports" of size $k$ for $q$ in $\D$ can be computed as 
   $\countFMS_q(k,\D) =
   \nicefrac{1}{k}\sum_{\alpha\in\D}\countFMS_{q,\alpha}(k,\D)$, because each "minimal support" of size $k$ will be counted $k$ times, and the total number follows suit.\\

   \proofcase{$\SHARPscorefun$} Since $\Sh_{\sscorefun_{q,(\D,\emptyset)}}(\alpha) < 1$, the fractional part of $\Sh_{\sharpscorefun_{q,(\D,\emptyset)}}(\alpha)$ equals $\Sh_{\sscorefun_{q,(\D,\emptyset)}}(\alpha)$, from which the total number of "minimal supports" can be derived as shown above.
   \end{proof}
\AP
The next lemma shows how the problem $\intro*\evalCountMS_{q}$ of evaluating $\intro*\countMS_{q}$, i.e., counting, for the given database $\D$, the \emph{total} number of "minimal supports" of $q$ in $\D$,
can yield %
lower bounds for $\wShapley_{q}^{w}$. 

   \begin{lemma}\label{lem:wsmsvsmsc}
      For a "reversible" "weight function" $w$ and "Boolean" "monotone" "query" $q$, $\evalCountMS_{q} \polyrx \wShapley_{q}^{w}$.
   \end{lemma}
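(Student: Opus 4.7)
The plan is to exploit the definition of \kl{reversible} essentially verbatim. Let $\D$ be an input instance of $\evalCountMS_{q}$, so we wish to compute $\countMS_{q}(\D)$, the total number of \kl{minimal supports} of $q$ in $\D$.

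First, I would form the \kl{partitioned database} $(\D,\emptyset)$ in which every \kl{fact} of $\D$ is treated as \kl{endogenous} and no \kl{fact} is \kl{exogenous}; this takes polynomial time (in fact, no real work). Then, for each $\alpha \in \D$, I would issue one oracle call to $\wShapley_{q}^{w}$ on the instance $(q,(\D,\emptyset),\alpha)$, obtaining the value $\Sh_{\wscorefun_{q,(\D,\emptyset)}^{w}}(\alpha)$. This amounts to $|\D|$ oracle calls, which is polynomial in the size of the input.

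Finally, since $w$ is \kl{reversible}, by definition there is a polynomial-time procedure that computes $\countMS_{q}(\D)$ from the collection of values $\{\Sh_{\wscorefun_{q,(\D,\emptyset)}^{w}}(\alpha)\}_{\alpha \in \D}$. Applying this procedure yields $\countMS_{q}(\D)$, and the overall algorithm runs in polynomial time with oracle access to $\wShapley_{q}^{w}$. Hence $\evalCountMS_{q} \polyrx \wShapley_{q}^{w}$.

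There is no real obstacle: the notion of \kl{reversibility} was introduced precisely to make this reduction immediate. The substantive content of reversibility lies entirely in \Cref{lem:reversibleWSMS}, which already establishes that $\MSscorefun$, $\Sscorefun$, and $\SHARPscorefun$ enjoy this property; the present lemma just packages that property as a reduction statement.
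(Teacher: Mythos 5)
Your proof is correct and follows essentially the same route as the paper, which likewise observes that the reduction is an immediate unpacking of the definition of "reversible" (the paper's version is just terser, while you spell out the $|\D|$ oracle calls on the all-endogenous instance $(\D,\emptyset)$ explicitly). No gap.
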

   \begin{proof}
   Let $w$ be a "reversible" "weight function". This means that the total number of "minimal supports" for $q$ in any $\+D$ can be computed in polynomial time given access to an oracle for $\wShapley_{q}^{w}$, which is precisely the definition of the Turing reduction $\evalCountMS_{q} \polyrx \wShapley_{q}^{w}$.
   \end{proof}

We now give the announced dichotomy for "RPQ"s, whose proof relies upon Lemmas \ref{lem:wsmsvsmsc1} and \ref{lem:wsmsvsmsc}.

   \begin{theorem}\label{thm:apq}
      Let $w$ be a "reversible" "tractable weight function", and a "regular path query" $q\defeq\+L(c,d)$. %
      Then 
      $\wShapley_{q}^{w} \in \FP$ if $\+L$ is finite or $\epsilon\in\+L$ and $c=d$, and $\sP$-hard otherwise.
   \end{theorem}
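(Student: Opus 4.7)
The plan is to handle the two easy cases directly and to reduce from a classical $\sP$-hard counting problem for the hard direction. If $\mathcal{L}$ is finite, then every minimal support of $q$ in any database is a simple $c$-$d$ path whose label lies in $\mathcal{L}$ and hence has size at most $\max_{w \in \mathcal{L}} |w|$; so $q$ is "bounded", and since RPQs are polynomial-time evaluable, \Cref{th:data-ucq} yields $\wShapley_q^w \in \FP$. If instead $\epsilon \in \mathcal{L}$ and $c = d$, then $\emptyset \models q$ via the empty path, so $\Dx \models q$ for every $\Dx$, and the WSMS is identically $0$ by the special case in \Cref{eq:wsms}.

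For the hard direction, by \Cref{lem:wsmsvsmsc} it suffices to show that $\evalCountMS_q$ is $\sP$-hard, using the fact that the minimal supports of $q$ in a graph database $\D$ are exactly the simple $c$-$d$ paths in $\D$ whose labels belong to $\mathcal{L}$. We reduce from counting simple $s$-$t$ paths in a directed graph, a classical Valiant-style $\sP$-hard problem. Since $\mathcal{L}$ is infinite, applying the pumping lemma to its minimal DFA yields words $u, v, w$ with $v \neq \epsilon$ and $uv^i w \in \mathcal{L}$ for every $i \geq 0$: take $v$ as the label of a cycle at some state that is reachable from the initial state and co-reachable to a final state, and $u, w$ as the corresponding connecting labels.

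Given an input directed graph $G = (V, E)$ with distinguished $s, t$, we build a database $\D$ from pairwise vertex-disjoint gadgets: a $u$-labeled simple path from $c$ to (a copy of) $s$, a $w$-labeled simple path from (a copy of) $t$ to $d$, and, for each edge $(x, y) \in E$, a $v$-labeled simple path from $x$ to $y$ whose internal vertices are fresh. The main obstacle is to establish a bijection between simple $s$-$t$ paths in $G$ and simple $c$-$d$ paths in $\D$ whose labels lie in $\mathcal{L}$. The freshness and disjointness of the internal gadget vertices forces any simple $c$-$d$ path of $\D$ to traverse the $u$-prefix in full, then alternate between $V$-vertices and complete $v$-gadgets, then the $w$-suffix in full; its label is therefore $uv^k w \in \mathcal{L}$ for some $k$, and the induced edge sequence in $G$ must be a simple $s$-$t$ path. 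Counting minimal supports of $q$ in $\D$ thus yields the number of simple $s$-$t$ paths in $G$, completing the reduction and establishing $\sP$-hardness of $\evalCountMS_q$, hence of $\wShapley_q^w$.
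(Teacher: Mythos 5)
Your proof follows essentially the same route as the paper's: the same two tractable cases (finiteness of $\+L$ giving boundedness and hence \Cref{th:data-ucq}; triviality when $\epsilon\in\+L$ and $c=d$), and for hardness the same reduction from counting simple $s$--$t$ paths via \Cref{lem:wsmsvsmsc}, a pumping-lemma decomposition of $\+L$, and per-edge gadgets with fresh internal vertices. There is, however, one genuine gap: you only require $v\neq\epsilon$ in the decomposition $uv^*w\subseteq\+L$, whereas the paper explicitly insists that all three of $u,v,w$ be non-empty. This matters precisely when $c=d$ (with $\epsilon\notin\+L$), a case your hard direction must cover. If, say, $u=\epsilon$, then $c$ coincides with the copy of $s$, and since $c=d$ the candidate minimal supports are simple cycles through that vertex; besides the intended cycles labelled $v^kw$, the database may contain simple cycles labelled $v^k$ arising from cycles of $G$ through $s$, and nothing prevents $v^k\in\+L$ (take for instance $\+L=b^*a+b^3$ with $u=\epsilon$, $v=b$, $w=a$). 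These are spurious minimal supports and the count no longer equals the number of simple $s$--$t$ paths. The repair is immediate --- replace $u$ by $uv$ and $w$ by $vw$ so that both are non-empty --- but it must be said: keeping $c$ and $d$ insulated by non-empty entry and exit words is what forces every $c$--$d$ walk in the constructed database to read $uv^mw$.

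A secondary caution: the blanket claim that the minimal supports of an RPQ are exactly the simple $c$--$d$ paths with label in $\+L$ is false for arbitrary databases (a minimal support can be the edge set of a non-simple walk none of whose simple shortcuts is labelled in $\+L$); the same issue makes your justification of the finite case slightly off, although the conclusion that $q$ is bounded still holds. The claim is true for the constructed database, but only because there every simple $c$--$d$ path is labelled $uv^kw\in\+L$, so any support contains a smaller support that is a simple path and minimality forces equality. This one-line argument should be made explicit, though the paper's own proof is comparably terse on this point.
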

   \begin{proof}
   We shall reduce from the problem of counting simple paths between two distinguished vertices $v_s$ and $v_t$ in a directed graph, using the pumping lemma.

   First observe that if $\+L$ is finite then $q$ can be expressed as a "UCQ", and tractability follows from \Cref{th:data-ucq}%
   . If $\epsilon\in\+L$ and $c=d$, then $q$ is trivially always true.
	   
   Otherwise, we show $\sP$-hardness by a reduction from the problem of counting the number of simple paths in directed graphs. Formally, we have a directed graph $G=(V,E)$, two vertices $v_s,v_t\in V$, and we wish to compute the number of simple paths from $v_s$ to $v_t$ in $G$. This problem is known to be $\sP$-complete \cite[Problem 14]{valiantComplexityEnumerationReliability1979}. 
   We shall assume, without any loss of generality, that $v_s \neq v_t$. Indeed, if we wish to count the non-empty simple paths that start and end at some vertex $v$, we can split this vertex into a pair $v_s,v_t$ with all outgoing edges leaving $v_s$ and all incoming edges entering $v_t$ (formally, we have the edge $(v_s,u)$ for every $(v,u)\in E$ and the edge $(u,v_t)$ for every $(u,v)\in E$).

   Since $w$ is "reversible", there is a reduction from counting the number of "minimal supports" for $q$ in the input database to computing $\wShapley_{q}^{w}$, by \Cref{lem:wsmsvsmsc}.
   Leveraging this fact, we now build an input database that has as many "minimal supports" for $q$ as there are simple paths from $v_s$ to $v_t$ in $G$.

   Since $\+L$ is infinite and regular, the pumping lemma for regular languages (\aka star lemma)  \cite[Lemma 1.12]{sakarovitchElementsAutomataTheory2009} states that there exists a sublanguage $\alpha\beta^*\gamma \subseteq \+L$, with $\alpha,\beta,\gamma \in \Sigma^+$.\footnote{Note that we require that all three are non-empty.}
   To build our input database $\D_G$, we start by taking all vertices in $V$ as constants and adding, for each $(u,v)\in E$, a path labelled by $\beta$ from $u$ to $v$, using as many fresh intermediate constants as needed. We then add the distinguished constants $c$ and $d$, with a path labelled by $\alpha$ from $c$ to $v_s$ and one labelled by $\beta$ from $v_t$ to $d$. The final construction is depicted in \Cref{fig:const-apq}. 

   \begin{figure}[tb]
   \centering
      \begin{tikzpicture}
	\coordinate (00) at (-3, 0);
	\coordinate (01) at (-5, 0);
	\coordinate (02) at (-1, 0);
	\coordinate (03) at (-4, -0.75);
	\coordinate (04) at (-3.25, 0.5);
	\coordinate (05) at (-2, 0);
	\coordinate (06) at (4, 0);
	\coordinate (07) at (2, 0);
	\coordinate (08) at (6, 0);
	\coordinate (09) at (3, -0.75);
	\coordinate (010) at (3.75, 0.5);
	\coordinate (011) at (5, 0);
	\coordinate (012) at (0, 0);
	\coordinate (013) at (1, 0);
	\coordinate (014) at (7, 0);
	\coordinate (015) at (2.5, 1.25);
	\coordinate (016) at (-4.5, 1.25);
	\begin{pgfonlayer}{nodelayer}
		\draw [decorate,decoration={bumps, amplitude=1ex, segment length=4.5ex}] (00) circle [x radius=-2, y radius=1.2];
		\node [draw, circle,fill=white] (1) at (01) {$v_s$};
		\node [draw, circle,fill=white] (2) at (02) {$v_t$};
		\node [draw, circle] (3) at (03) {};
		\node [draw, circle] (4) at (04) {};
		\node [draw, circle] (5) at (05) {};
		\draw [decorate,decoration={bumps, amplitude=1ex, segment length=4.5ex}] (06) circle [x radius=-2, y radius=1.2];
		\node [draw, circle,fill=white] (7) at (07) {$v_s$};
		\node [draw, circle,fill=white] (8) at (08) {$v_t$};
		\node [draw, circle] (9) at (09) {};
		\node [draw, circle] (10) at (010) {};
		\node [draw, circle] (11) at (011) {};
		\node [] (12) at (012) {$\Rightarrow$};
		\node [draw, circle] (13) at (013) {$c$};
		\node [draw, circle] (14) at (014) {$d$};
		\node [] (15) at (015) {$G$};
		\node [] (16) at (016) {$G$};
	\end{pgfonlayer}
	\begin{pgfonlayer}{edgelayer}
		\draw [->,>=stealth] (1) to (4);
		\draw [->,>=stealth] (3) to (1);
		\draw [->,>=stealth] (5) to (2);
		\draw [->,>=stealth] (11) to node[midway, above] {$\beta$} (8);
		\draw [->,>=stealth] (7) to node[midway, above] {$\beta$} (10);
		\draw [->,>=stealth] (9) to node[near start, above] {$\beta$} (7);
		\draw [->,>=stealth] (13) to node[midway, above] {$\alpha$} (7);
		\draw [->,>=stealth] (8) to node[midway, above] {$\gamma$} (14);
		%\draw [<-,>=stealth, color=blue, loop below] (8) to ();
		%\draw [->,>=stealth, color=red, bend left=15] (8) to (14);
		%\draw [->,>=stealth, color=red, bend left=15] (13) to (7);
		%\draw [<-,>=stealth, color=red, loop above] (7) to ();
		%\draw [<-,>=stealth, color=red, loop above] (9) to ();
		%\draw [<-,>=stealth, color=red, loop above] (10) to ();
		%\draw [<-,>=stealth, color=red, loop above] (11) to ();
		%\draw [<-,>=stealth, color=red, loop above] (8) to ();
	\end{pgfonlayer}
\end{tikzpicture}%
   
   \caption{Illustration of the graph database $\+D_G$. Each arrow represents a path with the given label.}
   \label{fig:const-apq}
   \end{figure}

   First recall that we have assumed that either $\epsilon\notin\+L$ or $c\neq d$, meaning that the "minimal supports" for $q$ must be paths of lenght at least 1. By construction, every simple path from $c$ to $d$ corresponds to a unique simple path from $v_s$ to $v_t$ in $G$, and it will be labelled by some word in $\alpha\beta^*\gamma\subseteq\+L$, meaning it will be a "minimal support" for $q$. Conversely every "minimal support" for $q$ has to be a path from $c$ to $d$ and by minimality this means that there are no other "minimal supports" than the simple paths. Overall, the number of "minimal supports" for $q$ in $\D_G$ matches the number of simple paths from $v_s$ to $v_t$ in $G$ as desired, implying that $\wShapley_{q}^{w}$ is $\sP$-hard.
   \end{proof}

   Note that the proof fails if we consider existentially quantified "RPQ"s. Indeed consider the query $q \eqdef \exists x. ~ R^+(x,x)$ whose "minimal supports" are the simple cycles of $R$-edges: if we build the above reduction for, say $\alpha=\beta=\gamma = R$, each path from $v_s$ to $v_t$ in $\D_G$ will still be a "minimal support" for $q$, but every simple cycle in $G$ will create an additional, unwanted, "minimal support" for $q$.
That being said, the problem of counting all simple cycles is $\sP$-hard as well \cite[Theorem 3.2]{yamamotoApproximatelyCountingPaths2017}, and more generally we have no reason to think that the problem is tractable for any "unbounded" query:
\begin{conjecture}
   Let $w$ be a "reversible" "tractable weight function", and let $q$ be a "homomorphism-closed"
   "Boolean" 
   query. Then $\wShapley_{q}^{w} \in \FP$ if $q$ is equivalent to a "UCQ" and $\sP$-hard otherwise.
\end{conjecture}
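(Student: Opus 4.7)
The argument splits naturally into two directions. The easy direction is that if $q$ is equivalent to a "UCQ" $q'$, then $q'$ is "bounded" since each disjunct of $q'$ has a bounded number of atoms, and any "minimal support" of $q$ embeds (as a homomorphism source) into some disjunct via its canonical database. Polynomial-time evaluability of $\UCQ$ is standard, so \Cref{th:data-ucq} yields $\wShapley_{q}^{w} \in \FP$.

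For the hard direction, suppose $q$ is "homomorphism-closed" and not equivalent to any "UCQ". The first step I would establish is a structural characterization: \emph{a hom-closed query is equivalent to a UCQ iff its minimal supports have bounded size.} The ``if'' direction is the one we need and proceeds by observing that a hom-closed query is entirely determined, up to "homomorphism" equivalence, by the set of "cores" of its "minimal supports"; bounded minimal-support size (combined with the fact that we may assume constants of $q$ form a finite set) forces only finitely many cores, giving a finite UCQ. Contrapositively, $q$ hom-closed and not UCQ-equivalent implies an infinite family of pairwise non-equivalent "minimal supports" of strictly increasing sizes.

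Having obtained this unboundedness, by \Cref{lem:wsmsvsmsc} (using reversibility of $w$), it suffices to prove $\evalCountMS_{q}$ is $\sP$-hard. The plan here is to mimic the reduction in \Cref{thm:apq} from counting simple paths (or, alternatively, simple cycles) in a directed graph $G$ between two designated vertices. Given the infinite sequence of "minimal supports" $S_1, S_2, \ldots$ of growing size, I would extract a \emph{pumping pattern}: a decomposition of infinitely many $S_n$ into a ``prefix'', an iterable ``middle block'' which can be concatenated an arbitrary number of times while preserving minimality, and a ``suffix''. This is the analog of the $\alpha \beta^* \gamma \subseteq \+L$ decomposition used for "RPQ"s. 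Using such a block as a gadget replacing edges of $G$, one builds a database $\D_G$ whose "minimal supports" of $q$ are in bijection with the simple paths (or cycles) of $G$, yielding $\sP$-hardness.

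The main obstacle is extracting a usable pumping pattern from arbitrary "hom-closed" queries. For "RPQ"s, the classical pumping lemma gives the pattern essentially for free, and this was enough for \Cref{thm:apq}. For a general "hom-closed" query, one only has the combinatorial fact that "minimal supports" grow unboundedly in size; there is no a priori guarantee that they admit an iterable substructure. Overcoming this would likely require a Ramsey-type or well-quasi-order argument on the sequence of "cores" of minimal supports, or a model-theoretic pigeonhole showing that within any infinite family of finite relational structures some ``pumpable'' sub-pattern must recur. A further subtlety is that the reduction must not inadvertently introduce additional spurious minimal supports (as already observed in the remark following \Cref{thm:apq} for existentially quantified RPQs), so the gadget must be designed so that only the intended simple paths of $G$ yield minimal supports of $q$ in $\D_G$, which may force the reduction to go through counting simple cycles or another $\sP$-hard variant rather than simple paths directly.
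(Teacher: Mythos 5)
You are attempting to prove a statement that the paper itself leaves as an open \emph{conjecture}: there is no proof of it in the paper, and your proposal does not close it either. The tractable direction of your plan is fine and matches what the paper already provides: a "homomorphism-closed" query equivalent to a "UCQ" is "bounded" and polynomial-time evaluable, so \Cref{th:data-ucq} applies; and your structural claim that, over a fixed finite schema, a hom-closed query with bounded minimal supports is UCQ-equivalent is correct (hom-closedness makes the query determined by the finitely many homomorphism-equivalence classes of its bounded-size minimal supports).

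The genuine gap is exactly where you yourself locate it: the extraction of a ``pumpable'' sub-pattern from an arbitrary unbounded hom-closed query, together with a gadget guaranteeing a bijection between the simple paths of $G$ and the minimal supports of $q$ in $\D_G$. For "RPQ"s this comes for free from the pumping lemma for regular languages, which is why \Cref{thm:apq} goes through; for general hom-closed queries no such lemma is available, and the paper's own remark after \Cref{thm:apq} (the query $\exists x\, R^+(x,x)$) shows that even a mild generalization already breaks the path-counting reduction by creating spurious minimal supports from cycles. Your suggested remedies (Ramsey/wqo arguments on the cores of minimal supports, or rerouting through counting simple cycles) are plausible research directions, but none is carried out, and there is no argument that a recurring substructure in an infinite family of minimal supports can actually be \emph{iterated while preserving minimality} inside a single fixed database. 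Until that step is supplied, what you have is a restatement of the conjecture with a proof strategy, not a proof.
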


\section{Combined Complexity}
\label{sec:combined-complexity}

We will now study the combined complexity of computing "weighted sums of minimal supports".
We first give a general upper bound assuming some minimal conditions, namely that the query evaluation task lies within the polynomial hierarchy and the "weight function" is tractable.

\begin{proposition}
   Let $\+C$ be a class of "Boolean" "monotone" queries for which the (combined) query evaluation %
   is in $\PH$, and $w$ be a "tractable weight function". Then $\wShapley_{\C}^{w}\in\FPsP$.
\end{proposition}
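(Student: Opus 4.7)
The plan is to reduce the computation of $\wShapley_{\C}^{w}$ to a counting problem on minimal supports of each fixed size, to show that this counting problem lies in $\sPH$, and then to invoke Toda's theorem to collapse $\sPH$ into $\FPsP$. The first step is essentially an unchanged reprise of the proof of \Cref{lem:wsmsvsmsc1}: the same derivation yields, for every $(q, \D, \alpha)$ with $q \in \C$,
\[
\Sh_{\wscorefun_{q,\D}^{w}}(\alpha) \;=\; \sum_{k=1}^{|\D|} \bigl(\countFMS_{q}(k, \D) - \countFMS_{q}(k, \D \setminus \{\alpha\})\bigr)\, w(k, |\D|)
\]
when $\Dx \not\models q$, and $0$ otherwise. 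Since $w$ is tractable and the test ``$\Dx \models q$'' is in $\PH$ by assumption on $\C$, it suffices to show that $\evalCountFMS_{\C} \in \FPsP$.

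Second, I would verify that $\evalCountFMS_{\C}$ lies in $\sPH$. Given inputs $(q, k, \D)$, the task is to count the subsets $S \subseteq \D$ that satisfy
\[
|S| = k \;\land\; S \models q \;\land\; \forall \alpha \in S \colon\; S \setminus \{\alpha\} \not\models q,
\]
where monotonicity of $q$ justifies replacing ``no proper subset of $S$ models $q$'' by the universal quantifier over $\alpha \in S$. The first conjunct is polynomial-time, the second is in $\PH$ by hypothesis, and the third is a polynomial-size conjunction of $\PH$-statements (using closure of $\PH$ under complement). Hence the whole predicate is in $\PH$, and $\evalCountFMS_{\C} \in \sPH$ follows.

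Finally, I would appeal to the standard consequence $\sPH \subseteq \FPsP$ of Toda's theorem to obtain $\evalCountFMS_{\C} \in \FPsP$. Combining this with the first-step reduction and the tractability of $w$ produces the desired polynomial-time algorithm with a $\sP$-oracle for $\wShapley_{\C}^{w}$. The delicate point, and main conceptual obstacle, is keeping the minimality check within $\PH$: a naive formulation would universally quantify over all proper subsets of $S$ and so escape $\PH$; but monotonicity of $q$ collapses that quantifier to $|S|$-many $\PH$-tests, one per element of $S$. Once this is secured, Toda's collapse handles the remainder.
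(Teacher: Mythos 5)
Your proof is correct and follows essentially the same route as the paper: reduce to $\evalCountFMS_{\C}$ via the formula of \Cref{lem:wsmsvsmsc1}, place that counting problem in $\sPH$ by guessing a subset and verifying minimal-supporthood with a $\PH$ oracle, and collapse via $\FPsPH = \FPsP$. One small remark: your ``delicate point'' is not actually delicate --- the paper's oracle simply checks ``for all $S' \subseteq S$, $S' \not\models q$'' directly, which remains in $\PH$ since the hierarchy is closed under polynomially-bounded universal quantification; your monotonicity-based replacement by $|S|$ single-element-removal tests is a valid (and slightly more elementary) alternative, but the naive formulation does not escape $\PH$.
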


\begin{proof}
   \AP With \Cref{lem:wsmsvsmsc1} in mind, our first step is to show that $\countFMS_{\+C} \in \sPH$, where $\intro*\sPH$ is the class of all computational problems that can be defined as the number of accepting runs of a polynomial-time nondeterministic Turing machine (PNTM) that has access to a $\PH$ oracle \cite{hemaspaandraSatanicNotationsCounting1995}. We thus construct a PNTM with $\PH$ oracle that, on any input query $q\in\+C$ database $\D$, has exactly one accepting run for every "minimal support" of size $k$ for $q$ in $\D$.
   For this, we non-deterministically pick a subset $S$ of $\D$ and then use the $\PH$ oracle to check if it is a "minimal support". Since answering $q$ is in $\PH$, a first call can decide if $S\models q$ and a second one that for all $S'\subseteq S$, $S'\not\models q$.

Now that we have shown $\countFMS_{\+C} \in \sPH$, \Cref{lem:wsmsvsmsc} implies $\wShapley_{\C}^{w} \in \FPsPH$, and we conclude by the fact that $\FPsPH = \FPsP$ \cite[Theorem 4.1]{todaPolynomialtime1TuringReductions1992}.
\end{proof}

In the following subsections, we identify conditions that determine whether $\wShapley^{w}$ is tractable or intractable for certain classes of conjunctive queries.

\subsection{Hardness Results for (Unions of) Conjunctive Queries}
\label{sec:hardness-(U)CQs}
\color{black}

It is known that counting "homomorphisms" from "CQs" (or, equivalently, counting the answers to non-Boolean "CQs" without any existential variable, \aka\ `full "CQs"') is $\sP$-complete in general \cite[Theorem 1.1]{DyerG00},
but it becomes tractable for "ACQs" \cite[Theorem 1]{PichlerS13} and more generally for classes of 
"CQs" having bounded `"generalized hypertree width"' \cite[Theorem 7]{PichlerS13}.
However, as the next example illustrates, counting "homomorphisms" is not the same as counting "minimal supports":
\begin{example}\label{ex:supports-vs-answers}
    Consider the "CQ" $\exists xyzw ~ R(x,y) \land R(y,z) \land R(z,w)$ and the "database" $\D= \set{R(a,b),$ $R(b,c), R(c,a)}$. 
    While the query has only one (minimal) "support" (\ie $\D$ itself), it has three different "homomorphisms" characterized by their values on $x,y,z,w$: $\set{(a,b,c,a), (b,c,a,b), (c,a,b,c)}$.
\end{example}

Interestingly, we show here that the aforementioned tractability results for counting answers to "ACQs" do not generalize to counting "minimal supports". In fact, we prove that counting "minimal supports" for "ACQs" is $\sP$-hard even in the absence of constants and over a fixed "schema".

The next results will be shown by reduction from the $\AP\intro*\numPerMatch$ problem of, given a bipartite graph $G=(V, E)$, counting the number of subsets $P \subseteq E$ (called \AP""perfect matchings"") such that each vertex of $V$ is contained in exactly one edge of $P$. This problem is known to be $\sP$-complete \cite[Problem 2]{valiantComplexityEnumerationReliability1979} under polynomial-time "Turing reductions", and was later shown to be hard even under polynomial-time "$1$-Turing reductions" \cite{Zanko91}.

\begin{theorem}\label{thm:countMS-ACQ-shP}
    $\evalCountMS_{\ACQ}$ is $\sP$-hard under polynomial-time "$1$-Turing reductions", even in the absence of constants.
    Hence, $\evalCountFMS_{\ACQ}$ is $\sP$-hard under polynomial-time "Turing reductions".
\end{theorem}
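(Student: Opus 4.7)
We reduce from $\numPerMatch$, which is $\sP$-hard under polynomial-time "$1$-Turing reductions". Given a bipartite graph $G=(U\cup V, E)$ with $U=\{u_1,\dots,u_n\}$ and $V=\{v_1,\dots,v_n\}$, the goal is to construct in polynomial time a constant-free "ACQ" $q_G$ and a "database" $\D_G$ such that the number of "perfect matchings" of $G$ can be recovered, in polynomial time, from the single number $\countMS_{q_G}(\D_G)$.

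\textbf{Construction idea.} For each vertex $u_i \in U$ (resp.\ $v_j\in V$), we attach in $\D_G$ a rigid, tree-shaped \emph{index gadget}---for instance a chain of length $i$ (resp.\ $j$) of a fresh binary relation, ending at a uniquely marked endpoint so that its isomorphism type identifies $i$ (resp.\ $j$) unambiguously. The query $q_G$ has $2n$ slot variables $x_1,\dots,x_n,y_1,\dots,y_n$, each carrying a copy of the corresponding matching gadget, plus $n$ "atoms" of a fresh binary relation $R$, whose $R$-facts in $\D_G$ encode the edges of $G$. The whole of $q_G$ is laid out around a tree backbone that connects the gadgets, the slot variables and the $R$-atoms in such a way that $q_G\in\ACQ$ and that every "homomorphism" $q_G \homto \D_G$ is forced to send $x_i$ to $u_i$ and $y_j$ to $v_j$, without ever collapsing two distinct slot variables onto the same database vertex.

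\textbf{Bijection with perfect matchings.} Because the gadget part of $\D_G$ is forced into every "support" (no smaller piece of $\D_G$ can witness those rigid atoms), every "minimal support" of $q_G$ is determined by the $n$ $R$-facts chosen to realise the $n$ $R$-atoms of $q_G$. Rigidity of the gadgets together with the minimality of the support force these $n$ $R$-facts to cover each $u_i$ and each $v_j$ exactly once---i.e.\ to form a "perfect matching" of $G$---and conversely every perfect matching of $G$ gives a minimal support of $q_G$ in $\D_G$. Hence $\countMS_{q_G}(\D_G)$ equals the number of "perfect matchings" of $G$, possibly multiplied by a fixed polynomial-time computable factor coming from the "automorphisms" of the gadget part of $q_G$, so one oracle call suffices.

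\textbf{Main obstacle and the corollary.} The delicate step is engineering the index gadgets so that they are simultaneously tree-shaped (to preserve acyclicity of $q_G$), constant-free, and rigid enough to rule out every spurious "homomorphism"---in particular the ones that collapse two slot variables or reuse a single $R$-fact for two slots, which would otherwise produce minimal supports strictly smaller than a perfect matching. The second claim of the theorem is immediate from the identity $\countMS_q(\D)=\sum_{k=1}^{|\D|}\countFMS_q(k,\D)$, which yields a polynomial-time Turing reduction (with $|\D|$ oracle calls, so not a "$1$-Turing reduction") from $\evalCountMS_{\ACQ}$ to $\evalCountFMS_{\ACQ}$.
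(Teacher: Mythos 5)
Your choice of source problem ($\numPerMatch$) and the overall shape of the reduction (build a constant-free tree-shaped query $q_G$ and a database $\D_G$ so that one oracle call to $\countMS_{q_G}(\D_G)$ recovers the number of perfect matchings) match the paper. But the construction you sketch is flawed at its core: you force every homomorphism to send $x_i$ to $u_i$ \emph{and} $y_j$ to $v_j$. If both sides of the bipartition are pinned, the $n$ $R$-atoms of the fixed query $q_G$ admit at most one realisation in $\D_G$, so $\countMS_{q_G}(\D_G)$ would be $0$ or $1$ (up to the gadget facts), not the number of perfect matchings. The counting has to come from a genuine degree of freedom: the paper pins only the $U$-side (via a rigid spine through $a_1,\dots,a_n$ built from alternating relation sequences $R_n\cdots R_1$) and leaves each $y_i$ free to map to \emph{any} neighbour $b_j$ of $a_i$. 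Also, automorphisms of $q_G$ do not multiply $\countMS$ (minimal supports are sets, not homomorphisms), so the ``automorphism factor'' you invoke signals a conflation of counting homomorphisms with counting minimal supports --- precisely the distinction this theorem is about.

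The second, equally serious gap is that you assert ``minimality of the support forces the $n$ $R$-facts to cover each $v_j$ exactly once'' without any mechanism. Minimality does not do this for free: two branches of $q_G$ can land on the same $b_j$ and still induce a support with no proper sub-support, unless the instance is engineered to punish collisions. The paper's mechanism is the crux of the proof: it plants, for every $b_j$, a cheap minimal support $\mathcal{S}_j$ consisting of the self-loops $R_1(b_j,b_j),\dots,R_n(b_j,b_j)$, and decorates each query branch ending at $y_i$ with the self-loops $R_k(y_i,y_i)$ for all $k\neq i$; then two branches $y_i,y_{i'}$ colliding on $b_j$ would force the support to contain all of $\mathcal{S}_j$, contradicting minimality. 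This is also why the final count is (number of perfect matchings) $+\,n$ rather than an exact bijection. Your proposal explicitly flags gadget rigidity as ``the delicate step'' but leaves it unresolved, and without constants a chain gadget is not rigid (paths fold homomorphically). Your argument for the second claim of the theorem, via $\countMS_q(\D)=\sum_{k=1}^{|\D|}\countFMS_q(k,\D)$, is correct and matches the paper.
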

\begin{proof}
   We reduce from $\numPerMatch$ for a given bipartite graph $G=(V, E)$. Suppose $V$ is partitioned into $V_1 \cup V_2$ and assume $V_1 = \set{a_1, \dotsc, a_n}$, $V_2 = \set{b_1, \dotsc, b_n}$ and denote $I=\set{1,\dotsc,n}$ (observe that if $|V_1| \neq |V_2|$ there are no "perfect matchings").
   We will build a "database" $D_G$ and a constant-free "ACQ" $q_G$ such that the number of "perfect matchings" on $G$ is equal to $\countMS_{q_G}(D_G) - n$.
    For clarity, we first show a reduction which uses an unbounded relational "schema" $\sigma_G =  \set{R_1, \dotsc, R_n}$ where each $R_i$ is binary.

    \proofcase{$D_G$} We shall abuse notation and use the elements of $V_1 \cup V_2$ as constant names.
    The "database" $D_G$ is built over the domain $V_1 \cup V_2 \cup \set{c_{i,v,v'} : i \in [n-1], v,v' \in V_1 \cup V_2}$. The $c$-constants are there only to build some paths between the elements of $V_1 \cup V_2$ reading `$R_1 \dotsb R_n$' or `$R_n \dotsb R_1$'. \Cref{fig:thm:countMS-ACQ-shP} contains an example from which the reader can deduce how $D_G$ is defined (the $c$-constants are implicit in the figure). More formally, we define $D_G$ as the smallest set containing:
    \begin{itemize}
        \item All "facts" $R_1(v,v), \dotsc, R_n(v,v)$ for every $v \in V_2$.
	\item A path reading $R_n, R_{n-1}, \dotsc, R_1$ from $a_i$ to $a_{i+1}$ for every $i$ ---formally, all "facts" $R_n(a_i,c_{1,a_i,a_{i+1}})$, $R_{n-1}(c_{1,a_i,a_{i+1}},c_{2,a_i,a_{i+1}}), \dotsc,$ $R_1(c_{n-1,a_i,a_{i+1}}, a_{i+1})$ for every $i \in [n-1]$.
        \item A path reading $R_1, R_2, \dotsc, R_n$ between any $a_i$ and $b_j$ connected by an edge in $G$ ---formally, all
        "facts" $R_1(a_i,c_{1,a_i,b_j}),$ $R_2(c_{1,a_i,b_j},c_{2,a_i,b_j}), \dotsc,$ $R_n(c_{n-1,a_i,b_j}, b_j)$ for every $\set{a_i, b_j} \in E$.
    \end{itemize}

    \proofcase{$q_G$} The "ACQ" $q_G$ is defined as in \Cref{fig:thm:countMS-ACQ-shP}.
    We give names only to some of the variables of $q_G$: $X=\set{x_1, \dotsc, x_n}$ which will be mapped to the $a_i$'s, and $Y = \set{y_1, \dotsc, y_n}$ which will be mapped to the $b_i$'s. All others will be mapped to intermediate constants along paths in $\D_G$.%

It is easy to see that the $\+S_j\defeq{R_I(b_j,b_j)}$ (\ie ${R_1(b_j,b_j),\dots,R_n(b_j,b_j)}$) are all "minimal supports" for $q_G$ in $\+D_G$. Similarly, if we take a "perfect matching" in $G$ under the form of a bijective $\mu:V_1\to V_2$ \st $\forall j\in I. (a_j,\mu(a_j))\in E$, then we can easily build a "minimal support" for $q_G$ by taking an isomorphic image of $q_G$ with $(x_j,y_j)\mapsto (a_j, \mu(a_j))$. We now show that there is no "minimal support" for $q_G$ in $\+D_G$ other than these.%

Assume that $\+S$ is a "minimal support" for $q_G$ in $\+D_G$ that doesn’t contain any $\+S_j$. Since $\+S\vDash q$, then $\+S$ must be the homomorphic image of $q_G$ by some homomorphism $h$. First consider the spine $x_1 \xrightarrow{R_{n\dots 1}} \dots \xrightarrow{R_{n\dots 1}} x_n$ of $q_G$. It must necessarily map to $a_1 \xrightarrow{R_{n\dots 1}} a_n$ because that is the only path in $\+D_G$ that doesn’t contain any $\+S_j$ and is homomorphic to $x_1 \xrightarrow{R_{n\dots 1}} \dots \xrightarrow{R_{n\dots 1}} x_n$. In other words $\forall j\in I. h(x_j)=a_j$.%

Now consider a branch $x_i \xrightarrow{R_{1\dots n}} y_i$ of $q_G$. Since $h(x_i) = a_i$ and $x_i$ has at most two outgoing edge, only one of which is a $R_1$, the branch must necessarily map to some $a_i \xrightarrow{R_{1\dots n}} b_j$, with $(a_i,b_j)\in E$. In other words, $h(y_j)$ is always a neighbour of $a_j$ in $G$. Moreover, if we were to have $h(y_i) = h(y_{i'}) = b_j$ with $i\neq i'$, then $\+S$ would necessarily contain ${R_{I\setminus\{i\}}(b_j,b_j)} \cup {R_{I\setminus\{i'\}}(b_j,b_j)} = \+S_j$, which is impossible. Overall, $\+S$ must therefore induce a "perfect matching" as described above, with $\mu(a_j)\defeq h(y_j)$.

    \begin{figure}
	\resizebox{\columnwidth}{!}{%
      \begin{tikzpicture}[]
	\coordinate (00) at (-8, -0.25);
	\coordinate (01) at (-8, 1.75);
	\coordinate (02) at (-8, 2.75);
	\coordinate (03) at (-6.5, 2.75);
	\coordinate (04) at (-6.5, 1.75);
	\coordinate (05) at (-6.5, -0.25);
	\coordinate (06) at (-6, 3.25);
	\coordinate (07) at (-8.5, -0.75);
	\coordinate (08) at (-6, 3.25);
	\coordinate (09) at (-8.5, 2.25);
	\coordinate (010) at (-8, 0.75);
	\coordinate (011) at (-6.5, 0.75);
	\coordinate (012) at (-3.5, 3.5);
	\coordinate (013) at (-1.5, 3.5);
	\coordinate (014) at (-3.5, 2);
	\coordinate (015) at (-1.5, 2);
	\coordinate (016) at (-3.5, 0.5);
	\coordinate (017) at (-1.5, 0.5);
	\coordinate (018) at (-3.5, -1);
	\coordinate (019) at (-1.5, -1);
	\coordinate (020) at (-5, -1.5);
	\coordinate (021) at (0, 4);
	\coordinate (022) at (-5, 2.75);
	\coordinate (023) at (3, 3.5);
	\coordinate (024) at (5, 3.5);
	\coordinate (025) at (3, 2);
	\coordinate (026) at (5, 2);
	\coordinate (027) at (3, 0.5);
	\coordinate (028) at (5, 0.5);
	\coordinate (029) at (3, -1);
	\coordinate (030) at (5, -1);
	\coordinate (031) at (1, -1.5);
	\coordinate (032) at (7, 4);
	\coordinate (033) at (1, 2.75);
	
	\begin{pgfonlayer}{nodelayer}
		\node [draw, circle] (0) at (00) {$a_n$};
		\node [draw, circle] (1) at (01) {$a_2$};
		\node [draw, circle] (2) at (02) {$a_1$};
		\node [draw, circle] (3) at (03) {$b_1$};
		\node [draw, circle] (4) at (04) {$b_2$};
		\node [draw, circle] (5) at (05) {$b_n$};
		\node [] (6) at (06) {};
		\node [] (7) at (07) {};
		\node [] (8) at (08) {};
		\node [fill=white] (9) at (09) {$G$};
		\node [circle, fill=none] (10) at (010) {\dots};
		\node [circle, fill=none] (11) at (011) {\dots};
		\node [draw, circle, text=black] (12) at (012) {$a_1$};
		\node [draw, circle, text=black] (13) at (013) {$b_1$};
		\node [draw, circle, text=black] (14) at (014) {$a_2$};
		\node [draw, circle, text=black] (15) at (015) {$b_2$};
		\node [circle, fill=none] (16) at (016) {\dots};
		\node [fill=none] (17) at (017) {\dots};
		\node [draw, circle, text=black] (18) at (018) {$a_n$};
		\node [draw, circle, text=black] (19) at (019) {$b_n$};
		\node [] (20) at (020) {};
		\node [] (21) at (021) {};
		\node [fill=white] (22) at (022) {$\+D_G$};
		\node [draw, circle, text=black] (23) at (023) {$x_1$};
		\node [draw, circle, text=black] (24) at (024) {$y_1$};
		\node [draw, circle, text=black] (25) at (025) {$x_2$};
		\node [draw, circle, text=black] (26) at (026) {$y_2$};
		\node [circle, fill=none] (27) at (027) {\dots};
		\node [fill=none] (28) at (028) {\dots};
		\node [draw, circle, text=black] (29) at (029) {$x_n$};
		\node [draw, circle, text=black] (30) at (030) {$y_n$};
		\node [] (31) at (031) {};
		\node [] (32) at (032) {};
		\node [fill=white] (33) at (033) {$q_G$};	
	\end{pgfonlayer}
	\begin{pgfonlayer}{edgelayer}
		\draw (1) to (5);
		\draw (0) to (4);
		\draw (2) to (3);
		\draw [dashed] (8) rectangle (7);
		\draw [dashed] (20) rectangle (21);
		\draw [->,>=stealth] (12) to node[midway, above, sloped] {$R_{1\dots n}$}(13);
		\draw [->,>=stealth] (14) to node[near start, above, sloped] {$R_{1\dots n}$}(19);
		\draw [->,>=stealth] (18) to node[near start, above, sloped] {$R_{1\dots n}$}(15);
		\draw [->,>=stealth] (12) to node[midway, left] {$R_{n\dots 1}$}(14);
		\draw [->,>=stealth] (14) to node[midway, left] {$R_{n\dots 1}$}(16);
		\draw [->,>=stealth] (16) to node[midway, left] {$R_{n\dots 1}$}(18);
		%\draw [<-,>=stealth, loop left] (12) to node[midway] {$R_I$}();
		%\draw [<-,>=stealth, loop left] (14) to node[midway] {$R_I$}();
		%\draw [<-,>=stealth, loop left] (18) to node[midway] {$R_I$}();
		\draw [<-,>=stealth, loop right] (19) to node[midway] {$R_I$}();
		\draw [<-,>=stealth, loop right] (15) to node[midway] {$R_I$}();
		\draw [<-,>=stealth, loop right] (13) to node[midway] {$R_I$}();
		\draw [dashed] (31) rectangle (32);
		\draw [->,>=stealth] (23) to node[midway, above, sloped] {$R_{1\dots n}$} (24);
		\draw [->,>=stealth] (23) to node[midway, left] {$R_{n\dots 1}$}(25);
		\draw [->,>=stealth] (25) to node[midway, left] {$R_{n\dots 1}$}(27);
		\draw [->,>=stealth] (27) to node[midway, left] {$R_{n\dots 1}$}(29);
		%\draw [<-,>=stealth, loop left] (25) to node[midway] {$R_{I\setminus\{2\}}$}();
		%\draw [<-,>=stealth, loop left] (29) to node[midway] {$R_{I\setminus\{n\}}$}();
		\draw [<-,>=stealth, loop right] (30) to node[midway] {$R_{I\setminus\{n\}}$}();
		\draw [<-,>=stealth, loop right] (26) to node[midway] {$R_{I\setminus\{2\}}$}();
		\draw [<-,>=stealth, loop right] (24) to node[midway] {$R_{I\setminus\{1\}}$}();
		\draw [->,>=stealth] (25) to node[midway, above, sloped] {$R_{1\dots n}$} (26);
		\draw [->,>=stealth] (29) to node[midway, above, sloped] {$R_{1\dots n}$} (30);
		\draw (2) to (4);
		\draw [->,>=stealth] (12) to node[midway, above, sloped] {$R_{1\dots n}$} (15);
	\end{pgfonlayer}
\end{tikzpicture}%
   }
	\caption{Construction for the reduction in the proof of \Cref{thm:countMS-ACQ-shP}. $\xrightarrow{R_{1\dots n}}$ (resp. $\xrightarrow{R_{n\dots 1}}$) is an abbreviation for the path $\xrightarrow{R_1}\dots\xrightarrow{R_n}$ (resp. $\xrightarrow{R_n}\dots\xrightarrow{R_1}$), and $\xrightarrow{R_J}$ means a parallel edge for every $j\in J$.}
        \label{fig:thm:countMS-ACQ-shP}
    \end{figure}

    One shortcoming of the reduction above is that we have used a "schema" which depends on the problem instance. However, this was done for the sake of clarity of exposition, and we show how to modify the reduction above to use a \emph{bounded} "schema" $\sigma = \set{S, T, R}$, where $S$ is unary and $T,R$ are binary. For this, we replace everywhere (in the query and in the database) each appearance of $R_i(t,t')$ with a path from $t$ to $t'$ reading $T R^i$ (by using fresh $i$ new constants or variables). The proof follows the same lines.
\end{proof}

Observe how the hardness proof above uses, in a crucial way, the fact that "acyclic" queries may have many "atoms" with the same "relation name".
\AP
As we shall see in \Cref{ssec:counting-homs-ms-char}, in fact the hardness result no longer holds when one restricts to the class $\intro*\sjfCQ$ of ""self-join free"" "CQ"s, which consist of all "CQ"s which do not contain two distinct "atoms" with the same "relation name". 
However, as soon as we allow \emph{unions} of such queries, hardness is recovered, as shown next.

\begin{theorem}[{\cite[Proposition 13]{ourKR25}}]\label{prop:UsjfACQ-MS-hard}
    $\evalCountMS_{\UsjfACQ}$ is $\sP$-hard under polynomial-time "$1$-Turing reductions" for the class \AP$\intro*\UsjfACQ$ of unions of "self-join free" "acyclic" "CQs". This holds even on binary signatures and for queries without constants.
\end{theorem}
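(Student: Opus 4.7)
The plan is to give a polynomial-time $1$-Turing reduction from $\numPerMatch$, which is known to be $\sP$-hard even under such restricted reductions. Given a bipartite graph $G=(V_1\cup V_2,E)$ with $V_1=\{a_1,\ldots,a_n\}$ and $V_2=\{b_1,\ldots,b_n\}$, I take the binary schema $\{R_1,\ldots,R_n\}$ and define the constant-free database $\D_G\defeq\{R_i(a_i,b_j):\{a_i,b_j\}\in E\}$. The crucial feature of $\D_G$ is that $a_i$ is the unique first coordinate of any $R_i$-fact. The UCQ used in the reduction is
\[q_G\;\defeq\;q_{\textsf{ch}}\;\lor\;\bigvee_{1\le i<j\le n}q_{i,j}^{\textsf{cl}},\]
where $q_{\textsf{ch}}\defeq\exists x_1\cdots x_n y_1\cdots y_n\;\bigwedge_{i\in[n]} R_i(x_i,y_i)$ and $q_{i,j}^{\textsf{cl}}\defeq\exists x,x',y\;R_i(x,y)\land R_j(x',y)$. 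Every disjunct uses each relation name at most once (so is self-join free) and is acyclic: $q_{i,j}^{\textsf{cl}}$ trivially, and $q_{\textsf{ch}}$ because its atoms have pairwise disjoint variable sets, making the running-intersection condition vacuous for any tree over its atoms.

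The heart of the argument is to identify the minimal supports of $q_G$ in $\D_G$. Every minimal support of $q_G$ is a minimal support of one of its disjuncts, so I first compute those. Thanks to the uniqueness observation, every minimal support of $q_{\textsf{ch}}$ takes the form $M_\tau\defeq\{R_i(a_i,b_{\tau(i)}):i\in[n]\}$ for some (not necessarily injective) $\tau:[n]\to[n]$ with $\{a_i,b_{\tau(i)}\}\in E$ for all $i$; and every minimal support of $q_{i,j}^{\textsf{cl}}$ is a pair $\{R_i(a_i,b),R_j(a_j,b)\}$ with $b\in N_G(a_i)\cap N_G(a_j)$. Such a clash pair is always a minimal support of $q_G$, since no singleton satisfies any disjunct. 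The set $M_\tau$, on the other hand, is a minimal support of $q_G$ iff no proper subset satisfies another disjunct, which can only fail via some clash pair contained in $M_\tau$; equivalently, iff $b_{\tau(i)}\ne b_{\tau(j)}$ for all $i\ne j$, i.e.\ iff $\tau$ encodes a perfect matching of $G$.

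Combining both cases then yields
\[\countMS_{q_G}(\D_G)\;=\;|\{\text{perfect matchings of }G\}|\;+\;\sum_{1\le i<j\le n}|N_G(a_i)\cap N_G(a_j)|.\]
Since the second summand is computable in polynomial time directly from $G$, a single oracle call to $\evalCountMS_{\UsjfACQ}$ on input $(q_G,\D_G)$ recovers the number of perfect matchings of $G$, establishing the claimed hardness under polynomial-time $1$-Turing reductions. The main delicate step will be justifying the exhaustiveness of the case analysis of minimal supports of $q_G$, in particular ruling out unexpected interactions between the various $q_{i,j}^{\textsf{cl}}$ disjuncts or between them and $q_{\textsf{ch}}$; this hinges on the rigidity observation that each existential variable of every disjunct is forced to land on the intended $a_i$, because $a_i$ is the unique valid first coordinate of any $R_i$-fact in $\D_G$.
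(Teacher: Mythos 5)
Your proof is correct, and it takes a genuinely different route from the paper's. The paper adapts the Pichler--Skritek construction for counting answers to unions of acyclic full CQs: it builds a fixed five-constant database encoding an ``OR!'' gadget, expresses ``exactly one edge per $A$-vertex'' with a chain of such gadgets, and obtains the number of perfect matchings as $\countMS_{q_1}(\D)-\countMS_{q_2}(\D)$, where the single-oracle-call property comes from $q_1$ being a lone self-join free ACQ whose count is computable in polynomial time. You instead encode the graph in the database (using $n$ binary relation names, one per left vertex) and let the union itself do the filtering: the ``chain'' disjunct $q_{\textsf{ch}}$ generates one candidate support per choice function $\tau$, the ``clash'' disjuncts $q^{\textsf{cl}}_{i,j}$ destroy the minimality of exactly the non-injective candidates, and the surplus minimal supports they introduce are counted directly from $G$ in polynomial time. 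Your case analysis of the minimal supports is sound: the rigidity of the first coordinates pins every homomorphism of every disjunct, minimal supports of the union are always minimal supports of some disjunct, and the two families (clash pairs, injective $M_\tau$) are disjoint in all cases, so the identity $\countMS_{q_G}(\D_G)=\#\text{PM}(G)+\sum_{i<j}|N_G(a_i)\cap N_G(a_j)|$ holds and yields a polynomial-time $1$-Turing reduction. One cosmetic nit: for $n=2$ your stated equivalence ``$M_\tau$ is a minimal support iff $\tau$ is injective'' fails for non-injective $\tau$ (the resulting set \emph{is} a minimal support, it just coincides with a clash pair and is already counted in that bucket), so either handle $n\le 2$ separately or rephrase the equivalence as a disjoint classification; the final count is unaffected. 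Your construction is arguably more elementary and self-contained than the paper's, at the price of being tailored to this specific statement, whereas the paper's gadgetry is reused verbatim from, and stays aligned with, the counting-answers literature.
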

\begin{proof}
    This is an adaptation of the $\sP$-hardness proof of \cite[Theorem 6]{PichlerS13} for counting the number of answers of unions of acyclic full conjunctive queries (where `full' means that queries have no existentially quantified variables). The adaptation will also address some extra restrictions, namely: (i) accounting for counting "minimal supports" instead of answers, (ii) ensuring that the queries are "self-join free", and (iii) working on binary signatures instead of ternary. While this makes the reduction considerably more technical, we stress that the underlying idea remains that of Pichler and Skritek.

    \newcommand{\xorina}{\textit{OR!-in}^{\text{a}}}%
    \newcommand{\xorinb}{\textit{OR!-in}^{\text{b}}}%
    \newcommand{\xorout}{\textit{OR!-out}}%
    \newcommand{\zero}{\textit{Zero}}%
    \newcommand{\one}{\textit{One}}%
    \newcommand{\rneg}{\textit{Not}}%
    \newcommand{\qval}{\textit{Val}}%

    Let the graph $G=(V, E)$ be an instance of $\numPerMatch$. We will build a "database" $\+D$ and constant-free queries $q_1,q_2 \in \UsjfACQ$ such that the number of "perfect matchings" on $G$ is equal to $\countMS_{q_1}(\+D) - \countMS_{q_2}(\+D)$. In fact, $q_1$ is an ("acyclic", "self-join free") "CQ" rather than a "UCQ", and its evaluation is in polynomial time by \Cref{cor:countFMS-sjfACQ-ptime}. Hence, this is a "$1$-Turing reduction".
    
    Suppose $V$ is partitioned into $A \cup B$ and assume $A = \set{a_1, \dotsc, a_n}$, $B = \set{b_1, \dotsc, b_n}$, and $E \subseteq A \times B$ (observe that if $|A| \neq |B|$ there are no "perfect matchings").
    For $a_i \in A$, let $B_i = \set{b_j \in B : (a_i,b_j) \in E}$ and we define $A_j$ for $b_j \in B$ analogously.
    The signature for the queries $q_1,q_2$ has binary predicates $\xorina_{i,j}, \xorinb_{i,j},\xorout_{i,j}$, and unary predicates $\zero_{i,j}$, $\one_{i,j}$, $V_{i,j}$ for every $1 \leq i,j \leq n$.
    The database $\+D$ has 5 constants $\set{0,1,00,01,10}$ and consists of the following set of "facts" 
    \begin{align*}
      \xorina_{i,j}(0,00), \xorina_{i,j}(0,01),
      \xorina_{i,j}(1,10), \\ %
      \xorinb_{i,j}(0,00), \xorinb_{i,j}(0,10), 
      \xorinb_{i,j}(1,01), \\% \xorinb_{i,j}(1,11), \\
      \xorout_{i,j}(00,0), \xorout_{i,j}(01,1), \xorout_{i,j}(10,1), \\ %
      V_{i,j}(0), V_{i,j}(1), \one_k(1), \zero_k(0)
    \end{align*}
    for every $i,j,k \in [n]$.  
    The $\xorina, \xorinb, \xorout$ "facts" encode part of the `OR' truth table (with two `\emph{in}' arguments and one `\emph{out}' result) in the following sense: The evaluation of the  "CQ" $q_{i,j}^{\mathsf{OR!}}(x,y,z) \defeq \exists w ~ \xorina_{i,j}(x,w) \land \xorinb_{i,j}(y,w) \land \xorout_{i,j}(w,z)$ on $\+D$  is, precisely, the OR truth table without the $(1,1,1)$ line. %
    That is, whenever both arguments $x,y$ are `true', $q_{i,j}^{\mathsf{OR!}}$ doesn't encode the output `true' nor `false' (the constants 1/0) it simply does not hold, which can be thought of as a blocking error state.
   This is done in such a way that the $n$-ary `exists unique' operator $\exists!$ can be defined by chaining the binary operator $\mathsf{OR!}$ (in the same way that chaining the standard $\mathsf{OR}$ yields $\exists$).
    Formally, the evaluation of the "acyclic" and "self-join free" "CQ"\AP\phantomintro{\poneone}:%
    \begin{multline*}
      \poneone{i}(x_0, \dotsc,x_m) \defeq \exists z_1,\dotsc,z_m ~ q_{i,1}^{\mathsf{OR!}}(x_0,x_1,z_1) \land {}\\{}\land q_{i,2}^{\mathsf{OR!}}(z_1,x_2,z_2) \land 
      \dotsb \land q_{i,n}^{\mathsf{OR!}}(z_{n-1},x_m,z_m) \land \one_i(z_m)
    \end{multline*}
    for any $m\leq n$
     on $\+D$ returns all $m$-tuples $\vect a \in  \set{0,1}^m$ having exactly one component equal to 1.%
  
    Let $\+X$ be the set of variables $x_{i,j}$ such that $(a_i,b_j) \in E$, and let $\+X_{i,*} \defeq \set{x_{i,j} : (a_i,b_j) \in E}$, $\+X_{*,j} \defeq \set{x_{i,j} : (a_i,b_j) \in E}$.
    The valuations of these $\+X$ variables will denote subsets of $E$: if $x_{i,j}$ is assigned $1$ it is in the subset and otherwise (we will make sure that it is assigned $0$) it is not in the subset.
    To be able to recover the assignment form the "minimal supports", we will use the predicates $V_{i,j}$: $\qval \defeq \bigwedge_{x_{i,j} \in \+X} V_{i,j}(x_{i,j})$.
    We now define 
    \[
    q_1 \defeq \qval \land \bigwedge_{i\in [n]} \poneone{i}(\vect x_i).
    \]
    where each $\vect x_i$ is the tuple of variables containing $\+X_{i,*}$.
    Each "minimal support" of $q_1$ on $\+D$ corresponds to a set of edges $E' \subseteq E$ (given by the $V_{i,j}$-facts assigned $1$) such that each vertex from $A$ appears in exactly one edge from $E'$; further the number of "minimal supports" coincides with the number of such set of edges. Observe that $q_1$ is "self-join free" and "acyclic", hence $q_1 \in \UsjfACQ$.
    However, some of these $E'$ subsets may not be "perfect matchings", since there may be some $b_j$ which does not appear in any edge of $E'$.
    We shall now build a query $q_2$ whose number of "minimal supports" coincides with the number of such $E'$ subsets which are not "perfect matchings".

    For each $j$ let $p_j$ be the query stating that none of the variables from $\+X_{*,j}$ is assigned $1$, that is,
    \begin{align*}
      p_j &\defeq   \zero_1(x_{i_1,j}) \land \dotsb \land \zero_k(x_{i_k,j})
    \end{align*}
    where $\+X_{*,j} = \set{x_{i_1,j}, \dotsc, x_{i_k,j}}$.
    Note that each "minimal support" of $p_j \land q_1$ corresponds to such a set of edges $E' \subseteq E$ which additionally verifies that there are no edges incident to $b_j$. Further, observe that such formula is "self-join free". What is more, the number of "minimal supports" of 
    \[
        q_2 \defeq \bigvee_{j\in [n]} (p_j \land q_1)
    \]
    corresponds to the number of subsets $E' \subseteq E$ such that (i) each vertex from $A$ appears in exactly one edge from $E'$ and (ii) there exists at least one vertex $b_j \in B$ which is not adjacent to any $E'$. Observe that $q_2$ is "self-join free" since each $p_j$ is "self-join free" and there are no predicates in common between $p_j$ and $q_1$, and it is further "acyclic".
  
    It then follows that the number of "minimal supports" of $q_1$ minus the number of "minimal supports" of $q_2$ gives the number of "perfect matchings" of $G$. 
\end{proof}

We can obtain the following result as a direct consequence of \Cref{lem:wsmsvsmsc}, \Cref{lem:reversibleWSMS},  \Cref{thm:countMS-ACQ-shP}, and \Cref{prop:UsjfACQ-MS-hard}.

\begin{corollary}
    $\wShapley_{\ACQ}^{w}$ and $\wShapley_{\UsjfACQ}^{w}$ are $\sP$-hard for each "reversible" "weight function" $w$. In particular for $\msShapley_{\ACQ}$ and $\msShapley_{\UsjfACQ}$. %
\end{corollary}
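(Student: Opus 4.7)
The proof is essentially a direct chain of reductions combining previously established results, so my plan is to spell out the composition explicitly rather than introducing any new constructions.

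First, I would invoke \Cref{thm:countMS-ACQ-shP} and \Cref{prop:UsjfACQ-MS-hard}, which respectively establish that $\evalCountMS_{\ACQ}$ and $\evalCountMS_{\UsjfACQ}$ are $\sP$-hard under polynomial-time Turing reductions (in fact, under $1$-Turing reductions, though we do not need that strengthening here). Then, for any "reversible" "weight function" $w$, \Cref{lem:wsmsvsmsc} provides the key reduction $\evalCountMS_{q} \polyrx \wShapley_{q}^{w}$ for every "Boolean" "monotone" query $q$. Applying this reduction uniformly over the classes $\ACQ$ and $\UsjfACQ$ immediately yields
\[
\evalCountMS_{\ACQ} \polyrx \wShapley_{\ACQ}^{w} \quad \text{and} \quad \evalCountMS_{\UsjfACQ} \polyrx \wShapley_{\UsjfACQ}^{w},
\]
from which the $\sP$-hardness of both $\wShapley_{\ACQ}^{w}$ and $\wShapley_{\UsjfACQ}^{w}$ follows by transitivity of polynomial-time Turing reductions.

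For the second part of the statement, concerning $\msShapley_{\ACQ}$ and $\msShapley_{\UsjfACQ}$, I would note that \Cref{lem:reversibleWSMS} establishes that the family $\MSscorefun$ (along with $\Sscorefun$ and $\SHARPscorefun$) is "reversible". Hence, $\msShapley$ is a particular instance of $\wShapley^{w}$ for a "reversible" $w$, and the general result applies directly.

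There is no real obstacle here: the corollary is a straightforward packaging of the preceding hardness theorems with the reversibility machinery. The only point worth stating carefully is that the reduction from $\evalCountMS_{q}$ to $\wShapley_{q}^{w}$ guaranteed by \Cref{lem:wsmsvsmsc} is uniform in $q$, so the hardness transfers at the level of query classes and not merely of individual queries.
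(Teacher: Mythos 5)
Your proof is correct and follows exactly the route the paper intends: the paper states this corollary as a direct consequence of \Cref{lem:wsmsvsmsc}, \Cref{lem:reversibleWSMS}, \Cref{thm:countMS-ACQ-shP}, and \Cref{prop:UsjfACQ-MS-hard}, which is precisely the chain you spell out. Your remark about the uniformity of the reduction in $q$ is a sensible point to make explicit, and it does hold since reversibility is defined uniformly over queries and databases.
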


  \subsection{Counting Minimal Supports via Counting Homomorphisms}
  \label{ssec:counting-homs-ms-char}
  \AP In order to identify some tractable cases, we will turn our attention to queries for which counting the number of "homomorphisms" coincides with counting the number of "minimal supports".
\AP To this end, recall that 
$\countAns_q$ is the numeric query which counts the number of 
"homomorphisms" $q\homto \D$. We will 
use $\intro*\evalCountAns_{\class}$ to refer to the associated 
evaluation problem for a given class $\class$ of "CQs".

We observe that for every "CQ" $q$ and database $\D$, each "homomorphism" $h : q \homto \D$ \AP""induces@@support"" the "support" $\set{h(\alpha) : \alpha \text{ "atom" of } q}$, hence $\countMS_{q}(\D) \leq \countAns_{q}(\D)$.

   \begin{lemma}
       For every "CQ" $q$ and "database" $D$, $\countMS_{q}(D) \leq \countAns_{q}(D)$.
   \end{lemma}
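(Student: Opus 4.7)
The plan is to construct an injection from $\Minsups_q(D)$ into the set of "homomorphisms" from $q$ to $D$. For every $M \in \Minsups_q(D)$, we have $M \models q$, so by the homomorphism-based "semantics" of "CQs" there exists at least one "homomorphism" $h_M : q \homto M$ (viewed as a "homomorphism" $q \homto D$ via the inclusion $M \subseteq D$). Using the axiom of choice on the (finite) set $\Minsups_q(D)$, pick one such $h_M$ for each $M$.

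The key step is to verify that the map $M \mapsto h_M$ is injective. For this, observe that for every $M \in \Minsups_q(D)$ and every "homomorphism" $h : q \homto M$, the "induced@support" "support" $M' \defeq \{h(\alpha) : \alpha \in \atoms(q)\}$ is a subset of $M$ satisfying $M' \models q$ (witnessed by $h$ itself). By "minimality@minimal support" of $M$, we must have $M' = M$. In particular, $h_M$ has image exactly $M$, so $M$ can be recovered from $h_M$. Consequently, $M_1 \neq M_2$ implies $h_{M_1} \neq h_{M_2}$, so the map is injective. Counting then yields $\countMS_q(D) \leq \countAns_q(D)$, as required.
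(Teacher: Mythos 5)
Your proof is correct and follows essentially the same route as the paper's: both build an injection from minimal supports into homomorphisms by noting that a homomorphism into a minimal support $M$ must have image exactly $M$ (by minimality), so distinct minimal supports cannot share a homomorphism. Your write-up just spells out the details that the paper's one-line proof leaves implicit.
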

   \begin{proof}
       Each "minimal support" gives rise to a different "homomorphism" (possibly many, as in \Cref{ex:supports-vs-answers}), but two distinct "minimal supports" cannot be the image of the same "homomorphism".
   \end{proof}

   The hardness result of \Cref{thm:countMS-ACQ-shP} relies crucially in the use of queries having "atoms" with the same "relation name" (and thus not "self-join free"). We first observe that over classes of "self-join free" queries counting "homomorphisms" is the same as counting "minimal supports".
    
  \begin{remark}\label{rk:sjfCQ-MS-Ans}
    For every "self-join free" "CQ" $q$ and "database" $\D$, $\countAns_{q}(\D) = \countMS_{q}(\D) = \countFMS_{q}(k,\D)$, where $k = |\atoms(q)|$ (in particular, $\countFMS_{q}(k',\D) = 0$ for any $k' \neq k$).
  \end{remark}
  Since $\evalCountAns_{\sjfACQ}$ is tractable for the class \AP$\intro*\sjfACQ$ of "acyclic" "self-join free" "CQs"\phantomintro{sjfACQ} \cite[Theorem 1]{PichlerS13}, we obtain as a corollary:
  \begin{corollary}\label{cor:countFMS-sjfACQ-ptime}
    $\evalCountFMS_{\sjfACQ}$ is in polynomial time.
  \end{corollary}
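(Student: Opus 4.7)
The plan is to combine \Cref{rk:sjfCQ-MS-Ans} with the known tractability of counting homomorphisms for acyclic self-join free CQs. Given an instance $(q, k, \D)$ of $\evalCountFMS_{\sjfACQ}$, I would first compute $m \defeq |\atoms(q)|$ in linear time. By \Cref{rk:sjfCQ-MS-Ans}, if $k \neq m$ then $\countFMS_{q}(k, \D) = 0$ and we immediately return $0$. Otherwise, $k = m$ and the same remark gives $\countFMS_{q}(k, \D) = \countAns_{q}(\D)$, so it suffices to compute $\countAns_{q}(\D)$.

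Since $q \in \sjfACQ \subseteq \ACQ$, we can compute $\countAns_{q}(\D)$ in polynomial time by invoking the tractable algorithm of \cite[Theorem 1]{PichlerS13} for $\evalCountAns_{\ACQ}$. Returning this value completes the procedure, and the overall runtime is polynomial.

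There is essentially no obstacle here: both ingredients are already in place, and the argument amounts to observing that on self-join free queries every homomorphism lands on a distinct image of size exactly $|\atoms(q)|$, so homomorphism counts and (fixed-size) minimal support counts coincide up to the trivial case $k \neq |\atoms(q)|$.
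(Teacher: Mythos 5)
Your proof is correct and follows essentially the same route as the paper: the corollary is obtained there directly by combining \Cref{rk:sjfCQ-MS-Ans} (homomorphism counts equal fixed-size minimal-support counts for "self-join free" queries, with the count vanishing for $k \neq |\atoms(q)|$) with the tractability of counting "homomorphisms" for "acyclic" queries from \cite[Theorem 1]{PichlerS13}. Nothing is missing.
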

  This means that we can reuse the efficient divide-and-conquer algorithms exploiting the tree-like structure of queries that have been previously devised for counting "homomorphisms". Unfortunately, this tractability does not extend to \emph{unions} of "sjfACQs" as was shown in \Cref{prop:UsjfACQ-MS-hard}.

  Of course, tractability for counting "minimal supports" of "conjunctive queries" can be established not only for "acyclic" "self-join free" "CQs" but more generally  for any class $\+C$ of "CQs" with tractable $\evalCountAns_{\+C}$ problem, as soon as it verifies that the number of "homomorphisms" coincides with the number of "minimal supports" (as in \Cref{rk:sjfCQ-MS-Ans}). 
  But then: \emph{Which queries satisfy such bijection between "homomorphisms" and "minimal supports"?}
  
  We shall next prove that this bijection holds, in particular, for the class of queries having no two `unifiable' atoms, where two atoms $\alpha, \alpha'$ are \AP""unifiable"" if there is a function $f : \vars(\alpha) \cup \vars(\alpha') \to \Const$ such that $f(\alpha) = f(\alpha')$, where such function is called a ""unifier"". 
  For example, the "atoms" $R(x,a)$ and $R(b,x)$ (where $a,b$ are distinct "constants", and $x$ is a "variable") are not "unifiable" and neither are $S(x,y,x,y)$ and $S(z,z,a,b)$. On the other hand, $T(x,x,a)$ and $T(x,y,x)$ are "unifiable" via the "unifier" mapping both variables to $a$.
  A "CQ" is \reintro{unifiable} if it contains two distinct "unifiable" "atoms".
  In particular, no "self-join free" query is "unifiable". Observe that this is a tractable property, ensuring that all "induced@@support" "supports" are of equal size, and we further show that this class verifies the bijection between "homomorphisms" and "minimal supports".
  \begin{remark}\label{rk:unifiable-tractable}
      The problem of testing whether a "CQ" is "unifiable" is in polynomial time (\eg\ by using any unification algorithm, such as Robinson's \cite{robinson1965machine}).
  \end{remark}
  \begin{lemma}\label{lem:unifiable-size}
  A "CQ" $q$ is non-"unifiable" if, and only if, every "induced@@support" "support" of $q$ is of size $|\atoms(q)|$.
\end{lemma}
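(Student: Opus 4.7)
The plan is to prove the biconditional directly, observing that an induced support is the set $\{h(\alpha) : \alpha \in \atoms(q)\}$ for some homomorphism $h : q \homto \D$, so its size is strictly less than $|\atoms(q)|$ precisely when two distinct atoms collapse under $h$. With this reformulation, the equivalence becomes almost tautological once the definition of unifier is unpacked.

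For the forward direction, I would assume $q$ is non-unifiable and take any homomorphism $h : q \homto \D$ inducing a support $S$. If $|S| < |\atoms(q)|$, then by the pigeonhole principle there exist two distinct atoms $\alpha \neq \alpha'$ of $q$ with $h(\alpha) = h(\alpha')$. Restricting $h$ to $\vars(\alpha) \cup \vars(\alpha')$ yields a function into $\Const$ (the image is a fact, hence all terms are constants) which unifies $\alpha$ and $\alpha'$, contradicting non-unifiability. Hence $|S| = |\atoms(q)|$.

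For the converse, I would argue by contrapositive: assume $q$ is unifiable, so there are distinct atoms $\alpha, \alpha' \in \atoms(q)$ and a unifier $f : \vars(\alpha) \cup \vars(\alpha') \to \Const$ with $f(\alpha) = f(\alpha')$. Extend $f$ to the remaining variables of $q$ by mapping each to a fresh constant, and let $\D$ be the canonical database of $f(q)$, i.e. $\D \defeq \{f(\beta) : \beta \in \atoms(q)\}$. Then $f$ is a homomorphism $q \homto \D$ whose induced support is $\D$ itself, and since $f(\alpha) = f(\alpha')$ are a single fact in $\D$, this support has size at most $|\atoms(q)| - 1 < |\atoms(q)|$.

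No significant obstacle is anticipated; the only care needed is to confirm that in the forward direction the restriction of $h$ to $\vars(\alpha) \cup \vars(\alpha')$ is indeed a valid unifier in the sense of the paper's definition (\ie a function into $\Const$ with $f(\alpha) = f(\alpha')$), which follows because $h(\alpha)$ is a fact and hence all its components are constants, with the convention that homomorphisms are the identity on constants so that $f$ automatically respects any constants occurring in $\alpha$ or $\alpha'$.
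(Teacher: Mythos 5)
Your proposal is correct and follows essentially the same argument as the paper: left-to-right, a support smaller than $|\atoms(q)|$ forces two atoms to collapse under the homomorphism, whose restriction is a unifier; right-to-left, the unifier is extended to a homomorphism inducing a strictly smaller support. The extra details you supply (pigeonhole, fresh constants, the identity-on-constants convention) are exactly the ones the paper leaves implicit.
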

\begin{proof}
  From left to right, if there is a "homomorphism" $h: q \homto D$ "inducing@@support" a "support" with less than $|\atoms(q)|$ "facts", then there must be two "atoms" $\alpha,\beta \in \atoms(q)$ such that $h(\alpha)=h(\beta)$, and thus $\alpha,\beta$ are "unifiable", contradicting the hypothesis.
  
  From right to left, if there are two "atoms" "unifiable" via some "unifier" $f$, then $f$ can be extended to a "homomorphism" "inducing@@support" a "support" of size strictly smaller than $|\atoms(q)|$.
\end{proof}

\begin{proposition}\label{prop:nonunif-implies-minsups-eq-nrhoms}
  For any non-"unifiable" "CQ" $q$, the number of "homomorphisms" $q \homto D$ equals the number of "minimal supports" of $q$ in $D$.
\end{proposition}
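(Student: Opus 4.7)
The plan is to establish a bijection between the set of homomorphisms $q \homto D$ and $\Minsups{q}(D)$ via the map $\Phi(h) \defeq \set{h(\alpha) : \alpha \in \atoms(q)}$, from which the claimed equality of cardinalities follows immediately.

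First, I would verify that $\Phi$ is well-defined, that is, $\Phi(h) \in \Minsups{q}(D)$ for every $h$. By \Cref{lem:unifiable-size}, non-unifiability of $q$ guarantees $|\Phi(h)| = |\atoms(q)|$. Consequently, no proper subset $S' \subsetneq \Phi(h)$ can itself be a support, because any witnessing homomorphism $h'$ would give an image $\Phi(h') \subseteq S'$ of cardinality $|\atoms(q)| = |\Phi(h)|$, exceeding $|S'|$. Hence $\Phi(h)$ is minimal. Surjectivity is immediate: for $M \in \Minsups{q}(D)$, any witness $h : q \homto M$ has $\Phi(h) \subseteq M$ already supporting $q$, forcing $\Phi(h) = M$ by minimality.

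The main obstacle is injectivity. Given $\Phi(h) = \Phi(h')$, both maps induce bijections $\atoms(q) \to \Phi(h)$, which combine into a permutation $\pi$ of $\atoms(q)$ satisfying $h(\alpha) = h'(\pi(\alpha))$ for every atom $\alpha$. It suffices to show $\pi = \mathrm{id}$, since then $h(\alpha) = h'(\alpha)$ for all atoms, forcing $h = h'$ because every variable of $q$ appears in some atom. Towards a contradiction, assume $\pi$ contains a non-trivial cycle $\alpha_1 \to \alpha_2 \to \dotsb \to \alpha_k \to \alpha_1$. From the position-wise equations $h(\alpha_i[j]) = h'(\alpha_{i+1 \bmod k}[j])$, the plan is to construct a unifier $f : \vars(\alpha_1) \cup \vars(\alpha_2) \to \Const$ with $f(\alpha_1) = f(\alpha_2)$, contradicting non-unifiability of $q$. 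The delicate case, which I expect to be the crux, arises when $\alpha_1$ and $\alpha_2$ share variables: here one must trace each shared variable through the cycle equations to verify that the natural candidate -- setting $f \defeq h$ on $\vars(\alpha_1)$ and extending by $h'$ on $\vars(\alpha_2) \setminus \vars(\alpha_1)$ -- remains consistent at every position of $\alpha_2$, which in turn forces the required agreement between $h$ and $h'$ on the shared variables (or else yields a direct clash of constants).
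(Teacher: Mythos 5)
Your overall architecture matches the paper's: both arguments show that $h \mapsto \set{h(\alpha) : \alpha \in \atoms(q)}$ is a bijection onto the minimal supports, both dispose of well-definedness and surjectivity via \Cref{lem:unifiable-size}, and both reduce injectivity to showing that two distinct homomorphisms with the same image would force two distinct atoms of $q$ to be unifiable. The gap is in how you produce that unifier at the step you yourself flag as the crux.

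Your candidate --- $f \defeq h$ on $\vars(\alpha_1)$, extended by $h'$ on $\vars(\alpha_2)\setminus\vars(\alpha_1)$ --- is not the right map, and the fallback dichotomy (either $h$ and $h'$ agree on the shared variables, or there is a direct clash of constants) is false. Take $\alpha_1 = R(x,y)$, $\alpha_2 = R(y,x)$, $D=\set{R(a,b),R(b,a)}$, $h(x)=a$, $h(y)=b$, $h'(x)=b$, $h'(y)=a$: the cycle equations $h(\alpha_1)=h'(\alpha_2)$ and $h(\alpha_2)=h'(\alpha_1)$ hold, $h$ and $h'$ disagree on the shared variable $x$, there is no constant anywhere to clash, and your $f$ sends $\alpha_1$ to $R(a,b)$ but $\alpha_2$ to $R(b,a)$. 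The two atoms \emph{are} unifiable --- by sending both variables to a single fresh constant, an assignment which is neither $h$ nor $h'$. (This particular $q$ is of course unifiable, so it is no counterexample to the proposition; it shows that your route to the contradiction breaks down, since under the non-unifiability hypothesis you must exhibit \emph{some} unifier and your construction does not yield one.) The repair, which is exactly what the paper's proof supplies, is to take the equivalence relation on $\mterms(q)$ generated by identifying the terms occupying the same position in the paired atoms, prove by induction along these identifications (using the pairing equations in both directions) that any class containing a constant $c$ satisfies $h(t)=h'(t)=c$ for all its members, and then define the unifier to send each constant-free class to a \emph{fresh} constant and every other class to its unique constant. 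The fresh-constant step is what your sketch is missing. A secondary divergence: the paper does not argue cycle by cycle but concatenates all $n$ atoms into two atoms $R(\vect t_1 \dotsb \vect t_n)$ and $R(\vect t_{\pi(1)} \dotsb \vect t_{\pi(n)})$ over a fresh relation name and invokes the two-atom case once. If you keep your cycle decomposition, beware that for a cycle of length greater than two you only have the forward equations $h(\alpha_i)=h'(\alpha_{i+1})$ and not their reverses, so the constant-class claim above must be re-established in that asymmetric setting; it is not immediate, and this is precisely where the two-atom argument cannot be copied verbatim.
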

\begin{proof}
  Note that the statement trivially holds for any query having only one "atom".
  Suppose now that $q$ has two (non-"unifiable") "atoms", \ie $q=\exists \vect x ~ \alpha \land \beta$.
  By means of contradiction, let us suppose that there are two distinct "homomorphisms" $h_1,h_2 : q \to D$ "inducing@@support" the same "support". That is, $h_1(\alpha)=h_2(\beta)$ and $h_1(\beta)=h_2(\alpha)$. In particular, $\alpha$ and $\beta$ are over the same "relation name". We shall show that this implies that $\alpha,\beta$ are "unifiable".
  
  Let two "terms" $t,t' \in \qterms(q)$ be \AP""$1$-related"" if $\set{t,t'} = \set{\alpha[i],\beta[i]}$ for some $i$. Let two "terms" $t,t'$ be \reintro{$(n+1)$-related} if for some "term" $\hat t$ we have that $t,\hat t$ are "$1$-related" and $\hat t,t'$ are "$n$-related". We call two "terms" simply \reintro{related} if they are "$n$-related" for some $n$. 
  Let $\AP\intro*\RelatedTerms$ be the set of all pairs of "related" terms.
  
  \begin{claim}\label{claim:term-constant-coincide}
    Given a "term" $t$ and a "constant" $c$ such that $(t,c) \in \RelatedTerms$, we have $h_1(t)=h_2(t)=c$.
  \end{claim}
  \begin{nestedproof}
      We prove this by induction on $n$ where $t,c$ are "$n$-related".

      \proofcase{Base case.} 
      If $n=1$, then either (i) $\alpha[i]=c$ and $\beta[i]=t$ or (ii) $\alpha[i]=t$ and $\beta[i]=c$. Without any loss of generality let us suppose (i) holds. Since $h_1(\alpha)=h_2(\beta)$, we have $h_2(t)=c$. Since $h_1(\beta)=h_2(\alpha)$, we have $h_1(t)=c$.

      \proofcase{Inductive step.} 
      If $n>1$, then $t,\hat t$ are "$1$-related" and $\hat t,c$ are "$(n-1)$-related" for some $\hat t$. 
      Then we have, for some $i$, that either $\alpha[i] = t$ and $\beta[i]=\hat t$, or $\alpha[i] = \hat t$ and $\beta[i]=t$. Let us assume the former (\ie $\alpha[i] = t$ and $\beta[i]=\hat t$) without any loss of generality.
      Since $h_1(\alpha)=h_2(\beta)$ we have $h_1(t)=h_2(\hat t)$, 
      and since $h_1(\beta)=h_2(\alpha)$ we have $h_1(\hat t)=h_2(t)$.
      By inductive hypothesis on $\hat t,c$ we have that $h_1(\hat t)=h_2(\hat t)=c$. Hence, $h_1(t)=h_2(t)=c$.
  \end{nestedproof}
  
  For each equivalence class $\sim$ of $\RelatedTerms$ containing no "constants" (\ie only variables), consider a fresh constant $c_\sim$.
  We shall now define a "unifier" $f$ for $\alpha,\beta$, thus contradicting the non-"unifiable" hypothesis.
  For any "variable" $x$ belonging to an equivalence class $\sim$ of $\RelatedTerms$ containing no "constant", let $f(x) \defeq c_\sim$.
  For any other "variable" $y$ we define $f(y)\defeq h_1(y)$;
  observe that in this case, by the previous \Cref{claim:term-constant-coincide}, we must also have $h_1(y)=h_2(y)$.
  Since
   $h_1(\alpha) =h_2(\beta)$ and $h_1(\beta) = h_2(\alpha)$, 
   by the definition of $\RelatedTerms$, it follows that 
   $f(\alpha)=f(\beta)$ and $f(\beta)=f(\alpha)$: indeed the $i$-th term $t$ of $\alpha$ must be sent via $f$ to the same constant as the $i$-th term $t'$ of $\beta$ since $t,t'$ belong to the same equivalence class of $\RelatedTerms$.
  Thus, $f(\alpha)=f(\beta)$ concluding that $\alpha,\beta$ are "unifiable".

  \medskip
  
  Now suppose $q$ has more than two (pairwise non-"unifiable") "atoms": $q = \exists \vect x ~ \gamma_1\land \dotsb \land \gamma_n$ for $n>2$.
  For a "homomorphism" $h$ we shall write $h(\gamma_1 \dotsb \gamma_n)$ to denote $h(\gamma_1) \dotsb h(\gamma_n)$.
  Again, suppose by means of contradiction that there are two distinct "homomorphisms" $h_1,h_2$ such that $h_1(\gamma_1 \dotsb \gamma_n) = h_2(\gamma_{i_1} \dotsb \gamma_{i_n})$ for a non-identitary permutation $i_1, \dotsc, i_n$ of $\set{1,\dotsc, n}$. Here we are using that "induced@@support" "supports" have size $|\atoms(q)|$ by \Cref{lem:unifiable-size}. Note that the permutation cannot be the identity since $\bigwedge_i h_1(\gamma_i) = h_2(\gamma_i)$ would imply $h_1=h_2$.
  Assuming the naming $\gamma_i = R_i(\vect t_i)$, consider two new "atoms" $\alpha=R(\vect t_1 \dotsc \vect t_n)$ and $\beta=R(\vect t_{i_1} \dotsc \vect t_{i_n})$ over a fresh "relation name" $R$ of the corresponding arity (\ie the sum of arities of all $\gamma_i$'s).
  Let $q'= \exists \vect x ~ \alpha \land \beta$. 
  Since $q'$ has two "atoms" and $h_1(\alpha\beta)=h_2(\beta\alpha)$ 
  it follows, by the two-"atom" case analyzed before, that $\alpha$ and $\beta$ must be "unifiable", that is, $f(\alpha)=f(\beta)$ for some "unifier" $f$.
  Since the permutation is non-identitary there must be some index $j$ such that $\gamma_j \neq \gamma_{i_j}$. We then have $f(\gamma_j)=f(\gamma_{i_j})$ and hence $\gamma_j,\gamma_{i_j}$ are "unifiable", in contradiction with our hypothesis.
\end{proof}

\begin{corollary}
  For any class $\+C$ of non-"unifiable" "CQs" with bounded "generalized hypertree width", $\evalCountFMS_{\+C}$ is in polynomial time, and so is $\wShapley_{\C}^{w}$ for any "tractable weight function" $w$. 
  In particular, $\msShapley_{\+C}$ is tractable.
\end{corollary}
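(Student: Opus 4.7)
The plan is to reduce counting "minimal supports" to counting "homomorphisms" and then invoke the known tractability of the latter on CQ classes of bounded "generalized hypertree width". Fix an arbitrary "tractable weight function" $w$ and let $q\in\+C$, $\D$ be an input to the WSMS-computation problem.

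First I would handle $\evalCountFMS_{\+C}$. By \Cref{lem:unifiable-size}, every "induced@@support" "support" of a non-"unifiable" "CQ" $q$ has exactly $|\atoms(q)|$ "facts"; in particular every "minimal support" does as well. Consequently $\countFMS_{q}(k,\D)=0$ unless $k=|\atoms(q)|$, in which case $\countFMS_{q}(k,\D)=\countMS_{q}(\D)$. By \Cref{prop:nonunif-implies-minsups-eq-nrhoms}, the latter equals $\countAns_{q}(\D)$. Since $\+C$ has bounded "generalized hypertree width", the problem of counting "homomorphisms" from a query in $\+C$ to a "database" is in polynomial time by \cite[Theorem 7]{PichlerS13}. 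This yields $\evalCountFMS_{\+C}\in\FP$.

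To lift this to $\wShapley_{\+C}^{w}$, I would apply \Cref{lem:wsmsvsmsc1}, which gives a polynomial-time reduction $\wShapley_{q}^{w}\polyrx\evalCountFMS_{q}$; combined with the previous paragraph and tractability of $w$, we obtain $\wShapley_{\+C}^{w}\in\FP$. Finally, for $\msShapley_{\+C}$ it suffices to observe that the "inverse weight function" $\mathrm{"invw"}:(k,n)\mapsto \nicefrac{1}{k}$ is obviously a "tractable weight function", so the result is an immediate instance of the general statement.

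There is no real obstacle: all the technical work has already been carried out in \Cref{lem:unifiable-size}, \Cref{prop:nonunif-implies-minsups-eq-nrhoms} and \Cref{lem:wsmsvsmsc1}, and the present corollary is essentially their assembly with the Pichler--Skritek counting-homomorphisms result. The only subtlety is to note that although \Cref{lem:wsmsvsmsc1} sums the values of $\countFMS_{q}(k,\D)$ over $k\in\set{1,\dots,|\D|}$, non-"unifiability" ensures that all but one of these summands vanish, so a single "homomorphism" count per input suffices.
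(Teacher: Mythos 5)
Your proof is correct and follows essentially the same route as the paper's: it combines \Cref{lem:unifiable-size} and \Cref{prop:nonunif-implies-minsups-eq-nrhoms} to reduce $\evalCountFMS_{\+C}$ to $\evalCountAns_{\+C}$, invokes \cite[Theorem 7]{PichlerS13} for tractability of the latter under bounded generalized hypertree width, and then applies \Cref{lem:wsmsvsmsc1}. Your additional observation that all but one of the summands $\countFMS_{q}(k,\D)$ vanish is a useful explicit detail that the paper leaves implicit.
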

\begin{proof}
  By \cite[Theorem 7]{PichlerS13} $\evalCountAns_{\+C}$ is in polynomial time, and by \Cref{prop:nonunif-implies-minsups-eq-nrhoms} with \Cref{lem:unifiable-size} $\evalCountFMS_{\+C}$ can be reduced to $\evalCountAns_{\+C}$.
  The tractability of $\wShapley_{\C}^{w}$ then follows from \Cref{lem:wsmsvsmsc1}.
\end{proof}
    
\begin{remark}
  The previous \Cref{prop:nonunif-implies-minsups-eq-nrhoms} cannot be extended to an ``if and only if''. For example, the query
    \[
      \exists x,y,z ~ R(x,y) \land R(x,z) \land S(y) \land T(z)
      \]
      has two "unifiable" "atoms" but every "homomorphism" "induces@@support" a distinct "minimal support".
    \end{remark}

    The previous remark then begs the question of when, exactly, the number of minimal supports coincides with the number of homomorphisms. In other words: Which is the largest subclass of "CQs" for which we can directly use the methods known for counting queries?
    We shall next give a characterization.
    \knowledgenewrobustcmd{\Tsubq}[1][q]{\cmdkl{\+T_{#1}}}
    Given a "CQ" $q$, let \AP$\intro*\Tsubq$ be the set of all equivalence relations on $\qterms(q)$ such that no two distinct "constants" are in the same equivalence class.
    \knowledgenewrobustcmd{\quotientq}[2]{{#1}\cmdkl{{/}}{#2}}
    For any equivalence relation $\sim$ from $\Tsubq$, let \AP$\intro*\quotientq q \sim$ be the result of replacing each variable $x$ with its congruence class $[x]_\sim$ in $q$, where each $[x]_\sim$ is considered as a "variable".\footnote{We assume that in $\quotientq q \sim$ we remove any "atom" repetition.} We call such query a \AP""quotient of $q$"".\footnote{The term `"reduct"' was used in the proof of \Cref{thm:rewritting-SQL}, but here we prefer to use `"quotient@quotient of $q$"' to emphasize that it is the quotient of an equivalence relation $\sim$.} Of course, it follows that $q \homto q/\sim$.
    We can now state a characterization of when the number of "homomorphisms" coincides with that of "minimal supports".
  
    \begin{proposition}[\textnormal{$\countMS_{q} = \countAns_{q}$} characterization]\label{prop:char-count-MS-eq-count-hom}
  The following are equivalent for every "CQ" $q$:
  \begin{enumerate}[(1)]
    \item \label{it:char-count-MS-eq-count-hom:1} the number of "minimal supports" of $q$ coincides with the number of "homomorphisms" on every "database" $\D$ (\ie $\countMS_{q}(\D) = \countAns_{q}(\D)$),
    \item \label{it:char-count-MS-eq-count-hom:2} all "quotients of $q$" are pairwise non-isomorphic "cores" with only one "automorphism@@cq".
  \end{enumerate}
\end{proposition}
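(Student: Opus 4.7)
The plan is to exploit the canonical factorization of homomorphisms through quotients. Concretely, any homomorphism $h : q \to \D$ factors uniquely as $q \stackrel{\pi_h}{\twoheadrightarrow} q/{\sim_h} \stackrel{g_h}{\hookrightarrow} \D$, where $\sim_h \in \Tsubq$ is defined by $t \sim_h t' \iff h(t)=h(t')$ and $g_h$ is injective on terms (and, consequently, on atoms). The image $h(q)$ equals $g_h(q/{\sim_h})$, a support of $q$ containing exactly $|\atoms(q/{\sim_h})|$ facts. This provides the bookkeeping backbone for both directions.

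For (\ref{it:char-count-MS-eq-count-hom:2}) $\Rightarrow$ (\ref{it:char-count-MS-eq-count-hom:1}), I will show that the map $h \mapsto h(q)$ is a bijection onto minimal supports. \emph{Injectivity}: if two homomorphisms $h_1,h_2$ share the same image $M$, then $g_{h_2}^{-1}\circ g_{h_1}: q/{\sim_{h_1}} \to q/{\sim_{h_2}}$ is a CQ homomorphism, and its inverse $g_{h_1}^{-1}\circ g_{h_2}$ is also a homomorphism; so $q/{\sim_{h_1}}$ and $q/{\sim_{h_2}}$ are isomorphic. The pairwise non-isomorphism hypothesis forces $\sim_{h_1} = \sim_{h_2}$, and the single-automorphism hypothesis then forces $g_{h_1}=g_{h_2}$, hence $h_1=h_2$. \emph{Minimality of every image}: suppose $M = h(q)$ strictly contains a support $M'$ witnessed by $h':q \to M'$. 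Composing $h'$ with $g_h^{-1}: M \to q/{\sim_h}$ yields a homomorphism $q \to q/{\sim_h}$ whose image is a proper sub-CQ $q^* \subsetneq q/{\sim_h}$; re-factorizing produces an injective homomorphism $\iota : q/{\sim_{h'}} \hookrightarrow q/{\sim_h}$ with proper image. If $\sim_h = \sim_{h'}$, then $\iota$ is a non-surjective endomorphism of $q/{\sim_h}$, contradicting the core hypothesis. Otherwise, $\sim_h \ne \sim_{h'}$ and a finer analysis (passing to the canonical databases of the quotients involved and repeating the injectivity argument) contradicts pairwise non-isomorphism of cores.

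For the converse (\ref{it:char-count-MS-eq-count-hom:1}) $\Rightarrow$ (\ref{it:char-count-MS-eq-count-hom:2}), I argue by contraposition, exhibiting a witness database for each failure mode. If some $q/{\sim}$ has a non-trivial automorphism $\alpha$, taking $\D$ to be the canonical database of $q/{\sim}$ produces at least two distinct homomorphisms $q\to\D$ (via $\pi_\sim$ and $\alpha\circ\pi_\sim$, identified with $\D$) sharing the same image. If two distinct quotients $q/{\sim_1}$ and $q/{\sim_2}$ are isomorphic, the canonical database of $q/{\sim_1}$ admits homomorphisms factoring through each quotient with coinciding image. Finally, if some $q/{\sim}$ is not a core, a non-surjective endomorphism of $q/{\sim}$ induces, in the canonical database of $q/{\sim}$, a homomorphism whose image is strictly contained in that of $\pi_\sim$, so the maximal image fails to be a minimal support even though it is the image of a homomorphism.

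The main obstacle is the second half of (\ref{it:char-count-MS-eq-count-hom:2})$\Rightarrow$(\ref{it:char-count-MS-eq-count-hom:1}): arguing that every image is minimal requires a delicate combination of the core property (handling the case $\sim_h = \sim_{h'}$) with the non-isomorphism property (handling the case $\sim_h \ne \sim_{h'}$, where no single quotient gives a direct contradiction and one must reason about the embedding $\iota$ between two distinct cores together with how it interacts with the canonical factorizations). Getting this case analysis clean, and ensuring that the constructed witnesses in the converse direction actually yield a strict inequality $\countMS_q(\D) \neq \countAns_q(\D)$ (rather than merely signalling an anomaly), will be the technical crux.
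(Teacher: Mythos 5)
Your proposal follows the same route as the paper's proof: factor each homomorphism $h$ as $q \twoheadrightarrow q/{\sim_h} \to \D$, show for (2)$\Rightarrow$(1) that $h \mapsto h(q)$ is a bijection onto the minimal supports, and prove (1)$\Rightarrow$(2) by contraposition using the canonical databases of the offending quotients as witnesses. Your injectivity argument (same image $\Rightarrow$ homomorphically equivalent quotients $\Rightarrow$ isomorphic cores $\Rightarrow$ equal quotients $\Rightarrow$ homomorphisms differing by an automorphism $\Rightarrow$ equal homomorphisms) is exactly the paper's, and your three witness constructions for the converse are also the paper's. The strictness worry you raise in the converse is harmless: every minimal support is the image of some homomorphism, so $\countMS_{q}(\D)$ is at most the number of distinct images, which is at most $\countAns_{q}(\D)$, and each of your three failure modes makes one of these two inequalities strict.

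The genuine problem is the one you flag yourself: minimality of every image in the case $\sim_h \neq \sim_{h'}$. What you have at that point is a term-injective homomorphism $\iota : q/{\sim_{h'}} \to q/{\sim_h}$ whose image is a proper subquery, i.e., a one-way homomorphism between two cores. This does \emph{not} contradict pairwise non-isomorphism: one core can embed properly into a non-isomorphic core without any tension (a single atom embeds into a two-atom core), and since you have no homomorphism in the reverse direction you cannot conclude that the two quotients are homomorphically equivalent, which is what the non-isomorphism hypothesis would need in order to bite. So ``repeating the injectivity argument'' does not close this case, and as written the (2)$\Rightarrow$(1) direction is incomplete. The paper does not route this step through non-isomorphism at all: it disposes of minimality up front in a single step, arguing that an induced support properly containing another support would force the corresponding quotient to fail the core property, and only then runs the hom-equivalence/isomorphism/automorphism chain for injectivity on the remaining, already-minimal, shared images. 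You should either supply an actual argument for your $\sim_h \neq \sim_{h'}$ case or restructure the minimality step so that it reduces to the core property alone, as the paper does.
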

\begin{proof}
  \proofcase{\ref{it:char-count-MS-eq-count-hom:2} $\Rightarrow$ \ref{it:char-count-MS-eq-count-hom:1}} First note that each "induced@@support" "support" must be "minimal@minimal support" as otherwise the corresponding quotient would not be a "core".
  Suppose now, by means of contradiction, that every "quotient of $q$" is a 
  "core" but that there are two distinct "homomorphisms" $h_1,h_2: q \homto \+D$ "inducing@@support" the same "(minimal) support@minimal support" $S$. These "homomorphism" are associated with equivalence relations $\sim_1,\sim_2$ of $\Tsubq$ in such a way that there are strong-onto injective "homomorphisms" $g_1 : \quotientq{q}{\sim_1} \homto S$ and $g_2 : \quotientq{q}{\sim_2} \homto S$.
  Hence, we also have $S \homto  \quotientq{q}{\sim_1}$ and $S \homto \quotientq{q}{\sim_2}$. Thus, $\quotientq{q}{\sim_1} \homto \quotientq{q}{\sim_2}$ and $\quotientq{q}{\sim_2} \homto \quotientq{q}{\sim_1}$. Since they are "cores" and "homomorphically" equivalent, $\quotientq{q}{\sim_1}, \quotientq{q}{\sim_2}$ are isomorphic and hence (since by hypothesis distinct "quotients of $q$" are non-isomorphic) we have  $\sim_1=\sim_2$. Hence, summing up, for $\sim \defeq \sim_1=\sim_2$, we how have that there are two distinct (strong onto, injective) "homomorphisms" $g_1,g_2 : \quotientq{q}{\sim} \homto \+D$ "inducing@@support" $S$. But this would imply that $g_1 \circ g_2^{-1} : \quotientq{q}{\sim} \homto \quotientq{q}{\sim}$ is a non-identitary "automorphism@@cq". This is not possible since $\quotientq{q}{\sim}$ has only the identity as "automorphism@@cq" by hypothesis.
  
  \medskip
  
  \proofcase{\ref{it:char-count-MS-eq-count-hom:1} $\Rightarrow$ \ref{it:char-count-MS-eq-count-hom:2}}
  Remember that for every "minimal support" $S$ of $q$ in a "database" $\D$ there is a "homomorphism" $q \homto \D$ "inducing@@support" $S$.
  
  If there is some "quotient@quotient of $q$" $\quotientq{q}{\sim}$ of $q$ which is not a "core", then when evaluated on the "canonical database" $\+D_\sim$ of $\quotientq{q}{\sim}$, there will be a "homomorphism" "inducing@@support" $\+D_\sim$ although it is not a "minimal support". This will make the number of "homomorphisms" strictly larger than the  number of "minimal supports".
  
  Otherwise, if there are two distinct $\sim_1, \sim_2$ in $\Tsubq$ such that $\quotientq{q}{\sim_1}$ and $\quotientq{q}{\sim_2}$ are isomorphic, this means that when $q$ is evaluated on the "canonical database" $\+D_1$ of $\quotientq{q}{\sim_1}$ there will be two distinct "homomorphisms" "inducing@@support" $\+D_1$, and hence the "minimal support" $\+D_1$ will be counted twice (at least), making the number of "homomorphisms" strictly larger than the number of "minimal supports".
  
  Finally, if there is a "quotient@quotient of $q$" $\quotientq{q}{\sim}$ of $q$ which is a "core" but has two distinct "automorphisms@@cq" $g_1,g_2 : \quotientq{q}{\sim} \homto \quotientq{q}{\sim}$, there will be at least two "homomorphisms" onto the "canonical database" $\+D_\sim$ of $\quotientq{q}{\sim}$ and hence the "minimal support" $\+D_\sim$ will be counted twice.
\end{proof}

\begin{example}
  For example, 
  \[
    q = \exists x,y,z ~ R(x,y) \land R(y,z) \land R(z,x) \land A(x) \land B(y) \land C(z)
    \] 
    satisfies the condition of \Cref{prop:char-count-MS-eq-count-hom}, but 
    \[
      q' = \exists x,y,z ~ R(x,y) \land R(y,z) \land R(z,x) \land A(x) \land B(y)
      \] 
      does not. Indeed, the "quotient@quotient of $q$" $\exists x,z ~ R(x,x) \land R(x,z) \land R(z,x) \land A(x) \land B(x)$ of $q'$ is not a "core" since the "atoms" $R(x,z), R(z,x)$ are redundant.
    \end{example}

    In principle, testing the condition of the lemma above seems like it should be $\coNP$-hard, since already testing whether a "CQ" is a "core" is a $\coNP$-complete problem. However, the property also requires that all "quotients@quotients of $q$" are also "cores", which may induce such a severe restriction as to trivialize the property.  We do not currently know what is the complexity of checking such property.
    \begin{open}
      What is the complexity of testing whether for a given "CQ" the number of "homomorphisms" equals the number of "minimal supports" for all "databases"?%
    \end{open}

    However, one sufficient and tractable condition to verify the condition of \Cref{prop:char-count-MS-eq-count-hom} is not to have two "atoms" that may be "unifiable" (\cf\ \Cref{rk:unifiable-tractable} and \Cref{prop:nonunif-implies-minsups-eq-nrhoms}).

    \subsection{Conjunctive Queries with Bounded Self-joins}
    \label{ssec:cq-bounded-sj}

We have seen in \Cref{thm:countMS-ACQ-shP} that in the presence of "self-joins" the problem of counting "minimal supports" remains $\sP$-hard even for "acyclic" queries.
However, the proof of \Cref{thm:countMS-ACQ-shP} uses an \emph{unbounded} number of "self-joins", raising the obvious question of what happens if the number of "self-joins" is bounded. 
We shall now show a positive result, namely,  that counting "minimal supports" is in polynomial time if we simultaneously bound the number of "self-joins" and restrict to acyclic or more generally tree-like "conjunctive queries".
We will further work with a weaker notion for counting "self-joins", where queries containing "atoms" such as $R(a,x)$ and $R(b,y)$ do not increase the ``"self-join width"''.
The measure of ``tree-like'' which we will use here is that of \AP""generalized hypertree width"", which we will not define here but can be found, \eg\ in \cite[Definition~3.1]{GottlobGLS16}. "ACQs" correspond to "CQs" of "generalized hypertree width" 1.

Two distinct atoms $\alpha,\beta$ of a "CQ" $q$ are \AP""mergeable""  if there are two "homomorphisms" $h_\alpha : \alpha \homto \D$, $h_\beta : \beta \homto \D$
to an arbitrary "database" $\D$  such that $h_\alpha(\alpha)=h_\beta(\beta)$ (in particular they must have the same "relation name"). 
Observe that "unifiable" "atoms" are "mergeable", but the converse does not always hold (\eg\ $R(a,x)$ and $R(x,b)$ are "mergeable" but not "unifiable").
An "atom" $\alpha$ is (individually) "mergeable" if it is "mergeable" with some other "atom" in $q$.
The \AP""self-join width"" of a "CQ" $q$ is the size of 
\AP
\begin{align*}
    \intro*\Unif \defeq \set{ t \in \qterms(\alpha) : \alpha \text{ is a "mergeable" "atom" of $q$}}. 
\end{align*} 
In particular, "self-join free" queries have "self-join width" 0, but a "CQ" like  $q\defeq \exists xy. R(c,x) \land R(c',y)$, where $c,c'$ are distinct constants, %
also has "self-join width" 0 since it has no two "mergeable" "atoms".
\begin{remark}["unifiable" vs.\ "mergeable"] \label{remark-sjwidth}
A  natural alternative would have been to use "unifiable" atoms as a basis for the "self-join width", \ie
to define the "self-join width" of a "CQ" $q$ to be the size of the set
\begin{align*}
\set{ t \in \qterms(\alpha) : \alpha \text{ is "unifiable" with some other "atom" of $q$}}.
\end{align*} 
As explained above, "unifiable" implies "mergeable", hence this alternative definition would yield a looser notion of "self-join width". However, we do not currently know how to prove the analog of the upcoming tractability result of \Cref{thm:bounded-selfjoin} under such alternative definition of "self-join width".
\end{remark}

\begin{lemma}
Computing the "self-join width" of a "CQ" is in polynomial time. A "CQ" $q$ is equivalent to a "CQ" of "self-join width" at most $k$ if, and only if, $\core(q)$ has "self-join width" at most $k$.
\end{lemma}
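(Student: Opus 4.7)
The first claim reduces to deciding "mergeability" of a single pair of "atoms" in polynomial time; the rest is mere enumeration. Given two "atoms" $\alpha = R(t_1,\ldots,t_n)$ and $\beta = R(s_1,\ldots,s_n)$ sharing a "relation name" (and taken variable-disjoint), they are "mergeable" iff one can build a "fact" $R(c_1,\ldots,c_n)$ to which both map, i.e.\ iff the following constraints on $c_1,\ldots,c_n$ are simultaneously satisfiable: $c_i$ equals the "constant" $t_i$ whenever $t_i$ is a "constant"; $c_i$ equals the "constant" $s_i$ whenever $s_i$ is a "constant"; $c_i = c_j$ whenever $t_i = t_j$ as "variables"; and $c_i = c_j$ whenever $s_i = s_j$ as "variables". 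Using union-find, form the equivalence relation on $\{1,\ldots,n\}$ generated by the last two families of constraints and declare $\alpha,\beta$ "mergeable" iff no resulting class receives two distinct "constants"; this runs in polynomial time. Iterating over all pairs of "atoms" of $q$ then yields the set of "mergeable" "atoms", and thus $\Unif$ and $|\Unif|$ are computed in polynomial time.

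For the equivalence claim, the ``if'' direction is immediate: $q$ and $\core(q)$ define the same Boolean function, so any "CQ" of "self-join width" at most $k$ witnessing the property for $\core(q)$ witnesses it for $q$ as well. For the ``only if'' direction, assume $q \equiv q'$ with $|\Unif[q']| \le k$. Since "cores" of "CQs" are unique up to isomorphism, $\core(q)$ is isomorphic to $\core(q')$, and $\core(q')$ can be chosen as a retract of $q'$, i.e.\ a sub-CQ obtained by discarding some "atoms" of $q'$. It thus suffices to show that removing "atoms" can never increase the "self-join width".

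The latter is a direct consequence of the fact that "mergeability" of a pair of "atoms" is a property of the two "atoms" alone, independent of the surrounding query: if $\alpha$ belongs to the sub-CQ and admits some "mergeable" partner $\beta$ in the sub-CQ, then $\beta$ also lies in $q'$, so $\alpha$ is a "mergeable" "atom" of $q'$ as well; and every "term" of $\alpha$ in the sub-CQ is already a "term" of $q'$. Hence $\Unif[\core(q')] \subseteq \Unif[q']$, yielding $|\Unif[\core(q)]| = |\Unif[\core(q')]| \le |\Unif[q']| \le k$. No step presents a genuine obstacle; the only point requiring care is invoking the standard fact that the "core" of $q'$ can be realized as a sub-CQ of $q'$ up to isomorphism, which ensures the monotonicity argument on $\Unif$ applies.
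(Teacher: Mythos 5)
Your proof is correct and follows essentially the same route as the paper's (which is only a two-sentence sketch): decide mergeability pairwise in polynomial time, and use the facts that the core is a subquery of any equivalent query up to isomorphism and that passing to a subquery cannot increase the self-join width. You merely supply more detail, in particular the union-find test for mergeability of a variable-disjoint pair of atoms and the explicit monotonicity of the set $\Unif$ under atom removal, both of which are sound.
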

\begin{proof}
For every pair of distinct atoms $\alpha,\alpha'$ one can check in linear time whether and $\alpha$ and $\alpha'$ are "mergeable".
Further, the "self-join width" of a "subquery" of $q$ is smaller or equal than the "self-join width" of $q$, and thus the "self-join width" of $\core(q)$ is the minimum "self-join width" among all queries equivalent to $q$.
\end{proof}
Observe that for queries with no "mergeable" "atoms" (\ie of "self-join width" $0$) the number of homomorphisms coincides with the number of "minimal supports" (by \Cref{prop:nonunif-implies-minsups-eq-nrhoms}, since "unifiable" implies "mergeable"). However, for queries "self-join width" higher than 0 this is no longer the case, and a "minimal support" may be "induced@@support" by several "homomorphisms". The main technical difficulty of the tractability result we will show next indeed resides in keeping track of how many "homomorphisms" "induce@@support" each "minimal support". The main result of this section is as follows.
\begin{theorem}\label{thm:bounded-selfjoin}
For any class $\+C$ of "CQs" having bounded "generalized hypertree width" and bounded "self-join width", $\evalCountFMS_{\+C}$ and $\evalCountMS_{\+C}$ are in polynomial time.%
\footnote{
We understand this statement as a ``promise problem'', that is, the input to the evaluation problem is a query from $\+C$, as opposed to some arbitrary string.
}
\end{theorem}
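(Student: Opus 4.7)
The plan is to adapt the tree-decomposition–based counting algorithm of \cite[Theorem~7]{PichlerS13} for "homomorphisms" on "CQs" of bounded "generalized hypertree width" to the task of counting "minimal supports", using the bounded "self-join width" to bridge the gap between the two quantities. Since $\countMS_q(\+D) = \sum_{j=0}^{|\atoms(q)|} \countFMS_q(j,\+D)$, it suffices to give a polynomial-time algorithm for $\evalCountFMS_{\+C}$.

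The first step is a structural lemma: if two "homomorphisms" $h_1,h_2 : q \homto \+D$ induce the same "support" and agree on $\Unif$, then $h_1=h_2$. Indeed, fixing $h|_{\Unif}$ determines the image of every "mergeable" "atom" (all of whose "terms" lie in $\Unif$); every "non-mergeable" atom then has a unique preimage "fact" in the induced "support" (otherwise two atoms sharing that image would themselves be "mergeable"); and these two observations together pin down $h$ on every "variable" of $q$. Consequently, distinct extensions of any $\tau : \Unif \to \const(\+D)$ to a full "homomorphism" $q \homto \+D$ induce distinct "supports".

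The algorithm then enumerates the at most $|\+D|^k$ assignments $\tau : \Unif \to \const(\+D)$, where $k$ bounds the "self-join width" over $\+C$. For each $\tau$, we substitute $\tau$ into $q$ to obtain $q_\tau$; replacing "variables" by "constants" cannot increase "generalized hypertree width", so the Pichler--Skritek DP computes in polynomial time the number $e(\tau)$ of extensions of $\tau$ to a "homomorphism" $q \homto \+D$. By the structural lemma each extension of a given $\tau$ induces a distinct "support" $S$, and moreover the part of $S$ coming from "mergeable" atoms — namely the image of "mergeable" atoms under $\tau$ — is determined by $\tau$ alone. Consequently, for every "minimal support" $S$, the number $N(\tau)$ of $\tau$'s whose extensions produce $S$ depends only on this "mergeable" part and is computable in $|\+D|^{k}$ time (by counting "homomorphisms" from the "subquery" made of "mergeable" atoms onto this part).

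The main obstacle is isolating, within $e(\tau)$, the count $m(\tau)$ of extensions producing a \emph{minimal} "support": such an extension $h$ violates minimality iff a distinct "homomorphism" — necessarily extending a different $\tau'$ inducing a strictly coarser atom-collapse on "mergeable" atoms — maps $q$ into a proper subset of $h(\atoms(q))$. Crucially, the bounded "self-join width" confines this interplay to the at most $k$ "terms" of $\Unif$, so the minimality check can be absorbed into the ghw DP via auxiliary state recording the atom-collapse pattern on $\Unif$, whose state space remains bounded. Once $m(\tau)$ is available, $\countFMS_q(j,\+D) = \sum_{\tau} m_j(\tau)/N(\tau)$, where $m_j(\tau)$ restricts $m(\tau)$ to extensions whose image has size $j$; summing over $j$ then yields $\countMS_q(\+D)$.
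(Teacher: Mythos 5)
Your proposal is correct in outline but takes a genuinely different route from the paper. The paper works at the level of \emph{queries}: it enumerates the constantly many equivalence relations $E$ on $\Unif$, builds for each a quotient query $\qE$ together with $\qEneq$ (inequality atoms forcing injectivity on $\Unif$), keeps only the cores that are maximal under homomorphism, and recovers $\countFMS_{q}(k,\D)=\sum_{\qE}\countAns_{\qEneq}(\D)/|\Auto{\qEneq}|$ --- minimality is handled by the maximality filter and multiplicity by automorphism groups. You work at the level of the \emph{data}, enumerating the polynomially many assignments $\tau:\Unif\to\const(\D)$; your structural lemma (a non-mergeable atom cannot share its image with any other atom, so two homomorphisms inducing the same support and agreeing on $\Unif$ coincide) is correct and plays the role of the paper's inequality-atom device. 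Both approaches run in time $|\D|^{O(k)}\cdot\mathrm{poly}$.

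Two steps need repair. The one you flag as ``the main obstacle'' is asserted rather than proved: ``absorbed into the DP via auxiliary state recording the atom-collapse pattern'' is not an argument. Fortunately your own non-mergeability reasoning closes it without touching the dynamic program. Any homomorphism $h'$ with $h'(\atoms(q))\subseteq h(\atoms(q))$ must agree with $h$ on every non-mergeable atom (else that atom would be mergeable with whichever atom shares its image), hence on every variable outside $\Unif$ and on the set $V_0$ of $\Unif$-variables occurring in non-mergeable atoms; writing $F_{\tau}$ for the set of facts that are images of mergeable atoms under $\tau$, the inclusion is strict iff $F_{h'|_{\Unif}}\subsetneq F_{\tau}$. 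So minimality of the induced support --- and also its size, which equals $|F_\tau|$ plus the number of non-mergeable atoms --- depends only on $\tau$, and is decidable by brute force over the $\le|\D|^{k}$ candidates $\tau'$ agreeing with $\tau$ on $V_0$; thus $m_j(\tau)$ is simply $e(\tau)$ or $0$, with no modification of the counting algorithm. Second, $N(\tau)$ must count only those $\tau'$ with $\tau'|_{V_0}=\tau|_{V_0}$ and $F_{\tau'}=F_\tau$ (the values on $V_0$ are pinned by the non-mergeable atoms), not all homomorphisms of the mergeable subquery onto $F_\tau$; with that correction each minimal support of size $j$ contributes exactly $N(S)\cdot\frac{1}{N(S)}=1$ to $\sum_\tau m_j(\tau)/N(\tau)$, and the proof goes through.
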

Before going into the details of the proof, we introduce \AP$\intro*\CQeqneq$ and \AP$\intro*\CQeq$, which allow for \AP""equality atoms"" $x = t$. We say that $q$ is a ""CQneq"", a ""CQeqneq"" or a ""CQeq"" when it is in the respective class. Observe, however, that "equality atoms" can always be removed by replacing every occurrence of $x$ with $t$ if $x$ is distinct from $t$, or simply removing the equality if $t$ is equal to $x$. We write 
\AP$\intro*\elimEq q$ to denote the canonical\footnote{Canonical in the sense of unique up to variable renaming.} equivalent "CQneq" (resp.\ "CQ") query for any "CQeqneq" (resp.\ "CQeq") $q$. 
Notice that a "CQeqneq" is "monotone", although not necessarily closed under "homomorphisms".
\AP The notion of ""homomorphism@@eq"" is adapted to "CQeqneq" queries by additionally requiring that $h(t) = h(t')$ if the atom $t = t'$ is in the query. The query resulting from replacing each variable $x$ of $q$ with $h(x)$ is called the ""homomorphic image"" of $q$ via $h$.
The "generalized hypertree width" of a "CQneq" query is defined just as for "CQs", by treating $\neq$ as any binary relation name.

\begin{remark}\label{rk:homimage:ghw}
For any two "Boolean" "CQ"s (or "CQeq", "CQneq", "CQeqneq") $q,q'$ if $h: q \homto q'$ and further $q'$ is the "homomorphic image"
of $q$ via $h$, then the "generalized hypertree width" of $q$ is smaller or equal to that of $q'$.
\end{remark}

Further, it has been shown that counting the number of "homomorphisms" for any class of bounded "generalized hypertree width" "CQs" is in polynomial time \cite[Theorem 7]{PichlerS13}, and it can easily be shown that the same holds for "CQneq"s.

\begin{corollary}[{of \cite[Theorem 7]{PichlerS13}}]\label{cor:boundgpw:cqneq}
For any class of "CQneq" queries of bounded "generalized hypertree width", counting the number of "homomorphisms" is in polynomial time.
\end{corollary}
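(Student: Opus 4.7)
\medskip

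The plan is a direct reduction: encode inequalities as a fresh explicit binary relation and invoke \cite[Theorem 7]{PichlerS13} on the resulting pure "CQ". Given an input "CQneq" query $q$ of "generalized hypertree width" at most $k$ and a "database" $\D$, first construct a "CQ" $q'$ by replacing every "inequality atom" $t \neq t'$ of $q$ with an atom $N(t,t')$ over a fresh binary "relation name" $N$. Since the paper defines $\ghw$ of a "CQneq" by treating $\neq$ as any binary "relation name", the "CQ" $q'$ has exactly the same "generalized hypertree width" as $q$, and in particular is still at most $k$.

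Next, materialize the inequality relation in the database. Let $A \defeq \const(\D) \cup \const(q)$ (if $\const(q) \not\subseteq \const(\D)$ one may separately handle the easy case of $0$ "homomorphisms" for those constants not in $\D$) and define
\[
  \D' \defeq \D \cup \bigl\{ N(a,b) : a,b \in A,\; a \neq b \bigr\}.
\]
Then $|\D'| \le |\D| + |A|^2$, which is polynomial in the combined size $|q| + |\D|$. By construction, the (query) "homomorphisms@@eq" $h: q \homto \D$ are in bijection with the "homomorphisms" $h: q' \homto \D'$: an assignment satisfies $N(h(t),h(t')) \in \D'$ iff $h(t) \neq h(t')$, which is exactly the "CQneq" semantics of $t \neq t'$. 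Hence $\countAns_{q}(\D) = \countAns_{q'}(\D')$.

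Finally, since $q'$ is a "CQ" of "generalized hypertree width" at most $k$ and $\D'$ has polynomial size, \cite[Theorem 7]{PichlerS13} computes $\countAns_{q'}(\D')$ in polynomial time in $|q'| + |\D'|$, hence in polynomial time in $|q|+|\D|$.

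No real obstacle is anticipated: the only point requiring care is ensuring that the $\ghw$ of $q'$ really equals that of $q$, which is immediate from the paper's convention that $\neq$ is counted as a binary "relation name" for the purposes of defining $\ghw$ on "CQneq". An alternative route via inclusion–exclusion over subsets of "inequality atoms" forced to be equalities would be cleaner in spirit but gives only $\bigl(2^{|q|}\bigr)$ terms, so it would not yield polynomial combined complexity; the materialization approach avoids this entirely.
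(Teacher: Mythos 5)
Your proof is correct and follows essentially the same route as the paper's one-line argument: materialize the $\neq$ relation over the active domain as an explicit binary relation and invoke \cite[Theorem 7]{PichlerS13}, with the "generalized hypertree width" preserved by the paper's convention of treating $\neq$ as an ordinary binary "relation name". Your write-up merely makes the bijection of "homomorphisms" and the polynomial size bound explicit.
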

\begin{proof}
It suffices to materialize a relation for $\neq$ applying the algorithm of \cite[Theorem 7]{PichlerS13}.
\end{proof}

We can now turn to the proof of the main result of the section, \Cref{thm:bounded-selfjoin}.
\begin{proof}[Proof of \Cref{thm:bounded-selfjoin}]

Let $q$ be a "CQ" from $\+C$. 
Consider \AP$\intro*\Equivs$ to be the set of all \AP""equivalence relations"" over $\Unif$ (\ie relations $E \subseteq \Unif \times \Unif$ being reflexive, symmetric and transitive) not containing any pair $(c,c')$ of distinct "constants".
For any $E \in \Equivs$, define
\AP$\intro*\qE \in \CQ$ as $\core(\elimEq{p})$, where $p \in \CQeq$ is defined as the result of adding to the query $q$ the "equality atoms" $t = t'$ for every $(t,t') \in E$.
Let \AP$\intro*\qEneq \in \CQneq$ be $\qE$ with some added "inequality atoms" $t \neq t'$ for 
$t,t' \in  \qterms(\qE)$ such that $(t,t') \in (\Unif \times \Unif) \setminus E$.      
Let $\intro*\Punif \defeq \set{\qE : E \in \Equivs \text{ s.t. } \qEneq[E'] \homto \qEneq \text{ implies } \qEneq \homto \qEneq[E'] \text{ for all } E' \in \Equivs}$. %

The proof of the statement can be shown by establishing the following facts:
\begin{enumerate}
\item \label{it:induced-size}
If a "homomorphism@@eq" $h:\qEneq \homto \D$ "induces@@support" a "support" $S$, then $|S|$ is the number of "atoms" of $\qE$ (shown in \Cref{cl:induced-size}).
\item \label{it:Punif-poly}
$\Punif$ can be computed in polynomial time (\Cref{cl:Punif-poly}).
\item \label{it:partition-MS}
The set of "minimal supports" of $q$ on $\D$ is partitioned into 
\[
\set{M \subseteq \D : M \text{ is "induced@@support" by a "homomorphism@@eq" } \qEneq \homto \D}_{\qE \in \Punif}
\qquad \text{(\Cref{cl:partition-MS}).}\]
\item \label{it:ans-automorphisms-ms}
For every $\qE \in \Punif$ and "database" $\D$, we have 
\[
\countAns_{\qEneq}(\+D) = |\Auto{\qEneq}| \cdot |\set{M : M \text{ "induced@@support" by "homomorphism@@eq" } \qEneq \homto \+D}|
\]
where $\Auto{\qEneq}$ is the set of "automorphisms@@cqneq" $\qEneq \homto \qEneq$ (\Cref{cl:ans-automorphisms-ms}).
\item \label{it:polytimeAns}
$|\Auto{\qEneq}|$ and $\countAns_{\qEneq}(\+D)$ can be computed in polynomial time (\Cref{cl:polytimeAns}).
\end{enumerate}

Once these claims are established, for every fixed $k$, $\countFMS_{q}(k,\D)$ can be computed in polynomial time. Indeed, let $P_k \subseteq \Punif$ be the set of all queries having exactly $k$ "atoms".
Then, by items \eqref{it:partition-MS} and \eqref{it:ans-automorphisms-ms} it follows that 
\[
\countFMS_{q}(k,\D) = \sum_{\qE \in P_k} \frac{\countAns_{\qEneq}(\+D)}{|\Auto{\qEneq}|}.
\]
which can be computed in polynomial time by item \eqref{it:polytimeAns}.

\medskip

We now turn our attention to the proofs of the items stated above.
We first remark that, since each $\qE$ of $\Punif$ is a "core", and the "core" is unique modulo \AP""isomorphism"" (\ie the relation of variable renaming), we can (and shall) assume that every two distinct queries from $\Punif$ are non-"isomorphic".

We introduce some notation and terminology used in the following proofs.
For any two Boolean "CQs" $p,p'$ we have that $p \homto p'$ iff $\D_{p'} \models p$, where $\D_{p'}$ is the "canonical database" of $p'$.
For an "atom" $\alpha = R(\vect t)$, let us write $\alpha[i]$ to denote the $i$-th element of $\vect t$ (a "constant" or "variable").
A \AP""subquery"" of a "CQ" $q$ is any query resulting from removing atoms from $q$ -- in particular $\core(q)$ is a "subquery" of $q$.

\begin{claim}[item \eqref{it:induced-size}]\label{cl:induced-size}
If a "homomorphism@@eq" $h:\qEneq \homto \D$ "induces@@support" a "support" $S$, then $|S|$ is the number of "atoms" of $\qE$.
\end{claim}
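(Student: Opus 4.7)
The plan is to show that $h$ is injective on the (relational) atoms of $\qE$; once this is established, the claim follows immediately because the inequality atoms of $\qEneq$ do not contribute to the "induced@@support" "support" $S$, so $|S| = |\atoms(\qE)|$.

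The first step is to record that $\qE$ is a "homomorphic image" of $q$. Composing the canonical quotient by $E$ (which produces $\elimEq{p}$) with a retraction of $\elimEq{p}$ onto its "core" $\qE$ yields a "homomorphism" $\pi : q \homto \qE$ such that every atom of $\qE$ equals $\pi(\alpha')$ for some $\alpha' \in \atoms(q)$, and position $i$ of $\pi(\alpha')$ is $\pi$ applied to position $i$ of $\alpha'$.

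The main step is then a proof by contradiction: assume two distinct atoms $\alpha, \beta \in \atoms(\qE)$ satisfy $h(\alpha) = h(\beta)$, and lift them to distinct $\alpha', \beta' \in \atoms(q)$ with $\pi(\alpha') = \alpha$ and $\pi(\beta') = \beta$. Then $h \circ \pi : q \homto \D$ sends both $\alpha'$ and $\beta'$ to the same "fact" of $\D$, so $\alpha'$ and $\beta'$ are "mergeable" atoms of $q$; by the definition of $\Unif$, every term occurring in $\alpha'$ or $\beta'$ lies in $\Unif$, and hence via $\pi$ every term of $\alpha$ and of $\beta$ is a $\pi$-image of a term of $\Unif$ (which is the sense in which the definition of $\qEneq$ treats terms of $\qterms(\qE)$ as ``coming from $\Unif$''). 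Comparing position by position, if some $i$ gave $\alpha[i] \neq \beta[i]$, then the pair would lie in $(\Unif \times \Unif) \setminus E$ (since distinct terms of $\qE$ are never $E$-related), so the inequality atom $\alpha[i] \neq \beta[i]$ would belong to $\qEneq$; but $h(\alpha) = h(\beta)$ forces $h(\alpha[i]) = h(\beta[i])$, contradicting that $h$ is a valid "homomorphism@@eq" of $\qEneq$. Hence $\alpha[i] = \beta[i]$ for every $i$, so $\alpha = \beta$ as atoms, the desired contradiction.

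The step I expect to be the most delicate is making the identification ``a term $t \in \qterms(\qE)$ lies in $\Unif$'' fully precise, since $\Unif$ is defined over $\qterms(q)$ but the definition of $\qEneq$ references terms of $\qterms(\qE)$. I will adopt the reading $t \in \Unif$ for $t \in \qterms(\qE)$ to mean $t = \pi(s)$ for some $s \in \Unif$, which is the only interpretation consistent with the way $\qE$ arises from $q$ as a quotient followed by a core retraction, and which aligns with the intended semantics of the $\qEneq$ construction used throughout.
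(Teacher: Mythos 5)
Your proof is correct and follows essentially the same route as the paper's: assume two distinct atoms of $\qE$ collapse under $h$, observe they must be "mergeable" so all their "terms" lie in $\Unif$, and derive a contradiction with the "inequality atoms" of $\qEneq$. The only difference is that you make explicit the lifting of atoms of $\qE$ back to atoms of $q$ via the quotient-plus-core-retraction map $\pi$ (a detail the paper's proof leaves implicit), which is a legitimate and careful way to justify the inclusion $\qterms(\alpha)\times\qterms(\beta)\subseteq\Unif\times\Unif$.
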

\begin{nestedproof}
By means of contradiction, take any two distinct "atoms" $\alpha,\beta$ of $\qE$ such that $h(\alpha)=h(\beta)$. Since $\alpha$ and $\beta$ are distinct, they are "mergeable" and there is some pair of distinct "terms"  $(t,t') \in \qterms(\alpha)\times\qterms(\beta) \subseteq \Unif \times \Unif$ such that $h(t)=h(t')$.
Remember that $\qE$ has no "equality atoms" and every pair of distinct terms from $\Unif$ (such as $t,t'$) present in $\qE$ must appear as an "inequality atom" in $\qEneq$. Hence, $t \neq t'$ is in $\qEneq$, contradicting the fact that $h(t)=h(t')$ for a "homomorphism" $h: \qEneq \homto \D$.
\end{nestedproof}

Let us remark next that the "generalized hypertree width" in the queries considered before preserves boundedness, which is necessary to establish tractability of $\Punif$.

\begin{remark}\label{rk:hatp:ghw}
If $q$ has "generalized hypertree width" $k$ and "self-join width" $\ell$, then $\elimEq{p}$ in the definition of $\qE$ cannot have "generalized hypertree width" greater than $k+\ell$.
    Indeed, if we add all variables of $\vars(\elimEq{p}) \cap \Unif$ to all bags of a tree decomposition of generalized hypertree width $k$, we end up with a decomposition of width at most $k+\ell$.
\end{remark}
\begin{remark}\label{rk:qE:ghw}
If $q$ has "generalized hypertree width" $k$ and "self-join width" $\ell$, then  $\qE$ cannot have "generalized hypertree width" greater than $k+\ell$. This is because $\qE$ is the "homomorphic image"\footnote{The "core" of a query is always a "homomorphic image" of the query.} of a query of width $\leq k+\ell$ 
by \Cref{rk:hatp:ghw}, and thus by \Cref{rk:homimage:ghw} the width does not increase.
Further $\qEneq$ cannot have "generalized hypertree width" greater than $k+2\ell$, by adding all variables of $\vars(\qE) \cap \Unif$ to all bags.
\end{remark}

\begin{claim}[item \eqref{it:Punif-poly}]\label{cl:Punif-poly}
$\Punif$ can be computed in polynomial time.
\end{claim}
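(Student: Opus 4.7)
The plan is to exploit the fact that the self-join width $\ell$ bounds $|\Unif|$, which in turn bounds $|\Equivs|$ by a constant (namely the Bell number $B(\ell)$). Thus we can enumerate all $E \in \Equivs$ in constant time, and $\Punif$ is obtained from this constant-sized set by (i) computing each $\qE$, (ii) filtering according to the defining condition, and (iii) retaining one representative per isomorphism class. It suffices to check that each of these three steps runs in polynomial time.

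For step (i), given $E$, constructing the intermediate query $p = q \cup \set{t=t' : (t,t') \in E} \in \CQeq$ and its equality-free version $\elimEq p$ is straightforward syntactic manipulation. By \Cref{rk:hatp:ghw}, $\elimEq p$ has "generalized hypertree width" bounded by $k + \ell$, where $k$ is the "generalized hypertree width" of $q$. Computing its "core" $\qE$ in polynomial time can then be done by the standard iterative procedure of trying to remove each "atom" and testing whether the resulting subquery still admits a "homomorphism" back from the original query. Since removing "atoms" can only decrease the "generalized hypertree width", every subquery encountered has "generalized hypertree width" bounded by $k+\ell$, hence each such "homomorphism" test reduces to an evaluation problem which is tractable by \cite[Theorem 7]{PichlerS13} (or its obvious variant for Boolean queries). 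Constructing $\qEneq$ from $\qE$ is then trivial, and by \Cref{rk:qE:ghw} the "generalized hypertree width" of $\qEneq$ is bounded by $k + 2\ell$.

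For step (ii), for each pair $(E,E') \in \Equivs \times \Equivs$, we need to test the two "homomorphisms" $\qEneq[E'] \homto \qEneq$ and $\qEneq \homto \qEneq[E']$. Since both queries are "CQneq"s of bounded "generalized hypertree width", each test is in polynomial time by \Cref{cor:boundgpw:cqneq} (applied to evaluating one query over the "canonical database" of the other, suitably equipped with the relation for $\neq$). As there are only constantly many pairs, filtering all $E$'s satisfying the defining condition of $\Punif$ takes polynomial time overall.

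For step (iii), after filtering we are left with a constant number of candidate "cores" of polynomial size; discarding "isomorphic" duplicates requires only constantly many isomorphism checks between pairs of constant-sized-equivalence-class "cores", each of which is polynomial (indeed, for two "cores" of the same size, "isomorphism" reduces to two "homomorphism" tests in both directions, which we have already shown to be polynomial). The main obstacle is really just the polynomial-time computability of the "core" for queries of bounded "generalized hypertree width"; once that is settled, all remaining steps reduce to a bounded number of tractable "homomorphism" tests.
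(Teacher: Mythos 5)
Your proposal is correct and follows essentially the same route as the paper: enumerate the constantly many $E \in \Equivs$, build $\elimEq{p}$ (bounded "generalized hypertree width" by \Cref{rk:hatp:ghw}), and compute its "core" by greedy atom removal with equivalence tests that reduce to tractable evaluation of a bounded-width query over a "canonical database". You are in fact somewhat more explicit than the paper about the remaining steps (the "homomorphism" tests for the defining condition of $\Punif$ via \Cref{cor:boundgpw:cqneq}, and discarding "isomorphic" duplicates), which the paper handles implicitly or in a separate remark.
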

\begin{nestedproof}
First observe that $\Equivs$ is of constant size since $\Unif$ is assumed to be of constant size.
Consider $E \in \Equivs$ and the corresponding $\qE  = \core(\elimEq{p})$ where, remember, $p$ is the addition of a constant number of equalities to $q$, and $\elimEq{p}$ is the result of making a constant number of variable replacements. Hence, $\elimEq{p}$ can be computed in polynomial time and has "generalized hypertree width" bounded by $k+\ell$ by \Cref{rk:hatp:ghw}.
This implies that $\core(\elimEq{p})$ can be computed in polynomial time: we can check if a "subquery" $p'$ of $\elimEq{p}$
is equivalent to $p$
by simply testing $\D_{p'} \models \elimEq{p}$, which can be done in polynomial time due the bounded "generalized hypertree width" of $\elimEq{p}$, where $\D_{p'}$ is the "canonical database" of $p'$ by \Cref{cor:boundgpw:cqneq}.
We can thus obtain $\core(\elimEq{p})$ by greedily removing atoms until no subquery is equivalent to the query, since the "core" ensures that any way of removing atoms leads to the same query.
\end{nestedproof}

The proofs of \Cref{cl:partition-MS,cl:ans-automorphisms-ms} will use the following argument.
\begin{claim}\label{cl:compose-homs}
If for some $E,E' \in \Equivs$ there are "homomorphisms@@eq" $\qEneq \homto \D$ and $\qEneq[E'] \homto \D$ 
"inducing@@support" the "supports" $S$ and $S'$ respectively with $S' \subseteq S$, then there is a "homomorphism@@eq" $\qEneq[E'] \homto \qEneq$.
\end{claim}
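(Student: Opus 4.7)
\begin{nestedproof}
The plan is to construct $g : \qEneq[E'] \homto \qEneq$ by composing $h'$ with a partial inverse of $h$ at the atom level, and then lifting the result to a map on terms. By \Cref{cl:induced-size}, both $h$ and $h'$ are injective on atoms, so they induce bijections $h : \atoms(\qE) \to S$ and $h' : \atoms(\qE[E']) \to S'$. Since $S' \inc S$, each atom $\beta \in \atoms(\qE[E'])$ determines a unique atom $\alpha_\beta \defeq h^{-1}(h'(\beta)) \in \atoms(\qE)$, which shares the same relation name and arity as $\beta$ and satisfies $h(\alpha_\beta[i]) = h'(\beta[i])$ for every position $i$. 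I would then define $g$ on terms by setting $g(c) \defeq c$ for every constant $c$, and $g(x) \defeq \alpha_\beta[i]$ whenever $x$ is a variable occurring at position $i$ of some atom $\beta$ of $\qE[E']$.

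The crux will be the well-definedness of $g$: whenever $\beta[i] = \beta'[j]$ (be it the same constant or the same variable occurrence), I must show that $\alpha_\beta[i] = \alpha_{\beta'}[j]$. From the construction one directly has $h(\alpha_\beta[i]) = h'(\beta[i]) = h'(\beta'[j]) = h(\alpha_{\beta'}[j])$, and the key tool will be the inequality atoms of $\qEneq$: any two distinct terms of $\qterms(\qE) \cap \Unif$ appear as an inequality atom of $\qEneq$, hence $h$ must send them to distinct constants of $\D$. A careful case analysis --- exploiting that representatives of $E$-classes containing a constant must be the constant itself (so constants in $q$ persist in $\qE$) and that $\qE$ is a \kl{core} (so no further term identification is possible without losing atoms) --- will then force $\alpha_\beta[i] = \alpha_{\beta'}[j]$ in all cases.

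Once well-definedness is established, preservation of relational atoms is immediate since $g(\beta) = \alpha_\beta \in \atoms(\qE)$ by construction. For inequality atoms, if $t \neq t'$ belongs to $\qEneq[E']$, then $t, t'$ are distinct elements of $\qterms(\qE[E']) \cap \Unif$, so $h'(t) \neq h'(t')$; this forces $g(t) \neq g(t')$ as distinct terms of $\qE$, and tracking mergeability through the bijection $\beta \mapsto \alpha_\beta$ will ensure $g(t), g(t') \in \qterms(\qE) \cap \Unif$, so that $g(t) \neq g(t')$ is indeed an inequality atom of $\qEneq$. The main obstacle will be the well-definedness analysis, where the naive appeal to the $\qEneq$ inequalities must be supplemented to handle terms lying outside $\Unif$ and constant occurrences, since these cases escape the separating power of the inequality atoms in a direct application.
\end{nestedproof}
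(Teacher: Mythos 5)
Your plan follows the same route as the paper's proof: define $g$ by composing $h'$ with an inverse of $h$, use the inequality atoms of $\qEneq$ to make $h$ injective on the $\Unif$-terms of $\qE$, and treat the terms outside $\Unif$ separately. But the step you defer --- the case analysis for terms outside $\Unif$ and for constant occurrences --- is where essentially all the content of the claim lives, and the hints you give for it (constants persist under quotienting; $\qE$ is a core) do not suffice to close it. The missing ingredient is the following fact, which the paper isolates and uses: a term of $\qE$ lying outside $\Unif$ occurs only in non-mergeable atoms of $q$; such atoms survive verbatim (up to renaming of their $\Unif$-terms) in every quotient $\qE[E'']$, because identifying one with another atom during quotienting or core retraction would witness mergeability; and under $S'\subseteq S$ a non-mergeable atom can only be matched with itself, since two atoms of $q$ sent to the same fact of $\D$ are by definition mergeable. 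Without this, neither the well-definedness of your occurrence-based $g$ nor the compatibility of its two defining clauses can be established.

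Relatedly, your claim that preservation of relational atoms is immediate from $g(\beta)=\alpha_\beta$ is not correct as stated. Since you also decree $g(c)=c$ on constants, the identity $g(\beta)=\alpha_\beta$ presupposes that whenever $\beta[i]$ is a constant $c$, the entry $\alpha_\beta[i]$ is also $c$ rather than a variable $y$ with $h(y)=c$; the inequality atoms alone do not exclude this, because they only separate pairs inside $\Unif\times\Unif$, and excluding it requires precisely the non-mergeability argument above (one checks that $y\in\Unif$ is ruled out by an inequality atom of $\qEneq$, and $y\notin\Unif$ is ruled out because the common preimage atom in $q$ would then carry $y$, not $c$, at position $i$ in both quotients). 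The paper avoids all of these occurrence-level conflicts by defining $g$ directly on terms --- $g(t)\defeq t$ for $t\notin\Unif$, and $g(t)\defeq$ the unique element of $h^{-1}(h'(t))\cap\Unif$ for $t\in\Unif$, uniqueness being exactly what the inequality atoms buy --- so that well-definedness is built in and only atom preservation remains to be checked. I would restructure your argument the same way; as written, the proposal is a sound plan whose decisive step is asserted rather than proved.
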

\begin{nestedproof}
Let $h:\qEneq \homto \D$ and $h':\qEneq[E'] \homto \D$ the "homomorphisms@@eq" of the statement.
We define $g: \qterms(\qEneq[E']) \to \qterms(\qEneq)$ as follows:
\begin{itemize}
\item For every $t \in \qterms(\qEneq[E']) \setminus \Unif$ we define $g(t) \defeq t$. Note that these are terms appearing exclusively in "non-mergeable" "atoms" of $q$, and hence they are present in every $\qE[E'']$-query for $E'' \in \Equivs$.
\item 
Now suppose $t \in \qterms(\qEneq[E']) \cap \Unif$. 
Let $\tilde h : q \homto \qE$ and $\tilde h': q \homto \qE[E']$.
Since $t$ is in the $\tilde h'$-image of some "mergeable" "atom" $\alpha$ of $q$, 
the $h$-preimage of $h'(\tilde h'(\alpha))$ must also be in the $\tilde h$-image of some "mergeable" "atom" $\beta$ of $q$ (otherwise, we would have $h(\tilde h(\beta)) = h'(\tilde h'(\alpha))$ contradicting that $\beta$ is "non-mergeable" in $q$).
Moreover, since we ask that all $\Unif$-variables are pairwise distinct in $h$, there is exactly one element $t'$ in $h^{-1}(h'(t))\cap \Unif$. 
We then define $g(t) \defeq t'$.
\end{itemize}
We finally show that $g$ is indeed a "homomorphism@@eq".
First, note that for every pair of distinct terms $t,t'$ in $\Unif \cap \vars(\qEneq[E'])$, we have $g(t) \neq g(t')$ (otherwise, $h'$ would not be a "homomorphism@@eq" $h': \qEneq[E'] \homto M'$). 
Hence, it suffices to show that $g$ is a "homomorphism@@eq" $g: \qE[E'] \homto \qE$.

Take an "atom" $\alpha$ of $\qE[E']$ and let us show that $g(\alpha)$ is in $\qE$.
Consider the "fact" $h'(\alpha)$ of $M'$, and let $\beta$ be an "atom" of $\qE$ such that $h(\beta)=h'(\alpha)$.
We show that $g(\alpha)=\beta$.

Let $x \in \qterms(\alpha)$. If $x \not\in \Unif$, then $\alpha$ is comes from a "non-mergeable" "atom" of $q$ and if $x = \alpha[i]$ then $g(x) = \beta[i]$ for every $i$. 
If $x \in \Unif$ and $x = \alpha[i]$, by definition of $g$ it follows that $g(x) \in \Unif$ and further that $g(\alpha[i]) = \beta[i]$ for every $i$.%
\end{nestedproof}

\begin{claim}[item \eqref{it:partition-MS}]\label{cl:partition-MS}
The set of "minimal supports" of $q$ in $\D$ is partitioned into 
\[
\set{M \subseteq \D : M \text{ is "induced@@support" by a "homomorphism@@eq" } \qEneq \homto \D}_{\qE \in \Punif}.
\]
\end{claim}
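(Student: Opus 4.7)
My plan is to verify the three ingredients of a partition: (i) every element of the claimed collection is a "minimal support" (well-definedness), (ii) every "minimal support" belongs to at least one set of the collection (existence), and (iii) the sets are pairwise disjoint. The three main tools are \Cref{cl:induced-size} (the size constraint on "induced@@support" "supports"), \Cref{cl:compose-homs} (which converts a containment of "induced@@support" "supports" into a "homomorphism@@eq" between the corresponding $\qEneq$-queries), and the defining property of $\Punif$.

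For (i), starting from $h : \qEneq \homto \D$ with $\qE \in \Punif$ "inducing@@support" some $M$, I will note that $M$ is a "support" via $q \homto \qE \homto M$ and argue minimality by contradiction. If some "support" $M' \subsetneq M$ existed, then any "homomorphism" $q \homto M'$ would factor through a "homomorphism@@eq" $\qEneq[E'] \homto M'$ (for the equivalence $E' \in \Equivs$ it induces on $\Unif$) whose "induced@@support" "support" $S'$ lies in $M$; \Cref{cl:compose-homs} would then give $\qEneq[E'] \homto \qEneq$, the $\Punif$ property of $\qE$ would reverse this to $\qEneq \homto \qEneq[E']$, and composing with $\qEneq[E'] \homto M'$ would produce a "homomorphism@@eq" $\qEneq \homto M'$ whose "induced@@support" "support" has size $|\atoms(\qE)| = |M|$ by \Cref{cl:induced-size}, which is impossible since it sits inside $M' \subsetneq M$.

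For (ii), given a "minimal support" $M$ and a "homomorphism" $h : q \homto M$, I will set $E_h \defeq \set{(t,t') \in \Unif \times \Unif : h(t) = h(t')} \in \Equivs$ and observe that $h$ factors through a "homomorphism@@eq" $g : \qEneq[E_h] \homto M$ which is bijective on atoms (by minimality of $M$), so $g$ "induces@@support" $M$. To certify $\qE[E_h] \in \Punif$, I will take any $f : \qEneq[E'] \homto \qEneq[E_h]$, note that $g \circ f$ "induces@@support" a "support" $S' \subseteq M$ of $q$ which minimality forces to equal $M$, and apply \Cref{cl:compose-homs} to $g$ and $g \circ f$ (both inducing $M$) to obtain $\qEneq[E_h] \homto \qEneq[E']$, as required by the $\Punif$ condition.

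For (iii), if two "homomorphisms@@eq" $h : \qEneq \homto \D$ and $h' : \qEneq[E'] \homto \D$ with $\qE, \qE[E'] \in \Punif$ both "induce@@support" the same $M$, two symmetric applications of \Cref{cl:compose-homs} yield mutual "homomorphisms@@eq" between $\qEneq$ and $\qEneq[E']$; by \Cref{cl:induced-size} both queries have $|M|$ "atoms", and since $\qE, \qE[E']$ are "cores" (any automorphism of $\qE$ trivially preserves the "inequality atoms" of $\qEneq$, so being a "core" transfers between the "CQ" and "CQneq" levels), the mutual "homomorphisms@@eq" must be isomorphisms, hence $\qE = \qE[E']$ by the convention on $\Punif$. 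The main obstacle will be step (ii): certifying $\qE[E_h] \in \Punif$ without having to explicitly enlarge $E_h$; the key is that minimality of $M$ automatically collapses every candidate "induced@@support" "support" down to $M$, which is exactly what is needed to feed \Cref{cl:compose-homs} in the correct direction.
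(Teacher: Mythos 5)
Your proposal is correct and follows essentially the same route as the paper's proof: minimality of each induced support via \Cref{cl:compose-homs} combined with the reversal guaranteed by membership in $\Punif$ and the size bound of \Cref{cl:induced-size}, and disjointness via \Cref{cl:compose-homs} together with the fact that homomorphically equivalent cores are isomorphic. The one substantive difference is your part (ii): the paper's proof of this claim only establishes that induced supports are minimal and that the parts are pairwise disjoint, leaving the coverage direction (every minimal support of $q$ is induced by some $\qEneq$ with $\qE \in \Punif$) implicit, whereas you prove it explicitly --- and correctly --- by extracting $E_h$ from a homomorphism $q \homto M$ and certifying $\qE[E_h] \in \Punif$ through \Cref{cl:compose-homs} applied to two homomorphisms inducing the same support $M$. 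That addition is welcome, since coverage is genuinely needed for the claim to be a partition of \emph{all} minimal supports and for the final counting formula.
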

\begin{nestedproof}
\proofcase{Supports are minimal.}
Let $M$ be a "support" "induced@@support" by a "homomorphism@@eq" $h: \qEneq \homto \D$. Let us show first that $M$ is a "minimal support" of $q$.
By means of contradiction, suppose it is not the case, and let $M' \subsetneq M$ be "induced@@support" by a "homomorphism@@eq" $h': q \homto M'$.
Let $E' \in \Equivs$ be the reflexive-symmetric-transitive closure\footnote{We assume that $h(c)=c$ for every "constant" $c$.} of $\set{(t,t') \in \Unif \times \Unif : h'(t) = h'(t')}$. It follows that $h': \qEneq[E'] \homto M'$.

We have that $\qEneq[E'] \homto \qEneq$ due to \Cref{cl:compose-homs}. By definition of $\Punif$ we must also have the converse "homomorphism@@eq" $\qEneq \homto \qEneq[E']$, which means that $M'$ is also "induced@@support" by some "homomorphism@@eq" $\qEneq \homto M'$. But this contradicts \Cref{cl:induced-size} since $|M'| < |M|$.

\proofcase{Supports are distinct.}
By means of contradiction, suppose there is a "minimal support" $M$ "induced@@support" by two distinct $\qE, \qE[E'] \in \Punif$. In other words, there are $h: \qEneq \homto \D$ and $h': \qEneq[E'] \homto \D$ "inducing@@support" the same "(minimal) support@minimal support" $M \subseteq \D$. By \Cref{cl:compose-homs} it follows that $\qEneq$ and $\qEneq[E']$ are "homomorphically" equivalent, and then so are $\qE$ and $\qE[E']$. Since $\qE$ and $\qE[E']$ are "cores" by definition, they must be "isomorphic", and this contradicts that they are distinct by definition of $\Punif$.
\end{nestedproof}

\begin{claim}[item \eqref{it:ans-automorphisms-ms}]\label{cl:ans-automorphisms-ms}
For every $\qE \in \Punif$ and "database" $\D$ we have 
\[
\countAns_{\qEneq}(\+D) = |\Auto{\qEneq}| \cdot |\set{M : M \text{ "induced@@support" by "homomorphism@@eq" } \qEneq \homto \+D}|.
\]
\end{claim}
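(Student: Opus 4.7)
The plan is to show the stated equality by partitioning $\set{h : h: \qEneq \homto \D}$ according to the "support" that each "homomorphism@@eq" induces, and then establishing that for each "induced@@support" "support" $M$ there are exactly $|\Auto{\qEneq}|$ "homomorphisms@@eq" inducing $M$. Fix such an $M$ together with one reference "homomorphism@@eq" $g : \qEneq \homto \D$ with $g(\qEneq) = M$. I will exhibit a bijection between $H_M \defeq \set{h : h : \qEneq \homto \D \text{ "inducing@@support" } M}$ and $\Auto{\qEneq}$, which clearly suffices.

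For the $(\leq)$ direction, send each $\sigma \in \Auto{\qEneq}$ to $g \circ \sigma$. Since $\qE$ is a "core", $\sigma$ permutes the atoms of $\qEneq$, so $(g\circ\sigma)(\qEneq) = g(\qEneq) = M$ and $g \circ \sigma \in H_M$. Distinct $\sigma \neq \sigma'$ yield distinct composites because $g$ is injective on $\Unif$-"terms" (forced by the "inequality atoms" of $\qEneq$) and, for a term $t$ outside $\Unif$, the values $\sigma(t)$ and $\sigma'(t)$ can be recovered from $g(\sigma(t))=g(\sigma'(t))$ via the bijection between atoms of $\qEneq$ and facts of $M$ given by \Cref{cl:induced-size}.

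For the $(\geq)$ direction, given any $h \in H_M$ I construct $\sigma_h \in \Auto{\qEneq}$ with $g \circ \sigma_h = h$. By \Cref{cl:induced-size}, both $g$ and $h$ realize bijections $\atoms(\qE) \to M$, so for each $\alpha \in \atoms(\qE)$ there is a unique $\beta = \beta(\alpha) \in \atoms(\qE)$ with $g(\beta) = h(\alpha)$. I then set $\sigma_h(\alpha[i]) \defeq \beta[i]$ for every position $i$. Once well-defined, $\sigma_h$ is by construction a "homomorphism@@eq" $\qE \homto \qE$ satisfying $g \circ \sigma_h = h$; it respects the "inequality atoms" of $\qEneq$ because $h$ does and the bijection $\alpha \mapsto \beta$ sends "mergeable" atoms to "mergeable" atoms, hence $\Unif$-terms to $\Unif$-terms. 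The two maps $\sigma \mapsto g \circ \sigma$ and $h \mapsto \sigma_h$ are mutually inverse, giving the desired bijection.

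The main obstacle is the well-definedness of $\sigma_h$: if the same term $t$ appears as $\alpha[i]$ and as $\alpha'[j]$ in $\qE$, I must verify that $\beta(\alpha)[i] = \beta(\alpha')[j]$. From $h(\alpha)[i] = h(t) = h(\alpha')[j]$ and the definition of $\beta$, one gets $g(\beta(\alpha)[i]) = g(\beta(\alpha')[j])$, so it suffices to show $g$ is injective on the pair $\set{\beta(\alpha)[i], \beta(\alpha')[j]}$. When both of these terms lie in $\Unif$, this is immediate from the "inequality atoms" of $\qEneq$. The subtle case is when one of them lies outside $\Unif$: such a term occurs only in "non-mergeable" atoms of $q$, and I will use (i) the fact that a "non-mergeable" atom of $q$ admits no other atom with the same $g$-image, together with (ii) $\qE$ being a "core", to conclude that both terms must in fact coincide. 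Once this step is handled, the remainder of the argument is routine verification.
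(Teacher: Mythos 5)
Your proposal is correct and follows the same overall strategy as the paper: partition the homomorphisms $\qEneq \homto \D$ by "induced@@support" "support" and show each "support" accounts for exactly $|\Auto{\qEneq}|$ of them. The difference is organizational. The paper gets the crucial direction (every $h$ inducing $M$ factors as $g$ composed with an endomorphism of $\qEneq$) for free from \Cref{cl:compose-homs}, a separately proven lemma covering the more general situation of two queries $\qEneq, \qEneq[E']$ with nested induced supports; the connecting map there is built term-by-term (the identity on terms outside $\Unif$, and preimage-chasing through $g$ on $\Unif$-terms), so the well-definedness issue you wrestle with never arises in that form. You instead build $\sigma_h$ atom-by-atom from the two atom-to-fact bijections supplied by \Cref{cl:induced-size}, which forces you to check consistency on terms shared between atoms --- your ``subtle case.'' That deferred step does go through with essentially the ingredients you name: a term outside $\Unif$ occurs only in atoms of $\qE$ descending from "non-mergeable" atoms of $q$, such atoms are left untouched in those positions by the $E$-quotient (which only identifies $\Unif$-terms), and any atom of $\qE$ sharing a $g$- or $h$-image with one of them must descend from the \emph{same} atom of $q$ (else the two would be "mergeable"); hence both terms in your pair equal the original term $t$ itself. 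Note that it is this rigidity of non-mergeable atoms, not the "core" property of $\qE$, that does the work in that case; the core property is what you correctly use in the other direction to ensure endomorphisms permute the atoms.
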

\begin{nestedproof}
Take any "support" $M$ "induced@@support" by $h: \qEneq \homto \D$.
Any "homomorphism@@eq" $h':\qEneq \homto \D$ "inducing@@support" $M$ is so that $h'$ is the composition of an "automorphism@@cqneq" of $\qEneq$ with $h$ due to \Cref{cl:compose-homs}. Hence, the set of "homomorphisms@@eq" "inducing@@support" $M$ is $\set{ h \circ g : g \in \Auto{\qEneq}}$. Hence there are $|\Auto{\qEneq}|$ many "homomorphisms@@eq" in $\qEneq(\D)$ for each "support" in $\set{M : M \text{ "induced@@support" by "homomorphism@@eq" } \qEneq \homto \+D}$. 
Further, two distinct "induced@@support" "supports" cannot arise from the same "homomorphism@@eq".
\end{nestedproof}

\begin{claim}[item \eqref{it:polytimeAns}]\label{cl:polytimeAns}
$|\Auto{\qEneq}|$ and $\countAns_{\qEneq}(\+D)$ can be computed in polynomial time.
\end{claim}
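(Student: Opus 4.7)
The plan is to handle the two quantities separately, with $\countAns_{\qEneq}(\+D)$ being immediate and $|\Auto{\qEneq}|$ requiring a small extra argument.

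For $\countAns_{\qEneq}(\+D)$, I would simply appeal to \Cref{rk:qE:ghw}, which bounds the "generalized hypertree width" of $\qEneq$ by $k + 2\ell$, where $k$ and $\ell$ are the bounds on "generalized hypertree width" and "self-join width" respectively for queries in $\+C$. Since these bounds are constant, \Cref{cor:boundgpw:cqneq} yields a polynomial-time algorithm for counting "homomorphisms" from $\qEneq$ to $\+D$.

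For $|\Auto{\qEneq}|$, the key observation is that counting automorphisms of $\qEneq$ can be reduced to counting homomorphisms $\qEneq \homto \D_{\qEneq}$ where $\D_{\qEneq}$ is the "canonical database" of $\qEneq$ (which has size polynomial in $|\qEneq|$). Concretely, I would argue that every endomorphism $h : \qEneq \homto \qEneq$ is in fact an automorphism: since $\qE$ is a "core" by definition, every endomorphism of $\qE$ is a bijection on $\qterms(\qE)$; ignoring the "inequality atoms", any homomorphism $h : \qEneq \homto \qEneq$ is such an endomorphism of $\qE$, and hence bijective. Conversely, any such bijective endomorphism automatically respects the "inequality atoms" of $\qEneq$, since bijectivity forces $h(t) \neq h(t')$ whenever $t \neq t'$. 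Therefore $\Auto{\qEneq}$ coincides with the set of "homomorphisms@@eq" $\qEneq \homto \qEneq$, which equals the number of "homomorphisms@@eq" $\qEneq \homto \D_{\qEneq}$ up to the canonical identification of frozen variables with constants.

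With this equality established, I can reuse the previous argument: $\D_{\qEneq}$ has polynomial size, so applying \Cref{cor:boundgpw:cqneq} to $\qEneq$ and $\D_{\qEneq}$ yields $|\Auto{\qEneq}|$ in polynomial time. The only mildly delicate step is the verification that bijective endomorphisms of $\qE$ lift to genuine homomorphisms of $\qEneq$ and vice versa, but this follows straight from the definition of $\qEneq$: the added "inequality atoms" only constrain pairs $(t,t')$ that are already distinct in $\qterms(\qE)$ (since $E$ is an equivalence relation containing the diagonal and $(t,t') \notin E$), so bijectivity is exactly the condition needed.
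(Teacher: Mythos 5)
Your proof is correct and follows essentially the same route as the paper: bounded "generalized hypertree width" of $\qEneq$ (via \Cref{rk:qE:ghw} and \Cref{cor:boundgpw:cqneq}) for $\countAns_{\qEneq}(\+D)$, and the identification of $|\Auto{\qEneq}|$ with the number of "homomorphisms@@eq" from $\qEneq$ to the "canonical database" for the automorphism count. Note only that the paper defines $\Auto{\qEneq}$ directly as the set of "homomorphisms@@eq" $\qEneq \homto \qEneq$, so your core/bijectivity argument, while correct, is not actually needed.
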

\begin{nestedproof}
Recall that the "generalized hypertree width" of $\qEneq$ cannot exceed $k+2\ell$ by \Cref{rk:qE:ghw}, where $k$ is the "generalized hypertree width" of $q$ and $\ell$ its "self-join width".
Hence, by \Cref{cor:boundgpw:cqneq} $\countAns_{\qEneq}(\+D)$ can be computed in polynomial time.
Finally, we observe that $|\Auto{\qEneq}|$ is equal to $\countAns_{\qEneq}(\+D_{\qE})$, where $\+D_{\qE}$ is the "canonical database" of $\qE$,\footnote{In this case, we materialize the $\neq$ relation and treat the $\neq$ relation of $\qE$ as any binary relation.} and thus $|\Auto{\qEneq}|$ can be also computed in polynomial time.
\end{nestedproof}

This concludes with the proof of \cref{it:induced-size,it:Punif-poly,it:partition-MS,it:ans-automorphisms-ms,it:polytimeAns}, and hence of the proof of \Cref{thm:bounded-selfjoin}.
\end{proof}

\section{Related Measures}
\label{sec:relwork}

We review some measures known in the literature and compare them with the "WSMS" family.

\subsection{Shapley Value of Inconsistency Measures}\label{ssec:inconsistency}
As mentioned earlier, the "drastic@@shapley" and the "MS Shapley" values are closely related to the previously studied drastic and 
\textit{MI} inconsistency measures, denoted by $I_d$ and $I_\mathsf{MI}$ in the literature %
\cite{hunterMeasureConflictsShapley2010,LivshitsK22}. Indeed, $I_d$ assigns 1 if the considered database (or knowledge base) is inconsistent, and 0 otherwise, similarly to how the "drastic@@shapley" value employs a wealth function that gives 1 or 0 depending on whether the query is satisfied. 
Likewise, $I_\mathsf{MI}$ counts the number of minimal inconsistent subsets of the given database (or knowledge base),
similarly to how the "MS Shapley" "wealth function" counts the number of minimal supports for the query. 

Livshits and Kimelfeld \cite{LivshitsK22} have considered three further inconsistency measures for databases, denoted by $I_\mathsf{P}$, $I_{\mathsf{MC}}$ and $I_{\mathsf{R}}$ respectively, defined in the absence of exogenous atoms. The first two measures translate into the following families of "wealth functions":
\begin{itemize}
   \item\AP $\intro*\Pscorefun$ is the analogue of $I_{\mathsf{P}}$: $\pscorefun_{q,\emptyset}(S)$ is the number of facts that are "relevant" to $q$ in $S$.
   \item\AP $\intro*\MCscorefun$ is the analogue of $I_{\mathsf{MC}}$: $\mcscorefun_{q,\emptyset}(S) = mc(S) + ss(S) -1$ where $mc(S)$ is the number of maximal subsets $\Delta\subseteq S$ s.t.\ $\Delta\not\models q$, and
$ss(S) \defeq |\set{s \in S : \{s\}\models q}|$.\footnote{The definition in \cite{LivshitsK22} is simpler than ours because in their context (functional dependencies), the analogue of such a ``singleton support'' cannot appear, hence $ss(S)=0$; the full definition can be found in \cite[Definition 2]{grantMeasuringConsistencyGain2011} ($I_M$ measure).}
\end{itemize}

We did not extend these definitions to add "exogenous" facts, because the following result proves that they are not suitable "responsibility measures" using our criteria, even without them.

\begin{proposition}%
   The Shapley values of $\Pscorefun$ and $\MCscorefun$ don’t satisfy \ref{MS:t}, even assuming $\Dx=\emptyset$.
\end{proposition}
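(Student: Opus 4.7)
The plan is to exhibit a single counterexample that witnesses the violation of \ref{MS:t} for both measures at once. I take $k=l=1$ with $A_1=\{\alpha,a\}$ and $B_1=\{\beta,b_1,b_2\}$ disjoint, so $|A_1|=2<3=|B_1|$ and the premise of \ref{MS:t} is met, demanding $\phi_{q,\D}(\alpha)>\phi_{q,\D}(\beta)$. Here $\D=A_1\cup B_1$ (with $\Dx=\emptyset$) has five facts, and $q$ is any "monotone" "query" whose "minimal supports" are exactly $A_1$ and $B_1$.

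For $\Pscorefun$, I first observe that $\pscorefun_{q,\D}(S)$ depends only on which of $A_1,B_1$ are contained in $S$, taking values $0,2,3,5$. The marginal $\pscorefun_{q,\D}(S\cup\{\alpha\})-\pscorefun_{q,\D}(S)$ equals $|A_1|=2$ exactly when $a\in S$ (and vanishes otherwise), while the $\beta$-marginal equals $|B_1|=3$ exactly when $\{b_1,b_2\}\subseteq S$. Grouping the subsets of $\D\setminus\{\alpha\}$ (resp.\ $\D\setminus\{\beta\}$) by cardinality and substituting in the closed-form \eqref{formul:sh}, both Shapley values evaluate to exactly $1$, so $\Sh_{\pscorefun_{q,\D}}(\alpha)=\Sh_{\pscorefun_{q,\D}}(\beta)$ and the strict inequality required by \ref{MS:t} fails.

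For $\MCscorefun$, the term $ss(S)$ is identically $0$ since no singleton is a "support" in our example. The count $mc(S)$ depends only on which minimal supports are contained in $S$: it equals $1$ when neither is, $|A_1|=2$ or $|B_1|=3$ when exactly one of them is, and $|A_1|\cdot|B_1|=6$ when both are (since a maximal non-support is then obtained by independently removing one element from each support). Hence $\mcscorefun_{q,\D}(S)\in\{0,1,2,5\}$. The crucial asymmetry appears when both supports are contained: adding $\alpha$ raises $\mcscorefun_{q,\D}$ from $2$ to $5$, whereas adding $\beta$ raises it from $1$ to $5$, because removing an element from the larger support $B_1$ yields more witnesses. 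Summing the marginal contributions yields $\Sh_{\mcscorefun_{q,\D}}(\alpha)=9/10$ and $\Sh_{\mcscorefun_{q,\D}}(\beta)=16/15>9/10$, again violating \ref{MS:t}. As a sanity check, by symmetry one expects $2\cdot(9/10)+3\cdot(16/15)=5=\mcscorefun_{q,\D}(\D)$, matching the efficiency axiom \ref{Sh:3}.

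The main obstacle is the bookkeeping for $\MCscorefun$: one must carefully enumerate maximal non-supports in the ``both contained'' case (verifying each such set is counted exactly once and is actually maximal), and correctly handle the $-1$ offset in the definition of $\mcscorefun_{q,\D}$, which makes the marginal of $\alpha$ when only $B_1\subseteq S$ differ from the marginal when both supports are contained. Once these cases are set up, the remaining arithmetic is a routine sum of marginals weighted by the $|S|!(n-|S|-1)!/n!$ coefficients.
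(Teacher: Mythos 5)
Your proposal is correct and follows essentially the same approach as the paper: instantiate the \ref{MS:t} scenario with $k=l=1$ and compute the Shapley values directly (the paper uses a $3$-fact example with $|A_1|=1,|B_1|=2$ for $\Pscorefun$ and the $|A_1|=2,|B_1|=3$ example for $\MCscorefun$, whereas you reuse the $5$-fact example for both, which works equally well since both measures fail there). I verified your arithmetic: $\Sh_{\pscorefun_{q,\D}}(\alpha)=\Sh_{\pscorefun_{q,\D}}(\beta)=1$ and $\Sh_{\mcscorefun_{q,\D}}(\alpha)=\nicefrac{9}{10}<\nicefrac{16}{15}=\Sh_{\mcscorefun_{q,\D}}(\beta)$; note that your values for $\MCscorefun$ pass the efficiency check \ref{Sh:3} (they sum to $\mcscorefun_{q,\D}(\D)=5$), unlike the value $\nicefrac{144}{120}$ stated in the paper's proof, and in fact you establish the stronger conclusion that $\MCscorefun$ strictly reverses the ordering demanded by \ref{MS:t} rather than merely failing to make it strict.
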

\begin{proof}
   The Shapley value of $\Pscorefun$ attributes the score $1$ to all facts in the example depicted in \Cref{fig:counter-ex}-(a) ($k=l=1,|A_1| = 1, |B_1|=2$ with the notations of \ref{MS:t}), and the one of $\MCscorefun$ attributes the score $\nicefrac{144}{120}$ to all facts in the variant defined by $k=l=1,|A_1| = 2, |B_1|=3$.
\end{proof}

We do extend the definition of the third to account for $\Dx$, in order to prove its suitability. %

\begin{itemize}
   \item\AP $\intro*\Rscorefun$ is the analogue of $I_{\mathsf{R}}$: $\rscorefun_{q,\D}(S)$ is the size of the
smallest $\Gamma\subseteq S$ such that $(S\setminus\Gamma)\cup\Dx\not\models q$.
\end{itemize}

   \begin{proposition}\label{prop:rsatmstest}
   The Shapley value of $\Rscorefun$ satisfies \ref{Shdb:1}, \ref{Shdb:2} and \ref{MS:t}.
   \end{proposition}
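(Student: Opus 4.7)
The plan is to prove the three properties in turn: \ref{Shdb:1} and \ref{Shdb:2} by direct manipulation of $\rscorefun$, and \ref{MS:t} via a structural decomposition of $\rscorefun$ followed by a probabilistic analysis of the Shapley value. Throughout I assume $\Dx \not\models q$, the complementary case being covered by the natural convention $\rscorefun \equiv 0$ (analogous to $\Dscorefun$). A key auxiliary observation is that $\rscorefun$ is monotone in $\subseteq$ and that $\rscorefun(S) \le \rscorefun(S \cup \{\alpha\}) \le \rscorefun(S) + 1$ (the upper bound via $\Gamma^* \cup \{\alpha\}$ from an optimal $\Gamma^*$ for $S$), so every marginal contribution to the Shapley value lies in $\{0,1\}$ and in particular is non-negative.

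For \ref{Shdb:1}, since $\rscorefun_{q,\D}$ depends on $q$ only through its semantics, $q \equiv q'$ gives $\rscorefun_{q,\D} = \rscorefun_{q',\D}$, and it suffices to prove symmetry of $\alpha$ and $\beta$ in this single wealth function and then appeal to axiom \ref{Sh:1}. To this end, I split the minimisation defining $\rscorefun(T \cup \{\alpha\})$ according to whether $\alpha$ belongs to the removal set, yielding $\rscorefun(T \cup \{\alpha\}) = \min\bigl(h_\alpha(T),\, 1 + \rscorefun(T)\bigr)$ where $h_\alpha(T) \defeq \min\{|\Gamma| : \Gamma \subseteq T,\ (T \setminus \Gamma) \cup \{\alpha\} \cup \Dx \not\models q\}$, and the satisfiability hypothesis of \ref{Shdb:1} directly gives $h_\alpha = h_\beta$ on $T \subseteq \D \setminus \{\alpha,\beta\}$. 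For \ref{Shdb:2}, if $\alpha$ is irrelevant then no minimal support contains it, so the min-hitting-set problems for $S$ and $S \cup \{\alpha\}$ coincide and $\Sh(\alpha) = 0$. Conversely, if $\alpha \in M$ for some $M \in \Minsups q(\D)$, taking $S = (M \cap \Dn) \setminus \{\alpha\}$ gives $\rscorefun(S) = 0$ (by minimality of $M$, no minimal support fits in $M \setminus \{\alpha\}$) and $\rscorefun(S \cup \{\alpha\}) \ge 1$ (since $M$ now fits); this strictly positive marginal, together with the non-negativity of all others, forces $\Sh(\alpha) > 0$.

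For \ref{MS:t}, I first exploit the structural hypotheses to decompose $\rscorefun$ additively. Because the $A_i$'s share only $\alpha$, the $B_j$'s share only $\beta$, and $A_i \cap B_j = \emptyset$, for every $S \subseteq \D$ the minimum hitting set splits across the two families and evaluates to $\rscorefun(S) = \phi_A(S) + \phi_B(S)$ with $\phi_A(S) \defeq [\exists i,\ A_i \subseteq S]$ and $\phi_B(S) \defeq [\exists j,\ B_j \subseteq S]$. Linearity \ref{Sh:4} of the Shapley value together with the fact that $\alpha$ is null for $\phi_B$ (and $\beta$ null for $\phi_A$) then reduces the goal to $\Sh_{\phi_A}(\alpha) > \Sh_{\phi_B}(\beta)$.

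The main obstacle is to establish this strict inequality cleanly, and for this I use the probabilistic interpretation of the Shapley value: $\Sh_{\phi_A}(\alpha)$ equals the probability that $\alpha$ appears last among some $A_i$ in a uniformly random ordering of $\D$. Assigning i.i.d.\ $\mathrm{Unif}(0,1)$ timestamps to every fact and conditioning on the timestamp of $\alpha$, this probability rewrites in closed form as $\Sh_{\phi_A}(\alpha) = 1 - \int_0^1 \prod_{i=1}^{k}\bigl(1 - t^{|A_i|-1}\bigr)\, dt$, with an analogous expression for $\Sh_{\phi_B}(\beta)$. The strict inequality then follows by a pointwise comparison of the two integrands on $(0,1)$: $|A_i| \le |B_i|$ for $i \le l$ gives $1 - t^{|A_i|-1} \le 1 - t^{|B_i|-1}$; any additional factors from $k > l$ are each at most $1$; and the strictness hypothesis of \ref{MS:t} (either $k > l$, contributing a factor strictly less than $1$ on $(0,1)$, or $|A_i| < |B_i|$ for some $i$) makes the inequality strict on a set of positive measure, which survives integration. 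A degenerate case with $|A_i| = 1$ must be handled separately: minimality of supports then forces $k = 1$, and the strictness hypothesis degenerates to $l = 0$, trivializing the comparison.
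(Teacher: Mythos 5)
Your handling of \ref{Shdb:1} and of the ``irrelevant $\Rightarrow$ 0'' half of \ref{Shdb:2} is correct and essentially the paper's argument (show the relevant wealth-function values coincide, then invoke \ref{Sh:1} resp.\ \ref{Sh:2}), just with the split $\rscorefun_{q,\D}(T\cup\{\alpha\})=\min(h_\alpha(T),1+\rscorefun_{q,\D}(T))$ made explicit.

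The gap is in the ``relevant $\Rightarrow$ positive'' half of \ref{Shdb:2}. For $S=(M\cap\Dn)\setminus\{\alpha\}$ you claim $\rscorefun_{q,\D}(S)=0$ ``by minimality of $M$'', but $\rscorefun_{q,\D}(S)=0$ requires $(M\setminus\{\alpha\})\cup\Dx\not\models q$, and minimality of $M$ only excludes supports contained in $M\setminus\{\alpha\}$ itself, not supports combining facts of $M\setminus\{\alpha\}$ with exogenous facts outside $M$. Concretely, take $\Dn=\{A(1),B(1)\}$, $\Dx=\{C(1)\}$ and $q=\exists x\, (A(x)\wedge B(x))\ \vee\ \exists x\,(B(x)\wedge C(x))$: here $\Dx\not\models q$, the fact $\alpha=A(1)$ is "relevant" (it lies in the minimal support $\{A(1),B(1)\}$), yet $\rscorefun_{q,\D}(\emptyset)=\rscorefun_{q,\D}(\{A(1)\})=0$ and $\rscorefun_{q,\D}(\{B(1)\})=\rscorefun_{q,\D}(\{A(1),B(1)\})=1$, so every marginal contribution of $\alpha$ is $0$ and $\Sh_{\rscorefun_{q,\D}}(\alpha)=0$. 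Thus not only does your chosen coalition fail: no coalition can witness positivity in this instance, so this half of \ref{Shdb:2} genuinely fails for $\Rscorefun$ once $\Dx\neq\emptyset$. (The paper's own proof silently establishes only the ``irrelevant'' direction; your argument does go through in the purely endogenous case, where $S\cup\Dx=M\setminus\{\alpha\}$.)

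For \ref{MS:t} your route is genuinely different from the paper's and, modulo one detail, sound. You decompose $\rscorefun_{q,\D}=\phi_A+\phi_B$ (valid because no single removed fact can hit both an $A_i$ and a $B_j$), kill the cross terms via \ref{Sh:2} and \ref{Sh:4}, and compare the closed forms $1-\int_0^1\prod_i(1-t^{|A_i|-1})\,dt$ obtained from the probabilistic reading of \eqref{formul:sh1}. The paper instead builds a cardinality-preserving injection from the coalitions where $\beta$ has marginal $1$ into those where $\alpha$ does (via a permutation $\eta$ assembled from injections $A_i\hookrightarrow B_i$) and compares the Shapley-like sums term by term; your version buys an explicit formula for the two scores, at the price of the degenerate analysis. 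On that point, your dispatch of the case $|A_i|=1$ is slightly off: minimality forces $k=1$ but not $l=0$ (one may have $l=1$ with $|B_1|\ge 2$). This is harmless, since then $\Sh_{\phi_A}(\alpha)=1$ is the maximal possible value while the strictness hypothesis rules out $B_1=\{\beta\}$ and hence forces $\Sh_{\phi_B}(\beta)=\nicefrac{1}{|B_1|}<1$.
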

   \begin{proof}
      \proofcase{\ref{Shdb:1}} If two queries $q_1,q_2$ define the same Boolean function, and two facts $\alpha,\beta\in \Dn$ are equivalent in the sense that for all $S\subseteq \Dn\setminus\{\alpha,\beta\}$, $S\cup\{\alpha\}\cup\Dx \models q_1$ iff $S\cup\{\beta\}\cup\Dx \models q_2$, then the optimal repairs will be equivalent in the sense that for all $S\subseteq \Dn \setminus \{\alpha,\beta\}$, $\rscorefun_{q_1,\D}(S\cup\{\alpha\})=\rscorefun_{q_2,\D}(S\cup\{\beta\})$. \ref{Shdb:1} is therefore satisfied as a consequence of \ref{Sh:1}.\medskip

      \proofcase{\ref{Shdb:2}} An "irrelevant" fact will have no consequence on the size of the optimal repair, meaning it will be a "null player" for $\rscorefun_{q,\D}$. \ref{Shdb:2} is therefore satisfied as a consequence of \ref{Sh:2}.\medskip

      \proofcase{\ref{MS:t}} Consider an instance $\Dn=A_1 \cup \dots \cup A_k \cup B_1 \cup B_l$ and $q$ like in the definition of \ref{MS:t}. The optimal repair in that instance is $\{\alpha,\beta\}$, which is of size two. Intuitively, there will be two points of wealth, one shared amongst the facts that appear in some $A_i$, another amongst those that appear in some $B_j$, and for every coalition where the addition of $\beta$ changes $\rscorefun_{q,\D}$ there is a corresponding coalition where the addition of $\alpha$ does.

   Formally, since for all $i\le l. ~ |A_i|\le |B_i|$, we can define for all $i\le l$ an injective mapping $\eta_i: A_i \hookrightarrow B_i$, and given the intersections we can assume that $\eta_i(\beta)=\alpha$ for all $i$. We can define the permutation $\eta$ that swaps every $\gamma\in A_i$ with $\eta_i(\gamma)$, thereby extending all $\eta_i$ at once (it is well-defined because there are no intersections besides $\{\alpha\}$ and $\{\beta\}$).

   Now we have the permutation $\eta$, consider a coalition $S\subseteq \Dn$ such that $\rscorefun_{q,\D}(S\cup\{\beta\}) \neq \rscorefun_{q,\D}(S)$. This happens when the addition of $\beta$ completes one of the $B_j$, thereby increasing $\rscorefun_{q,\D}$ by 1. This means $B_j\setminus\{\beta\}\subseteq S$ for some $j$ hence $A_j\setminus\{\alpha\} \subseteq \eta^{-1}(S)$ by definition of $\eta$, and $\rscorefun_{q,\D}(\eta^{-1}(S)\cup\{\alpha\}) - \rscorefun_{q,\D}(\eta^{-1}(S)) =1$. We finally apply \Cref{formul:slike} to obtain the equality \eqref{1.a} below, since the "Shapley value" is a "Shapley-like" score:

   \begin{align}
      \Sh_{\rscorefun_{q,\D}}(\beta)
	 &= \displaystyle\sum_{S\subseteq \Dn \setminus \set \beta } c(|S|,|\Dn|)(\rscorefun_{q,\D}(S\cup\{\beta\}) - \rscorefun_{q,\D}(S))\tag{a}\label{1.a}\\
	 &= \displaystyle\sum_{\substack{
	    S\subseteq \Dn \setminus \{\beta\}\\
	    \rscorefun_{q,\D}(S\cup\{\beta\}) - \rscorefun_{q,\D}(S) =1
	 }} c(|S|,|\Dn|)\notag\\
	 &\le \displaystyle\sum_{\substack{
	    \eta^{-1}(S)\subseteq \Dn \setminus \{\alpha\}\\
	    \rscorefun_{q,\D}(\eta^{-1}(S)\cup\{\alpha\}) - \rscorefun_{q,\D}(S) =1
	 }} c(|\eta^{-1}(S)|,|\Dn|)\tag{b}\label{1.b}\\
	 &= \Sh_{\rscorefun_{q,\D}}(\alpha)\notag
   \end{align}

   \eqref{1.b} comes from the fact that $\rscorefun_{q,\D}(S\cup\{\beta\}) - \rscorefun_{q,\D}(S) =1$ implies $\rscorefun_{q,\D}(\eta^{-1}(S)\cup\{\alpha\}) - \rscorefun_{q,\D}(S) =1$. The converse, however, is false provided $k>l$ or $|A_i|<|B_i|$ for some $i$: in this case there will be some $A_i$ such that $\eta(A_i)$ isn’t some $B_j$, and $S\defeq\eta(A_i\setminus\{\alpha\})$ provides a counter-example, which makes the inequality \eqref{1.b} strict, yielding $\Sh_{\rscorefun_{q,\D}}(\beta) < \Sh_{\rscorefun_{q,\D}}(\alpha)$ as desired.
   \end{proof}

These results show that $\Pscorefun$ and $\MCscorefun$ do not yield suitable %
 "responsibility measures", but $\Rscorefun$ remains a reasonable option.
 While we shall leave a more thorough investigation of $\Rscorefun$ to future work (in particular, it remains to determine if the Shapley value of $\Rscorefun$  satisfies \ref{MS:1} and \ref{MS:2} as defined in \Cref{ssec:ax1}),
 we can already show that it is relatively difficult to compute. %
Indeed, we show that, due to its similarity with $\Dscorefun$, it does not enjoy the data complexity tractability of "WSMSs". %

\begin{proposition}\label{prop:repairishard}
There exists a "CQ" $q$ such that $\rShapley_{q}$ is $\sP$-hard.
\end{proposition}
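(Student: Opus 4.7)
My plan is to exhibit a specific "CQ" $q$ together with a polynomial-time reduction from a known $\sP$-complete counting problem to $\rShapley_{q}$. The natural candidate is the classic `hard' query
\[
  q \defeq \exists x,y ~ R(x) \land S(x,y) \land T(y),
\]
which is already known to yield $\sP$-hardness for $\dShapley_{q}$ in \cite{livshitsShapleyValueTuples2021}. I would aim to reduce from $\numPerMatch$ (counting "perfect matchings" in bipartite graphs), mirroring the style of the proofs of \Cref{thm:countMS-ACQ-shP} and \Cref{prop:UsjfACQ-MS-hard}.

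\textbf{Step 1: Combinatorial characterization of $\rscorefun_{q,\D}$.} For a "partitioned database" $\D=\Dn\uplus\Dx$ and $A\subseteq\Dn$, the "minimal supports" of $q$ in $A\cup\Dx$ are triples $\{R(a),S(a,b),T(b)\}$. A short exchange argument shows that removing a selected $S$-"fact" can always be replaced by removing one of the endpoint $R$/$T$-facts (when endogenous), so that $\rscorefun_{q,\D}(A)$ coincides with the minimum vertex cover of the bipartite graph $G_A$ whose vertices are the $R/T$-facts in $A\cup\Dx$ and whose edges are the active $S$-facts. In particular, $\rscorefun_{q,\D}$ is monotone with $\{0,1\}$-valued marginal increments (adding a single vertex/edge can push minimum vertex cover up by at most $1$).

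\textbf{Step 2: Gadget construction.} Given an input bipartite graph $G=(U\uplus V,E)$, I build $\D$ by placing $R$-facts for $U$, $T$-facts for $V$, and $S$-facts for $E$, and choose the $\Dn/\Dx$ partition so that the marginal contribution of a designated fact $\alpha$ captures the desired combinatorial quantity. Using Step~1, $\Sh_{\rscorefun_{q,\D}}(\alpha)$ is a Shapley-weighted sum of indicators ``adding $\alpha$ strictly increases minimum vertex cover of $G_{S}$''. By placing parts of $G$ exogenously and others endogenously, together with auxiliary `padding' exogenous facts whose number $k$ acts as an interpolation parameter, I can make $\Sh_{\rscorefun_{q,\D_k}}(\alpha)$ a rational function of $k$ whose numerator is a polynomial whose coefficients encode matching-related counts of $G$.

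\textbf{Step 3: Polynomial interpolation.} Evaluating $\Sh_{\rscorefun_{q,\D_k}}(\alpha)$ for $O(|E|)$ distinct values of $k$ via a $\rShapley_{q}$ oracle yields enough linear equations to recover these coefficients, and hence $|\#PM(G)|$, in polynomial time, establishing the $\sP$-hardness of $\rShapley_{q}$.

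\textbf{Main obstacle.} Step~2 is the delicate part: the Shapley weights $\tfrac{|S|!(|\Dn|-|S|-1)!}{|\Dn|!}$ do not align cleanly with the binomial weights that would naturally arise when enumerating vertex covers or matchings, so a direct single-shot reduction is unlikely to work. The padding/interpolation idea, standard in Shapley hardness proofs (\eg\ \cite{livshitsShapleyValueTuples2021,ourpods24}), should circumvent this but requires careful bookkeeping to ensure that the coefficients of the interpolated polynomial genuinely separate `matching' contributions from spurious ones arising from non-matching vertex covers.
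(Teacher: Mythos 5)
Your proposal has a genuine gap: the heart of the reduction (your Step 2) is never actually constructed, and you acknowledge as much in your \emph{Main obstacle} paragraph. The marginal increments of your wealth function count coalitions $X$ for which adding $\alpha$ increases the minimum vertex cover of $G_X$ (equivalently, by K\"onig's theorem, the maximum matching), and no argument is given that these coalition counts --- even after interpolation over the padding parameter $k$ --- determine the number of perfect matchings of $G$. That combinatorial identity is precisely what a reduction from $\numPerMatch$ must establish, and it is missing; asserting that a gadget ``should'' encode matching-related counts in the coefficients of an interpolated polynomial is a plan, not a proof. In addition, your Step 1 characterization is shaky as stated: a repair $\Gamma$ may only delete facts of the coalition, hence only \emph{endogenous} facts, so if an endogenous $S$-fact has exogenous $R$- and $T$-endpoints the claimed exchange argument fails and $\rscorefun_{q,\D}$ becomes a mixed vertex/edge-deletion quantity rather than a clean minimum vertex cover. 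This could presumably be controlled by the choice of partition, but that choice is exactly the part of the construction you leave unspecified.

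For comparison, the paper's proof sidesteps all of this with one trick: it works with the padded query $q_0 \defeq \exists x, y, z\ R(x) \land S(x,y) \land T(y) \land P(z)$ and databases containing a single endogenous $P$-fact $\mu$. Removing $\mu$ always falsifies $q_0$, so $\rscorefun_{q_0,\D}(X\cup\{\mu\})\in\{0,1\}$ and in fact equals the drastic wealth $\dscorefun_{q_1,\D}(X)$ for $q_1\defeq\exists x, y\ R(x) \land S(x,y) \land T(y)$. This collapses the repair measure onto the drastic one, after which hardness follows from the known $\sP$-hardness of $\dShapley_{q_1}$ via recovering the fixed-size model counts with an interpolation argument (the invertible matrix of terms $(i+j)!$). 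If you want to salvage your plan, the cleanest fix is to adopt this collapsing trick --- reducing from a measure already known to be hard --- rather than attempting a from-scratch reduction from $\numPerMatch$ through minimum vertex covers.
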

\begin{proof}
Define two "sjfCQ"s $q_0\defeq\exists x, y, z\ R(x) \land S(x,y) \land T(y) \land P(z)$, and $q_1 \defeq \exists x, y\ R(x) \land S(x,y) \land T(y)$. We prove the hardness of $\rShapley_{{q_0}}$ by showing $\dShapley_{{q_1}} \polyrx \rShapley_{{q_0}}$, since $\dShapley_{{q_1}}$ is known to be $\sP$-hard \cite[Proposition 4.6]{livshitsShapleyValueTuples2021}. In fact the following proof is heavily inspired by the one for \cite[Proposition 4.6]{livshitsShapleyValueTuples2021}.\medskip

The key observation to build the reduction is the following: if we only restrict ourselves to partitioned databases $\D$ that only contain a single "endogenous" $P$-atom $\mu\defeq P(c)$ (and no "exogenous" ones), then the optimal repair measure will never be higher than 1 (removing $\mu$ will always falsify $q_0$) and more precisely we will have $\rscorefun_{q_0,\D}(X\cup\{\mu\}) = \dscorefun_{q_1,\D}(X)$ for every $X\in\Dn\setminus \{\mu\}$.

With this observation in mind, let $\D^{in}$ be an input instance for $\dShapley_{{q_1}}$. Without loss of generality we can assume that all atoms in $\D^{in}$ have predicates in $\{R,S,T\}$: any atom on other predicates are "irrelevant" to $q_1$ and can be removed without affecting the "Shapley value". Let $n\defeq |\Dn[\D^{in}]|$ and for every $i\in\lBrack 0,n\rBrack$ build the "partitioned database" $\D^i$ defined by $\Dn[\D^i] \defeq \Dn[\D^{in}] \uplus \{\mu\} \uplus \{S(c_0,c_k): k\in[i]\}$ and $\Dx[\D^{in}] \uplus \{R(c_0)\} \uplus \{T(c_k) : k\in[i]\}$, with the $c_k$ being fresh constants. Also denote $\beta_k \defeq S(c_0,c_k)$ for short. This construction is depicted in \Cref{fig:sameproof}.

\begin{figure}[tb]
\centering
      \begin{tikzpicture}
	\coordinate (00) at (0, 0);
	\coordinate (01) at (2, 0);
	\coordinate (02) at (4, 0);
	\coordinate (03) at (4, -1);
	\coordinate (04) at (-4.5, -1);
	\coordinate (05) at (-2, 0.5);
	\coordinate (06) at (4, -0.5);
	\coordinate (07) at (-4, 0.25);
	\coordinate (08) at (-4.5, -0.25);
	\begin{pgfonlayer}{nodelayer}
		\node [draw, circle] (0) at (00) {$c$};
		\node [above] at (0.north) {$P$};
		\node [left] at (0.west) {$\alpha$};
		\node [draw, circle,color=red] (1) at (01) {$c_0$};
		\node [draw, circle,color=red] (2) at (02) {$c_1$};
		\node [draw, circle,color=red] (3) at (03) {$c_i$};
		\node [] (4) at (04) {};
		\node [] (5) at (05) {};
		\node [] (6) at (06) {$\cdots$};
		\node [] (7) at (07) {$\D^0$};
		\node [anchor=west] (8) at (08) {\footnotesize (only $\{R,S,T\}$)};
		\path ($(1.north)+(0,0.3)$) node[color=red] {$R$} -- node[midway] {$S$} ($(2.north)+(0,0.3)$) node[color=red] {$T$};
	\end{pgfonlayer}
	\begin{pgfonlayer}{edgelayer}
		\draw [color=red, thick] (4.center) rectangle (5.center);
		\draw [dash pattern=on 1em off 1em, thick] (4.center) rectangle (5.center);
		\draw [->,>=stealth] (1) to node[midway, above=-.12] {$\beta_1$} (2);
		\draw [->,>=stealth] (1) to node[midway, above=-.12, sloped] {$\beta_i$} (3);
	\end{pgfonlayer}
\end{tikzpicture}%
   
\caption{Construction of $\D^i$ from $\D^{in}$, with the "endogenous" parts in black and the "exogenous" ones in red.}
\label{fig:sameproof}
\end{figure}
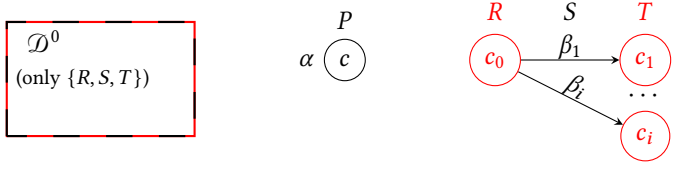

We now use \Cref{formul:slike}
to express $\Sh_{\rscorefun_{q_0,\D^i}}(\mu)$, with $c(j,m)\defeq\frac{j!(m-j-1)!}{m!}$:

\begin{equation*}
\Sh_{\rscorefun_{q_0,\D^i}}(\mu)
= \sum_{X\subseteq \Dn[\D^i] \setminus \{\mu\} } c(|X|,|\Dn[\D^i]|) \left(\rscorefun_{q_0,\D^i}(X\cup\{\mu\}) - \rscorefun_{q_0,\D^i}(X)\right)
\end{equation*}

Which we can rewrite as follows, given $|\Dn[\D^i]|=n+i+1$:

\begin{equation*}
\Sh_{\rscorefun_{q_0,\D^i}}(\mu)
= \sum_{j=0}^{n+i}
\sum_{\substack{X\subseteq \Dn[\D^i] \setminus \{\mu\}\\ |X|=j}}
c(j,n+i+1)\left(\rscorefun_{q_0,\D^i}(X\cup\{\mu\}) - \rscorefun_{q_0,\D^i}(X)\right)
\end{equation*}

Now we need to consider the sets $X\subseteq \Dn[\D^i] \setminus \{\mu\}$ such that $\rscorefun_{q_0,\D^i}(X\cup\{\mu\}) \neq \rscorefun_{q_0,\D^i}(X)$. As discussed above, $\rscorefun_{q_0,\D^i}(X)=0$ since $\mu\notin X$, so this precisely happens when $\rscorefun_{q_0,\D^i}(X\cup\{\mu\})\neq 0$ \ie when $\dscorefun_{q_1,\D^i}(X)\neq 0$ that is when $X\cup \D^i \models q_1$. This happens in the following two mutually exclusive cases:

\begin{enumerate}[(i)]
\item\label{cas:1.1} $X$ contains some $\beta_k$;
\item\label{cas:1.2} $X\subseteq \Dn[\D^{in}]$ and $X\cup \Dx[\D^i] \models q_1$.
\end{enumerate}

To choose some $X$ of size $j$ that doesn’t satisfy \ref{cas:1.1}, one simply has to choose the $j$ elements of $X$ amongst the $n$ elements of $\Dn$ (since all $\beta_k$ are forbidden). Hence the number of $X$ of size $j$ that satisfies \ref{cas:1.1} is $\binom{n+i}{j} - \binom{n}{j}$. As for \ref{cas:1.2}, since none the constants $c_k$ appears in $\D^{in}$ and the variables in $q_1$ are connected, the exogenous elements added during the construction of $\D^i$ cannot help a subset of $\Dn[\D^{in}]$ to satisfy $q_1$, hence the condition $X\cup \Dx[\D^i] \models q_1$ can be replaced by $X\cup \Dx[\D^{in}] \models q_1$. Denote by $\mathrm{fgmc}_j$ the number of such $X$s, that is the number of $X\subseteq \Dn[\D^{in}]$ of size $j$ such that $X\cup \Dx[\D^i] \models q_1$. This name has been chosen because these are the solutions to $\mathrm{FGMC}_{q_1}$ as defined in \cite[\S 3.2]{ourpods24} on the instance $\D^{in}$, which we now try to compute. For this we rearrange the above equation:

\begin{align*}
\Sh_{\rscorefun_{q_0,\D^i}}(\mu)
&= \sum_{j=0}^{n+i}
\sum_{\substack{X\subseteq \Dn[\D^i] \setminus \{\mu\}\\
|X|=j \\ X\models \ref{cas:1.1}}}
c(j,n+i+1)
+
\sum_{j=0}^{n+i}
\sum_{\substack{X\subseteq \Dn[\D^i] \setminus \{\mu\}\\
|X|=j \\ X\models \ref{cas:1.2}}}
c(j,n+i+1)
\\
&= \sum_{j=0}^{n+i}
c(j,n+i+1) \left(\binom{n+i}{j} - \binom{n}{j}\right)
+
\sum_{j=0}^{n}
c(j,n+i+1) \mathrm{fgmc}_j
\end{align*}

For short, denote $\mathrm{Sh^i}\defeq \Sh_{\rscorefun_{q_0,\D^i}}(\mu) - \sum_{j=0}^{n+i} c(j,n+i+1) \left(\binom{n+i}{j} - \binom{n}{j}\right)$. Observe that this value can be computed in polynomial time from a call to a $\rShapley_{{q_0}}$ oracle. Now if we further denote by $Y$ the column vector $(\mathrm{Sh}^0 \dots \mathrm{Sh}^{n})$, by $X$ the column vector $(\mathsf{fgmc}_0 \dots \mathsf{fgmc}_n)$, and by $A$ the square matrix of general term $c(j,n+i+1)=\frac{j!(n+i-j)!}{(n+i+1)!}$ for $i,j\in \lBrack 0,n\rBrack$, the above gives $Y=AX$.\medskip

We now show that $A$ is invertible. For this we multiply every line by $(n+i+1)!$, divide every column by $j!$ and reverse the column order to obtain the matrix of general term $(i+j)!$, which is invertible by \cite[proof of Theorem 1.1]{bacherDeterminantsMatricesRelated2002}. By computing the inverse of $A$ we can finally obtain $X$ in polynomial time using an oracle to $\rShapley_{{q_0}}$. We have thus shown that $\mathrm{FGMC}_{q_1} \polyrx \rShapley_{{q_0}}$, and we can conclude by \cite[Proposition 4.6]{livshitsShapleyValueTuples2021}, which states that $\dShapley_{{q_1}}$ is $\sP$-hard.
\end{proof}

Based on \Cref{prop:repairishard} alone, $\Rscorefun$ is no better than $\Dscorefun$ complexity-wise. In fact it is even worse, because the latter can be efficiently approximated on all
"CQs" \cite[Corollary 4.14]{livshitsShapleyValueTuples2021} (and even all "UCQs" with a trivial adaptation of the proof), %
while the same cannot be done for the former.

Formally, we shall use two notions of approximation of a numerical function $f$, which we briefly recall.
The simplest kind is the \AP ""deterministic multiplicative approximation"", often simply called \reintro{approximation}. It is defined as an algorithm $A(x)$ where $x$ is an input for $f$ that runs in polynomial time in $x$, and guarantees $A(x) \in [f(x); \rho \cdot f(x)]$ for all $x$, where $\rho \in \lR_+$ is a constant called the \AP ""approximation factor"" of the algorithm.\footnote{We only care about $\rho > 1$ here but the case $\rho < 1$ simply defines an under-approximation of $f$.}
\AP The other one is the ""fully polynomial randomized approximation scheme"" (\reintro{FPRAS}), that is defined as a randomized algorithm $A(x,\epsilon,\delta)$ where $x$ is an input for $f$ and $\epsilon,\delta\in (0,1)$ that runs in polynomial time in $x$, $\nicefrac{1}{\epsilon}$ and $\log(\nicefrac{1}{\delta})$ that guarantees $\mathbf{P}\left[\frac{f(x)}{1+\epsilon} \le A(x,\epsilon,\delta) \le (1+\epsilon)f(x)\right] \ge 1-\delta$, where $\mathbf{P}$ is the probability over the randomness of~$A$.
\AP Also recall that $\intro*\BPP$ is the class of decision problems for which there exists a polynomial time randomized algorithm that always gives the correct answer with probability at least $\nicefrac{2}{3}$.

\begin{proposition}\label{prop:repairishard2}
There exists a "CQ" $q$ such that $\rShapley_{q}$ possesses no
"deterministic multiplicative approximation" of "factor@@approx" $1.3606$ unless $\Ptime=\NP$,
nor any
"FPRAS" unless $\NP \subseteq \BPP$.
\end{proposition}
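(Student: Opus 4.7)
The plan is to reduce both inapproximability claims to the (deterministic and randomized) inapproximability of minimum vertex cover, exploiting the efficiency axiom~\ref{Sh:3} of the Shapley value. Consider the "CQ" $q \defeq \exists x, y.\ R(x,y) \land A(x) \land A(y)$. Given a graph $G = (V,E)$, construct the "partitioned database" $\D_G$ by setting $\Dx \defeq \set{R(u,w), R(w,u) : \set{u,w} \in E}$ and $\Dn \defeq \set{A(v) : v \in V}$. A subset $\Gamma \subseteq \Dn$ satisfies $(\Dn \setminus \Gamma) \cup \Dx \not\models q$ precisely when $\set{v : A(v) \in \Gamma}$ is a vertex cover of $G$; since $\Dx \not\models q$ (no $A$-"facts" are exogenous), this yields $\rscorefun_{q,\D_G}(\Dn) = \tau(G)$, the minimum vertex cover size. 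By \ref{Sh:3}, $\sum_{\alpha \in \Dn} \Sh_{\rscorefun_{q,\D_G}}(\alpha) = \tau(G)$, and each summand is nonnegative since $\rscorefun_{q,\D_G}$ is monotone (any $\Gamma$ witnessing $\rscorefun_{q,\D_G}(S')$ can be intersected with $S \subseteq S'$ to witness a value at most as large).

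For the first claim, suppose $A$ is a polynomial-time $\rho$-"approximation" of $\rShapley_q$. Then summing $A(q,\D_G,\alpha)$ over $\alpha \in \Dn$ gives a value in $[\tau(G),\ \rho\cdot \tau(G)]$, i.e., a polynomial-time $\rho$-approximation of $\tau(G)$. The Dinur--Safra theorem asserts that approximating minimum vertex cover within factor $10\sqrt{5}-21 \approx 1.3606$ is NP-hard, so $\rho = 1.3606$ would force $\Ptime = \NP$.

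For the second claim, suppose $A(q,\D,\alpha,\epsilon,\delta)$ is an "FPRAS" for $\rShapley_q$. Running $A$ on every $\alpha \in \Dn$ with parameters $\epsilon$ and $\delta/|V|$ and summing the outputs yields, via a union bound, a value in $[\tau(G)/(1+\epsilon),\ (1+\epsilon)\tau(G)]$ with probability at least $1-\delta$; this is an "FPRAS" for $\tau(G)$. Fixing any constant $\epsilon < 0.3606$ and boosting the success probability by standard median amplification produces a BPP algorithm that approximates $\tau(G)$ within a factor strictly below $1.3606$. Since the Dinur--Safra reduction is deterministic, composing it with this BPP algorithm places an NP-complete decision problem in $\BPP$, whence $\NP \subseteq \BPP$.

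The main obstacle is not the reduction itself, which is a direct encoding of vertex cover as a "CQ", but rather the careful verification of the two aggregation arguments that turn approximations of individual Shapley values into an approximation of the total $\tau(G)$. For the deterministic case this relies on nonnegativity of each $\Sh_{\rscorefun_{q,\D_G}}(\alpha)$, while for the FPRAS case one must check that a union bound over $|V|$ independent calls combined with the preservation of relative error when summing nonnegative numbers indeed yields an FPRAS for $\tau(G)$, and that $\epsilon$ can be chosen small enough to land within the Dinur--Safra hardness regime.
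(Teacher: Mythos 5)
Your proof is correct and follows essentially the same route as the paper's: encode minimum vertex cover via a two-variable CQ with exogenous edge facts and endogenous vertex facts, use the efficiency axiom \ref{Sh:3} to equate the sum of the Shapley values with the minimum vertex cover size $\tau(G)$, and invoke the Dinur--Safra inapproximability bound. The only (immaterial) difference is in the FPRAS step: the paper sets $\epsilon = \nicefrac{1}{3n}$ so that rounding recovers $\tau(G)$ exactly and places vertex cover itself in $\BPP$, whereas you keep $\epsilon$ constant and argue via the gap problem --- for that variant the right condition is $(1+\epsilon)^2 < 1.3606$ rather than $\epsilon < 0.3606$, which is harmless since an FPRAS lets you take $\epsilon$ as small a constant as you like.
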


\begin{proof}
   \AP We briefly recall the \phantomintro{vertex cover}$\intro*\VerCov$ problem: Given an undirected graph $G=(V,E)$ and an integer $k$, we ask if there exists a subset $V'\subseteq V$ of size $|V'|\le k$ that touches all edges in $E$. This is a very well-known $\NP$-complete problem that is the subject of various inapproximability results which we shall discuss right after proving a size-preserving reduction from some $\rShapley_{q}$.

Consider the "CQ" $q\defeq \exists x,y\ ~ R(x)\land S(x,y)\land R(y)$ and the "partitioned database" $\D^G$ defined by $\Dn^G\defeq \{R(v): v\in V\}$, $\Dx^G\defeq \{S(u,v): \{u,v\}\in E\}$. By construction, for any $X\subseteq \Dn^G$, $X\cup\Dx\models q$ iff $X$ is an ""independent set"" of $G$ (\ie a set of vertices that induces no edge). Therefore any $\Gamma\subseteq\Dn$ such that $(\Dn\setminus \Gamma)\models q$ is a "vertex cover" of $G$, and $\rscorefun_{q,\D}(\Dn)$ is the size of the smallest "vertex cover" of $G$. At last, observe that $\sum_{\alpha\in\D} \Sh_{\rscorefun_{q,\D}}(\alpha) = \rscorefun_{q,\D}(\Dn)$ as a consequence of the \ref{Sh:3} axiom of the Shapley value. Overall this means that any algorithm that gives a "deterministic multiplicative approximation" of $\rShapley_{q}$ will solve $\VerCov$ with the same "approximation factor".

Now $\VerCov$ (or, pedantically, the function $\intro*\VerCovFun$ which maps every graph $G$ to the size of its minimal "vertex cover") is notoriously difficult to "approximate"; more precisely it is $\NP$-hard to "approximate" within a "factor@@approx" of smaller than $10\sqrt{5}-21=1.3606\dots$ \cite[Theorem 1.1]{dinurHardnessApproximatingMinimum2005}.\medskip

Additionally, if there were to exist an "FPRAS" for $\rShapley_{q}$, the above reduction would also yield an "FPRAS" $A$ for $\VerCovFun$, which we could use with the parameters $\epsilon = \nicefrac{1}{3n}$ and $\delta=\nicefrac{1}{3}$, where $n$ is the number of vertices in the input graph. This would run in polynomial time and guarantee:
\[\mathbf{P}\left[\frac{\VerCovFun(G)}{1+\nicefrac{1}{3n}} \le A(G,\epsilon,\delta) \le (1+\nicefrac{1}{3n})\VerCovFun(G)\right] \ge 1-\nicefrac{1}{3}\]

This can be rewritten as follows by minimal manipulations:
\[\mathbf{P}\left[(1-\nicefrac{1}{3n})\VerCovFun(G) \le A(G,\epsilon,\delta) \le (1+\nicefrac{1}{3n})\VerCovFun(G)\right] \ge \nicefrac{2}{3}\]

Furthermore $\VerCovFun(G) \le n$ because it is the size of a subset of vertices of $G$. This means that the following holds:
\[\mathbf{P}\left[\VerCovFun(G)-\nicefrac{1}{3} \le A(G,\epsilon,\delta) \le \VerCovFun(G)+\nicefrac{1}{3}\right] \ge \nicefrac{2}{3}\]

In other words, the polynomial algorithm $\tilde{A}$ that outputs the closest integer to $A$ will give the correct value of $\VerCov(G)$ with probability at least $\nicefrac{2}{3}$. By definition this would mean $\VerCov\in \BPP$ hence $\NP \subseteq \BPP$ by $\NP$-completeness of $\VerCov$.
\end{proof}

\subsection{Counting Homomorphisms of Minimal Supports}\label{ssec:issues_sash}
\AP
A natural idea for an alternative measure for (Boolean) "UCQ"s would be to replace `counting "minimal supports"' with `counting "homomorphisms"' in the definition of $\msscorefun$ to obtain a "wealth function family" \AP$\intro*\SAscorefun$ based on counting "homomorphisms" (\ie $\sascorefun_{q,\D}(\Dn) \defeq |\set{h : q \homto \Dn}|$).
Adopting $\SAscorefun$ in place of $\MSscorefun$ would allow us to obtain further tractability results, as we have seen that counting "homomorphisms" can be computationally simpler than counting "minimal supports" (\Cref{thm:countMS-ACQ-shP}). Unfortunately, although  $\SAscorefun$ seems reasonable at first sight, we show that it cannot decide "relevance", which may be considered a minimum requirement for any responsibility measure:

\begin{proposition}
The Shapley value of $\SAscorefun$ does not satisfy \ref{Shdb:2}.
Moreover, an "irrelevant" fact may have a larger value than "relevant" ones by an arbitrarily large (multiplicative or additive) margin.
\end{proposition}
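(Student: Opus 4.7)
The plan is to exhibit a single parametric family of instances that simultaneously refutes \ref{Shdb:2} and pushes the gap between an irrelevant and a relevant fact beyond any prescribed constant, both additively and multiplicatively. The intuition is that $\Sh_{\sascorefun}$ weighs a fact by the homomorphisms passing through it, regardless of whether the image of such a homomorphism is a "minimal support"; an "irrelevant" fact can therefore accumulate a large score by lying in many homomorphisms whose image is a non-minimal support, while a "relevant" fact may appear in only one small homomorphism.

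Concretely, I will work with the "CQ" $q \defeq \exists x_1, x_2, y.~ R(x_1, y) \land R(x_2, y) \land R(y, y)$ and, for each integer $n \ge 1$, the purely endogenous database $\D_n \defeq \{R(a,a),\, R(d,d),\, R(b_1,a), \ldots, R(b_n,a)\}$. The first step is to verify that the only "minimal supports" of $q$ in $\D_n$ are $\{R(a,a)\}$ and $\{R(d,d)\}$: the atom $R(y,y)$ forces $y$ to be a self-loop vertex, and the only self-loop facts in $\D_n$ are $R(a,a)$ and $R(d,d)$. Consequently $R(d,d)$ is "relevant" while every $R(b_i,a)$ is "irrelevant".

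The second step is to exploit the decomposition of $\sascorefun_{q,\D_n}$ as a sum, over homomorphisms $h : q \homto \D_n$, of the indicator wealth function $S \mapsto \mathbf{1}[\mathrm{image}(h) \subseteq S]$. Since the Shapley value of each such indicator is immediate, axiom \ref{Sh:4} yields the closed form
\[
\Sh_{\sascorefun_{q,\D_n}}(\alpha) \;=\; \sum_{\substack{h\,:\,q\homto \D_n \\ \alpha \,\in\, \mathrm{image}(h)}} \frac{1}{|\mathrm{image}(h)|}.
\]
I would then enumerate the homomorphisms by the value assigned to $y$: the case $y = d$ admits only the single homomorphism $(d, d, d)$ (with image $\{R(d,d)\}$ of size $1$), while $y = a$ lets $x_1, x_2$ range freely over $\{a, b_1, \ldots, b_n\}$, each giving image $\{R(a,a)\} \cup \{R(x_1, a), R(x_2, a)\}$.

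Reading off the two Shapley values of interest is then a short combinatorial exercise: $\Sh(R(d,d)) = 1$ (only $(d,d,d)$ contributes), whereas the homomorphisms through $R(b_1, a)$ split into three of image size $2$ and $2(n-1)$ of image size $3$, yielding $\Sh(R(b_1, a)) = \tfrac{3}{2} + \tfrac{2(n-1)}{3}$. Already $n = 1$ refutes \ref{Shdb:2}, and letting $n \to \infty$ blows up both the additive and multiplicative gaps against the fixed value $\Sh(R(d,d)) = 1$. The only slightly delicate point is the uniqueness of the $y = d$ homomorphism, which relies on $d$ not appearing as the second coordinate of any fact other than $R(d,d)$; everything else is routine bookkeeping.
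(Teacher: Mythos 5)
Your proof is correct and follows essentially the same route as the paper's: a concrete parametric counterexample combined with the linearity decomposition of $\sascorefun$ into per-homomorphism indicator games, whose Shapley values are $1/|\mathrm{image}(h)|$ on the image and $0$ elsewhere; your enumeration of homomorphisms and the resulting scores $\Sh(R(d,d))=1$ and $\Sh(R(b_1,a))=\tfrac{3}{2}+\tfrac{2(n-1)}{3}$ check out. One minor difference: in the paper's example the irrelevant fact ends up with the \emph{globally} largest score, whereas in yours the other relevant fact $R(a,a)$ has a score growing quadratically in $n$ and thus dominates the irrelevant $R(b_1,a)$; your instance still witnesses the stated claim (an irrelevant fact exceeding a relevant one by an arbitrary margin), but it is worth being aware that it establishes a slightly weaker pathology than the paper's.
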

\begin{proof}
Consider the "CQ" $q\defeq \exists x_0, x_1, x_2, x_3 ~ R(x_0,x_1)\land R(x_1,x_2)\land R(x_2,x_3)$ and the purely "endogenous" database $\D$ that contains, for every $i\in[k]$ ($k$ is a fixed parameter), $\alpha_i \defeq R(a_i,a_i)$, $\beta_i \defeq R(a_i,b)$ and $\gamma\defeq R(b,c)$ (see \Cref{fig:sash}).
The "homomorphisms" from $q$ to $\D$ are as follows:
\begin{itemize}
\item $\alpha_i$ appears in 3 "homomorphisms", which are characterized by their images: $\{\alpha_i\}$, $\{\alpha_i, \beta_i\}$ and $\{\alpha_i, \beta_i, \gamma\}$. Hence $\Sh_{\sascorefun_{q,(\D,\emptyset)}}(\alpha_i) = 1 + \frac{1}{2} + \frac{1}{3} = \frac{11}{6}$.
\item $\beta_i$ appears in 2 "homomorphisms": $\{\alpha_i, \beta_i\}$ and $\{\alpha_i, \beta_i, \gamma\}$. Hence $\Sh_{\sascorefun_{q,(\D,\emptyset)}}(\beta_i) = \frac{1}{2} + \frac{1}{3} = \frac{5}{6}$.
\item $\gamma$ appears in k "homomorphisms": $\{\alpha_i, \beta_i, \gamma\}$ for every $i\in [k]$. Hence $\Sh_{\sascorefun_{q,(\D,\emptyset)}}(\gamma) = \frac{k}{3}$.
\end{itemize}
The fact with highest score is therefore $\gamma$, and by an arbitrary margin because its score grows linearly with $k$ while the others are constant. However, the only minimal supports in this instance are the $\{\alpha_i\}$, which implies in particular that $\gamma$ is an "irrelevant". 
\begin{figure}[tb]
\centering
      \begin{tikzpicture}
	\coordinate (00) at (-2, 0.75);
	\coordinate (01) at (-2, -0.75);
	\coordinate (02) at (0, 0);
	\coordinate (03) at (2, 0);
	\coordinate (04) at (-2, 0);
	\coordinate (05) at (-9, 0);
	\coordinate (06) at (-8, 0);
	\coordinate (07) at (-7, 0);
	\coordinate (08) at (-6, 0);
	\coordinate (09) at (-5, 0);
	\begin{pgfonlayer}{nodelayer}
		\node [draw, circle] (0) at (00) {$a_1$};
		\node [draw, circle] (1) at (01) {$a_k$};
		\node [draw, circle] (2) at (02) {$b$};
		\node [draw, circle] (3) at (03) {$c$};
		\node [] (4) at (04) {$\cdots$};
		%\node [draw, circle] (6) at (06) {};
		%\node [draw, circle] (7) at (07) {};
		%\node [draw, circle] (8) at (08) {};
		%\node [draw, circle] (9) at (09) {};
		%\node [left] at (6.west) {$q=$};
	\end{pgfonlayer}
	\begin{pgfonlayer}{edgelayer}
		\draw [->,>=stealth] (0) to node[midway, above=-.1] {$\beta_1$} (2);
		\draw [->,>=stealth] (1) to node[midway, above=-.1] {$\beta_k$} (2);
		\draw [->,>=stealth] (2) to node[midway, above=-.1] {$\gamma$} (3);
		\draw [->,>=stealth, in=-135, out=135, loop] (0) to node[midway, left] {$\alpha_1$} ();
		\draw [->,>=stealth, in=-135, out=135, loop] (1) to node[midway, left] {$\alpha_i$} ();
		%\draw [->,>=stealth] (6) to (7);
		%\draw [->,>=stealth] (7) to (8);
		%\draw [->,>=stealth] (8) to (9);
	\end{pgfonlayer}
\end{tikzpicture}%
   
\caption{Instance where the Shapley value of $\SAscorefun$ gives the largest value to the "irrelevant" $\gamma$.}
\label{fig:sash}
\end{figure}
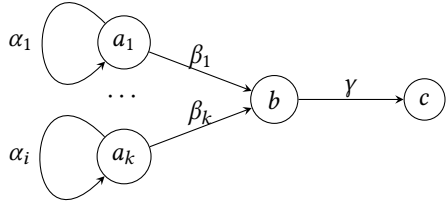
\end{proof}

Note that if we were to set the $\beta_i$ or one of the $\alpha_i$ as "exogenous", $\Sh_{\sascorefun_{q,(\D,\emptyset)}}(\gamma)$ would still be linear in $k$ while in the first case the "relevant" $\alpha_i$ would have the smallest values accross all facts and in the second there would actually be no "relevant" fact at all.

\subsection{Explainable AI and the SHAP Score}\label{ssec:shap}
The "Shapley value" has also been applied to define a popular quantitative measure of responsibility for machine learning, namely the "SHAP score" \cite{lundbergUnifiedApproachInterpreting2017}, which we shall briefly define later. Despite its popularity in practice, recent work \cite{marques-silvaExplainabilityNotGame2024} has cast doubts on the theoretical guarantees this measure provides. Although the context is substantially different from query answering, we discuss it here because we believe the present work can provide valuable insights on the issue.

\AP
Consider a ""classifier"", which we shall simply model as a function $\+M : \lB^{\+F} \to \lB$, where $\+F$ is a finite set of ""features"" and $\lB$ is the Boolean set $\{\top,\bot\}$.\footnote{For the sake of simplicity, we only consider Boolean classifiers here.} Given an input instance $\alpha\in \lB^{\+F}$, we wish to explain the classification $\+M(\alpha)$ by identifying the most relevant "features".

In order to obtain a quantitative score from the "Shapley value", one simply needs to define a relevant "game" whose players are the "features" in $\+F$. In the context of query answering, we have seen that one straightforward approach (giving rise to the "drastic Shapley value") consists in using the query itself, seen as a 0/1 function over the subsets of the input. Alas this is not possible for "classifiers" because $\+M$ needs all the features of its input to have some value.
The solution chosen to define the "SHAP score" consists in taking the average value of $\+M$ across all completions of $\alpha|_{S}$ (that is, $\alpha$ restricted to the features in $S$):
\AP$\intro*\shapscorefun(S) \defeq \frac{1}{2^{|\+F\setminus S|}} \sum_{x\in \lB^{\+F},\; x|_{S}=\alpha|_{S}} \+M(x)$.
\AP 
The ""SHAP score"" is then defined as the "Shapley value" applied to the game $(\+F,\shapscorefun)$.

\AP In order to analyse the "SHAP score", \cite{marques-silvaExplainabilityNotGame2024} defines the notion of an ""abductive explanation"" ("AXp" for short) as an  inclusion-minimal subset $S\subseteq \+F$ such that for all $x\in\lB^{\+F}$, $x|_{S}=\alpha|_{S} \Rightarrow \+M(x)=\+M(\alpha)$. "AXp"s are precisely analogous to the "minimal supports" of the database setting, and \cite{marques-silvaExplainabilityNotGame2024} essentially shows that the "SHAP score" does not satisfy the analogue of the \ref{Shdb:2} axiom, which we argued was a very natural condition in our setting.

However, due to the closeness of "AXp"s and "minimal supports", one can easily adapt \Cref{sec:alt-measures}
of the present paper to define \AP`Weighted Sums of "AXp"s'
(""WSAXp""s). These quantitative measures, which also derive from the "Shapley value", will naturally display all of the good properties of  "WSMS"s and therefore make viable candidates for addressing the theoretical drawbacks of the "SHAP score". 
Unfortunately, however, whereas "minimal supports" can be efficiently counted for relevant classes of database queries,
the problem of counting the number of  the "AXp"s of each size will be intractable for pretty much %
any interesting class of classifiers.
Moreover, this high complexity is inescapable: as deciding feature relevance is $\NP$-hard for very simple classes of classifiers \cite[Proposition 11]{huangFeatureNecessityRelevancy2023}, no measure satisfying the ML analogue of \ref{Shdb:2} can be tractable.

\section{Discussion}
\label{sec:conclusions}
We have revisited the question of how the Shapley value can be used to define
"responsibility measures" %
for "non-numeric queries". Our proposed family  
of "responsibility measures"  --  "weighted sums of minimal supports" ("WSMS"s) -- can be viewed as %
the Shapley values of suitably defined games and have been shown to satisfy desirable semantic properties. 
Moreover, due to their simple definition,
"WSMS"s  are amenable to interpretation %
and can be efficiently computed 
for a large class of queries (including all UCQs, \color{diegochangescolor}via execution of simple SQL queries\color{black}), 
making them 
an appealing alternative to the "drastic Shapley value".
We should point out that the two main measures that have been studied in the literature, namely the "drastic Shapley value" and the "drastic Banzhaf value" \cite{abramovichBanzhafValuesFacts2024} (\aka the Causal Effect \cite{salimiQuantifyingCausalEffects2016}) both satisfy our new desiderata for responsibility measures. More precisely, they satisfy the axioms \ref{Shdb:1}, \ref{Shdb:2} and \ref{MS:t} by \Cref{prop:weak-ax-sat}, as well as the stronger \ref{fMS:1} and \ref{fMS:2} by \Cref{prop:true-ax-sat-sh,prop:true-ax-sat-ms}.
Therefore, both remain perfectly adequate and sensible measures despite the discussed shortcomings.
Moreover, we argue that %
we cannot expect to identify a unique `best' responsibility measure and thus it is entirely natural to exhibit multiple different measures that satisfy our proposed desiderata. 
It may nevertheless be worthwhile to explore additional properties that distinguish these different responsibility measures
and may help to guide the decision as to which particular measure is most suitable for a given application.

Our work has focused on "monotone" queries, but the "drastic Shapley value" and "WSMS" values have also been applied to "(U)CQ@UCQ"s with negative atoms (\AP""UCQneg"") in \cite{ReshefKL20,ourICDT26}.
The approach of \cite{ReshefKL20} extends the "drastic Shapley value" by measuring the so-called (negative or positive) `impact' that a fact has with respect to an ordering of $\Totord(\bse{\scorefun})$. On the other hand, \cite{ourICDT26} studies more broadly the question of what
constitutes a reasonable notion of qualitative explanation or relevance for queries with negated atoms and proposes two additional approaches: the first one assigns scores to (positive) database facts, and the second one considers `signed' facts (positive facts and negated facts). These approaches are orthogonal to the score of \cite{ReshefKL20} and can be used to lift "WSMS"s as well as "drastic Shapley" values for monotone queries to (U)CQs with negated atoms. 
It has been shown in \cite{ourICDT26} that for both approaches (only positive facts, or signed facts), 
the "WSMS" measures are tractable %
in data complexity for all "UCQneg"s. Moreover, building upon \Cref{thm:bounded-selfjoin}, %
 tractability in combined complexity has been established for classes of conjunctive queries with negation
 that enjoy bounded "generalized hypertree width", bounded "self-join width" (or no "mergeable" "atoms" if only positive facts are considered), 
 and bounded negative arity. 
An interesting question for future work is to explore how the desirable properties for responsibility measures put forth in the present paper, in particular \ref{fMS:1} and \ref{fMS:2}, can be suitably adapted to non-monotone queries. 

Responsibility measures have also recently begun to be explored in the context of ontology-mediated query answering (OMQA) \cite{poggiLinkingDataOntologies2008,bienvenuOntologyMediatedQueryAnswering2015,xiaoOntologyBasedDataAccess2018}. 
We recall that in OMQA, query answers are defined using logical entailment, taking into account domain knowledge provided by a logical theory called an ontology (given, for example, as a set of description logic axioms or tuple-generating dependencies). In a first study, we investigated the adaptation of the "drastic Shapley value" to the ontology setting  \cite{ourkr24}, showing how (in)tractability and dichotomy results from the database setting could be transferred to ontology-mediated queries (\AP""OMQs"") formulated in prominent Horn description logics. Very recently, we explored "WSMS"s in the OMQA setting \cite{ourKR25} and identified conditions which ensure that "OMQs" have (in)tractable WSMS computation. In particular, 
we were %
able to establish tractability in combined complexity for "OMQs" $(\mathcal{O},q)$ where $\mathcal{O}$ is a DL-Lite ontology (a popular description logic), $q$ is a "CQ", and the "OMQ" satisfies a certain \emph{non-interaction} property which adapts the notion of queries with no "mergeable" atoms to "OMQs". This result is interesting as such "OMQs" can be equivalently reformulated as "UCQs" and thus provides a natural class of well-behaved "UCQs" for which "WSMS" computation is tractable in combined complexity (recall from \Cref{prop:UsjfACQ-MS-hard} that taking the union of two well-behaved "CQs" easily leads to intractability). It remains to explore whether a suitable analog of "self-join width" for "OMQs" could be used to identify larger classes of "OMQs" admitting tractable "WSMS" computation in combined complexity.   

Finally, we recall that we have
based our definition of "self-join width" on "mergeable" atoms,  but %
a reasonable alternative would be to use "unifiable" atoms instead (see \Cref{remark-sjwidth}). It is an interesting open question whether our tractability result (\Cref{thm:bounded-selfjoin}) holds for this alternative definition of "self-join width". Moreover, we believe that both notions of "self-join width" can be of  independent interest and worth studying in other contexts than responsibility measures. Indeed, it is natural to wonder which of the numerous positive results in the database literature that rely upon the "self-join free" condition can be extended to queries with bounded "self-join width".

\begin{acks}
This work is partially supported by ANR AI Chair INTENDED (ANR-19-CHIA-0014)
and ANR project CQFD (ANR-18-CE23-0003).
\end{acks}

\bibliographystyle{alphaurl}
\bibliography{long,biblio}

\end{document}